\definecolor{lightgray}{gray}{0.9}
\newtheorem{lemma}{Lemma}
\newtheorem{theorem}[lemma]{Theorem}
\newtheorem{cor}[lemma]{Corollary}
\newtheorem{obs}[lemma]{Observation}
\theoremstyle{definition}\newtheorem{defn}[lemma]{Definition}
\newcommand{\floor}[1]{\lfloor #1 \rfloor}
\newcommand{\calF}{\mathcal{F}}
\newcommand{\calG}{\mathcal{G}}
\newcommand{\calH}{\mathcal{H}}
\newcommand{\calI}{\mathcal{I}}
\newcommand{\calT}{\mathcal{T}}
\newcommand{\bx}{\mathbf{x}}
\newcommand{\sawt}[2]{T_\mathsf{SAW}(#1,#2)}  
\newcommand{\ourfamily}{\mathcal{D}}
\newcommand{\wout}{w_{-}}
\newcommand{\win}{w_{+}}
\newcommand{\posints}{\mathbb{Z}_{>0}}
\newcommand{\posreals}{\mathbb{R}_{>0}}
\newcommand{\aad}{\textnormal{aad}}
\newcommand{\biprepot}{bi-potential.csv}
\newcommand{\genprepot}{potential.csv}
\newcommand{\validatecode}{validator.nb}
\newcommand{\probIS}{\#\textsc{IS}}
\newcommand{\probBIS}{\#\textsc{BIS}}
\newcommand{\IS}{Z}
\newcommand{\Z}{\mathbb{Z}}
\newcommand{\R}{\mathbb{R}}
\newcommand{\eps}{\varepsilon}
\renewcommand{\epsilon}{\varepsilon}
\newcommand{\Gout}{\calG_\mathsf{out}}
\newcommand{\Gin}{\calG_\mathsf{in}}
\newcommand{\rGout}{G_\mathsf{out}}
\newcommand{\rGin}{G_\mathsf{in}}
\newcommand{\Wout}{W'_\mathsf{out}}
\newcommand{\Win}{W'_\mathsf{in}}
\newcommand{\newWout}{W_\mathsf{out}}
\newcommand{\newWin}{W_\mathsf{in}}
\newcommand{\Cout}{C_\mathsf{out}}
\newcommand{\Cin}{C_\mathsf{in}}
\newcommand{\tGin}{G_\mathsf{in}}
\newcommand{\tGout}{G_\mathsf{out}}
\newcommand{\saad}{\textnormal{aad}^*}
\newcommand{\poly}{\textnormal{poly}}
\newcommand{\hcalgo}{\textnormal{\texttt{ApproxZ}}}
\newcommand{\hcunialgo}{\textnormal{\texttt{ApproxZUni}}}
\newcommand{\basecount}{\textnormal{\texttt{BaseCount}}}
\newcommand{\brute}{\textnormal{\texttt{ExactCount}}}
\newcommand{\iscount}{\textnormal{\texttt{Count}}}
\newcommand{\TR}{\textnormal{\texttt{TR}}}
\newcommand{\Prune}{\textnormal{\texttt{Prune}}}
\newcommand{\Red}{\textnormal{\texttt{Reduce}}}
\def\Stefankovic{\v{S}tefankovi\v{c}}
\newenvironment{algorr}[2]{%
	\smallskip
	\noindent\textbf{Algorithm #1:} {\itshape#2}
	\begin{enumerate}[(i)]%
		\smallskip
		\setlength{\itemsep}{0pt}\leftmargin=0pt}{%
	\end{enumerate}%
}
\title{Faster Exponential-time Algorithms for Approximately Counting Independent Sets\thanks{The research leading to these results has received funding from the European Research Council under the European Union's Seventh Framework Programme (FP7/2007--2013) ERC grant agreement no.\ 334828. The paper reflects only the authors' views and not the views of the ERC or the European Commission. The European Union is not liable for any use that may be made of the information contained therein.}}
\author{
Leslie Ann Goldberg\thanks{Department of Computer Science, University of Oxford, UK.}  
\and
  John Lapinskas\thanks{Department of Computer Science, University of Bristol, UK.}
\and
  David Richerby\thanks{School of Computer Science and Electronic Engineering, University of Essex, UK.}
 }
\date{26 August 2021}
\begin{document}
\maketitle
\begin{abstract}
Counting the independent sets of a graph is a classical \#P-complete problem, even in the bipartite case.
We give an exponential-time approximation scheme for this problem which is faster than the best known
algorithm for the exact problem. 
The running time of our algorithm on general graphs with error tolerance $\eps$ is 
at most $O(2^{0.2680n})$  times a polynomial in~$1/\epsilon$.
On bipartite graphs, the exponential term in the running time is improved to
 $O(2^{0.2372n})$.
Our methods combine techniques from exact exponential algorithms with techniques from approximate counting. 
Along the way we generalise (to the multivariate case) the FPTAS of Sinclair, Srivastava, \Stefankovic{} and Yin
for approximating the hard-core partition function on graphs with bounded connective constant. Also, we obtain an
FPTAS for counting independent sets on graphs with no vertices with degree at least 6 whose neighbours' degrees sum to $27$ or more.
By a result of Sly, there is no FPTAS that applies to all graphs with maximum degree~$6$ unless $\mbox{P}=\mbox{NP}$.
\end{abstract}

\section{Introduction} \label{sec:intro}

The problem of counting the independent sets of a bipartite graph 
was originally shown to be \#P-complete by Provan and Ball~\cite{provan}.
Clearly, this implies that counting the independent sets of an arbitrary graph is also \#P-complete.
The problem of enumerating independent sets goes back even further~\cite{bron}.

Given that there is unlikely to be a polynomial-time algorithm for counting independent sets,
much work \cite{DJW-old, DJW, Dubois, FK, GL, JT, WahlPaper, Zhang} 
has focussed on developing exponential algorithms that are as fast as possible.
The fastest known algorithm for counting independent sets is due to Gaspers and Lee~\cite{GL}, with running time  $O(2^{0.3022n})$,
where $n$ denotes the number of vertices of the graph~$G$. The algorithm of Gaspers and Lee builds on  a long history of  improvements, as detailed in Table~\ref{tab:bis-history}.

\begin{table}[th]
\label{tab:bis-history}
\begin{center}
\begin{tabular}{|l|l|}
\hline
Algorithm & Running time \\
			\hline
			Brute force enumeration & $O(2^n)\,\poly(n)$ \\
			Dubois~\cite{Dubois} and Zhang~\cite{Zhang} & $O(2^{0.6943n})$\\
			Dahll\"of, Jonsson and Wahlstr\"om~\cite{DJW-old} & $O(2^{0.4057n})$ \\
			Dahll\"of, Jonsson and Wahlstr\"om~\cite{DJW} & $O(2^{0.3290n})$ \\
			F\"urer and Kasiviswanathan~\cite{FK} & $O(2^{0.3174n})$ \\
			Wahlstr\"om~\cite{WahlPaper} & $O(2^{0.3077n})$ \\
			Junosza-Szaniawski and Tuczy\'nski~\cite{JT} & $O(2^{0.3068n})$ \\
			Gaspers and Lee~\cite{GL} & $O(2^{0.3022n})$ \\
			\hline
			\rule{0pt}{2.5ex}% Stops hline touching superscripts
			This paper (general graphs)& $O(2^{0.2680n})\, \poly(1/\epsilon)$ approximation \\ 
			This paper (bipartite graphs)& $O(2^{0.2372n})\, \poly(1/\epsilon)$ approximation \\
\hline
\end{tabular}
\end{center}
\caption{A history of exponential-time algorithms for counting independent sets.}
\end{table}

Given the \#P-completeness of \emph{exactly} counting independent sets, one might wonder whether there are faster algorithms for approximation.
Let $Z(G)$ 
\label{def:Z} denote the number of independent sets of a graph~$G$
and let \probIS{} be the problem of computing $Z(G)$, given a graph~$G$.
An \emph{$\epsilon$-approximation} of~$Z(G)$ is a number 
$N$ satisfying $(1-\epsilon) Z(G) \leq N \leq (1+\epsilon) Z(G)$. An \emph{approximation scheme} for \probIS{} is an 
algorithm that takes as input a graph~$G$ and an  error tolerance $\epsilon \in (0,1)$
and outputs an $\epsilon$-approximation of~$Z(G)$.
It is a \emph{fully polynomial-time} approximation scheme (FPTAS) if its running time
is  at most a polynomial in $|V(G)|$ and $1/\epsilon$.
Weitz~\cite{Weitz} gave an FPTAS for approximating $Z(G)$ on graphs of degree at most~$5$.
However, Sly~\cite{Sly} showed that, unless $\mbox{P}=\mbox{NP}$, there is no FPTAS  on graphs of degree at most~$6$.

Although there is  unlikely to be an FPTAS for  \probIS,
our main result is that there is an approximation scheme 
 whose running time is  faster than the best known algorithms
for exactly counting independent sets. In particular (see Theorem~\ref{thm:main}),
we give an approximation scheme for \probIS{}  whose running time, given an $n$-vertex graph~$G$
and an  error tolerance~$\epsilon$,
is
at most 
 $O(2^{0.2680n})$ times a polynomial in~$1/\epsilon$.
  
The problem of approximately counting the independent sets of a \emph{bipartite} graph,  denoted \probBIS, is a canonical problem  in  approximate counting.
It is known \cite{relative} to be complete in a large complexity class with respect to approximation-preserving reductions.
Many important approximate counting problems have been shown to be \probBIS-hard or \probBIS-equivalent \cite{ferropotts, hcol, csps, bdpotts}  so resolving  its complexity   is a major open problem in 
approximate counting.  

Our second result (also given in Theorem~\ref{thm:main}) is 
  that there  is 
an approximation scheme for \probBIS{} whose running time is faster than the best known  approximation scheme
for general graphs. 
  In particular,
we give an approximation scheme   whose running time is
at most $O(2^{0.2372n})$ 
 times a polynomial in~$1/\epsilon$.

\begin{restatable}{theorem}{thmmain}\label{thm:main}	
There is an approximation scheme for \probIS{} which, given an $n$-vertex input graph with error tolerance $\eps$, runs in time $O(2^{0.268n})\,\poly(1/\epsilon)$. 
There is an approximation scheme for \probBIS{} with running time $O(2^{0.2372n})\,\poly(1/\epsilon)$.
\end{restatable}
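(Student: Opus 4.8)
The plan is to interleave a branching procedure of the kind used in exact exponential-time counting algorithms with a correlation-decay-based approximation scheme for sparse instances: we branch on high-degree vertices only until the residual graph is sparse enough that the latter runs in polynomial time, so the recursion tree is far shallower than the one needed to count exactly. Throughout it is convenient to work with vertex-weighted graphs, in which each vertex $v$ carries a pair of positive rational weights $\win(v),\wout(v)$ and $Z$ is replaced by $\sum_{I}\prod_{v\in I}\win(v)\prod_{v\notin I}\wout(v)$; unweighted $\probIS$ is the all-ones case.

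The base case is an FPTAS for this weighted partition function on a family $\ourfamily$ of weighted graphs of uniformly bounded connective constant $\cc$, obtained by generalising to non-uniform vertex weights the self-avoiding-walk-tree method of Sinclair, Srivastava, \Stefankovic{} and Yin: one shows that on $\ourfamily$ the associated tree recursion for occupation probabilities exhibits exponential correlation decay, so truncating the tree at depth $O(\log(n/\epsilon))$ approximates every vertex marginal, and hence $Z$, to within $\epsilon$ in time $(n/\epsilon)^{O_\cc(1)}$. The delicate part --- and, I expect, the main obstacle --- is to pin down $\ourfamily$ via a purely local degree condition that simultaneously (i) forces $\cc$ small enough that the hard-core activity~$1$ lies strictly inside the corresponding uniqueness window, (ii) is preserved by the reductions below, and (iii) fails only at vertices on which branching makes good progress. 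Calibrating this trade-off is what produces the threshold ``no vertex of degree at least~$6$ whose neighbours' degrees sum to~$27$ or more'', and checking the required decay estimate for that class is the most computation-heavy step, carried out with machine assistance via \validatecode.

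Given the base case, the algorithm is: repeatedly apply polynomial-time local reductions that remove smallest-degree vertices --- an isolated vertex contributes a factor~$2$; a degree-$1$ vertex can be deleted after rescaling its neighbour's two weights; and so on --- each rewriting $Z(G)$ as $c\cdot Z(G')$ for a single strictly smaller weighted graph $G'$ and an efficiently computable positive~$c$, which is why weights, and the multivariate FPTAS, are needed and why no cancellation arises. On the reduced instance, if no vertex violates the $\ourfamily$-condition we call the base-case FPTAS; otherwise we pick such a vertex $v$ (or a small witnessing set) and branch on whether $v\in I$, recursing on $G$ with $v$ forced out (delete $v$) and with $v$ forced in (delete $N[v]$). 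The number of leaves of the recursion is controlled by a measure-and-conquer argument: each vertex is given a weight in $[0,1]$ depending on its current degree, with the weights summing to at most~$n$ and chosen so that every reduction and every branching step decreases the total measure by a fixed amount; solving the resulting branching recurrences bounds the leaf count by $O(2^{0.268n})$, and processing each leaf in $\poly(n,1/\epsilon)$ time gives the claimed running time. Because $Z(G)$ is a sum of non-negative leaf values, replacing each by an $\epsilon$-approximation yields an $\epsilon$-approximation of $Z(G)$, so there is no error amplification.

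For $\probBIS$ the same scheme is used --- reductions and branching preserve bipartiteness, and the base-case FPTAS applies unchanged --- but bipartiteness supplies an extra base case: on a bipartite $G=(A\cup B,E)$ with $|A|\le|B|$ one has $Z(G)=\sum_{S\subseteq A}2^{|B\setminus N(S)|}$, computable in time $O(2^{|A|})\poly(n)$, so branching can stop once the smaller side is small, the instance lies in $\ourfamily$, or no bad vertex remains. Re-running the measure-and-conquer analysis with this extra stopping rule, together with any bipartite-specific reductions, yields the improved exponent $0.2372$.
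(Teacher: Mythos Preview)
Your overall strategy---branch on vertices that violate the local sparsity condition, prune low-degree vertices via weight updates, and fall back to a multivariate connective-constant FPTAS on the residual---matches the paper, and you have correctly identified the threshold (degree~$\ge 6$, 2-degree~$\ge 27$) and the reason weights are needed.

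The substantive gap is in the bipartite case. The paper does \emph{not} use the enumeration $Z(G)=\sum_{S\subseteq A}2^{|B\setminus N(S)|}$ as an extra base case; that idea gives $2^{n/2}$ in the balanced case and does not by itself explain the exponent $0.2372$. The bipartite improvement comes entirely from a sharper branching analysis: since $\Gamma_G(v)$ is independent in a bipartite graph, the ``in'' branch deletes all $d_G^2(v)$ edges incident to $\Gamma_G(v)$ rather than roughly half of them, which shifts the worst case and permits a potential with a smaller top coefficient. Making this work when $|\Gamma_G^2(v)|$ (restricted to the non-tree part) equals~$2$ requires a further trick the paper calls the \emph{extended decomposition}: instead of branching on $v$ or on a second-neighbour, one follows a path in $G-X$ to a vertex $z$ of higher residual degree and branches there, then prunes the enlarged set $X^+$. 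Without this, the $|S|=2$ branch dominates and the bipartite exponent does not drop to $0.2372$.

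A smaller point: the potential used is not a per-vertex weight in $[0,1]$ but a piecewise-linear function $f(m,n)=\rho_i m+\sigma_i n$ with many slices indexed by average degree, chosen so that $f=\min_j f_j$ (this concavity is what lets each branch be analysed within a single slice). Some slices have $\sigma_i<0$, which means removing an isolated vertex or a tree component can \emph{increase} the potential; this is why the paper needs the $\Red$/$\TR$ machinery and the careful accounting of tree components created by each branch, rather than the simpler reductions you describe.
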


Our improvement for approximate counting over the best exact-counting algorithm has two sources.
The first is a technical analysis for bipartite graphs; we defer discussion of this to the   proof sketch in Section~\ref{sec:sketch}. The second  is a substantially faster handling of sparse instances for both bipartite and non-bipartite graphs. It is not at all surprising that this is possible, since there is an FPTAS for instances with maximum degree at most~5~\cite{Weitz} but it is \#P-complete to solve 3-regular instances exactly~\cite{XZZ} (even if they are bipartite). However,  it is necessary to do substantially more than to use~\cite{Weitz} as a black box.

Recall that the result of~\cite{Weitz} is tight in the sense that it is NP-hard to approximate $\IS(G)$ when $G$ is a 6-regular graph~\cite{Sly}. 
Nevertheless, the result of~\cite{Weitz} has been improved in several directions \cite{JKP, LL-oneside, SSSY}.
The most relevant of these in  our context  is the result of \cite{SSSY} which replaces
the maximum degree bound with a weaker bound on a natural measure of average degree called the \emph{connective constant}, which we now define.
Given a graph~$G$, a vertex~$v$, and a positive integer~$\ell$,
let $N_G(v,\ell)$ \label{def:NGvell} denote the number of 
simple paths of length~$\ell$ starting from~$v$ in~$G$. 

\begin{restatable}{defn}{defcc}\label{def:cc-2}
 (\cite{SSY}).
Let $\calF$ be a family of graphs and let $\kappa$ be a positive real number.  The connective constant
 of~$\calF$ is at most~$\kappa$ if there are real numbers $a$
 and~$c$ such that, for every $G\in\calF$, every vertex
  $v \in V(G)$, and every integer $\ell \ge a\log |V(G)|$, we have
  $\sum_{i=1}^\ell N_G(v,i) \le c\kappa^\ell$.  
\end{restatable}
 
As an example, let $\calF$ be a family of graphs such that
the maximum degree of any graph in~$\calF$ is~$d$.
It is easy to see that
 the connective constant of~$\calF$ is at most~$d-1$. However, for some such families~$\calF$ it is even smaller.

 Sinclair, Srivastava, \Stefankovic{} and Yin~\cite{SSSY}  give an FPTAS for $\IS(G)$ which works on any family of graphs  with connective constant at most 4.141.  These are the ``sparse instances'' that we referred to earlier.
 Our approach is to use this FPTAS for sparse instances  as a base case in our exponential-time algorithm
for counting independent sets.
We are left with two key difficulties. The first issue is that most previous 
fast exponential-time algorithms for counting independent sets exploit the ability to add ``weights'' to vertices; amongst other things, this allows for quick removal of degree-1 vertices so it substantially improves the running time. However, the result of~\cite{SSSY} requires all vertices to have the same weight, precluding the use of such techniques. 
We therefore give a more general, multivariate version of the theorem of~\cite{SSSY} (which may also be useful in other contexts).

In order to describe the generalisation, we need some notation.
Let $\calI(G)$ 
denote the set of independent sets of a graph~$G$.
We define a \emph{weighted graph} $\calG = (G,\win,\wout,W)$ 
to be a tuple consisting of a graph~$G$, two  functions~$\win$ and~$\wout$ assigning positive integer weights to the elements of~$V(G)$,  and a positive integer~$W$. 
We then define the partition function 
$$ \IS(\calG)   = W \sum_{I \in \calI(G)} 
  \prod_{v \in I}\win(v)  \prod_{v \in V(G)\setminus I}\wout(v).$$
 Modulo normalisation (which will be useful for our recursive instances), this is the multivariate hard-core partition function of~$G$.
 
 We need two more definitions in order to state our result.
 Given a positive real number~$\lambda$, we say that 
 a weighted graph 
 $\calG$~is \emph{$\lambda$-balanced} if, for every vertex $v$ of~$G$,
 we have 
 $\win(v)\leq\lambda\wout(v)$. ``$\lambda$-balance'' is just a way of bounding how much more weight accrues for being in an independent set, 
 as opposed to out of it.
 The result of \cite[Theorem 1.1]{SSSY} applies to the (usual) univariate hard-core model where every vertex has $\win(v) = \lambda$ and $\wout(v)=1$.
 They give an FPTAS which works for any family~$\calF$ with connective constant at most~$\kappa$
 as long as $\lambda <   \lambda_c(\kappa) $
 where $\lambda_c(\kappa) = \kappa^\kappa/{(\kappa-1)}^{\kappa+1} $ is the critical activity for the hard-core model on an infinite $\kappa$-ary tree.
 This leads to our last definition here.
 If  $\lambda < \lambda_c(\kappa)$ we say that $\kappa$ is \emph{subcritical with respect to~$\lambda$}.  
 \label{def:subcrit}

\begin{restatable}{theorem}{thmhcalgo}\label{thm:hc-algo}
Let $\lambda$ be a positive real number and let
 $\calF$ be a family of graphs whose connective constant is 
 at most a quantity which is subcritical with respect to~$\lambda$.  
 There is an FPTAS   for~$Z(\calG)$ on $\lambda$-balanced weighted graphs $\calG = (G,\win,\wout,W)$ such that $G\in\calF$.  
\end{restatable}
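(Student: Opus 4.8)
The plan is to run the argument of \cite{SSSY} in the weighted, multivariate setting, checking at each step that bounding every vertex activity by~$\lambda$ causes no new difficulty. First reduce to a weightless instance: put $\lambda_v := \win(v)/\wout(v)$ for $v\in V(G)$ and let $\boldsymbol\lambda=(\lambda_v)_{v\in V(G)}$, so that
\[
\IS(\calG)=W\Bigl(\prod_{v\in V(G)}\wout(v)\Bigr)\,\widehat Z_G(\boldsymbol\lambda),
\qquad
\widehat Z_G(\boldsymbol\lambda):=\sum_{I\in\calI(G)}\ \prod_{v\in I}\lambda_v .
\]
The prefactor $W\prod_{v}\wout(v)$ is a product of at most $|V(G)|+1$ positive integers presented in binary, hence computable exactly in polynomial time, and each $\lambda_v$ is a rational of polynomial bit-length with $0<\lambda_v\le\lambda$. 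So it suffices to give an FPTAS for $\widehat Z_G(\boldsymbol\lambda)$ — the multivariate hard-core partition function of~$G$ with all activities at most~$\lambda$ — when $G\in\calF$.

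Next use the standard telescoping reduction to marginals. Fix an ordering $v_1,\dots,v_n$ of~$V(G)$, write $G_i:=G[\{v_1,\dots,v_i\}]$, and let $q_i$ be the probability that $v_i$ is unoccupied under the hard-core distribution on~$G_i$ with activities~$\boldsymbol\lambda$. From $\widehat Z_{G_i}(\boldsymbol\lambda)=\widehat Z_{G_{i-1}}(\boldsymbol\lambda)+\lambda_{v_i}\,\widehat Z_{G_{i-1}\setminus N_G(v_i)}(\boldsymbol\lambda)$ one gets $q_i\ge 1/(1+\lambda)$ and $\widehat Z_G(\boldsymbol\lambda)=\prod_{i=1}^n 1/q_i$, so it is enough to approximate each~$q_i$ to within a factor $1\pm\eps/(2n)$. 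For this I would apply Weitz's self-avoiding-walk tree \cite{Weitz}: Weitz's identity is purely algebraic and uses nothing about uniformity of the activities, so $q_i$ equals the unoccupied-probability of the root of $\sawt{G_i}{v_i}$, where each tree node inherits the activity $\lambda_v\le\lambda$ of the graph vertex it represents and each leaf produced by a walk that revisits a vertex is pinned according to the rule in Weitz's construction.

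It remains to establish correlation decay on these trees, and here I would invoke the potential-function analysis of~\cite{SSSY} essentially verbatim, observing two things. (i)~The branching of $\sawt{G_i}{v_i}$ — in particular the number of its nodes reached along walks of a given length — depends only on~$G$, not on the weights, so the hypothesis $\sum_{i=1}^{\ell}N_G(v,i)\le c\kappa^{\ell}$ for $\ell\ge a\log|V(G)|$ applies unchanged; this bounds the number of nodes down to depth $L:=\lceil a\log|V(G)|\rceil+O(\log(n/\eps))$ by a polynomial in~$n$ and~$1/\eps$, and simultaneously bounds the error incurred by truncating each tree at depth~$L$. (ii)~\cite{SSSY}'s contraction estimate at a tree node is local — it depends only on that node's activity and on the messages arriving from its children — and in the recursion on $\sawt{G_i}{v_i}$ every node activity lies in $(0,\lambda]$ and every child message lies in the same range of ratio-values as in the univariate SAW tree with activity~$\lambda$; hence the same decay rate $\rho=\rho(\kappa,\lambda)$ with $\kappa\rho<1$ that~\cite{SSSY} extract from $\lambda<\lambda_c(\kappa)$ still applies. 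Evaluating the (rounded, as in~\cite{SSSY}) tree recursion on $\sawt{G_i}{v_i}$ truncated at depth~$L$ then yields an $(\eps/2n)$-approximation of~$q_i$ in time $\poly(n,1/\eps)$; multiplying these gives an $\eps$-approximation of $\widehat Z_G(\boldsymbol\lambda)$, hence of $\IS(\calG)$.

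The main obstacle is exactly point~(ii): one must confirm that~\cite{SSSY}'s per-node contraction bound, which is derived for a node of activity precisely~$\lambda$, remains valid for every node activity in $(0,\lambda]$ and, compounded along a self-avoiding walk against the $c\kappa^{\ell}$ walk count, still sums geometrically. The reachable ratio-values in the multivariate recursion form a subset of those in the univariate one, which strongly suggests the bound transfers for free; but since the relevant expressions need not be monotone in the node's activity, this should be checked against their calculus, as it is precisely the place where subcriticality is used. A minor secondary point of bookkeeping: the connective-constant bound is hypothesised only for $\ell\ge a\log|V(G)|$, so the top $\lceil a\log|V(G)|\rceil$ levels of each tree must be bounded crudely (the count at level $\lceil a\log|V(G)|\rceil$ is already polynomial, being at most $c\kappa^{a\log|V(G)|}$), with the geometric decay taking over only below that depth.
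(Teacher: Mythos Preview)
Your approach is plausible but genuinely different from the paper's. The paper does \emph{not} re-open the correlation-decay analysis of~\cite{SSSY}; instead it reduces the multivariate problem to a univariate instance and invokes~\cite{SSSY} as a black box (Theorem~\ref{thm:cc-2}). Concretely, after discarding vertices with $\win(v)/\wout(v)\le \eps/(3n)$, for each remaining vertex~$v$ it writes $\lambda\wout(v)/\win(v)$ approximately as a product $\prod_{t=0}^{n}\Lambda_t^{a_{t,v}}$ with $\Lambda_t=1+\lambda(1+\lambda)^{-t}$, and then attaches to~$v$ a collection of $a_{t,v}$ stars with $t$ leaves each. The resulting unweighted graph~$G''$ has $Z_\lambda(G'')$ equal (up to an explicit polynomial-time factor and a $(1\pm\eps/3)$ error) to $\IS(\calG')$. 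The only analytic work is Lemma~\ref{lem:conn-const-2}, showing that augmenting graphs in~$\calF$ by polynomially many depth-$2$ trees keeps the connective constant subcritical, which follows from a direct count of how a self-avoiding walk can enter and leave the gadgets.

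What each approach buys: the paper's gadget reduction is modular --- it never touches the potential-function calculus of~\cite{SSSY} and so sidesteps precisely the monotonicity-in-activity question you flag as your main obstacle in point~(ii). Your approach, if the contraction bound for activities in $(0,\lambda]$ is verified, would be more direct and would avoid the connective-constant-preservation lemma for the gadget family. But as you acknowledge, that verification is not automatic: the \cite{SSSY} potential is tuned to the worst-case fixed point at activity exactly~$\lambda$, and establishing that smaller activities only help requires going back into their estimates rather than quoting the theorem. The paper chose the black-box route to avoid exactly this.
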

	
Thus, Theorem~\ref{thm:hc-algo} gives an FPTAS up to the critical point for the multivariate hard-core model, allowing our algorithm to use vertex weight in order to remove degree-1 vertices (amongst other things).
The next issue is how to use this FPTAS to get a faster running time. 
For this, we \label{def:2deg}
 define the \emph{2-degree} of a vertex as the sum of its neighbours' degrees, and we use Theorem~\ref{thm:hc-algo} to prove the following result,
 which may also be of independent interest.

\begin{restatable}{theorem}{lastintro}\label{thm:lastintro}
There is an FPTAS for $\IS(G)$ on graphs $G$ 
in which every vertex of degree at least~$6$ has 2-degree at most~$26$.
 \end{restatable}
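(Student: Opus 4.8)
The plan is to deduce Theorem~\ref{thm:lastintro} from Theorem~\ref{thm:hc-algo} with $\lambda=1$. Given a graph $G$ in which every vertex of degree at least~$6$ has $2$-degree at most~$26$, let $\calG=(G,\win,\wout,W)$ be the weighted graph with $\win\equiv\wout\equiv 1$ and $W=1$. Then $\calG$ is trivially $1$-balanced and $\IS(\calG)=|\calI(G)|=\IS(G)$, so an FPTAS for $\IS(\calG)$ is an FPTAS for $\IS(G)$. By Theorem~\ref{thm:hc-algo} it therefore suffices to show that the family $\calF$ of all such graphs~$G$ has connective constant at most some quantity subcritical with respect to~$1$. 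Since $\lambda_c(\kappa)=\kappa^\kappa/(\kappa-1)^{\kappa+1}$ is continuous and strictly decreasing on $(1,\infty)$, with $\lambda_c(4)=\tfrac{256}{243}>1>\tfrac{3125}{4096}=\lambda_c(5)$, there is a unique root $\kappa^*\in(4,5)$ of $\lambda_c(\kappa)=1$ (numerically $\kappa^*\approx4.141$), and it suffices to bound the connective constant of~$\calF$ by some fixed $\kappa<\kappa^*$.

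Two easy observations about~$\calF$ do most of the structural work. First, if $v$ has degree $d\ge6$ then its $2$-degree is at least~$d$, so $d\le26$; hence every graph in~$\calF$ has maximum degree at most~$26$, and the connective constant of~$\calF$ is well defined. Second, if $u$ is adjacent to a vertex~$u'$ of degree $d'\ge6$ then the $2$-degree of~$u'$ is at least $\deg(u)+d'-1$, so $\deg(u)+d'\le27$: vertices of large degree have only low-degree neighbours.

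To bound the connective constant we bound, for every vertex~$v$ and every~$\ell$, the number $N_G(v,\ell)$ of simple paths of length~$\ell$ from~$v$; since each such path is a non-backtracking walk it suffices to bound the number of those. The tool is an amortised weighting of the \emph{directed} edges: we seek a positive weight $g(u,u')$ for each ordered pair of adjacent vertices~$(u,u')$ --- depending, it will turn out, only on $\deg u$ and $\deg u'$ --- such that
\[
\sum_{\substack{u''\sim u'\\ u''\ne u}} g(u',u'') \;\le\; \kappa\,g(u,u') \qquad \text{for every ordered edge }(u,u'),
\]
with $\kappa<\kappa^*$. A routine induction on~$\ell$ then shows that the number of non-backtracking walks of length~$\ell$ from any vertex is at most $26(\max g/\min g)\kappa^{\ell}$, whence $\sum_{i=1}^{\ell}N_G(v,i)\le c\kappa^{\ell}$ for a suitable constant~$c$, giving the required bound on the connective constant. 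Weighting directed edges rather than vertices is essential: a high-degree vertex whose neighbours are all (near-)leaves carries no long walks, and only a weight that records the degree just departed can ``see'' that an edge leads into such a trap and discount it accordingly; with vertex weights alone the trap forces $\kappa$ to be large. The displayed inequality is checked by a finite case analysis on $d=\deg u'$, using the observations above to constrain the degrees of~$u'$'s neighbours: if $d\le5$ there are at most four terms on the left, but the neighbours are otherwise almost unconstrained; if $d\ge6$ there may be up to~$25$ terms, but $\sum_{u''\sim u'}\deg u''\le26$ forces the neighbours to have small degrees and hence, for an appropriate~$g$, small total weight. Choosing~$g$ to balance these two families of constraints, and verifying the finitely many resulting numerical inequalities, completes the proof.

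The main obstacle is exactly this optimisation, and it is what pins down the constants~$6$ and~$26$. The light vertices alone force $\kappa\ge4$ (a $5$-regular graph has connective constant~$4$), leaving only about~$0.14$ of room below~$\kappa^*\approx4.141$, so~$g$ must be essentially optimal and the accounting for heavy vertices must use the bound~$26$ tightly. The choice is in fact sharp: relaxing~$26$ to~$27$ admits a large-girth family built from degree-$6$ vertices each adjacent to three degree-$5$ and three degree-$4$ vertices, for which a transfer-matrix computation gives connective constant $\sqrt{\tfrac{14+\sqrt{436}}{2}}\approx4.18>\kappa^*$, while relaxing the threshold~$6$ to~$7$ admits $6$-regular graphs, whose connective constant is~$5$. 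With the stated constants the system is feasible with $\kappa$ comfortably below~$\kappa^*$, and the finitely many inequalities can be verified directly, with computer assistance if desired.
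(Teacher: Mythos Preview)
Your approach and the paper's are essentially the same: bound the connective constant of the relevant family below $\kappa^*\approx 4.141$ using a weight depending on the pair (current degree, previous degree), then invoke Theorem~\ref{thm:hc-algo} with $\lambda=1$. Your directed-edge weight $g(u,u')=g(\deg u,\deg u')$ and the ensuing multiplicative inequality for non-backtracking walks is exactly the paper's $\kappa$-decreasing function $\theta(x)=\psi(d_T(x),d_T(p(x)))$ on the self-avoiding walk tree (Definition~\ref{def:kapdec}, Lemma~\ref{lem:kappa-dec}), repackaged on the graph rather than the tree.

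Where the two routes differ is in how the low-degree vertices are handled. The paper does \emph{not} bound the connective constant of your full family~$\calF$. Instead it first uses the weighted-graph machinery (the $\Prune$ operation and Lemma~\ref{lem:MR}) to strip out all vertices of degree~$<2$, landing in the subfamily~$\ourfamily$ with $\delta\ge 2$; only then does it verify a specific potential $\psi$ (Theorem~\ref{thm:2-degree-FPTAS}). This buys a much smaller case analysis: with $\delta\ge 2$ one gets $\Delta\le 13$ rather than your~$26$, and $\psi$ need only be defined and checked for $d,p\in\{2,\dots,13\}$. Your route avoids the pruning detour but pays for it with a larger verification (all degree pairs in $\{1,\dots,26\}$, including the dead-end cases $\deg u''=1$ and the ``first step from a leaf'' cases $\deg u=1$). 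That is still finite and should go through, but you have asserted rather than exhibited a working~$g$; the paper gives $\psi$ explicitly and checks it by computer. Your tightness remarks (relaxing $6\to 7$ or $26\to 27$) are a pleasant addition not present in the paper's proof.
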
 

Theorem~\ref{thm:lastintro} implies that, given
an instance, either we can solve  it in polynomial time, 
or    we can find a vertex of degree at least~$6$ and 2-degree at least~$27$.
Most previous work in the area  is already built around finding vertices with high degree and high 2-degree (often called ``size''), so we can then use standard techniques to establish an improved bound on the running time, enabling us to prove Theorem~\ref{thm:main}.

\subsection{Proof Sketch}\label{sec:sketch}

\begin{figure}
\begin{center}
\begin{tikzpicture}[scale=0.75,node distance = 1.5cm]
    \tikzstyle{vtx}=[fill=black, draw=black, circle, inner sep=2pt]
    \tikzstyle{gry}=[fill=lightgray, draw=lightgray, circle, inner sep=2pt]

    \node at (-0.5,1) {$G$:};
    \node[vtx] (b) at (0,0) {};
    \node[vtx] (c) at (2,0) {};
    \node[vtx] (d) at (2,2) {};
    \node[vtx] (a) at (0,2) {};
    \node[vtx] (v) at (1,1) [label=270:{$v$}] {};
    \node[vtx] (e) at (3,1) {};
    \node[vtx] (f) at (4.4,1) {};

    \draw (a)--(b)--(c)--(d)--(a)--(v)--(c)--(e)--(d)--(v)--(b);
    \draw (e)--(f);

    \begin{scope}[shift={(7,1.75)}]
        \node at (-0.75,1) {$\tGin$:};
        \node[gry] (b2) at (0,0) {};
        \node[gry] (c2) at (2,0) {};
        \node[gry] (d2) at (2,2) {};
        \node[gry] (a2) at (0,2) {};
        \node[gry] (v2) at (1,1) [label={[gray]270:{$v$}}] {};
        \node[vtx] (e2) at (3,1) {};
        \node[vtx] (f2) at (4.4,1) {};

        \draw[lightgray] (a2)--(b2)--(c2)--(d2)--(a2)--(v2)--(c2)--(e2)--(d2)--(v2)--(b2);
        \draw (e2)--(f2);
    \end{scope}

    \begin{scope}[shift={(7,-1.75)}]
        \node at (-0.75,1) {$\tGout$:};
        \node[vtx] (b3) at (0,0) {};
        \node[vtx] (c3) at (2,0) {};
        \node[vtx] (d3) at (2,2) {};
        \node[gry] (v3) at (1,1) [label={[gray]270:{$v$}}] {};
        \node[vtx] (a3) at (0,2) {};
        \node[vtx] (e3) at (3,1) {};
        \node[vtx] (f3) at (4.4,1) {};

        \draw (d3)--(a3)--(b3)--(c3)--(e3)--(f3);
        \draw (e3)--(d3)--(c3);
        \draw[lightgray] (a3)--(v3)--(b3);
        \draw[lightgray] (c3)--(v3)--(d3);
    \end{scope}

    \draw[->,very thick,gray] (3.25,2.25) .. controls (4.4,2.65) .. (5.4,2.75);
    \draw[->,very thick,gray] (3.25,-0.25) .. controls (4.4,-0.65) .. (5.4,-0.75);
    
\end{tikzpicture}
\vskip 1cm
\begin{tikzpicture}[scale=0.75,node distance = 1.5cm]
    \tikzstyle{vtx}=[fill=black, draw=black, circle, inner sep=2pt]
    \tikzstyle{gry}=[fill=lightgray, draw=lightgray, circle, inner sep=2pt]
    \tikzstyle{vset}=[gray,rounded corners=5pt,dashed];

    \begin{scope}[shift={(0,0)}]
        \node [label=180:{$G$:}] at (-0.5,1) {};
        \node[vtx] (a) at (0,1) {};
        \node[vtx] (b) at (1,0) {};
        \node[vtx] (c) at (1,2) {};
        \node[vtx] (z) at (2,1) [label=270:{$z$}] {};

        \node[vtx] (d) at (3,0) {};
        \node[vtx] (e) at (3,2) {};

        \draw[vset,blue] (-0.5,-0.5) rectangle (1.5,2.5);
        \node at (0,2) {$S$};
        
        \draw (a)--(b)--(z)--(c)--(a);
        \draw (d)--(z)--(e)--(d);

        \draw (d)--($(d)+(50:0.9)$);
        \draw (d)--($(d)+(-20:1)$);

        \draw (e)--($(e)+(30:0.8)$);
        \draw (e)--($(e)+(-10:1)$);
        \draw (e)--($(e)+(-45:0.7)$);
    \end{scope}

    \begin{scope}[shift={(8.75,0)}]
        \node [label=180:{$G'$:}] at (-0.5,1) {};
        \node[gry] (a) at (0,1) {};
        \node[gry] (b) at (1,0) {};
        \node[gry] (c) at (1,2) {};
        \node[vtx] (z) at (2,1) [label=270:{$z$}] {};

        \node[vtx] (d) at (3,0) {};
        \node[vtx] (e) at (3,2) {};

        \draw[vset] (-0.5,-0.5) rectangle (1.5,2.5);
        \node at (0,2) {$S$};
        
        \draw[lightgray] (a)--(b)--(z)--(c)--(a);
        \draw (d)--(z)--(e)--(d);

        \draw (d)--($(d)+(50:0.9)$);
        \draw (d)--($(d)+(-20:1)$);

        \draw (e)--($(e)+(30:0.8)$);
        \draw (e)--($(e)+(-10:1)$);
        \draw (e)--($(e)+(-45:0.7)$);
    \end{scope}

    \draw[->,very thick,gray] (4.5,1) -- (6.75,1);
\end{tikzpicture}
\end{center}
\caption{Top: \emph{branching} on a vertex $v\in V(G)$.  Any independent set of~$G$ either contains~$v$ and has the form $\{v\} \cup S$, where $S$~is an independent set of $\tGin = G-v-\Gamma(v)$ or does not contain~$v$ and is an independent set of $\tGout = G-v$. Bottom: \emph{pruning} a set~$S\subseteq V(G)$ with $|\Gamma_G(S)|=1$. We form the graph~$G'$ by deleting~$S$ and updating the weights of the vertex~$z$ to compensate.}
\label{fig:branch}
\end{figure}
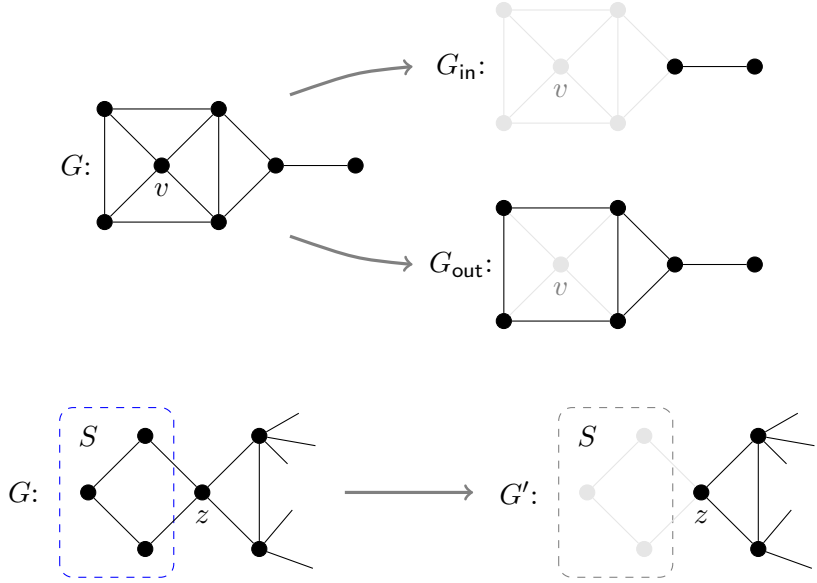

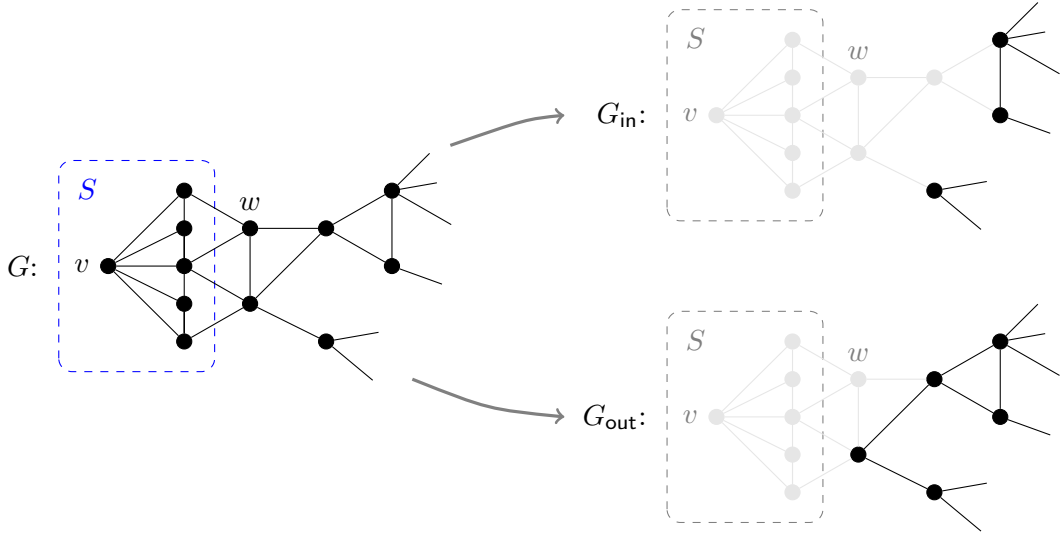
\begin{figure}
\begin{center}
\begin{tikzpicture}[scale=1,node distance = 1.5cm]
    \tikzstyle{vtx}=[fill=black, draw=black, circle, inner sep=2pt]
    \tikzstyle{gry}=[fill=lightgray, draw=lightgray, circle, inner sep=2pt]
    \tikzstyle{vset}=[gray,rounded corners=5pt,dashed];

    \begin{scope}[shift={(0,0)}]
        \node at (-0.65,1) [label=180:{$G$:}] {};
        \node[vtx] (v) at (0,1) [label=180:{$v$}] {};
      
        \node[vtx] (a) at (1,0) {};
        \node[vtx] (b) at (1,0.5) {};
        \node[vtx] (c) at (1,1) {};
        \node[vtx] (d) at (1,1.5) {};
        \node[vtx] (e) at (1,2) {};

        \node[vtx] (f) at ($(a)+(30:1)$) {};
        \node[vtx] (g) at ($(c)+(30:1)$)[label=90:{$w$}] {};

        \node[vtx] (i) at ($(g)+(1,0)$)  {};
        \node[vtx] (h) at ($(i)+(0,-1.5)$) {};
        \node[vtx] (j) at ($(i)+(30:1)$) {};
        \node[vtx] (k) at ($(j)+(0,-1)$) {};

        \draw[vset,blue] (-0.65,-0.4) rectangle (1.4,2.4);
        \node[blue] at (-0.275,2.025) {$S$};
        
        \draw (v)--(a)--(b)-- (v)--(c)--(d)-- (v)--(e)--(a)
            --(f)--(c)--(g)--(e);
        \draw (h)--(f)--(g)--(i)--(j)--(k)--(i)--(f);

        \draw (h)--($(h)+(10:0.7)$);
        \draw (h)--($(h)+(-40:0.8)$);

        \draw (j)--($(j)+(45:0.7)$);
        \draw (j)--($(j)+(10:0.6)$);
        \draw (j)--($(j)+(-30:0.9)$);

        \draw (k)--($(k)+(-20:0.7)$);
    \end{scope}

    \begin{scope}[shift={(8,2)}]
        \node at (-0.65,1) [label=180:{$\tGin$:}] {};
        \node[gry] (v) at (0,1) [label={[gray]180:{$v$}}] {};
      
        \node[gry] (a) at (1,0) {};
        \node[gry] (b) at (1,0.5) {};
        \node[gry] (c) at (1,1) {};
        \node[gry] (d) at (1,1.5) {};
        \node[gry] (e) at (1,2) {};

        \node[gry] (f) at ($(a)+(30:1)$) {};
        \node[gry] (g) at ($(c)+(30:1)$)[label={[gray]90:{$w$}}] {};

        \node[gry] (i) at ($(g)+(1,0)$)  {};
        \node[vtx] (h) at ($(i)+(0,-1.5)$) {};
        \node[vtx] (j) at ($(i)+(30:1)$) {};
        \node[vtx] (k) at ($(j)+(0,-1)$) {};

        \draw[vset] (-0.65,-0.4) rectangle (1.4,2.4);
        \node[gray] at (-0.275,2.025) {$S$};
        
        \draw[lightgray] (v)--(a)--(b)-- (v)--(c)--(d)-- (v)--(e)--(a)
            --(f)--(c)--(g)--(e);
        \draw[lightgray] (h)--(f)--(g)--(i)--(j);
        \draw[lightgray] (k)--(i)--(f);
        \draw (j)--(k);

        \draw (h)--($(h)+(10:0.7)$);
        \draw (h)--($(h)+(-40:0.8)$);

        \draw (j)--($(j)+(45:0.7)$);
        \draw (j)--($(j)+(10:0.6)$);
        \draw (j)--($(j)+(-30:0.9)$);

        \draw (k)--($(k)+(-20:0.7)$);
    \end{scope}

    \begin{scope}[shift={(8,-2)}]
        \node at (-0.65,1) [label=180:{$\tGout$:}] {};
        \node[gry] (v) at (0,1) [label={[gray]180:{$v$}}] {};
      
        \node[gry] (a) at (1,0) {};
        \node[gry] (b) at (1,0.5) {};
        \node[gry] (c) at (1,1) {};
        \node[gry] (d) at (1,1.5) {};
        \node[gry] (e) at (1,2) {};

        \node[vtx] (f) at ($(a)+(30:1)$) {};
        \node[gry] (g) at ($(c)+(30:1)$)[label={[gray]90:{$w$}}] {};

        \node[vtx] (i) at ($(g)+(1,0)$)  {};
        \node[vtx] (h) at ($(i)+(0,-1.5)$) {};
        \node[vtx] (j) at ($(i)+(30:1)$) {};
        \node[vtx] (k) at ($(j)+(0,-1)$) {};

        \draw[vset] (-0.65,-0.4) rectangle (1.4,2.4);
        \node[gray] at (-0.275,2.025) {$S$};
        
        \draw[lightgray] (v)--(a)--(b)-- (v)--(c)--(d)-- (v)--(e)--(a)
            --(f)--(c)--(g)--(e);
        \draw[lightgray] (f)--(g)--(i);
        \draw (h)--(f);
        \draw (i)--(j)--(k)--(i)--(f);

        \draw (h)--($(h)+(10:0.7)$);
        \draw (h)--($(h)+(-40:0.8)$);

        \draw (j)--($(j)+(45:0.7)$);
        \draw (j)--($(j)+(10:0.6)$);
        \draw (j)--($(j)+(-30:0.9)$);

        \draw (k)--($(k)+(-20:0.7)$);
    \end{scope}

    \draw[->,very thick,gray] (4.5,2.6) .. controls (5.5,2.95) .. (6,3);
    \draw[->,very thick,gray] (4,-0.5) .. controls (5,-.9) .. (6,-1);

\end{tikzpicture}
\end{center}
\caption{A vertex~$v$ with $|\Gamma^2_G(v)|=2$.  Rather than branching on~$v$, we branch on~$w$ and prune the set~$S$.}
\label{fig:smart-branch}
\end{figure}

Given a graph~$G=(V,E)$, a vertex $v\in V$, and a subset~$X$ of~$V$, 
\label{def:GamGv}
let $\Gamma_G(v)$ denote the set of neighbours of~$v$ in~$G$, 
and  let 
\label{def:dGv}
$d_G(v)= |\Gamma_G(v)|$ and 
$\Gamma_G(X) = \bigcup_{v \in X}\Gamma_G(v) \setminus X$. 
\label{def:GamGX}
Let $\calG =(G,\win,\wout,W)$ be a weighted graph, and write $G=(V,E)$. The two main building blocks of our algorithm are shown in Figure~\ref{fig:branch}. First, we can \emph{branch} 
\label{def:branch}
on a vertex~$v$, applying the relation
\[
\IS(\calG) = \wout(v)\,\IS(\calG-v) + \win(v)\,\prod_{x \in \Gamma_G(v)}\wout(x)\,\IS(\calG-v-\Gamma_G(v))
\]
to transform $\calG$ into two smaller instances, then solve the two smaller instances recursively. Second, if a set $S \subseteq V$ satisfies $|\Gamma_G(S)| \le 1$, then we can apply \emph{pruning}
\label{def:pruning}
(sometimes known as \emph{multiplier reduction}) to remove $S$ from $\calG$ in $O(2^{|S|})$ time; if $|\Gamma_G(S)| = 1$, say $\Gamma_G(S) = \{z\}$, then the weight on~$z$ will change in the process (see Definition~\ref{defn:prune}). In particular, this allows us to easily remove vertices of degree~$1$; for this reason, we will typically assume that input graphs have minimum degree~$2$. 

A slightly simplified version of the algorithm is then as follows. For all $v \in V$, we write $\Gamma_G^2(v)$ 
\label{gamG2v}
for the set of vertices at distance exactly two from~$v$. If we can determine $\IS(\calG)$ in polynomial time, we do so. Otherwise, if $G$ contains a vertex $v$ of degree at least~$11$, we branch on~$v$. Otherwise, we take $v$ to be a vertex of maximum $2$-degree subject to $d_G(v) \ge 2|E|/|V|$. If 
(after some suitable transformations to~$G$)
$|\Gamma_G^2(v)| \ge 3$, then we branch on~$v$. 
If $|\Gamma_G^2(v)| = 2$, then the choice of the branching-vertex is more complicated. In the simplest case, we branch on a vertex in $\Gamma_G^2(v)$ and apply pruning as shown in Figure~\ref{fig:smart-branch}. Finally, if $|\Gamma_G^2(v)| \le 1$, we apply pruning to remove $\{v\} \cup \Gamma_G(v)$ from $\calG$.

To analyse the running time of this algorithm, we use the potential-based method described in e.g.~\cite[Chapter~6]{FKbook}. This method requires us to define a non-negative potential function~$f$ on weighted graphs~$\calG$ with the property that every time our algorithm branches, yielding two instances 
$\Gout$ and~$\Gin$, we have $f(\Gout),f(\Gin) < f(\calG)$. (In particular, if $f(\calG) = 0$, then we must be able to approximate $\IS(\calG)$ in polynomial time without further branching.) The running time of the algorithm is then bounded in terms of the ensemble of possible values for the pair $(f(\calG)-f(\Gout),f(\calG)-f(\Gin))$. We take $f$ to be of roughly the following form (see Definition~\ref{def:pre-pot}). 
Let $s$ be a positive integer.
Taking $f_1,\dots,f_s\colon \R^2\to \R$ to be linear functions (\emph{slices}), and $-1=k_0 < k_1 < \dots < k_s = \infty$ to be real \emph{boundary points}, we set
\[
f(\calG) = f_i(|E|,|V|)\mbox{ whenever }2|E|/|V| \in (k_{i-1},k_i].
\]
We require that $f$ is continuous, and that for all $i \in [s]$ and all $m,n\in\R$, $f_i(m,n) = \min\{f_j(m,n) \colon j \in [s]\}$. This latter requirement will allow us to reduce the analysis of possible values of $(f(\calG)-f(\Gout),f(\calG)-f(\Gin))$ to an analysis of $(f_i(\calG)-f_i(\Gout),f_i(\calG)-f_i(\Gin))$ for each $i \in [s]$, subject to $2|E|/|V| \in (k_{i-1},k_i]$; since each slice $f_i$ is linear, this is much more tractable. We can then turn this analysis of possible branches into a near-optimal choice of $f_1,\dots,f_s$ and $k_1,\dots,k_s$ relatively easily with the help of a computer.

The ideas described above are classical, and were originally introduced in~\cite{DJW}. The novel parts of our argument are contained in the following three steps.

\medskip\noindent\textbf{Step 1:} If $G$ has average degree in $(k_i,k_{i+1}]$, prove that $G$ contains a vertex~$v$ with degree at least $2|E|/|V|$ and $2$-degree increasing with~$i$.

\medskip\noindent\textbf{Step 2:} Prove Theorem~\ref{thm:lastintro}, i.e.\ that if $G$ does not contain a vertex~$w$ with degree 
 at least~$6$
  and $2$-degree at least~$27$ then we can approximate $\IS(G)$ in polynomial time.

\medskip\noindent\textbf{Step 3:} Analyse the possible branches of the algorithm given that $G$ contains both the vertex~$v$ guaranteed by Step~1 and the vertex~$w$ guaranteed by Step~2.

\medskip\noindent We deal with Step 1 in Section~\ref{sec:aad}, and our approach largely follows that of~\cite{DJW}. Using a discharging argument, they showed that any graph with high average degree $d$ must contain a vertex with degree at least~$d$ and ``associated average degree'' at least~$d$ (see Lemma~\ref{lem:aad}). They then used a computer search to bound the minimum $2$-degree of any vertex with associated average degree at least~$d$. However, because our algorithm runs significantly faster, our analysis needs to work with graphs of maximum degree~$10$ rather than graphs of maximum degree~$5$, and this makes a na\"ive computer search prohibitively difficult; we require some preliminary analysis to narrow the search space, culminating in Lemma~\ref{lem:suitable}.

For Step~2, we first give an FPTAS for families $\mathcal{F}$ of suitably-weighted graphs with small connective constant (Theorem~\ref{thm:hc-algo}, proved in Section~\ref{sec:gadgets}). The work of Sinclair et al.~\cite{SSSY} applies only to graphs
with uniform weights on vertices, and pruning requires the use of differing vertex weights, so we cannot simply quote their result. Instead, we simulate vertex weights in the unweighted model using gadgets, and then apply~\cite{SSSY} as a black box to the resulting graph. The gadgets themselves are standard --- the main difficulty is in proving that the family formed by adding gadgets to graphs in~$\mathcal{F}$ has the same connective constant as $\mathcal{F}$ itself. 
With Theorem~\ref{thm:hc-algo} in hand, we then show in Section~\ref{sec:local} that the family~$\mathcal{D}$ of graphs with minimum degree at least two and with no vertices of degree at least~$6$ and $2$-degree at least~$27$ has connective constant small enough for Theorem~\ref{thm:hc-algo} to apply, thereby proving 
Theorem~\ref{thm:2-degree-FPTAS}.
This theorem contains almost everything needed to prove
Theorem~\ref{thm:lastintro}, though the proof of the latter is deferred until Section~\ref{sec:digress} where suitable pruning
has been introduced to remove isolated vertices and vertices of degree~$1$.
The proof of Theorem~\ref{thm:2-degree-FPTAS}
uses a potential function to bound the number of leaves of the graph's self-avoiding walk trees, in a way that could easily be repurposed to prove other bounds on the connective constant based on local structure; see Lemma~\ref{lem:kappa-dec}.

For Step~3, our analysis is significantly more complicated than that of~\cite{DJW}. The main difficulty is a technical one. If $\calG$ has average degree~$2$, then since it also has minimum degree~$2$, it must be $2$-regular and so we can obtain $\IS(\calG)$ in polynomial time. As such, we would like to set $f(\calG)=0$, which means taking the first slice of~$f$ to be $f_1(m,n) = c(m-n)$ for an appropriate constant $c>0$. But this has an undesirable consequence: In maintaining the invariant that $G$~has minimum degree~$2$, we will frequently remove isolated vertices and tree components from~$G$. Doing so may actually increase the value of~$f(\calG)$, which we are not allowed to do. Thus in analysing our branching operations, we must be very careful to account for the effect of removing all the isolated vertices and tree components that we create.

In the non-bipartite case, as in the analyses of~\cite{DJW} and~\cite{FK}, this turns out not to matter too much. Let $v$ be a vertex with some $2$-degree $D$, and suppose $|\Gamma_G^2(v)| \ge 3$ so that we branch on~$v$. Recall that we are concerned with the possible values of $(f_i(\calG)-f_i(\Gout), f_i(\calG) - f_i(\Gin))$ for a choice of~$i$ depending on the average degree of~$G$; we denote this pair by $\partial = (\partial_\textrm{out}, \partial_\textrm{in})$. (Roughly speaking, the larger $\partial_\textrm{out}$ and $\partial_\textrm{in}$ are, the more progress we make by branching on~$v$. Thus, the larger these values are in the worst case, the stronger the bound we secure on the algorithm's running time.) In general, we will have $f_i(\calG) = \sigma_i |V| + \rho_i |E|$ for some $\rho_i \ge 0$ and $\sigma_i \ge -\rho_i$, so $\partial$~is determined by the number of vertices and edges lost in passing to $\Gout$ and~$\Gin$. Since $\rho_i \ge 0$, losing more edges always increases~$\partial$ and thereby makes the branch better, but losing more vertices may make the branch worse if $\sigma_i < 0$. In passing to~$\Gin$, we always lose at least $D - |E(G[\Gamma_G(v)])|$ edges, and it turns out that even when $\sigma_j < 0$, the worst cases arise when $G[\Gamma_G(v)]$ contains many edges (see Lemma~\ref{lem:branch-bounds}(ii)). In the bipartite setting, however, $G[\Gamma_G(v)]$ never contains any edges, and when $\sigma_i < 0$ the worst cases turn out to arise when we lose as many vertices as possible. In particular, any vertex in $\Gamma^2_G(v)$ whose neighbourhood is contained in $\Gamma_G(v)$ will become isolated on passing to~$\Gin$, and will need to be removed to keep the minimum degree at least two; the worst cases are those in which $\Gamma^2_G(v)$ contains many such vertices (see Lemma~\ref{lem:branch-bounds}(ii)).

A careful analysis of all possible branches, and their effects on a slice~$f_i$ of our potential function~$f$, is the meat of Section~\ref{sec:slices}. We also set out the algorithm in detail, with some minor modifications to improve the worst-case branches when $|\Gamma_G^2(v)| = 2$ and to make the branching analysis easier, and bring the results of previous sections together to bound its running time in terms of~$f$ (see Corollaries~\ref{cor:fullalgo-runtime} and~\ref{cor:fullalgo-runtime-bi}). Finally, in Section~\ref{sec:pre-pot} we set out our choice of~$f$ and explain the code included to verify that it is valid, and in Section~\ref{sec:conclusion} we prove Theorem~\ref{thm:main}.

\section{Definitions and notation}\label{sec:notation}
	
Appendix~\ref{sec:index} contains an index of the notation used in this paper.	
		For all positive integers $n$, we write $[n]$ for the set $\{1,\dots,n\}$. All our logarithms are base $e$ unless specified otherwise.
	
Let $G$ be a graph. We write $n(G)$ for the number of vertices in $G$ and $m(G)$ for the number of edges in $G$. We will always assume $V(G) = [n(G)]$. For a vertex $v \in V(G)$ and a set $X \subseteq V(G)$, we write $\Gamma_G(v,X) = \{w\in X \mid \{v,w\} \in E(G)\}$ and $d_G(v,X) = |\Gamma_G(v,X)|$. If $X$ is not specified, it defaults to $V(G)$. In addition, we write $d^2_G(v) = \sum_{\{v,w\} \in E(G)} d_G(w)$; 
this is the same as the \emph{2-degree} of~$v$, as defined in the introduction.	 We write $\Gamma^2_G(v)$ for the set of vertices at distance exactly $2$ from~$v$; note that, unless $v$~has no neighbours, $d^2_G(v) > |\Gamma^2_G(v)|$. For a set $S \subseteq V(G)$, we write $\Gamma_G(S) = \bigcup_{v \in S}\Gamma_G(v) \setminus S$. We will omit the subscripts from all of these definitions when $G$ is clear from context. We also write $\delta(G)$ for the minimum degree of $G$ and $\Delta(G)$ for the maximum degree of $G$.  The length of a path is the number of edges it contains.

We define a {weighted graph}  \label{def:calG}
in a slightly more general way than what we described in the introduction. The generality helps to simplify the notation
for our recursive instances. A
 \emph{weighted graph} $\calG = (G,\win,\wout,W)$ 
is a tuple consisting of a graph~$G$, two positive integer functions $\win$ and $\wout$ whose domains contain $V(G)$, and a positive integer~$W$. For any $\lambda\in\R_{>0}$, we say that $\calG$~is \emph{$\lambda$-balanced} 
\label{def:lambal}
if $\win(v)\leq\lambda\wout(v)$ for all $v\in V(G)$. Sometimes, we will take $\win$ and $\wout$ to be the constant function with value 1 on domain $V(G)$ --- we denote this by $\mathbf{1}$. For all $X \subseteq V(G)$, we define 
\label{def:wplusmin}
$\win(X) = \prod_{x \in X}\win(x)$ and $\wout(X) = \prod_{x \in X}\wout(X)$. We adopt the convention that empty products are equal to one; in particular, $\win(\emptyset) = \wout(\emptyset) = 1$. We write $\calI(G)$  \label{def:calIG}
for the set of independent sets in $G$, and~define
\label{def:partition}
	\[
		\IS(\calG) = \IS(G,\win,\wout,W) = W\sum_{I \in \calI(G)} \win(I)\wout(V(G) \setminus I).
	\]
Thus $\IS(\calG)$ is the (multivariate) hardcore partition function of $G$ under weights $\win$ and $\wout$. By convention, if $G$ has no vertices then it has one independent set --- the empty set --- so $\IS(\calG) = W$. Note that if $\win(v)=\wout(v) = 1$ for all $v \in V(G)$, then $\IS(\calG) = W|\calI(G)|$. 
	
For all vertex sets $X$, we write $G-X = G[V(G)\setminus X]$, 
\label{def:GminX}
and for all $v \in V(G)$ we write $G-v = G-\{v\}$. We extend this notation to weighted graphs in the natural way, so that $\calG-X = (G-X,\win,\wout,W)$ and $\calG-v = (G-v,\win,\wout,W)$. Likewise, for all $X \subseteq V(G)$ we take 
\label{def:calGX}
$\calG[X] = (G[X],\win,\wout,W)$.

For $x,\,y\in\mathbb{R}_{\geq0}$ and $\epsilon\in(0,1)$, $x$~is an
\emph{$\epsilon$-approximation} to~$y$ if $(1-\epsilon)y\leq x\leq
(1+\epsilon)y$.

\section{Bounds on local structure from high average degree}\label{sec:aad}
	
In this section, we bound the ``worst-case'' branch, given a lower
bound on the average degree of our input graph. For this our main tool
will be as follows. Following Dahll\"of, Jonsson and
Wahlstr\"om~\cite{DJW} and subsequent authors~\cite{FK,WahlPhD}, given
a graph~$G$ with average degree~$k$, for each $v \in V(G)$ we define
\begin{align*}
    \alpha(v) &= d_G(v) + |\{w\in \Gamma_G(v)\mid d_G(w)<k\}| \\
    \beta(v)  &= 1 \ + \!\!\sum_{\substack{w\in \Gamma_G(v)\\d_G(w)<k}} \frac{1}{d_G(w)}\,.
\end{align*}
The \emph{associated average degree} of a vertex~$v$
is then given by $\aad(v) = \alpha(v)/\beta(v)$.

\subsection{Translating average degree to high associated average degree}

The following is essentially Lemma~6 of Dahll\"of \emph{et
  al.}~\cite{DJW}, but translated from \#2-SAT to \probIS{} under the usual correspondence.
  We give a full proof for completeness and so that the reader
doesn't have to translate notation.

\begin{lemma}[\cite{DJW}]
\label{lem:aad}
    Let $G=(V,E)$ be a graph with average degree~$k$. There is some
    $v\in V$ with $d_G(v)\geq k$ and $\aad(v)\geq k$.
\end{lemma}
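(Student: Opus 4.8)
The plan is to argue by contradiction via a weighting/discharging argument, following Dahll\"of \emph{et al.}~\cite{DJW}. Suppose for contradiction that every vertex $v \in V$ with $d_G(v) \ge k$ has $\aad(v) < k$, i.e.\ $\alpha(v) < k\beta(v)$. Call a vertex \emph{heavy} if $d_G(v) \ge k$ and \emph{light} if $d_G(v) < k$; note isolated vertices are light (assuming $k > 0$; the case $k=0$ is trivial since then $G$ is edgeless and any vertex works). The idea is to distribute a ``charge'' of $d_G(v)$ to each vertex and redistribute so that, on the one hand, the total charge is exactly $2|E| = k|V|$, but on the other hand, using $\alpha(v) < k\beta(v)$ at heavy vertices and $d_G(v) < k$ at light vertices, the total comes out strictly less than $k|V|$ --- a contradiction.

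Concretely, I would assign to each vertex $v$ an initial charge $\mathrm{ch}(v) = d_G(v)$, so $\sum_v \mathrm{ch}(v) = k|V|$. Then each heavy vertex $v$ sends charge $1 + 1/d_G(w)$ to each light neighbour $w$ (equivalently, one unit along the edge plus an extra $1/d_G(w)$), and keeps the rest. After discharging, the charge retained by a heavy vertex $v$ is
\[
d_G(v) - \sum_{\substack{w \in \Gamma_G(v)\\ d_G(w) < k}}\left(1 + \tfrac{1}{d_G(w)}\right)
= d_G(v) - |\{w \in \Gamma_G(v) : d_G(w) < k\}| - \sum_{\substack{w \in \Gamma_G(v)\\ d_G(w)<k}}\tfrac{1}{d_G(w)}.
\]
Observe that $\alpha(v) + \beta(v) = 2d_G(v) + 2|\{w \in \Gamma_G(v): d_G(w)<k\}| + \sum_{w}1/d_G(w) - \ldots$; more cleanly, the quantity above equals $2d_G(v) - \alpha(v) - \beta(v) + 1$. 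Since $v$ is heavy we assumed $\alpha(v) < k\beta(v)$, and combining with $d_G(v) \ge k$ one checks the retained charge is strictly less than $(2-k)\cdot 0 \ldots$ --- the precise bookkeeping shows it is $< 1 \cdot [\text{something}]$; the point is to bound it above by $k$ times (a suitable ``fractional vertex count'') so that heavy $v$ retains $< k$ per unit of responsibility. Meanwhile each light vertex $w$ ends with charge $d_G(w) + (\text{number of heavy neighbours})\cdot 1 + \sum_{\text{heavy nbrs}} 1/d_G(w)$; since $w$ has at most $d_G(w)$ heavy neighbours, this is at most $d_G(w) + d_G(w) + d_G(w)/d_G(w) = 2d_G(w)+1$, and since $d_G(w) \le k-1$ one wants this to be $< k \cdot (\text{responsibility of }w)$.

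The cleanest way to make the counting exact is the one used in~\cite{DJW}: assign to each light vertex $w$ a ``quota'' and show directly that total final charge $\sum_v \mathrm{ch}'(v) < k|V|$, contradicting conservation of charge $\sum_v \mathrm{ch}'(v) = \sum_v \mathrm{ch}(v) = 2|E| = k|V|$. For heavy $v$, final charge $\mathrm{ch}'(v) = 2d_G(v) - \alpha(v) - \beta(v) + 1 < 2d_G(v) - k\beta(v) - \beta(v) + 1 \le 2d_G(v) - k + 1 \le$ (bound in terms of $k$) using $\beta(v)\ge 1$ and $d_G(v) \ge k$; for light $w$, $\mathrm{ch}'(w) \le 2d_G(w)+1 \le 2(k-1)+1 = 2k-1$. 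One then needs these per-vertex bounds to sum to strictly less than $k|V|$; if a naive sum is not tight enough, refine by tracking the exact edge-by-edge transfers so the $+1$ terms and the $1/d_G(w)$ terms cancel against the slack in $\alpha(v) < k\beta(v)$. \textbf{The main obstacle} I anticipate is precisely this final accounting: making the inequalities at heavy and light vertices combine to a \emph{strict} global inequality, which requires choosing the discharging rule (the extra $1/d_G(w)$ weighting is exactly engineered for this) so that every edge between a heavy and a light vertex contributes its charge in a way compatible with both the $\aad(v) < k$ bound and the $d_G(w) < k$ bound simultaneously, with the strictness coming from at least one vertex (there must exist a heavy vertex since $2|E| = k|V|$ forces some vertex of degree $\ge k$, and if all such had $\aad < k$ strictly we get the contradiction). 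Edge cases to handle separately: $k = 0$ (trivial), and ensuring ``light'' is well-defined when $k$ is not an integer.
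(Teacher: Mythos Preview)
Your proposal has a genuine gap: the specific discharging rule you set up does not produce the contradiction you are aiming for, and the ``precise bookkeeping'' you defer is exactly where it fails. With your rule (each heavy $v$ sends $1+1/d_G(w)$ to each light neighbour $w$), a heavy vertex retains
\[
\mathrm{ch}'(v)=2d_G(v)-\alpha(v)-\beta(v)+1,
\]
and the hypothesis $\alpha(v)<k\beta(v)$ gives only a \emph{lower} bound on $\mathrm{ch}'(v)$, not the upper bound you need to force $\sum_v \mathrm{ch}'(v)<k|V|$. On the light side, a vertex $w$ with all neighbours heavy ends up with $\mathrm{ch}'(w)=2d_G(w)+1$, which can be as large as $2\lfloor k\rfloor-1>k$, so you cannot bound light charges below $k$ either. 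Thus neither side of your intended per-vertex inequality holds, and there is no evident refinement of this particular rule that repairs it; you would need a different redistribution altogether.

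The paper avoids discharging entirely and instead averages directly over the set $X=\{v:d_G(v)\ge k\}$: it shows $A:=\sum_{v\in X}\alpha(v)\ge kB$ where $B:=\sum_{v\in X}\beta(v)$, whence some $v\in X$ has $\aad(v)\ge k$. The key step is to rewrite both $A$ and $B$ as sums over \emph{all} vertices of a single function $f$, namely $f(v)=d(v)$ for heavy $v$ and $f(v)=d_{\ge k}(v)$ for light $v$, so that $A=\sum_v f(v)$ and $B=\sum_v f(v)/d(v)$. A short counting argument then gives $A\ge k(|V|-S)$ and $B\le |V|-S$, where $S$ is the number of light vertices with no heavy neighbour. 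This bypasses the per-vertex inequalities your approach requires.
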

\begin{proof}
    In this proof, we will write $d(v) = d_G(v)$ for all $v \in V(G)$. Let $X = \{v\in V\mid d(v)\geq k\}$.  There is at least one vertex
    with at-least-average degree, so $X\neq\emptyset$.  Let
    $A = \sum_{v\in X} \alpha(v)$ and $B = \sum_{v\in X} \beta(v)$.

    For a vertex~$v$, let $d_{\geq k}(v) = |\{w\in \Gamma(v)\mid d(w)\geq
    k\}|$ and write
    \[
        f(v) = \begin{cases}
                   d_{\geq k}(v) & \text{if }d(v)<k\\
                   d(v)        & \text{if }d(v)\geq k\,.
               \end{cases}
    \]
    We can view each $v\in V$ with $d(v)\geq k$ as contributing $d(v)$
    to~$A$, and each vertex with $d(v)<k$ as contributing $1$ to~$A$ for each neighbour with degree~$\geq k$.  Therefore,
    $A = \sum_{v\in V} f(v)$.  By similar reasoning,
    $B=\sum_{v\in V} f(v)/d(v)$.

    For $i\geq 1$, let $n_i$ be the number of degree-$i$ vertices
    in~$G$.  For $i\geq k$, let $n'_i = n_i$ and, for $1\leq i<k$, let
    $n'_i = |\{v\in V\mid d(v)<k\text{ and } d_{\geq k}(v)=i\}|$.
    There may be vertices with $d(v)<k$ and $d_{\geq k}(v)=0$, so
    $\sum_{1\leq i<k}n'_i \leq \sum_{1\leq i<k} n_i$.  Writing
    $S = \sum_{1\leq i<k}(n_i-n'_i)$, we have
    \[
        A = \sum_{i\geq 1} in'_i
              = \sum_{i\geq 1} in_i\ -\!\! \sum_{1\leq i<k}i(n_i-n'_i)
              = 2|E| \ -\!\! \sum_{1\leq i<k}i(n_i-n'_i)
              \geq 2|E| - kS
              = k(|V| - S)\,.
    \]
    For $i\geq k$, there are $n'_i$ vertices of degree~$i$, each of
    which contributes exactly~$1$ to~$B$.  For $1\leq i<k$, there are $n'_i$
    vertices~$v$ with $d_{\geq k}(v)=i$ and $d(v)<k$, each of which
    contributes at most~$1$ to~$B$.  Therefore,
    \[
        B \leq \sum_{i\geq 1} n'_i
              = \sum_{i\geq 1} n_i \ - \!\!\sum_{1\leq i<k}(n_i-n'_i)
              = |V| - S\,.
    \]

    Therefore, $A\geq kB$.
    If $\alpha(v)<k\,\beta(v)$ for all $v\in X$, then $A<kB$,
    contradicting what we have just derived.  Therefore, $X$~contains
    a vertex~$v$ with $\alpha(v)/\beta(v)\geq k$, and $d(v)\geq k$ by the
    definition of~$X$.
\end{proof}

\subsection{Using high associated average degree to lower-bound 2-degree}

We now lower-bound $d^2_G(v)$ in terms of $\aad(v)$. We will
find it useful to work with $\alpha(v)$ and $\beta(v)$ as numerical
functions, as follows.
	
\begin{defn}\label{def:rom-aad}
    Let $d\in\posints$ and $k\in\posreals$ with 
    $d\geq k$. 
    For all vectors $\bx \in \posints^d$, we define
    \begin{align*}
        a_k(\bx) &= d + |\{i \mid x_i < k\}|\\
        b_k(\bx) &= 1 \ + \!\!\sum_{i\colon x_i < k} \frac{1}{x_i}\\
		\saad_k(\bx) &= a_k(\bx)/b_k(\bx)\,.
    \end{align*}
\end{defn}

Associated average degree and $\saad$ correspond naturally.  Let $\bx
= (x_1, \dots, x_d)\in\posints^d$.  If $G$ is
a graph with average degree~$k$, and $v \in V(G)$ has degree~$d$ and
neighbours with degrees $x_1, \dots, x_d$, we have
$\alpha(v)=a_k(\bx)$, $\beta(v)=b_k(\bx)$ and $\aad(v)=\saad_k(\bx)$.

We use associated average degree to guarantee vertices of high
$2$-degree.

\begin{defn}
\label{def:D2}  
    For any  
 real numbers $k\geq 2$, let    
   $K =  \floor{k}+1    $ be the smallest integer that is larger than~$k$. Let
        $D_2(k)$ be the minimum integer~$z$ such that
    there is an integer $d\geq K$ and a tuple $\bx = (x_1, \dots,
    x_d)\in\Z_{\geq 2}^d$ with $\sum_{i=1}^d x_i = z$ and $\saad_{K}(\bx)>k$.
\end{defn}

\begin{lemma}\label{lem:D2} 
	For any    real number~$k\geq 2$,
	let  
   $K =  \floor{k}+1    $.  Every graph~$G$ with minimum degree at least~$2$ and average degree in $(k, K]$
	contains a vertex with degree at least $K$ and $2$-degree at least $D_2(k)$.
\end{lemma}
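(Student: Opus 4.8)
The plan is to derive the statement directly from Lemma~\ref{lem:aad} together with the numerical reformulation in Definition~\ref{def:rom-aad}; essentially the whole content is to reconcile the real-valued threshold that appears implicitly in $\aad$ with the integer threshold $K$ appearing in $\saad_K$ and in Definition~\ref{def:D2}.

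First I would write $\bar{d} = 2m(G)/n(G)$ for the average degree of $G$, so that $\bar{d} \in (k, K]$ by hypothesis. Applying Lemma~\ref{lem:aad} with average degree $\bar{d}$ yields a vertex $v$ with $d_G(v) \ge \bar{d}$ and $\aad(v) \ge \bar{d}$, where $\alpha(v)$ and $\beta(v)$ are computed using $\bar{d}$ as the threshold in place of ``$k$''. Since $d_G(v)$ is an integer and $d_G(v) \ge \bar{d} > k$, we already get $d_G(v) \ge \floor{k} + 1 = K$, the first of the two conclusions.

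The one point that needs care is the threshold. For a neighbour $w$ of $v$, the integrality of $d_G(w)$ combined with $K - 1 = \floor{k} \le k < \bar{d} \le K$ gives the chain $d_G(w) < \bar{d} \iff d_G(w) \le K - 1 \iff d_G(w) < K$. Hence, setting $d := d_G(v)$ and letting $\bx = (x_1, \dots, x_d) \in \Z_{\geq 2}^d$ be the vector of degrees of the neighbours of $v$ (each entry is at least $2$ because $\delta(G) \ge 2$), the defining sums of $\alpha, \beta$ and of $a_K, b_K$ agree term by term: $\alpha(v) = a_K(\bx)$ and $\beta(v) = b_K(\bx)$, so $\saad_K(\bx) = \aad(v) \ge \bar{d} > k$.

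To finish, I would observe that $\sum_{i=1}^d x_i = d^2_G(v)$ and that the pair $(d, \bx)$ meets every requirement in Definition~\ref{def:D2} --- namely $d \ge K$, $\bx \in \Z_{\geq 2}^d$, and $\saad_K(\bx) > k$ --- so it is one of the configurations over which $D_2(k)$ is the minimum of $\sum_i x_i$; in particular this set is non-empty (so $D_2(k)$ is well defined in this regime) and $D_2(k) \le \sum_{i=1}^d x_i = d^2_G(v)$. Thus $v$ has degree at least $K$ and $2$-degree at least $D_2(k)$, as required. I do not anticipate any genuine obstacle: the only subtlety is the threshold equivalence in the previous paragraph, and it is precisely this that forces the hypothesis to confine the average degree to the interval $(k, \floor{k} + 1]$ rather than to permit an arbitrary value exceeding $k$.
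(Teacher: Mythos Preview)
Your proof is correct and follows essentially the same approach as the paper: apply Lemma~\ref{lem:aad} to get a vertex $v$ with $d_G(v)$ and $\aad(v)$ at least the average degree, use integrality to upgrade $d_G(v)\ge K$, and then use the threshold equivalence $x_i<\bar d\iff x_i<K$ (valid because $\bar d\in(k,K]$ and the $x_i$ are integers) to identify $\aad(v)$ with $\saad_K(\bx)>k$, whence $(d,\bx)$ witnesses $D_2(k)\le d_G^2(v)$. Your explicit mention that $\bx\in\Z_{\ge 2}^d$ follows from $\delta(G)\ge 2$ is a detail the paper leaves implicit.
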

\begin{proof}
Let $A>k$ be the average degree of $G$. By
Lemma~\ref{lem:aad}, $G$~contains a vertex~$v$ 
such that $d_G(v) \geq A>k$ 
and $\aad(v) \geq A >k$.
Since  $d_G(v)$ is an integer, it is at least~$K$ (which is part of what is required in the lemma statement).

Let  $\mathbf{x} = (x_1, \dots, x_{d_G(v)})$  be the degrees of $v$'s neighbours in $G$.
By the definition of $\saad$, 
 $\aad(v) = \saad_{A}(\mathbf{x})$.
 
 Since 
 the elements of $\mathbf{x}$ are integers,
 $K$ is the smallest integer that is larger than~$k$, and
 $A \in (k,K]$,
 $x_i < A$ iff $x_i < K$.
 Thus, 
    $  \saad_{A}(\mathbf{x}) = \saad_{K}(\mathbf{x})$.

To finish, we wish to show that 	  $d^2_G(v) = \sum_i x_i$
is at least~$ D_2(k)$.
From the definition of~$D_2(k)$, it suffices to show that 
$\saad_{K}(\bx)>k$, which we have shown.
\end{proof}

\subsection{Efficiently computing the function $D_2(k)$ that we used to bound 2-degree}

In the remainder of this section, we work towards computing the
function $D_2(k)$.  Suppose $k\geq2$ 
and
let $K=\floor{k}+1$. Note that
\[
    \saad_K(\underbrace{K, \dots, K}_{K\text{ times}})
        = K > k\,,
\]
so $D_2(k)\leq K^2$.  In principle, one
could find $D_2(k)$ by exhaustive search through all tuples of
positive integers with total at most $K^2$, but this would take an unpleasantly long time for the values we wish to calculate, so we will restrict the search space. This leads us to the following definition. 

\begin{defn}\label{def:suitable}
	For any real number~$k\geq 2$, let $K = \floor{k}+1$. We say that a positive integer~$z$ is \emph{suitable} for~$k$ if either $z \ge K^2$ or there exist integers $d$ and $s$ such that the following hold:
	\begin{enumerate}[(S1)]
		\item $K \le d \le z/2$;
		\item $0 \le s \le d-1$;
		\item $Ks + 2(d-s) \le z \le Ks + (K-1)(d-s)$; and
		\item Writing $q = \floor{(z-Ks)/(d-s)}$, $d_1 = (z-Ks)\bmod (d-s)$, and $d_0 = d - s - d_1$,
		\[
			\frac{d+d_0+d_1}{1 + d_0/q + d_1/(q+1)} > k\,.
		\]
	\end{enumerate}
\end{defn}

Note that $d$  and $q$  are integers that are at least~$2$. Also,   $s$, $d_0$ and $d_1$ are non-negative integers.
The point of Definition~\ref{def:suitable} is Lemma~\ref{lem:suitable}, which  says that $D_2(k)$ is the minimum value of $z$ such that $k$ is suitable for $z$. We will use the following weaker result in the proof of Lemma~\ref{lem:suitable}.

\begin{lemma}\label{lem:suitable-prelim}
For any real number $k \ge 2$, and any positive integer $z$ which is suitable for $k$, we have $D_2(k) \le z$.
\end{lemma}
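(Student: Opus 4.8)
The goal is to show: if $z$ is suitable for $k$, then $D_2(k)\le z$. Recall (Definition~\ref{def:D2}) that $D_2(k)$ is the minimum integer expressible as $\sum_{i=1}^d x_i$ for some $d\ge K$ and some $\bx\in\Z_{\ge 2}^d$ with $\saad_K(\bx)>k$. So it suffices to \emph{exhibit}, for each suitable $z$, such a pair $(d,\bx)$ with $\sum_i x_i\le z$ (from which $D_2(k)\le\sum_i x_i\le z$). The plan is to split into the two cases of Definition~\ref{def:suitable}. If $z\ge K^2$, then as already observed in the text, the tuple $(K,\dots,K)$ of length $K$ has $\saad_K=K>k$ and total $K^2\le z$, so we are done immediately. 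Otherwise, $z$ witnesses the existence of integers $d,s$ satisfying (S1)--(S4), and the task is to build the tuple from these parameters.

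For the main case, the idea is that $s$ counts "large" neighbours (degree $\ge K$, which contribute nothing to the $|\{i\colon x_i<K\}|$ term or the reciprocal sum), and $d-s$ counts "small" neighbours (degree in $[2,K-1]$). Among the small neighbours we want to distribute degree as evenly as possible: $d_1$ of them get degree $q+1$ and $d_0=d-s-d_1$ of them get degree $q$, where $q=\floor{(z-Ks)/(d-s)}$ and $d_1=(z-Ks)\bmod(d-s)$. Concretely, I would define $\bx$ to be the tuple consisting of $s$ copies of $K$, $d_0$ copies of $q$, and $d_1$ copies of $q+1$. First I would check this is well-formed: (S3) gives $z-Ks\ge 2(d-s)$ so $q\ge 2$, and $z-Ks\le(K-1)(d-s)$ forces $q\le K-1$, hence $q+1\le K$; combined with the remark after Definition~\ref{def:suitable} that $d,q\ge2$ and $s,d_0,d_1\ge0$, every entry lies in $\Z_{\ge2}$, and $d\ge K$ by (S1). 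Next, the total: $\sum_i x_i = Ks + q d_0 + (q+1)d_1 = Ks + q(d-s) + d_1 = Ks + (z-Ks) = z$ by the division identity, so the total is exactly $z$. Finally, $\saad_K(\bx)$: the $s$ copies of $K$ are $\ge K$ so contribute only to $a_k$; the $d_0+d_1$ small entries each add $1$ to $a_k$ and $1/q$ or $1/(q+1)$ to $b_k$. Hence $a_K(\bx)=d+d_0+d_1$ and $b_K(\bx)=1+d_0/q+d_1/(q+1)$, so $\saad_K(\bx)=(d+d_0+d_1)/(1+d_0/q+d_1/(q+1))>k$ by (S4). This exhibits the desired pair, completing the proof.

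I do not expect a serious obstacle here — the lemma is essentially a bookkeeping unwinding of Definition~\ref{def:suitable}, which was evidently engineered precisely so that this construction works. The only points requiring a little care are: (a) verifying $2\le q\le K-1$ (and hence $q+1\le K$) so that the constructed degrees are genuinely admissible neighbour-degrees, which uses both inequalities in (S3); (b) the arithmetic identity $q(d-s)+d_1 = z-Ks$ justifying that the total comes out to exactly $z$ rather than at most $z$; and (c) correctly matching the definitions of $a_K,b_K$ from Definition~\ref{def:rom-aad} to the expression in (S4), i.e.\ confirming that the "large" entries equal to $K$ do not satisfy $x_i<K$ and so are excluded from the count and the reciprocal sum. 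None of these is deep. (The harder direction — that $D_2(k)$ is exactly the minimum suitable $z$, i.e.\ that \emph{every} optimal tuple can be massaged into the restricted form of Definition~\ref{def:suitable} — is what Lemma~\ref{lem:suitable} will need, but that is not claimed here.)
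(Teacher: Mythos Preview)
Your approach is essentially the same as the paper's, and the construction is correct. There is one small gap in your justification of the $\saad_K$ computation: you assert that ``the $d_0+d_1$ small entries each add $1$ to $a_K$ and $1/q$ or $1/(q+1)$ to $b_K$'', which presumes $q+1<K$. But (S3) only gives $q\le K-1$, so $q+1=K$ is possible, and in that case the $d_1$ copies of $q+1$ would \emph{not} satisfy $x_i<K$ and your formula $a_K(\bx)=d+d_0+d_1$, $b_K(\bx)=1+d_0/q+d_1/(q+1)$ would be wrong. The fix (which the paper makes explicit) is to observe that if $q=K-1$ then the upper bound in (S3) forces $z-Ks=(K-1)(d-s)$ exactly, hence $d_1=(z-Ks)\bmod(d-s)=0$, so there are no entries of value $q+1$ and the formula still matches (S4). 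With that one extra sentence your argument is complete; in Case~1 your shortcut of exhibiting a tuple of total $K^2\le z$ (rather than exactly $z$) is a harmless simplification.
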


\begin{proof}
By the definition of $D_2(k)$, it suffices to exhibit a vector $\bx \in \Z_{\geq 2}^d$ (for some $d \ge K$) with $\sum_i x_i = z$ and $\saad_K(\bx) > k$. 
Let $K = \floor{k}+1$.  
	
\medskip\noindent \textbf{Case 1: $\boldsymbol{z \ge K^2}$.} In this case, we set $d=K$ and take $\bx$ to be the vector with $x_1,\dots,x_{K-1} = K$ and $x_K = (z-K(K-1))$. Since $z \ge K^2$, we have $x_K \ge K$, and so $\saad_K(\bx) = 
d=K > k$. Since $K > k \ge 2$, we have $\bx \in Z_{\geq 2}^K$. Finally, we have $\sum_i x_i = z$ as required, so the result follows.
	
\medskip\noindent \textbf{Case 2: $\boldsymbol{z \le K^2-1}$.} In this case, since $z$ is suitable for $k$, we can take $d$, $s$, $q$, $d_0$ and $d_1$ as in Definition~\ref{def:suitable}. We take $\bx$ to be the vector with $x_1,\dots,x_{d_0} = q$, $x_{d_0+1},\dots,x_{d_0+d_1} = q+1$, and $x_{d_0+d_1+1},\dots,x_{d_0+d_1+s} = K$. Then we have
	\begin{align*}
		\sum_{i=1}^{d_0+d_1+s} x_i &= qd_0 + (q+1)d_1 + Ks = q(d_0+d_1) + d_1 + Ks = q(d-s) + d_1 + Ks\\
		&=\Big\lfloor \frac{z-Ks}{d-s} \Big\rfloor(d-s) + (z-Ks) \bmod (d-s) + Ks = (z-Ks) + Ks = z\,,
	\end{align*}
	as required. 
	
	By (S3) we have $z-Ks \le (K-1)(d-s)$, so $q \le K-1$. If $q \le K-2$, then all entries in $x_1,\dots,x_{d_0+d_1}$ are at most $K-1$. If instead $q=K-1$, then we must have $z-Ks = (K-1)(d-s)$ and hence $d_1 = 0$; thus once again all entries in $x_1, \dots, x_{d_0+d_1}$ are at most $K-1$. All entries in $x_{d_0+d_1+1},\dots,x_d$ are equal to $K$, so in both cases we have 
	\[
		\saad_K(\bx) = \frac{d+d_0+d_1}{1+d_0/q+d_1/(q+1)}\,.
	\]
	By (S4), this is strictly greater than $k$ as required. 
	
	By (S3) we have $z-Ks \ge 2(d-s)$, so $q \ge 2$ and $\bx \in \Z_{\ge 2}^{d_0+d_1+s} = \Z_{\ge 2}^d$ as required. Finally, (S1) implies that $d \ge K$ as required, so the result follows.
\end{proof}

\begin{lemma}\label{lem:suitable}
	For any real number $k \ge 2$, $D_2(k)$ is the minimum integer~$z$ which is suitable for~$k$.
\end{lemma}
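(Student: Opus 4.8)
The plan is to prove the two inequalities $D_2(k) \le z^*$ and $D_2(k) \ge z^*$, where $z^*$ denotes the minimum integer suitable for~$k$. One direction is already done: Lemma~\ref{lem:suitable-prelim} gives $D_2(k) \le z$ for every suitable~$z$, hence $D_2(k) \le z^*$. So the real content is the reverse inequality, i.e.\ that $z^*$ is suitable for~$k$ and no smaller integer can be. Equivalently, it suffices to show that $D_2(k)$ itself is suitable for~$k$; then $z^* \le D_2(k)$, and combined with $D_2(k)\le z^*$ we are done. (This also shows no integer below $D_2(k)$ is suitable, since any suitable integer is an upper bound for $D_2(k)$.)

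So fix $z = D_2(k)$ and $K = \floor{k}+1$. If $z \ge K^2$ then $z$ is suitable by definition, so assume $z \le K^2 - 1$. By minimality of $D_2(k)$ there is a witness: an integer $d \ge K$ and a tuple $\bx \in \Z_{\ge 2}^d$ with $\sum_i x_i = z$ and $\saad_K(\bx) > k$. The plan is to massage this witness into the canonical shape demanded by (S1)--(S4). First I would normalise the ``large'' coordinates: any $x_i \ge K$ contributes exactly $0$ to the count $|\{i : x_i < K\}|$ and exactly $0$ to the reciprocal sum, so $a_K(\bx)$ and $b_K(\bx)$ only see the number $d$ of coordinates and the multiset of \emph{small} coordinates (those in $\{2,\dots,K-1\}$). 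Hence I can replace all large coordinates by the single value $K$ without changing $\saad_K$, at the cost of possibly changing $\sum x_i$ — so instead I argue more carefully: let $s$ be the number of coordinates equal to (or, after adjustment, set to) $K$, and let the remaining $d-s$ coordinates be the small ones. Because replacing a large coordinate $x_i > K$ by $K$ strictly decreases $\sum x_i$ while keeping $\saad_K$ fixed and keeping $z \ge$ (new sum), and because we may then also need to re-examine minimality, the cleanest route is: among all witnesses with $\sum x_i \le z$, pick one; show we can push each large coordinate down to exactly $K$ and each small coordinate into the ``flat'' profile $\{q, q+1\}$ for $q = \floor{(z-Ks)/(d-s)}$ via a standard smoothing/convexity argument (moving mass between two small coordinates to make them differ by at most one can only decrease $b_K$, hence increase $\saad_K$, while preserving the sum), and finally use the slack $\sum x_i \le z$ to bump coordinates up to reach sum exactly $z$, which again only helps (increasing a small coordinate toward $K-1$, or converting a small coordinate to $K$, does not decrease $\saad_K$ — this needs a short monotonicity check). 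This yields a witness in exactly the canonical form, giving (S4); the ranges (S1)--(S3) then follow from $d \ge K$, from $0 \le s$ and $s \le d-1$ (we need at least one small coordinate, else $\saad_K(\bx) = d \ge K > k$ would force $z = \sum x_i = dK \ge K^2$, contradicting $z \le K^2-1$; and $s \le d-1$ is what ``at least one small coordinate'' means), from each small coordinate lying in $[2, K-1]$, and from $d \le z/2$ since $z = \sum x_i \ge 2d$.

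The main obstacle I anticipate is the smoothing step and the ``bump up to reach the sum'' step: one must verify that each local move (equalising two small coordinates; increasing a small coordinate; promoting a small coordinate to a large one) is monotone in the right direction for $\saad_K$, and that the moves terminate in the claimed $\{q,q+1\}$ profile with exactly $s$ copies of $K$. The subtlety is that $\saad_K = a_K/b_K$ is a ratio, so a move that decreases $b_K$ by shrinking the reciprocal sum also decreases $a_K$ by $1$ each time a small coordinate is promoted past $K-1$; one has to check the net effect on the quotient, using $a_K(\bx) = d + (\text{number of small coords})$ and the constraint $d \ge K > k$. A careful case analysis — matching the one already carried out inside the proof of Lemma~\ref{lem:suitable-prelim} for $q = K-1$ versus $q \le K-2$ — handles the boundary behaviour. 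Once monotonicity of each move is established, the rest is bookkeeping: the arithmetic identities $qd_0 + (q+1)d_1 + Ks = z$ and $d_0 + d_1 = d - s$ are exactly those verified in Lemma~\ref{lem:suitable-prelim}, run in reverse.
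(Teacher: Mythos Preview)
Your overall strategy matches the paper's: show $D_2(k)$ is itself suitable by taking a minimal witness $\bx$ and massaging it into canonical form via a smoothing argument on the small coordinates (the paper's Claim~1). The difference is in how you handle coordinates $x_i > K$. You propose to reduce them to $K$, accept that the sum drops below $z$, and then ``bump up'' coordinates to restore sum exactly $z$, checking monotonicity of $\saad_K$ at each step. This can be made to work (your monotonicity claims are correct, including the promotion from $K{-}1$ to $K$, which needs $\saad_K > k \ge K{-}1$), but it is more laborious than necessary, and your phrase ``among all witnesses with $\sum x_i \le z$'' is confused: by definition of $D_2(k)=z$, \emph{every} witness has sum at least $z$.

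The paper's route (Claim~2) is a two-line simplification you missed: since replacing any $x_i > K$ by $K$ leaves $\saad_K$ \emph{unchanged} while strictly decreasing the sum, the existence of such a coordinate would yield a witness with sum $< z$, contradicting minimality of $z = D_2(k)$. Hence the witness already has $x_i \le K$ for all $i$, and the entire bump-up step evaporates. After that, smoothing (Claim~1) puts the small coordinates into the $\{q,q{+}1\}$ profile with sum still exactly $z$, and (S1)--(S4) can be read off directly. So there is no genuine gap in your proposal, but recognising that minimality kills the $x_i > K$ case outright would replace a page of monotonicity bookkeeping with a single sentence.
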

\begin{proof}
	By Lemma~\ref{lem:suitable-prelim}, the minimum integer $z$ which is suitable for $k$ is at least $D_2(k)$, so it suffices to show that $D_2(k)$ itself is suitable for $k$. Let $z = D_2(k)$; then by Definition~\ref{def:D2}, there is an integer $d\geq K$ and a tuple $\bx = (x_1, \dots,
	x_d)\in\Z_{\geq 2}^d$ 
	with $\sum_{i=1}^d x_i = z$ and $\saad_{K}(\bx)>k$. Fix such a $d$ and $\bx$.
	
	\medskip\noindent
	{\bf Claim 1.\quad} We may choose $\textbf{x}$ to ensure that for all $x_i, x_j \le K-1$, we have $|x_i-x_j|\leq 1$.
	
	\medskip\noindent
	\textit{Proof:} Suppose that there are elements $2\leq x_i < x_j \le K-1$ with $x_j \ge x_i + 2$.
	Consider the tuple $\bx'$ formed from~$\bx$   by replacing $x_i$ and $x_j$ with 
	$x'_i = x_i +1$ and $x'_j = x_j-1$.  
	It is clear that $x'_j = x_j - 1 > x_i \geq2$,
	so $\bx' \in \Z_{\geq 2}^d$, and that $\sum_{i=1}^d x_i' = \sum_{i=1}^d x_i = z$.
	
	To finish proving the claim, we will show that
	$\saad_K(\bx') \geq \saad_K(\bx)$.
	This means that we can consider $\bx'$ in place of $\bx$, repeating as necessary until Claim~1 is satisfied. We have $x_i,x_i',x_j' < x_j \le K-1$, so $a_K(\bx') = a_K(\bx)$. Let $f(y)$ be the function given by $f(y)=\tfrac{1}{y}-\tfrac{1}{y+1}$. Then
	\[
	b_K(\bx')
	= b_K(\bx) - \frac{1}{x_i} -  \frac{1}{x_j}
	+ \frac{1}{x_i+1} + \frac{1}{x_j-1}
	= b_K(\bx) - f(x_i) + f(x_j-1)
	\leq b_K(\bx)\,,
	\]
	since $0<x_i\leq x_j-1$ and $f(y)$~is decreasing for
	$y>0$. Hence $\saad_K(\bx') = a_K(\bx')/b_K(\bx') \ge a_K(\bx)/b_K(\bx) = \saad_K(\bx)$, as required.
	
	\medskip\noindent
	{\bf Claim 2.\quad}  For all $i\in[d]$, $x_i \leq K$.  
	
	\medskip\noindent
	\textit{Proof:} Suppose for contradiction that for some $i\in [d]$, $x_i>K$. Consider the tuple $\bx' \in \Z_{\ge 2}^d$ formed from~$\bx$ by replacing $x_i$ with $x'_i=K$. By the definition of $\saad$, $\saad_K(\bx') = \saad_K(\bx)>k$. But the sum of the elements in $\bx'$ is smaller than~$z$, contradicting $z=D_2(k)$.
	
	\medskip\noindent
	{\bf Claim 3.\quad} $z$ is suitable for $k$.
	
	\medskip\noindent
	\textit{Proof:} Let $q = \min\{x_i \colon i \in [d]\}$. By Claims 1 and 2, we can choose $\bx$ so that its coordinates lie in the set $\{q,q+1,K\}$. Let $s$ be the number of coordinates of $\bx$ equal to $K$, let $d_0$ be the number of coordinates of $\bx$ equal to $q$, and let $d_1 = d - d_0 - s$. (Thus $d_1$ is the number of coordinates of $\bx$ equal to $q+1$ unless $q+1=K$, in which case $d_1=0$.) 
	
	If $s=d$, then we have $z = Kd \ge K^2$, and so $z$ is suitable for $k$; suppose instead $0 \le s \le d-1$ (so (S2) is satisfied). Since $\sum_{i=0}^d x_i = z$, we have 
	\[
		q=\Big\lfloor \frac{z-Ks}{d-s} \Big\rfloor \mbox{ and } d_1 = (z-Ks)\bmod (d-s),
	\]
	so $q$, $d_0$ and $d_1$ are exactly as in Definition~\ref{def:suitable}. Moreover, since either $q+1\le K-1$ or $d_1 = 0$, we have
	\[
	\frac{d+d_0+d_1}{1+d_0/q+d_1/(q+1)} = \frac{a_K(\bx)}{b_K(\bx)} = \saad_K(\bx) > k,
	\]
	so (S4) is satisfied. Since $x \in \Z_{\ge 2}^d$, we have $z = \sum_{i=1}^d x_i \ge Ks + 2(d-s)$, and by Claim 1, we have $z = \sum_{i=1}^d x_i \le Ks + (K-1)(d-s)$; hence (S3) is satisfied. Finally, we have $d \ge K$ and (S3) implies that $z \ge 2d$, so (S1) is satisfied. Hence $z$ is suitable for $k$, and the result follows.
 
\end{proof}
\begin{cor}\label{cor:suitable}
	For any real number $k \ge 2$, writing $K = \floor{k}+1$, we have $2K \le D_2(k) \le K^2$.
\end{cor}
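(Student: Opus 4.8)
Both inequalities follow almost immediately from Definition~\ref{def:D2}; no serious computation or new idea is needed, and in fact the upper bound has already been observed in the text preceding Definition~\ref{def:suitable}. I would present the two bounds separately.

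\textbf{Upper bound $D_2(k)\le K^2$.} Take $d=K$ and $\bx=(K,\dots,K)\in\Z_{\ge2}^d$ (note $d\ge K$ and each coordinate is at least $2$, since $K=\floor{k}+1\ge 3$). None of the coordinates is strictly less than $K$, so by Definition~\ref{def:rom-aad} we have $a_K(\bx)=d=K$ and $b_K(\bx)=1$, whence $\saad_K(\bx)=K$. Since $K=\floor{k}+1>k$, this tuple satisfies $\saad_K(\bx)>k$, so by Definition~\ref{def:D2}, $D_2(k)\le \sum_{i=1}^d x_i = K^2$.

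\textbf{Lower bound $D_2(k)\ge 2K$.} By Definition~\ref{def:D2} there is an integer $d\ge K$ and a tuple $\bx=(x_1,\dots,x_d)\in\Z_{\ge2}^d$ with $\sum_{i=1}^d x_i=D_2(k)$ and $\saad_K(\bx)>k$. Since each $x_i\ge 2$ and there are $d\ge K$ of them, $D_2(k)=\sum_{i=1}^d x_i\ge 2d\ge 2K$. (Equivalently, one could invoke Lemma~\ref{lem:suitable}: every suitable $z$ with $z<K^2$ has $z\ge 2d\ge 2K$ by (S1), and every suitable $z\ge K^2$ trivially satisfies $z\ge 2K$ because $K\ge 2$; but quoting Definition~\ref{def:D2} directly is cleaner here.)

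\textbf{Expected difficulty.} There is essentially no obstacle: the corollary just records the numeric range $[2K,K^2]$ that contains $D_2(k)$, which will be convenient when bounding the search space in the algorithmic computation of $D_2$. The only points worth double-checking are that $K\ge 2$ (so that the constant tuple $(K,\dots,K)$ lies in $\Z_{\ge2}^d$ and so that $K>k$ gives $\saad_K>k$), and that the witnessing tuple in the lower bound indeed has at least $K$ coordinates, both of which are built into Definition~\ref{def:D2}.
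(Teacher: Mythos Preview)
Your proof is correct. The paper's own proof routes both inequalities through Lemma~\ref{lem:suitable} (using that $K^2$ is suitable for~$k$ for the upper bound, and (S1) for the lower bound), whereas you argue directly from Definition~\ref{def:D2}; the content is the same, and your direct argument is if anything slightly cleaner since it avoids invoking the suitability machinery.
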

\begin{proof}
	$D_2(k) \le K^2$ follows immediately from Lemma~\ref{lem:suitable} and the fact that $K^2$ is always suitable for $K$, and $D_2(k) \ge 2K$ follows immediately from Lemma~\ref{lem:suitable} and (S1).
\end{proof}

Given Lemma~\ref{lem:suitable} and Corollary~\ref{cor:suitable}, we can compute $D_2(k)$ 
for any real number~$k\geq 2$  
by considering increasing integers~$2K \le z \le K^2$ 
and checking whether $z$ is suitable for~$k$. 
Mathematica code to do this is included in Appendix~\ref{app:mathematica-valid} as the functions \texttt{isSuitable[]} and \texttt{Dtwo[]}.

\section{\texorpdfstring{Approximating $\boldsymbol{\IS(\calG)}$ when the connective constant is small}{Approximating Z(G) when the connective constant is small}}\label{sec:gadgets}

In this section, we prove Theorem~\ref{thm:hc-algo}.
That is, for a positive real number~$\lambda$ and a family~$\calF$ of graphs
 whose connective constant is subcritical with respect to~$\lambda$, 
 we give an  FPTAS   for~$Z(\calG)$ on $\lambda$-balanced weighted graphs $\calG = (G,\win,\wout,W)$ such that $G\in\calF$.  
 This FPTAS provides the base case for our main algorithm  (see Theorem~\ref{thm:main}).  
 
\subsection{Definitions and Notation} 
 
 We start by recalling some ideas from the introduction, and giving more detailed notation.

\begin{defn}[\cite{Godsil}]\label{def:saw}
Let $G$ be a graph and let $v \in V(G)$. The \emph{self-avoiding walk tree}     of~$G$ with root~$v$ is the following tree $\sawt{v}{G}$. 
The vertices are the self-avoiding walks (simple paths) from~$v$ in~$G$.
The root is the trivial path~$v$.
Given any vertex~$P$, the children of~$P$ are the simple paths that extend~$P$ by one
edge.    $N_G(v,\ell)$  denotes the number of 
length-$\ell$
self-avoiding walks in~$G$ starting from~$v$
so it is  the number of
depth-$\ell$ vertices in $\sawt{v}{G}$, where the root is considered to have depth~0.
\end{defn}

Recall the following definitions from the introduction.

\defcc*

\begin{defn}\label{def:subcritical}
Let $\kappa$ and $\lambda$ be positive real numbers.
$\kappa$ is said to be \emph{subcritical w.r.t.~$\lambda$} 
if $\lambda <  \lambda_c(\kappa):= \kappa^\kappa/{(\kappa-1)}^{\kappa+1}$.  
\end{defn}

Note that $\lambda_c(\kappa)$ is decreasing in~$\kappa$, so
if $\kappa$ is subcritical w.r.t.\@ some~$\lambda$ and $\kappa'<\kappa$, then $\kappa'$~is also subcritical w.r.t.~$\lambda$.

\subsection{Reduction to approximating the univariate hardcore partition function}

In the \emph{univariate hardcore model}, we are given a graph~$G$ and a real parameter $\lambda>0$ and we wish to compute the partition function
\label{def:hardcore}
\[
    Z_\lambda(G) = \sum_{I\in \calI(G)} \lambda^{|I|}\,.
\]
Thus, $\lambda=1$ corresponds to counting independent sets and, for any $\lambda>0$, $Z_\lambda(G) = Z(\calG)$ where $\calG=(G, \win, \mathbf{1}, 1)$ and $\win$~is the constant function which assigns weight~$\lambda$ to every vertex.

Note that, as $\lambda$~decreases, the condition that a family $\calF$~has subcritical connective constant w.r.t.~$\lambda$ becomes weaker, but the requirement that $\win(v)\leq \lambda\wout(v)$ becomes more restrictive.

We will prove Theorem~\ref{thm:hc-algo} by reducing the problem 
of approximating~$Z(\calG)$
on $\lambda$-balanced weighted graphs $\calG = (G,\win,\wout,W)$ such that
$G$ comes from a family whose   connective constant    is subcritical with respect to~$\lambda$
to the problem of approximating the partition function of
the (univariate) hardcore model in the subcritical case, using    the following theorem of Sinclair \emph{et al}.

\begin{theorem}[\cite{SSSY}]\label{thm:cc-2}
    Fix a family~$\calF$ of graphs and $\lambda\in\R_{>0}$ such that $\calF$~has subcritical connective constant w.r.t.~$\lambda$.  There is an FPTAS $\hcunialgo_{\lambda,\calF}(G,\epsilon)$ for $Z_\lambda(G)$ for input graphs~$G\in\calF$.
\end{theorem}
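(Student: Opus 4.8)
This is the theorem of Sinclair, Srivastava, \Stefankovic{} and Yin~\cite{SSSY}; the plan is to reconstruct their argument, which combines three ingredients: self-reducibility of the partition function to single-vertex marginals, Weitz's self-avoiding-walk-tree reduction~\cite{Weitz}, and a correlation-decay estimate whose rate is controlled, through a potential function, exactly by the subcriticality hypothesis. Throughout, fix constants $\kappa$, $a$ and $c$ witnessing that the connective constant of $\calF$ is at most~$\kappa$ and that $\lambda < \lambda_c(\kappa)$; since $\calF$ and $\lambda$ are fixed, the algorithm $\hcunialgo_{\lambda,\calF}$ may depend on all of them.

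First I would reduce to marginals. Given $G\in\calF$ with $n=n(G)$ vertices, fix an ordering $v_1,\dots,v_n$ of $V(G)$ and put $G_i = G - \{v_1,\dots,v_{i-1}\}$, so $G_1=G$ and $G_i-v_i=G_{i+1}$. Since $\Pr_{G_i}[v_i\notin I] = Z_\lambda(G_{i+1})/Z_\lambda(G_i)$, telescoping gives $Z_\lambda(G) = \prod_{i=1}^{n}\Pr_{G_i}[v_i\notin I]^{-1}$, and every factor is at least $1/(1+\lambda)$ because every vertex's occupation ratio is at most~$\lambda$. So it suffices to compute a $\tfrac{\epsilon}{3n}$-approximation of each $\Pr_{G_i}[v_i\notin I]$ and take the product. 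Crucially, each $G_i$ is an \emph{induced} subgraph of~$G$, so $N_{G_i}(v,\ell)\le N_G(v,\ell)$ and $n(G_i)\le n$; hence the connective-constant bound passes uniformly to all the $G_i$: $\sum_{j=1}^{\ell}N_{G_i}(v,j)\le c\kappa^{\ell}$ for every $v$ and every $\ell\ge a\log n$.

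Next I would apply Weitz's theorem~\cite{Weitz}: for a graph $H$ and $v\in V(H)$, the occupation ratio of $v$ equals that of the root of Weitz's self-avoiding walk tree of $H$ at $v$ --- which is $\sawt{v}{H}$ together with an extra pinned leaf for each walk from $v$ that closes a cycle --- as computed bottom-up by the tree recursion $R_P = \lambda\prod_{Q\text{ a child of }P}1/(1+R_Q)$, where $R_P\in\{0,\infty\}$ at a pinned leaf. Truncating this tree at depth $L=\Theta(\log(n/\epsilon))$ leaves only $\poly(n/\epsilon)$ vertices: its depth-$\ell$ layer of simple paths has $N_H(v,\ell)$ vertices, each cycle-closing leaf at depth $\ell$ arises from a depth-$(\ell-1)$ simple path plus one of at most $\ell-1$ back-edges, and $\sum_{j\le\ell}N_H(v,j)\le c\kappa^{\ell}$ for $\ell\ge a\log n$ while for $\ell < a\log n$ this sum is still at most $c\kappa^{\lceil a\log n\rceil}=\poly(n)$. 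So the truncated tree can be built and evaluated in $\poly(n,1/\epsilon)$ time, and the only remaining task is to show that its root value is accurate enough.

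That accuracy is the heart of the proof, and is where I expect the main obstacle to be. Following~\cite{SSSY}, I would fix a well-chosen potential function $\Phi=\Phi_\lambda$ and measure errors in the coordinate $\Psi=\Phi(R)$. The error introduced at the root by truncating at depth $L$ is at most the sum, over depth-$L$ vertices $P$, of the error at~$P$ --- which is bounded by the (finite) diameter of $\Phi_\lambda$ on the range of possible ratios, and is $0$ at a pinned leaf --- times $|\partial\Psi_{\mathrm{root}}/\partial\Psi_P|$, and this Jacobian telescopes along the root-to-$P$ path into a product of one-step factors $\tfrac{\Phi'(R_{\mathrm{parent}})}{\Phi'(R_{\mathrm{child}})}\cdot\tfrac{R_{\mathrm{parent}}}{1+R_{\mathrm{child}}}$. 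The identity potential fails, since this factor can be as large as~$\lambda$, which exceeds $1/\kappa$ once $\kappa$ is moderate; the real content of~\cite{SSSY} is a choice of $\Phi_\lambda$ for which the \emph{amortized} one-step decay, over the whole range of ratios that can occur, is a constant $\gamma=\gamma(\lambda,\kappa)$ with $\gamma\kappa<1$. The point of the hypothesis $\lambda<\lambda_c(\kappa)$, with $\lambda_c(\kappa)=\kappa^{\kappa}/(\kappa-1)^{\kappa+1}$ the critical activity on the infinite $\kappa$-ary tree, is exactly that such a $\Phi_\lambda$ exists precisely in the subcritical regime. Given this, the root error is $O(c\kappa^{L}\gamma^{L})=O((\gamma\kappa)^{L})$, so $L=\Theta(\log(n/\epsilon))$ makes it below $\tfrac{\epsilon}{3n(1+\lambda)}$; transferring back through the (locally Lipschitz) map $\Phi_\lambda^{-1}$ yields the required $\tfrac{\epsilon}{3n}$-approximation of $\Pr_{G_i}[v_i\notin I]$, and multiplying over $i$ completes the proof. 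Beyond constructing $\Phi_\lambda$ and proving the contraction, the points I would expect to be delicate are that $\kappa$ need not be an integer --- so one must argue, by a monotonicity/convexity argument over the possible multisets of neighbour-degrees at a tree node, that the extremal configuration is the integer one behind $\lambda_c(\kappa)$ --- and that the tree's branching is controlled only on average and only at depths $\ell\ge a\log n$, so the one-step estimate must hold uniformly in degree rather than being tuned to a fixed maximum degree.
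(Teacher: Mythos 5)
The paper offers no proof of this statement to compare against: Theorem~\ref{thm:cc-2} is imported verbatim from Sinclair, Srivastava, \Stefankovic{} and Yin~\cite{SSSY} and is used downstream purely as a black box --- the paper's own contribution (Theorem~\ref{thm:hc-algo}) is a gadget reduction of the multivariate, vertex-weighted problem \emph{to} this univariate FPTAS, not a reproof of it. So what you have written should be judged as an attempted reconstruction of the external result rather than as an alternative to anything in this paper.

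As such a reconstruction, your architecture matches what \cite{SSSY} actually do: self-reducibility to single-vertex marginals (with the observation that SAW counts only decrease under vertex deletion, so the bound with threshold $a\log n$ of the original graph suffices), Weitz's SAW-tree reduction~\cite{Weitz}, truncation at depth $L=\Theta(\log(n/\epsilon))$ with the tree size controlled by the connective constant, and a potential-based correlation-decay estimate; the subtleties you flag (non-integer $\kappa$, degree-uniform one-step bounds, the cumulative and large-depth-only nature of the connective-constant hypothesis) are indeed the delicate points. But as a proof it is incomplete exactly where the theorem is hard: you assert, rather than establish, the existence of a potential $\Phi_\lambda$ whose amortized contraction $\gamma$ satisfies $\gamma\kappa<1$ precisely when $\lambda<\lambda_c(\kappa)$, and the step from ``amortized decay along each root-to-leaf path'' plus ``at most $c\kappa^L$ vertices at depth $L$'' to a root error of order $(\gamma\kappa)^L$ requires the amortization to be played off against the self-avoiding-walk counts themselves (the hypothesis bounds only cumulative counts, not per-level branching or per-path products of one-step factors); carrying that out is the technical core of \cite{SSSY}. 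If you are permitted to cite \cite{SSSY} for that contraction lemma, you may as well cite the whole theorem, as the paper does; if not, that analysis still has to be supplied.
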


\subsection{Defining the gadgets for the reduction}

We first define the gadgets we will use in our reduction. The construction is relatively simple --- given a weighted input graph $\calG = (G,\win,\wout,W)$, we form an instance of the hardcore model by attaching appropriately chosen depth-$2$ trees to each vertex of~$G$. However, to apply Theorem~\ref{thm:cc-2}, we must bound the connective constant of the family of all possible hardcore instances that we may construct (see Lemma~\ref{lem:conn-const-2}). For this reason, we use the following notation.

\begin{defn}\label{def:gadg}
	For any graph $G=(V,E)$, a \emph{weight map for $G$} is a map $\phi$ from $V$ to finite (possibly empty) multisets of positive integers. The \emph{realisation of $\phi$} is the graph $G'$ formed from $G$ as follows. Let $\{T_{v,t} \mid v \in V,\,t \in \phi(v)\}$ be a multiset of vertex-disjoint stars, where $T_{v,t}$ has $t$~leaves. Take the disjoint union of $G$ and the $T_{v,t}$'s, then join the centre of each star $T_{v,t}$ to $v$ by an edge.
\end{defn}

An example of the construction is given in Figure~\ref{fig:gadget}. We will apply Theorem~\ref{thm:cc-2} to a family of graphs defined as in Definition~\ref{def:Fplus}, using Lemma~\ref{lem:conn-const-2} to argue that the connective constant is subcritical. 

\begin{figure}
\begin{center}
\begin{tikzpicture}[scale=1,node distance = 1.5cm]
    \tikzstyle{vtx}=[fill=black, draw=black, circle, inner sep=2pt]
    \tikzstyle{svtx}=[fill=black, draw=black, circle, inner sep=1.5pt]

    \node[vtx] (v1) at (0,0)    [label=135:{$v_1$}] {};
    \node[vtx] (v2) at (1.5,0)  [label=45:{$v_2$}] {};
    \node[vtx] (v3) at (60:1.5) [label=90:{$v_3$}]  {};
    \draw (v1) -- (v2) -- (v3) -- (v1);

    \node[svtx] (v11) at ($(v1)+(180:1)$) {};
    \node[svtx] (v12) at ($(v1)+(210:1)$) {};
    \node[svtx] (v13) at ($(v1)+(240:1)$) {};
    \node[svtx] (v111) at ($(v11)+(165:0.67)$) {};
    \node[svtx] (v112) at ($(v11)+(195:0.67)$) {};
    \draw (v11)--(v1)--(v12);
    \draw (v13)--(v1);
    \draw (v111)--(v11)--(v112);

    \node[svtx] (v21)  at ($(v2)+(-30:1)$) {};
    \node[svtx] (v211) at ($(v21)+(0:0.67)$) {};
    \node[svtx] (v212) at ($(v21)+(-30:0.67)$) {};
    \node[svtx] (v213) at ($(v21)+(-60:0.67)$) {};
    \draw (v2)--(v21)--(v211);
    \draw (v212)--(v21)--(v213);
\end{tikzpicture}
\end{center}
\caption{An example of the realisation of a weight map $\phi$ for a triangle with vertices $v_1$, $v_2$ and $v_3$. Here $\phi(v_1) = \{0,0,2\}$, $\phi(v_2) = \{3\}$, and $\phi(v_3) = \emptyset$.} \label{fig:gadget}
\end{figure}
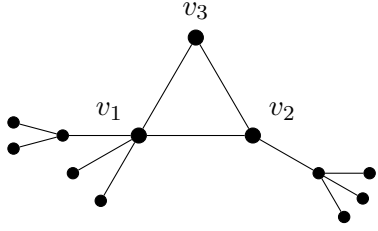

\begin{defn}\label{def:Fplus}
	Let $\calF$ be a family of graphs and let $y\in\Z_{>0}$. For any graph $G=(V,E)$, let $\Phi_y(G)$ be the set of all possible weight maps~$\phi$ for~$G$ such that, for all $v \in V$, $\sum_{i \in \phi(v)} (i+1) \le y|V|^2$. We define $\calF^+_y$ to be the set of all realisations of weight maps in $\bigcup_{G \in \calF}\Phi_y(G)$.
\end{defn}

The choice of $y|V|^2$ in the above definition is not fundamental; the following lemma remains true, with minor modifications to the proof, for any choice of polynomial in~$|V|$.

\begin{lemma}\label{lem:conn-const-2}
	Let $\kappa\in\R_{\geq 1}$, $\lambda\in\R_{>0}$ and $y\in\Z_{\geq1}$.  Let $\calF$ be a family of graphs with connective constant at most~$\kappa$, subcritical w.r.t.~$\lambda$. Then $\calF^+_y$ also has subcritical connective constant w.r.t.~$\lambda$.
\end{lemma}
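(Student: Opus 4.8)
The plan is to bound the connective constant of $\calF^+_y$ directly from Definition~\ref{def:cc-2}, by controlling the number $N_{G'}(u,\ell)$ of length-$\ell$ self-avoiding walks from an arbitrary vertex $u$ of a realisation $G'$ in terms of the corresponding quantities for the underlying graph $G \in \calF$. The key structural observation is that the gadgets are depth-$2$ trees hanging off vertices of $G$: any self-avoiding walk in $G'$ alternates between ``excursions into gadgets'' (which have length at most $4$, since a star $T_{v,t}$ together with its connecting edge to $v$ has radius $2$ from $v$, so entering and leaving costs at most $4$ edges and such an excursion can be taken at most once per visited vertex of $G$) and segments that live in $G$. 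Since a SAW visits each vertex of $G$ at most once, a length-$\ell$ SAW in $G'$ projects onto a SAW in $G$ of length at least roughly $\ell/5$ (each unit of ``$G$-progress'' is padded by at most a constant number of gadget edges), and the number of ways to decorate a fixed $G$-SAW of length $r$ with gadget excursions is bounded by a function of the gadget sizes.

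More precisely, first I would fix notation: let $G' \in \calF^+_y$ be the realisation of $\phi \in \Phi_y(G)$, and let $n = |V(G)|$. For a vertex $v \in V(G)$, let $t_{\max}(v) = \max \phi(v)$ and note that the number of vertices in all gadgets attached to $v$ is at most $\sum_{i \in \phi(v)}(i+1) \le y n^2$; in particular every gadget-related quantity is bounded by $\poly(n)$. Then I would set up the counting: decompose a SAW $P$ from $u$ in $G'$ according to which maximal sub-walks lie entirely inside a single gadget-plus-attachment-point and which lie along $G$. Because each gadget is a depth-$2$ tree, once a SAW enters a gadget through its unique attachment vertex $v$ it must leave through $v$ as well (or terminate inside), so the gadget contributes a contiguous block of at most $4$ edges and the walk cannot use that gadget again. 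Hence if $P$ has length $\ell$ and makes $r$ edges of ``real'' progress in $G$, then $\ell \le r + 4(r+1) + (\text{one possible trailing excursion from } u) \le 5r + O(1)$, giving $r \ge (\ell - c_0)/5$ for a constant $c_0$ depending only on whether $u$ is in $G$ or in a gadget.

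Next, the enumeration: the projection of $P$ onto $G$ (contracting each gadget excursion) is a SAW $Q$ in $G$ of length $r$, possibly preceded/followed by a short partial gadget segment near the endpoints. Given $Q$, the number of ways to reconstruct $P$ is at most the number of choices of gadget excursions at each of the $r+1$ vertices of $Q$, and at each such vertex $v$ the number of distinct length-$\le 4$ excursions into the gadgets at $v$ is at most $\poly(n)$ (crudely, at most the number of vertices in those gadgets squared). So $N_{G'}(u,\ell) \le \poly(n) \cdot \sum_{Q} 1$ where the sum is over SAWs $Q$ in $G$ of length $r$ with $r \ge (\ell-c_0)/5$; but one must also bound $r$ from above, which is easy since a SAW in $G$ has length at most $n$. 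Actually it is cleaner to sum: $\sum_{i=1}^{\ell} N_{G'}(u,i) \le \poly(n) \sum_{j \ge (\ell - c_0)/5,\ j \le n} N_G(\pi(u), j)$ where $\pi(u)$ is $u$ itself if $u \in V(G)$ or the attachment point otherwise (again handling a bounded gadget prefix by a $\poly(n)$ factor). Invoking the hypothesis that $\calF$ has connective constant at most $\kappa$ — so there are constants $a, c$ with $\sum_{i=1}^{\ell} N_G(v,i) \le c\kappa^\ell$ for $\ell \ge a \log n$ — gives $\sum_{i=1}^\ell N_{G'}(u,i) \le \poly(n) \cdot c \kappa^{\ell}$ once $\ell$ is large enough (say $\ell \ge 5a\log n + c_0$, after noting $G'$ has at most $\poly(n)$ vertices so $\log|V(G')| = O(\log n)$). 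Finally, absorb the $\poly(n)$ factor: since $\poly(n) = 2^{O(\log n)}$ and we only need the bound for $\ell \ge a' \log|V(G')|$, for any $\kappa^+ > \kappa$ we can choose $a'$ large enough that $\poly(n)\,\kappa^\ell \le (\kappa^+)^\ell$ for all $\ell \ge a'\log|V(G')|$; thus $\calF^+_y$ has connective constant at most $\kappa^+$. Since $\lambda_c$ is continuous and strictly decreasing and $\lambda < \lambda_c(\kappa)$, we may pick $\kappa^+ > \kappa$ with $\lambda < \lambda_c(\kappa^+)$ still, so $\calF^+_y$ has subcritical connective constant w.r.t.~$\lambda$, as required.

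The main obstacle I anticipate is the bookkeeping around the endpoints of the walk when $u$ lies inside a gadget rather than in $V(G)$, and making the ``project onto a $G$-SAW and reconstruct'' argument precise enough that the $5r + O(1)$ length comparison and the $\poly(n)$ reconstruction count are rigorous — in particular one must be careful that distinct SAWs $P$ really do give distinct $(Q, \text{excursion data})$ pairs and that a SAW cannot re-enter a gadget it has already partially traversed, which is exactly where the depth-$2$ (single attachment vertex, tree) structure of the gadget is essential. The uniform-weight-map subtlety — that $\Phi_y(G)$ ranges over all admissible $\phi$ simultaneously — is handled automatically because the constants $a', c, \kappa^+$ produced above depend only on $y$, $\kappa$, $\lambda$, and the constants $a, c$ for $\calF$, not on the particular $\phi$ or $G$.
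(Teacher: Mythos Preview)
Your overall strategy---project a self-avoiding walk in $G'$ onto a self-avoiding walk in $G$, bound the number of ways to decorate it with gadget data, then absorb a $\poly(n)$ factor by bumping $\kappa$ to some $\kappa^+$---is the right one and matches the paper's approach. But there is a genuine conceptual error in your structural analysis of how a SAW interacts with the gadgets, and it propagates into an inconsistency in your counting.

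You write that ``once a SAW enters a gadget through its unique attachment vertex $v$ it must leave through $v$ as well (or terminate inside)'', and you model the walk as making length-$\le 4$ excursions into gadgets at \emph{each} of the $r+1$ vertices of the projected $G$-walk. But a self-avoiding walk cannot revisit $v$: if it goes $v \to (\text{centre of }T_{v,t}) \to (\text{leaf})$, it is stuck there and must terminate. So there are no intermediate excursions at all---a SAW in $G'$ can spend at most two steps in a gadget at the very start (if $u \notin V(G)$) and at most two steps in a gadget at the very end, and is otherwise entirely a SAW in $G$. This is exactly the observation the paper uses. Your length comparison $\ell \le 5r + O(1)$ is therefore far too pessimistic (the truth is $\ell \le r + 4$), and more importantly your decoration count, as you describe it, would be $\poly(n)^{r+1}$ rather than the $\poly(n)$ you then write down---a factor that could not be absorbed into any fixed $\kappa^+$. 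Once you use the correct structural fact (gadget steps only at the two ends), the decoration count genuinely is $\poly(n)$ and the rest of your argument goes through essentially as written; the paper carries this out by the explicit bound $N_{G'}(v,i) \le 3y^4 n^8 \sum_{j=i-4}^{i} N_G(v,j)$.
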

\begin{proof}
    Fix $\gamma>1$ such that $\gamma\kappa$ is subcritical w.r.t.~$\lambda$.  We show that $\calF^+_y$ has connective constant at most $\gamma\kappa$.

	By the definition of $\kappa$ (Definition~\ref{def:cc-2}), there exist real numbers $a$ and~$c$ such that for all $G \in \calF$, all $v \in V(G)$, and all $\ell \ge a\log|V(G)|$, we have $\sum_{i=1}^\ell N_G(v,i) \le c\kappa^\ell$. Let $a' = \max\{a,8/\log\gamma\}$ and $c' = 40y^4(1+c)$; we will show that for all $G' \in \calF^+_y$, all $v \in V(G')$, and all $\ell \ge a'\log|V(G')|$, we have $\sum_{i=1}^\ell N_{G'}(v,i) \le c'(\gamma\kappa)^\ell$.  For ease of notation, let $n=|V(G)|$ and $n'=|V'(G)|$.
	
	Let $G' \in \calF^+_y$, let $v \in V(G')$. Let $G \in \calF$ and $\phi \in \Phi_y(G)$ be such that $G'$~is a realisation of~$\phi$. Observe that, for all positive integers~$i$, any self-avoiding walk in $G'$ of length~$i$ must stay almost entirely within~$G$, only entering the attached trees~$T_{w,t}$ for at most two steps at the start (if $v \in V(G') \setminus V(G)$) and at most two steps at the end. Since $\phi \in \Phi_y(G)$, for all $w \in V(G)$ we have
	\begin{equation}\label{eqn:n-prime}
		\Big|\bigcup_{t\in\phi(w)} V(T_{w,t})\Big| = \sum_{t \in \phi(w)}(t+1) \le yn^2,
	\end{equation}
	so there are at most $yn^2$ choices for each step outside $V(G)$. Thus, for all $i\ge 4$,
	\begin{align*}
		N_{G'}(v,i) &\le N_G(v,i) + 2yn^2 N_G(v,i-1) + 3y^2n^4 N_G(v,i-2)\\
		&\qquad\qquad + 2y^3n^6 N_G(v,i-3) + y^4n^8 N_G(v,i-4)\\
		&\le 3y^4n^8\sum_{j=i-4}^i N_G(v,j).
	\end{align*}
    Now, suppose that $0\leq i\leq 3$.  Note that, by~\eqref{eqn:n-prime}, $n'\leq yn^2+n$.  For all $v\in V(G')$,
    \[
        N_{G'}(v,i)\leq (n')^i \leq (yn^2+n)^i \leq 8y^3n^6 < 8y^4n^8
            \leq 8y^4n^8 \sum_{j=0}^i N_G(v,j)\,.
    \]
	Therefore, for all $0 \le i \le n$,
	\[
		N_{G'}(v,i) \le 8y^4n^8 \!\!\!\!\!\!\sum_{j=\max\,\{0,i-4\}}^i \!\!\!\!\!\!N_G(v,j).
	\]
	For any~$\ell$, Each term $N_G(v,k)$ appears at most five times in $\sum_{i=1}^\ell\sum_{j=\max\,\{0,i-4\}}^i N_G(v,j)$, so
	\[
		\sum_{i=1}^\ell N_{G'}(v,i) \le 40y^4n^8 \sum_{i=0}^\ell N_G(v,i).
	\]
	For $\ell \ge a'\log n' \ge a\log n$ and $\kappa\ge 1$, it follows that
	\[
		\sum_{i=1}^\ell N_{G'}(v,i) \le 40y^4n^8 (1+ c\kappa^\ell) \le c'n^8\kappa^\ell.
	\]
	By the definition of $a'$, we have $n^8 \le \gamma^{a'\log n} \le \gamma^\ell$, so $\sum_{i=1}^\ell N_{G'}(v,i) \le c'(\gamma\kappa)^\ell$. Thus, the connective constant of $\calF^+_y$ is at most $\gamma\kappa$, which is subcritical.
\end{proof}

\subsection{The approximation algorithm}

We now state the algorithm of Theorem~\ref{thm:hc-algo}.  Throughout, we write $\Lambda_t=1 + \lambda(1+\lambda)^{-t}$ 
\label{defLamt}
for any $t\in\Z_{\geq 0}$.  Note that, for all $\lambda,\,t>0$, we have $1 < \Lambda_{t+1} < \Lambda_{t}$.  Recall that $\calG=(G,\win,\wout,W)$~is $\lambda$-balanced if $\win(v)\leq\lambda\wout(v)$ for all $v\in V(G)$.

\label{def:alg1} 
\begin{algorr}{$\hcalgo_{\lambda,\calF}(\calG,\eps)$}{
      The parameters are $\calF$, a family of graphs, and $\lambda\in\R_{>0}$ such that $\calF$~has subcritical connective constant w.r.t.~$\lambda$; $\calF$ and~$\lambda$ are not part of the input.  The inputs are $\calG = (G,\win,\wout,W)$, a non-empty $\lambda$-balanced $n$-vertex weighted graph with $G \in \calF$, and a rational number $0 < \eps < 1$.  The output is an $\epsilon$-approximation to $Z(\calG)$.}
	\item If $\eps \le 3n\lambda(1+\lambda)^{-n}$, then calculate $\IS(\calG)$ exactly by brute force and return it.\label{HC-1}
	\item Form a weighted graph $\calG' = (G',\win,\wout,W')$ from $\calG$ as follows: for each $v \in V(G)$ with $\win(v) \le \frac{\eps}{3n}\wout(v)$, delete $v$ and multiply $W$ by $\wout(v)$. Let $W'$ be the final value of~$W$.\label{HC-2}
	\item For all $v \in V(G')$, calculate $a_{0,v}, \dots, a_{n,v}$ as follows. Initially, set $x_v\leftarrow\lambda\wout(v)/\win(v)$. Then, for $t = 0$ to $n$, let $a_{t,v} = \lfloor\log_{\Lambda_t} x_v\rfloor$, and update $x_v \leftarrow x_v/\Lambda_t^{a_{t,v}}$.\label{HC-3}
	\item Define a weight map $\phi$ for $G'$ by mapping each vertex $v$ to the multiset consisting of exactly $a_{t,v}$ copies of~$t$ for all $t \in \{0, \dots, n\}$. Construct the realisation $G''$ of $\phi$. \label{HC-4}
	\item Return $\hcunialgo_{\lambda,\calF^+_{\lfloor 5(1+\lambda)\rfloor}}(G'',\eps/3) \cdot W'\lambda^{-n}\prod_{v \in V(G')}\big(\win(v)/\prod_{i=0}^n (1+\lambda)^{i\,a_{i,v}}\big)$.\label{HC-5}
\end{algorr}

\subsection{Building up to the proof of correctness}

It is immediate that $\hcalgo_{\lambda,\calF}(\calG,\eps)$ outputs $\IS(\calG)$ when $\eps \le 3n\lambda(1+\lambda)^{-n}$. When $\eps > 3n\lambda(1+\lambda)^{-n}$, we prove correctness in three steps, as detailed in the proof of Theorem~\ref{thm:hc-algo}. We show that $\IS(\calG)$ is roughly $\IS(\calG')$ (see Lemma~\ref{lem:hcalgo-1}) and that $Z_\lambda(G'')$ is roughly $\IS(\calG')/C$ where
\[
	C = W'\lambda^{-n}\!\!\!\prod_{v \in V(G')}\!\!\!\Big(\win(v)\,\big/\,\prod_{i=0}^n (1+\lambda)^{i\,a_{i,v}}\Big)
\]
is easily computed (Lemma~\ref{lem:hc-algo-2}).
Finally, we will show that $G''\in \calF^+_{\lfloor 5(1+\lambda)\rfloor}$ and that $\calF^+_{\lfloor 5(1+\lambda)\rfloor}$ has subcritical connective constant w.r.t.~$\lambda$, so that $\hcunialgo_{\lambda,\calF^+_{\lfloor 5(1+\lambda)\rfloor}}(G'',\eps/3)$ is roughly $Z_\lambda(G'')$ in step~(\ref{HC-5}) (Lemma~\ref{lem:hc-ai}). We conclude that the output is roughly $\IS(\calG)$, as required (see the proof of Theorem~\ref{thm:hc-algo}). This last part of the argument is the reason we bound $\eps$ below in step~(\ref{HC-1}) and pass from $\calG$ to~$\calG'$ in step~(\ref{HC-2}).

\begin{lemma}\label{lem:hcalgo-1}
	Let $\lambda>0$ and let $\calF$ be a family of graphs with subcritical connective constant w.r.t.~$\lambda$. Let $\calG = (G,\win,\wout,W)$ be a non-empty $\lambda$-balanced $n$-vertex weighted graph with $G \in \calF$. Let $3n\lambda(1+\lambda)^{-n} < \eps < 1$ be rational. Then after step~(\ref{HC-2}) of $\hcalgo_{\lambda,\calF}(\calG,\eps)$, we have $\IS(\calG) \ge \IS(\calG') \ge (1 - \eps/3)\IS(\calG)$.
\end{lemma}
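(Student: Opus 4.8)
The plan is to identify exactly which vertices step~(\ref{HC-2}) removes and then to compare $\IS(\calG)$ and $\IS(\calG')$ term by term over independent sets. Since deleting a vertex does not alter the weight functions $\win$ and $\wout$, the set of deleted vertices is precisely
\[
	U = \Big\{ v \in V(G) : \win(v) \le \tfrac{\eps}{3n}\wout(v) \Big\},
\]
which depends only on $\calG$ and $\eps$, and the final multiplier is $W' = W\wout(U)$ regardless of the order in which vertices are removed. Thus $\calG' = (G-U,\win,\wout,W\wout(U))$ and $|U|\le n$.

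For the upper bound $\IS(\calG)\ge\IS(\calG')$ I would expand $\IS(\calG')$ and observe that $\calI(G-U)$ is exactly $\{I\in\calI(G): I\cap U=\emptyset\}$; for such an $I$ we have $V(G)\setminus I = \big((V(G)\setminus U)\setminus I\big)\sqcup U$, so the $I$-term of $\IS(\calG')$ equals $W\win(I)\wout(V(G)\setminus I)$, i.e.\ exactly the $I$-term of $\IS(\calG)$. The remaining terms of $\IS(\calG)$, those indexed by independent sets meeting $U$, are non-negative (as $W$, $\win$, $\wout$ are positive-integer valued), so dropping them only decreases the sum.

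For the lower bound it remains to bound the dropped terms by $\tfrac{\eps}{3}\IS(\calG)$, and the key device is a pairing argument handling one deleted vertex at a time. Fix $u\in U$. The map $I\mapsto I\setminus\{u\}$ is injective on $\{I\in\calI(G): u\in I\}$, and since $V(G)\setminus I = \big(V(G)\setminus(I\setminus\{u\})\big)\setminus\{u\}$ for such $I$,
\begin{align*}
	W\win(I)\wout(V(G)\setminus I) &= \frac{\win(u)}{\wout(u)}\cdot W\win(I\setminus\{u\})\,\wout\big(V(G)\setminus(I\setminus\{u\})\big)\\
	&\le \frac{\eps}{3n}\cdot W\win(I\setminus\{u\})\,\wout\big(V(G)\setminus(I\setminus\{u\})\big).
\end{align*}
Summing over all $I\ni u$, and using that the images $I\setminus\{u\}$ are distinct independent sets of $G$ with non-negative $\IS$-terms, the total contribution of independent sets containing $u$ is at most $\tfrac{\eps}{3n}\IS(\calG)$. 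Every $I$ with $I\cap U\neq\emptyset$ contains at least one $u\in U$, so a union bound over the $\le n$ vertices of $U$ gives $\IS(\calG)-\IS(\calG')\le\tfrac{\eps}{3}\IS(\calG)$, which rearranges to the claimed inequality.

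I do not anticipate a serious obstacle: the argument is elementary bookkeeping. The one point that needs care is the double role of $\wout(u)$ for a deleted vertex $u$ — it is folded into the multiplier $W'$ of $\calG'$ but must simultaneously be charged against the reference independent set $I\setminus\{u\}$ inside $\IS(\calG)$, which is exactly what makes the ratio of the two paired $\IS$-terms equal $\win(u)/\wout(u)$. The hypotheses $\eps>3n\lambda(1+\lambda)^{-n}$, $\lambda$-balance, and $G\in\calF$ are not used in this lemma; they are needed only for the later steps of the correctness proof.
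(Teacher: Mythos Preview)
Your proof is correct. The upper bound and the identification of $U$ and $W'$ match the paper exactly. For the lower bound, however, you take a genuinely different route: the paper factors $\wout(U)\ge (1+\eps/3n)^{-|U|}\prod_{u\in U}(\wout(u)+\win(u))$, expands the product as $\sum_{J\subseteq U}\win(J)\wout(U\setminus J)$, and then argues that the resulting double sum over pairs $(I',J)$ with $I'\in\calI(G)$, $I'\subseteq V(G')$, $J\subseteq U$ dominates the single sum over $\calI(G)$. Your argument instead bounds the deficit $\IS(\calG)-\IS(\calG')$ directly by a union bound over $u\in U$, pairing each $I\ni u$ with $I\setminus\{u\}$. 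Your approach is shorter and avoids the chain $(1+\eps/3n)^{-|U|}\ge(1-\eps/3n)^{|U|}\ge 1-\eps/3$; the paper's approach is multiplicative and would adapt more readily if one wanted a bound of the form $\IS(\calG')\ge(1-\eps/3n)^{|U|}\IS(\calG)$, but for the stated lemma either works.
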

\begin{proof}
	Write $\calG' = (G',\win,\wout,W')$ as in $\hcalgo_{\lambda,\calF}(\calG,\eps)$, and write $S = V(G) \setminus V(G')$. Observe that $W' = W\wout(S)$, so
	\begin{equation}\label{eqn:hcalgo-1-init}
		\IS(\calG') = W'\!\!\sum_{I \in \calI(G')}\!\! \win(I)\,\wout(V(G') \setminus I) = W\!\!\!\sum_{\substack{I \in \calI(G)\\I\subseteq V(G')}}\!\!\! \win(I)\,\wout(V(G) \setminus I)\,.
	\end{equation}
	It follows immediately that
	\begin{equation*}
		\IS(\calG') \le W\!\!\sum_{I \in \calI(G)}\!\!\win(I)\,\wout(V(G)\setminus I) = \IS(\calG)\,,
	\end{equation*}
	as required. Moreover, since for all $v \in S$ we have $\win(v) \le \tfrac{\eps}{3n}\wout(v)$,
	\begin{align*}
		\wout(S) 
		&= \prod_{v \in S}\wout(v) 
		\ge \prod_{v \in S} \frac{\wout(v)+\win(v)}{1+\eps/3n}
		= \frac{1}{(1+\eps/3n)^{|S|}}\sum_{J \subseteq S}\win(J)\,\wout(S\setminus J)\\
		&\ge (1-\eps/3n)^{|S|}\sum_{J\subseteq S}\win(J)\,\wout(S\setminus J)
		\ge (1-\eps/3)\sum_{J\subseteq S}\win(J)\,\wout(S\setminus J)\,.
	\end{align*}
	It follows from~\eqref{eqn:hcalgo-1-init} that
	\begin{align*}
		\IS(\calG') &\ge (1-\eps/3)\,W\!\!\!\sum_{\substack{I \in \calI(G)\\I\subseteq V(G')}}\!\!\!\win(I)\,\wout(V(G')\setminus I)\sum_{J \subseteq S}\win(J)\,\wout(S\setminus J)\\
		&\ge (1-\eps/3)\,W\!\!\sum_{I \in \calI(G)}\!\!\win(I)\,\wout(V(G)\setminus I)
		= (1-\eps/3)\,\IS(\calG)\,.\qedhere
	\end{align*}
\end{proof}

We collect the relevant properties of the $a_{t,v}$'s and $x_v$'s in the following lemma.

\begin{lemma}\label{lem:hc-ai}
	Let $\lambda>0$ and let $\calF$ be a family of graphs with subcritical connective constant w.r.t.~$\lambda$. For every 
non-empty $\lambda$-balanced $n$-vertex weighted graph $\calG = (G,\win,\wout,W)$ with $G \in \calF$, and every rational~$\epsilon$ satisfying $3n\lambda(1+\lambda)^{-n} < \eps < 1$, the following are true after step~(\ref{HC-4}) of $\hcalgo_{\lambda,\calF}(\calG,\eps)$:
    \begin{enumerate}[(i)]
    \item for all $v \in V(G')$,\label{eqn:hc-ai-state}
	$\displaystyle
		\Big(1-\frac{\eps}{3n}\Big)\frac{\lambda\wout(v)}{\win(v)} \le \prod_{t=0}^n\Lambda_t^{a_{t,v}} \le \frac{\lambda\wout(v)}{\win(v)}\,,$
    \item $G'' \in \calF^+_{\lfloor 5(1+\lambda)\rfloor}$.
    \end{enumerate}
\end{lemma}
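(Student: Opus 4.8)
The plan is to analyse the loop in step~(\ref{HC-3}) through a single invariant on the variables $x_v$, and then read both parts off it. Fix $v\in V(G')$, write $x_v^{(0)}=\lambda\wout(v)/\win(v)$ for the initial value of $x_v$, and for $t\in\{0,\dots,n\}$ write $x_v^{(t+1)}$ for its value after the iteration with index~$t$, so that $a_{t,v}=\floor{\log_{\Lambda_t}x_v^{(t)}}$ and $x_v^{(t+1)}=x_v^{(t)}/\Lambda_t^{a_{t,v}}$. The crux is the claim, proved by induction on~$t$, that $x_v^{(t)}\ge 1$ for every~$t$; this forces $a_{t,v}\ge 0$, and then from $\Lambda_t^{a_{t,v}}\le x_v^{(t)}<\Lambda_t^{a_{t,v}+1}$ it also forces $1\le x_v^{(t+1)}<\Lambda_t$. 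The base case is exactly where $\lambda$-balance is used: $\win(v)\le\lambda\wout(v)$ gives $x_v^{(0)}\ge 1$. Telescoping the updates gives $\prod_{t=0}^n\Lambda_t^{a_{t,v}}=x_v^{(0)}/x_v^{(n+1)}$ with $x_v^{(n+1)}\in[1,\Lambda_n)$. Since $x_v^{(n+1)}\ge 1$, the product is at most $x_v^{(0)}=\lambda\wout(v)/\win(v)$, which is the upper bound in part~(i). For the lower bound I need $x_v^{(n+1)}\le(1-\eps/3n)^{-1}$, and this is precisely where the floor on~$\eps$ from step~(\ref{HC-1}) is used: $\eps>3n\lambda(1+\lambda)^{-n}$ gives $\Lambda_n=1+\lambda(1+\lambda)^{-n}<1+\eps/3n\le(1-\eps/3n)^{-1}$ (the last inequality because $(1+\eps/3n)(1-\eps/3n)=1-(\eps/3n)^2\le 1$), and $x_v^{(n+1)}<\Lambda_n$ finishes it.

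For part~(ii) I would reuse the same invariant to bound the multiset $\phi(v)$. By Definition~\ref{def:gadg} the realisation $G''$ attaches $\sum_{t\in\phi(v)}(t+1)=\sum_{t=0}^n a_{t,v}(t+1)$ new vertices around each~$v$, so it suffices to bound this quantity by $\floor{5(1+\lambda)}\,|V(G')|^2$: then $\phi\in\Phi_{\floor{5(1+\lambda)}}(G')$, and since $G'$ is an induced subgraph of $G\in\calF$ (so, assuming as we may that $\calF$ is closed under vertex deletion, $G'\in\calF$), the realisation $G''$ lies in $\calF^+_{\floor{5(1+\lambda)}}$. For the sum: for $t\ge 1$ the invariant gives $x_v^{(t)}<\Lambda_{t-1}$, hence $a_{t,v}\le\log_{\Lambda_t}\Lambda_{t-1}$; writing $\Lambda_t=1+w$ with $w=\lambda(1+\lambda)^{-t}$ and $\Lambda_{t-1}=1+(1+\lambda)w$, the elementary bounds $\tfrac{x}{1+x}\le\ln(1+x)\le x$ give $a_{t,v}<(1+\lambda)(1+w)<2(1+\lambda)$ (and only barely more than $1+\lambda$ once $t$ is moderately large, since $w$ decays geometrically). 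For $t=0$, $a_{0,v}\le\log_{1+\lambda}(\lambda\wout(v)/\win(v))<n$, using that $v\in V(G')$ forces $\win(v)>\tfrac{\eps}{3n}\wout(v)$ together with $\eps>3n\lambda(1+\lambda)^{-n}$. Combining these with $\sum_{t=1}^n(t+1)=O(n^2)$ yields the required bound, the factor $5$ leaving plenty of slack.

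I expect part~(i), and within it the lower bound, to be the substantive step: the entire point of bounding $\eps$ below in step~(\ref{HC-1}) and passing from $\calG$ to $\calG'$ in step~(\ref{HC-2}) is to trap the residual $x_v^{(n+1)}$ in the tiny interval $[1,\Lambda_n)$ with $\Lambda_n$ small enough. Part~(ii) is then largely bookkeeping, but two points need care: one must make the estimate of $\sum_t a_{t,v}(t+1)$ tight enough that the deliberately generous constant $5(1+\lambda)$ really dominates for all~$n$ and all $\lambda>0$, and one should record the (otherwise routine) observation that $G''$, as the realisation of a weight map for the induced subgraph $G'$ of $G\in\calF$, indeed belongs to $\calF^+_{\floor{5(1+\lambda)}}$.
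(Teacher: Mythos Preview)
Your proposal is correct and follows essentially the same approach as the paper's proof: establish the invariant $1\le x_v^{(t+1)}<\Lambda_t$, telescope for part~(i), and for part~(ii) bound $a_{0,v}\le n$ via the construction of~$G'$ and $a_{t,v}\le\log_{\Lambda_t}\Lambda_{t-1}$ for $t\ge 1$ via elementary logarithm estimates. The only cosmetic differences are that the paper uses the slightly looser bound $\ln(1+z)\ge z/2$ for $z\in(0,1)$ in place of your $\ln(1+z)\ge z/(1+z)$, and it simply asserts $G''\in\calF^+_{\lfloor 5(1+\lambda)\rfloor}$ after bounding $\sum_t a_{t,v}(t+1)\le\lfloor 5(1+\lambda)\rfloor n^2$ without discussing the $G'\in\calF$ point you flag.
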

\begin{proof}
	Towards~\eqref{eqn:hc-ai-state}, fix $v \in V(G')$. For all $0 \le t \le n+1$, let
	\[
		x_{t,v} = \frac{\lambda\wout(v)}{\win(v)}\prod_{i=0}^{t-1}\Lambda_i^{-a_{i,v}}.
	\]
	Thus, $x_{0,v}$ is the initial value of $x_v$ in step~(\ref{HC-3}) of the algorithm, and for all $i \in [n+1]$, $x_{i,v}$ is the value of $x_v$ at the end of the iteration of the for loop in which $t=i-1$. In particular, at the end of step~\eqref{HC-3}, $x_v = x_{n+1,v}$.
	
	By definition, for all $0 \le t \le n$ we have $a_{t,v} = \floor{\log_{\Lambda_t} x_{t,v}}$ and $x_{t+1,v} = x_{t,v}\Lambda_t^{-a_{t,v}}$, so $1 \le x_{t+1,v} < \Lambda_t$. By taking $t=n$, this implies that
	\[
		1 \le \frac{\lambda\wout(v)}{\win(v)}\prod_{i=0}^n\Lambda_i^{-a_{i,v}} \le \Lambda_n\,,\qquad\text{so}\qquad
        \prod_{i=0}^n\Lambda_i^{a_{i,v}} \leq \frac{\lambda\wout(v)}{\win(v)}\leq \Lambda_n \prod_{i=0}^n\Lambda_i^{a_{i,v}}\,,
	\]
    and we have the upper bound of~\eqref{eqn:hc-ai-state}.
    Recall that $\Lambda_t = 1+\lambda(1+\lambda)^{-t}$.  For any $x>0$, $1/(1+x)\geq 1-x$, so $1/\Lambda_n \geq 1-\lambda(1+\lambda)^{-n} \geq 1-\epsilon/3n$, establishing the lower bound of~\eqref{eqn:hc-ai-state}.

	It remains to show that $\calG''\in\calF^+_{\lfloor 5(1+\lambda)\rfloor}$.  Recall that, for all $1 \le t \le n+1$, we have $1 \le x_{t,v} < \Lambda_{t-1}$. For any $z>0$, $\log (1+z)\leq z$ and, for any $z\in(0,1)$, $\log (1+z)\geq z/2$.  For all $\lambda>0$ and all $t\geq 1$, $0<\lambda(1+\lambda)^{-t}<1$.  Therefore, for all $t\geq 1$,
    \[
        a_{t,v} \le \lfloor (\log\Lambda_{t-1})/\log\Lambda_t\rfloor\leq \lfloor 2(1+\lambda)\rfloor\,.
    \]

	By the construction of $G'$ in step~(\ref{HC-2}), we have $\win(v)/\wout(v) > \eps/3n > \lambda(1+\lambda)^{-n}$. Thus $a_{0,v} = \floor{\log_{1+\lambda} \lambda\wout(v)/\win(v)} \le n$. It follows that
	\begin{align*}
		\sum_{t \in \phi(v)}(t+1) = \sum_{t=0}^n a_{t,v}(t+1) &\le n + \lfloor 2(1+\lambda)\rfloor \sum_{i=1}^{n}(i+1) \\
            &= n + \tfrac12 \lfloor 2(1+\lambda)\rfloor n(n+3)\leq \lfloor 5(1+\lambda)\rfloor n^2\,,
	\end{align*}
    so $G'' \in \calF^+_{\lfloor 5(1+\lambda)\rfloor}$, as required.
\end{proof}

\begin{lemma}\label{lem:hc-algo-2}
	Let $\lambda>0$ and let $\calF$ be a family of graphs with subcritical connective constant w.r.t.~$\lambda$. Let $\calG = (G,\win,\wout,W)$ be a non-empty $\lambda$-balanced $n$-vertex weighted graph with $G \in \calF$. Let $3n\lambda(1+\lambda)^{-n} < \eps < 1$ be rational. Then in step~(\ref{HC-4}) of $\hcalgo_{\lambda,\calF}(\calG,\eps)$,
	\begin{equation}\label{eqn:hc-algo-2-state}
		(1-\eps/3)\IS(\calG') \le Z_\lambda(G'')\, W'\lambda^{-n}\!\!\!\prod_{v \in V(G')}\!\!\!\Big(\win(v)\,\big/\prod_{i=0}^n (1+\lambda)^{i\,a_{i,v}} \Big) \le \IS(\calG')\,.
	\end{equation}
\end{lemma}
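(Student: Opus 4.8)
The plan is to evaluate $Z_\lambda(G'')$ exactly in terms of $G'$ and the integers $a_{t,v}$ computed in step~(\ref{HC-3}), and then to sandwich it using the two-sided bound on $\prod_{t=0}^{n}\Lambda_t^{a_{t,v}}$ provided by Lemma~\ref{lem:hc-ai}(i). The first step is to sum out the gadget vertices. Since $G'$ is an induced subgraph of $G''$, every $I\in\calI(G'')$ has a well-defined trace $I':=I\cap V(G')\in\calI(G')$, and $I$ is recovered from $I'$ by extending it independently inside each attached star. A one-line case analysis on a single star $T_{v,t}$ (with $t$ leaves and centre adjacent to $v$) shows that the $\lambda$-weight it contributes is $(1+\lambda)^{t}$ when $v\in I'$ (the centre is forced out, the leaves are free) and $\lambda+(1+\lambda)^{t}=(1+\lambda)^{t}\Lambda_t$ when $v\notin I'$ (centre in with leaves out, or centre out with leaves free), the identity being precisely the definition $\Lambda_t=1+\lambda(1+\lambda)^{-t}$. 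Writing $P_v=\prod_{t=0}^{n}(1+\lambda)^{t\,a_{t,v}}$ and $Q_v=\prod_{t=0}^{n}\Lambda_t^{a_{t,v}}$ and multiplying these contributions over all stars and all $v\in V(G')$, I obtain
\[
Z_\lambda(G'')=\Big(\prod_{v\in V(G')}P_v\Big)\sum_{I'\in\calI(G')}\lambda^{|I'|}\prod_{v\in V(G')\setminus I'}Q_v .
\]

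Substituting this into the middle expression of~\eqref{eqn:hc-algo-2-state} and using $C=W'\lambda^{-n}\prod_{v\in V(G')}\bigl(\win(v)/P_v\bigr)$, the $\prod_{v}P_v$ factors cancel, leaving
\[
Z_\lambda(G'')\cdot C=W'\lambda^{-n}\Big(\prod_{v\in V(G')}\win(v)\Big)\sum_{I'\in\calI(G')}\lambda^{|I'|}\prod_{v\in V(G')\setminus I'}Q_v .
\]
The key point is that $C$ has been chosen so that, if each $Q_v$ were replaced by its target value $\lambda\wout(v)/\win(v)$, the $I'$-term here would collapse to exactly $W'\win(I')\,\wout(V(G')\setminus I')$: the factor $\prod_{v}\win(v)$ splits off $\win(I')$, the reciprocals $\win(v)^{-1}$ ($v\notin I'$) cancel the corresponding $Q_v$-numerators, the $\wout(v)$ ($v\notin I'$) assemble into $\wout(V(G')\setminus I')$, and the powers of $\lambda$ balance; summing over $I'$ would then give exactly $\IS(\calG')$.

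Both inequalities in~\eqref{eqn:hc-algo-2-state} now follow by feeding in the two sides of Lemma~\ref{lem:hc-ai}(i). For the upper bound, apply $Q_v\le\lambda\wout(v)/\win(v)$ in every factor with $v\notin I'$; by the previous paragraph this makes each $I'$-term at most the corresponding term of $\IS(\calG')$, so $Z_\lambda(G'')\cdot C\le\IS(\calG')$. For the lower bound, apply instead $Q_v\ge(1-\eps/3n)\lambda\wout(v)/\win(v)$; this costs at most a factor $(1-\eps/3n)^{|V(G')\setminus I'|}\ge(1-\eps/3n)^{n}$ in each $I'$-term, and $(1-\eps/3n)^{n}\ge1-\eps/3$ by Bernoulli's inequality (valid since $\eps<1$ forces $0<\eps/3n<1$). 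As this loss factor does not depend on $I'$, it factors out of the sum and yields $Z_\lambda(G'')\cdot C\ge(1-\eps/3)\IS(\calG')$.

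The only genuinely delicate step is the first one: organising $\calI(G'')$ by its trace on $G'$ and keeping track of the product of star-contributions so that it telescopes cleanly against the normaliser $C$ --- in particular that the $P_v$'s vanish and that the residual powers of $\lambda$, the factor $W'$, and the weight products $\win(\cdot),\wout(\cdot)$ recombine into $\IS(\calG')$. Everything after that is algebra on a single sum indexed by $\calI(G')$ together with one application of Bernoulli's inequality; notably, neither the connective-constant hypothesis on $\calF$ nor Theorem~\ref{thm:cc-2} is used here, those entering the proof of Theorem~\ref{thm:hc-algo} only through Lemma~\ref{lem:hc-ai}(ii) and the call to $\hcunialgo$ in step~(\ref{HC-5}).
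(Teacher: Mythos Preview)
Your proof is correct and follows essentially the same approach as the paper's: both compute $Z_\lambda(G'')$ by summing over $I'\in\calI(G')$ using the star partition-function identities (contribution $(1+\lambda)^t$ if $v\in I'$, $(1+\lambda)^t\Lambda_t$ if $v\notin I'$), substitute into the middle expression to obtain the formula the paper labels~\eqref{eqn:thingy}, and then feed in the two sides of Lemma~\ref{lem:hc-ai}(i) with the Bernoulli-type estimate $(1-\eps/3n)^n\ge 1-\eps/3$ for the lower bound. Your introduction of the shorthand $P_v,Q_v$ and factoring out $\prod_v P_v$ is a clean way to organise the same cancellation the paper performs inline.
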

\begin{proof}
	Let $\calG' = (G',\win,\wout,W')$ be as in step~(\ref{HC-2}).  For notational convenience, let $M = Z_\lambda(G'')\,W'\lambda^{-n} \prod_{v\in V(G')} \big(\win(v)\,\big/\prod_{i=0}^n (1+\lambda)^{i\,a_{i,v}}\big)$ be the middle term of~\eqref{eqn:hc-algo-2-state}.

    Let $T$ be a star with $t$~leaves. The total weight of independent sets in~$T$ is $Z_\lambda(T) = \lambda + (1+\lambda)^t$, and the total weight of independent sets in~$T$ that do not include $T$'s centre is $(1+\lambda)^t$.  Therefore,
    \[
        Z_\lambda(G'') = \sum_{I\in\calI(G')}
            \lambda^{|I|}
            \Big(\prod_{v\in I} \prod_{i=0}^n (1+\lambda)^{i\,a_{v,i}}\Big)
            \Big(\prod_{v\in V(G')\setminus I} \prod_{i=0}^n (\lambda+(1+\lambda)^i)^{a_{v,i}}
            \Big).
    \]
    This gives
   \begin{align}
        M &= W'\lambda^{-n} \sum_{I\in\calI(G')}
                \lambda^{|I|}
                \bigg(\prod_{v\in I} \win(v)\bigg)
                \bigg(\prod_{v\in V(G')\setminus I} \win(v)
                     \prod_{i=0}^n \bigg(
                         \frac{\lambda+(1+\lambda)^i}{(1+\lambda)^i}
                     \bigg)^{a_{i,v}}
                \bigg) \notag\\
          &= W'\lambda^{-n} \sum_{I\in\calI(G')}
                \lambda^{|I|}
                \Big(\prod_{v\in I} \win(v)\Big)
                \Big(\prod_{v\in V(G')\setminus I} \win(v)
                     \prod_{i=0}^n \Lambda_i^{a_{i,v}}
                \Big)\,.\label{eqn:thingy}
   \end{align} 

   By Lemma~\ref{lem:hc-ai}, it follows that
   \[
       M \leq W'\lambda^{-n}\sum_{I\in \calI(G')}
                  \lambda^{|I|}
                  \Big(\prod_{v\in I} \win(v)\Big)
                  \Big(\prod_{v\in V(G')\setminus I} \lambda\wout(v)\Big)= Z(G')\,,
   \]
   since the $\lambda$'s cancel.  For the lower bound, \eqref{eqn:thingy} similarly implies that $M\geq (1-\epsilon/3n)^n Z(\calG')$, which is at least $(1-\epsilon/3)\,Z(\calG')$.
\end{proof}

\subsection{The proof of the theorem}

We are now ready to prove Theorem~\ref{thm:hc-algo}, which we restate for convenience.

\thmhcalgo*

\begin{proof}
We will show that   $\hcalgo_{\lambda,\calF}(\calG,\epsilon)$
is the desired FPTAS.

Let $\lambda$, $\calF$ and $\calG$ be as in the theorem statement.
Let $\epsilon$ be an error tolerance in $(0,1)$ and let  $n=|V(G)|$. If $\eps \le 3n\lambda(1+\lambda)^{-n}$, then $\hcalgo_{\lambda,\calF}(\calG,\eps)$ runs in time $\poly(2^n) = \poly(1/\epsilon)$. Suppose instead that $\eps > 3n\lambda(1+\lambda)^{-n}$. Then steps (\ref{HC-1})--(\ref{HC-4}) all take time $\poly(n,1/\epsilon)$. Moreover, by Lemma~\ref{lem:hc-ai} $G'' \in \calF^+_{\lfloor 5(1+\lambda)\rfloor}$, and $\calF^+_{\lfloor 5(1+\lambda)\rfloor}$~has subcritical connective constant w.r.t.~$\lambda$ by Lemma~\ref{lem:conn-const-2}. Thus by Theorem~\ref{thm:cc-2}, step~(\ref{HC-5}) also takes time $\poly(n,1/\epsilon)$. Thus in all cases, $\hcalgo_{\lambda,\calF}(\calG,\eps)$ runs in time $\poly(n,1/\epsilon)$ as required.
	
	By Lemma~\ref{lem:hcalgo-1},
	$\IS(\calG) \ge \IS(\calG') \ge (1-\eps/3)\IS(\calG)$.
	By Lemma~\ref{lem:hc-algo-2}, it follows that
	\[
		\IS(\calG) \ge Z_\lambda(G'')\, W'\lambda^{-n}\!\!\!\prod_{v \in V(G')}\!\!\!\Big(\win(v)\,\big/\prod_{i=0}^n (1+\lambda)^{i\,a_{i,v}} \Big) \ge (1-\eps/3)^2\IS(\calG)\,.
	\]
	By the correctness of $\hcunialgo_{\lambda,\calF^+_{\lfloor 5(1+\lambda)\rfloor}}$ (Theorem~\ref{thm:cc-2}), it follows that
	\begin{multline*}
		(1+\eps/3)\IS(\calG) \ge \hcalgo_{\lambda,\calF^+_{\lfloor 5(1+\lambda)\rfloor}}(G'',\eps/3)\cdot W'\lambda^{-n}\!\!\!\prod_{v \in V(G')}\!\!\!\Big(\win(v)\,\big/\prod_{i=0}^n (1+\lambda)^{i\,a_{i,v}} \Big) \\\ge (1-\eps/3)^3\IS(\calG)\,.
	\end{multline*}
	Thus the output of $\hcalgo_{\lambda,\calF}(\calG,\eps)$ lies in $(1\pm\eps)\IS(\calG)$, as required.
\end{proof}

\section{The connective constant and local structure}
\label{sec:local}
\label{sec:conn-const}	

In this section, we will give an FPTAS for $Z(\calG)$ in the case where the input weighted graph~$\calG$ has an underlying graph~$G$ which comes from the following family.

\begin{defn}\label{defn:ourfamily}
    Let $\ourfamily$ be the family of graphs~$G$ such that
    $\delta(G)\geq 2$ and, for all $x\in V(G)$ with $d(x)\geq 6$,
    $d^2(x)\leq 26$.
\end{defn}

Suppose we are given a weighted graph $\calG=(G,\win,\wout,W)$ with $\delta(G)\geq 2$ and, for all $v\in V(G)$, $\win(v)\leq\wout(v)$ (i.e., $\calG$~is $1$-balanced).  Then either we can approximate $\IS(\calG)$ relatively quickly using 
the FPTAS from this section, or $\calG$~contains a vertex of degree at least~$6$ and $2$-degree at least $27$ --- which our main algorithm will exploit. Thus $\ourfamily$ will serve as the base case for our algorithm (see Section~\ref{sec:slices}). We will show that $\ourfamily$ has subcritical connective constant w.r.t.\@ $\lambda=1$, so that the existence of an FPTAS follows immediately from Theorem~\ref{thm:hc-algo}; see Theorem~\ref{thm:2-degree-FPTAS}.

Recall from Section~\ref{sec:gadgets} that to show $\ourfamily$ has subcritical connective constant, we must bound the number of leaves in the self-avoiding walk trees $\sawt{G}{v}$ of graphs $G \in \ourfamily$. To this end, we first introduce some notation and ancillary lemmas.

\label{def:treedefs}
Let $T=(V,E,r)$ be a tree $(V,E)$ with root~$r\in V$.  We write $L(T)$ for
the set of $T$'s leaves.  For any $v\in V$, we write $C(v)$ for the
set of $v$'s children (i.e., $v$'s neighbours that are not on the
unique path from $v$ to~$r$; we take $C(r)=\Gamma(r)$), and $D(v)$
for $v$'s depth (i.e., the length of the path from $v$ to~$r$, so
$D(r)=0$).
	
\begin{defn}\label{def:kapdec}
    Let $T=(V,E,r)$ be a rooted tree and let $\kappa\geq 1$.  A
    function $\theta\colon V\to \mathbb{R}$ is \emph{$\kappa$-decreasing
      for~$T$} if, for every $x\in V$,
    $1/\kappa \leq \theta(x)\leq 1$ and, for every $x\in V\setminus\{r\}$,
    \begin{equation}
        \sum_{y\in C(x)}\!\! \theta(y)\leq \kappa\,\theta(x)\,.\label{eq:kappa-dec}
    \end{equation}
\end{defn}
	
The specific requirement that $\theta(x)\geq 1/\kappa$ is for
convenience.  We will use a family of $\kappa$-decreasing functions
for trees in a particular family, and any positive lower bound on the
values taken by all of these functions would suffice.

\begin{lemma}
\label{lem:kappa-squared}
    Let $G$ be a graph and let $\kappa>1$.  Suppose that, for every
    $v\in V(G)$, there is a $\kappa$-decreasing function for
    $\sawt{G}{v}$.  Then $\Delta(G)\leq \kappa^2+1$.
\end{lemma}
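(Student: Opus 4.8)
The plan is to pick a vertex $v\in V(G)$ of maximum degree, so that $d_G(v)=\Delta(G)$, and to bound $d_G(v)$ directly; if $\Delta(G)=0$ there is nothing to prove, so we may assume $v$ has at least one neighbour $u$. I would then work inside the self-avoiding walk tree $T=\sawt{G}{u}$, equipped with a $\kappa$-decreasing function $\theta$ (which exists by hypothesis). The crucial point is \emph{where} to apply inequality~\eqref{eq:kappa-dec}: it is vacuous at the root of~$T$, so simply reading off $|C(u)|=d_G(u)$ there yields nothing useful. Instead, consider the depth-$1$ vertex $P=(u,v)$ of~$T$ (the length-$1$ path $u\to v$). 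Its children in $T$ are precisely the simple paths $(u,v,z)$ with $z\in\Gamma_G(v)\setminus\{u\}$ — since $G$ is simple, the only forbidden one-edge extension of $(u,v)$ is the one returning to~$u$ — so $|C(P)|=d_G(v)-1=\Delta(G)-1$.

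Applying~\eqref{eq:kappa-dec} with $x=P$ then finishes the argument. Since $\theta(y)\geq 1/\kappa$ for every vertex $y$ of~$T$, the left-hand side satisfies
\[
\sum_{y\in C(P)}\theta(y)\ \geq\ |C(P)|\cdot\frac1\kappa\ =\ \frac{\Delta(G)-1}{\kappa}\,,
\]
while $\theta(P)\leq 1$ gives $\kappa\,\theta(P)\leq\kappa$ on the right-hand side. Combining these, $(\Delta(G)-1)/\kappa\leq\kappa$, i.e.\ $\Delta(G)\leq\kappa^2+1$, as claimed.

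I do not expect any serious obstacle here; the only thing to get right is the level at which~\eqref{eq:kappa-dec} is invoked. One must descend to depth~$1$ rather than stay at the root, and this is exactly why the bound involves $\kappa^2$: one factor of $\kappa$ arises because each of the $\Delta(G)-1$ children of~$P$ contributes at least $1/\kappa$ to the left-hand side, and a second factor comes from the $\kappa$ multiplying $\theta(P)$ on the right. Counting the children of~$P$ correctly — and recording that $G$ is simple, so $u\neq v$ and $\Gamma_G(v)\setminus\{u\}$ has $d_G(v)-1$ elements — is the one point worth stating carefully.
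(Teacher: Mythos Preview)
Your proof is correct and essentially identical to the paper's: both pass to the self-avoiding walk tree rooted at a \emph{neighbour} of a maximum-degree vertex, apply~\eqref{eq:kappa-dec} at the depth-$1$ node corresponding to that vertex, and use the bounds $1/\kappa\le\theta\le 1$ on the two sides. The only cosmetic difference is that the paper phrases the argument by contradiction (assuming $d_G(u)>\kappa^2+1$ and deriving a violation of~\eqref{eq:kappa-dec}), while you argue directly.
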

\begin{proof}
    Towards a contradiction, suppose that $d_G(u)>\kappa^2+1$ for some
    $u\in V(G)$.  $d_G(u)>1$ so $u$~has a neighbour~$v$.  Let $\theta$
    be a $\kappa$-decreasing function for $\sawt{G}{v}$.  Now consider
    the node~$x$ of $\sawt{G}{v}$ corresponding to the walk~$vu$
    in~$G$.  For each $w\in\Gamma_G(u)\setminus\{v\}$, $x$~has a child
    corresponding to the walk $vuw$, so it has more than $\kappa^2$
    children.  Therefore,
    $\sum_{y\in C(x)}\theta(y) > \kappa^2\,\tfrac{1}{\kappa}\geq
    \kappa\,\theta(x)$, contradicting~\eqref{eq:kappa-dec}.
\end{proof}

The following lemma is a useful way to bound connective constants.

\begin{lemma}
\label{lem:kappa-dec}
    Let $\calF$ be a family of graphs. Let $\kappa>1$ and suppose
    that, for every $G\in\calF$ and every $v\in V(G)$, there is a
    $\kappa$-decreasing function for $\sawt{G}{v}$.  Then the
    connective constant of~$\calF$ is at most~$\kappa$.
\end{lemma}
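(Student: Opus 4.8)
The plan is to pass to self-avoiding walk trees and bound each depth layer of the tree by a $\theta$-weighted layer sum. Fix $G\in\calF$ and $v\in V(G)$, let $T=\sawt{G}{v}$, and let $\theta$ be a $\kappa$-decreasing function for $T$, which exists by hypothesis. Recall that $N_G(v,i)$ is exactly the number of depth-$i$ nodes of $T$. For each integer $i\ge 0$ put $W_i=\sum_{x\colon D(x)=i}\theta(x)$. Since $\theta(x)\ge 1/\kappa$ for every node $x$ of $T$, we have $N_G(v,i)\le\kappa\,W_i$, so it suffices to bound the $W_i$ geometrically.

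First I would handle the root separately, since \eqref{eq:kappa-dec} is only required to hold at non-root nodes. Because $\theta\le 1$ everywhere, $W_1=\sum_{y\in C(r)}\theta(y)\le|C(r)|=d_G(v)\le\Delta(G)$, and Lemma~\ref{lem:kappa-squared} — whose hypotheses are exactly those of the present lemma — gives $\Delta(G)\le\kappa^2+1$; hence $W_1\le\kappa^2+1$. For the inductive step, note that every node of depth $i\ge 1$ is distinct from $r$, so summing \eqref{eq:kappa-dec} over all such nodes yields
\[
W_{i+1}=\sum_{x\colon D(x)=i}\ \sum_{y\in C(x)}\theta(y)\ \le\ \kappa\!\!\sum_{x\colon D(x)=i}\!\!\theta(x)=\kappa\,W_i\,.
\]
Iterating from $W_1$ gives $W_i\le(\kappa^2+1)\,\kappa^{i-1}$ for all $i\ge 1$, hence $N_G(v,i)\le\kappa\,W_i\le(\kappa^2+1)\,\kappa^{i}$.

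Finally I would sum the resulting geometric series: using $\kappa>1$, for every integer $\ell\ge 0$,
\[
\sum_{i=1}^{\ell}N_G(v,i)\ \le\ (\kappa^2+1)\sum_{i=1}^{\ell}\kappa^{i}\ \le\ \frac{\kappa(\kappa^2+1)}{\kappa-1}\,\kappa^{\ell}\,.
\]
Since $c:=\kappa(\kappa^2+1)/(\kappa-1)$ depends only on $\kappa$ (not on $G$ or $v$), Definition~\ref{def:cc-2} is satisfied with this $c$ and any choice of $a$ (e.g.\ $a=0$), so the connective constant of $\calF$ is at most $\kappa$.

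The argument is just a one-line induction plus a geometric sum, so I do not expect a genuine obstacle; the only point needing care is that the $\kappa$-decreasing inequality is not imposed at the root of $T$, which forces the separate treatment of the depth-$1$ layer. That is precisely where Lemma~\ref{lem:kappa-squared} enters, converting the hypothesis ``there is a $\kappa$-decreasing function for every $\sawt{G}{u}$'' into a uniform bound $\Delta(G)\le\kappa^2+1$ on the number of children of the root, which is what keeps $c$ independent of $G$.
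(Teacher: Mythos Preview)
Your proof is correct and follows essentially the same approach as the paper: both bound the $\theta$-weighted mass at each depth by induction using \eqref{eq:kappa-dec}, invoke Lemma~\ref{lem:kappa-squared} to handle the root, and then sum a geometric series. Your version is a mild streamlining---you sum $\theta$ over the nodes at depth exactly~$i$ rather than over the leaves of the depth-$i$ truncation as the paper does---but the two arguments are the same in substance and yield the same constants up to a factor of~$2$.
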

\begin{proof}
    Let $G\in\calF$, let $v\in V(G)$, let $T=\sawt{G}{v}$, let $r$ be
    $T$'s root and let $\theta$ be $\kappa$-decreasing for~$T$.
    For any $i\geq 0$, let $T_i$ be the subgraph of~$T$ induced
    by vertices at distance at most~$i$ from~$r$.  We claim that, for
    all $i\geq 0$, 
    \[
        S_i := \!\!\!\sum_{w\in L(T_i)}\!\!\!\theta(w)\,\kappa^{-D(w)}
        < 2\kappa\,.
    \]
    We have $S_0 = \theta(r)\leq 1<2\kappa$.  By
    Lemma~\ref{lem:kappa-squared}, $d_G(v)\leq \kappa^2+1$, so
    $|L(T_1)|\leq \kappa^2+1$ and
    $S_1\leq (\kappa^2+1)/\kappa<2\kappa$. Finally, suppose
    $S_i< 2\kappa$ for some $i\geq 1$ and consider any leaf~$w$
    of~$T_i$.  If $w$~is a leaf in~$T_{i+1}$, then it contributes the
    same amount to $S_i$ and~$S_{i+1}$.  If $w$~is not a leaf
    in~$T_{i+1}$, then it does not contribute to~$S_{i+1}$ but all of
    its children do.  Their contribution is
    \[
        \sum_{y\in C(w)}\!\!\! \theta(y)\,\kappa^{-D(y)}
            = \kappa^{-(D(w)+1)} \!\!\sum_{y\in C(w)}\!\!\theta(y)
            \leq \kappa^{-(D(w)+1)}\kappa\,\theta(w)\,,
    \]
    where the inequality is by~\eqref{eq:kappa-dec}.  But
    $\kappa^{-D(w)}\,\theta(w)$ is exactly the contribution of~$w$
    to~$T_i$, so $S_{i+1}\leq S_i<2\kappa$ as claimed.
    
    $N(v,i)$ is the number of vertices
    in~$T$ with depth~$i$ so
    \[
        N(v,i) \leq |L(T_i)|
        \leq \kappa\!\!\! \sum_{w\in L(T_i)}\!\!\!\theta(w)
        \leq \kappa^{i+1} \!\!\!\sum_{w\in L(T_i)}\!\!\!\theta(w)\,\kappa^{-D(w)}
        = \kappa^{i+1}\,S_i
        < 2\kappa^{i+2}\,.
    \]
    So, for all $\ell\geq 1$,
    \[
        \sum_{i=1}^\ell N(v,i) < \sum_{i=1}^\ell 2\kappa^{i+2}
        < \frac{2\kappa^3}{\kappa-1}\,{\kappa^\ell}\,.
    \]
    In particular, this holds for all $\ell\geq\log |V(G)|$, so
    the connective constant of~$\calF$ is at most~$\kappa$.
\end{proof}

We are now in a position to prove the main result of the section.

\begin{theorem}
\label{thm:2-degree-FPTAS}
    There is an FPTAS $\basecount(\calG,\epsilon)$ for $\IS(\calG)$ on $1$-balanced weighted graphs $\calG=(G,\win,\wout,W)$ with $G\in \ourfamily$.
\end{theorem}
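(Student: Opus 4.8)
The plan is to derive Theorem~\ref{thm:2-degree-FPTAS} from Theorem~\ref{thm:hc-algo} with $\lambda=1$. For this it suffices to show that $\ourfamily$ has connective constant at most some real $\kappa$ that is subcritical with respect to $\lambda=1$, i.e.\ with $\lambda_c(\kappa)=\kappa^\kappa/(\kappa-1)^{\kappa+1}>1$; this holds for every $\kappa$ strictly below the unique root $\kappa^\star$ of $\lambda_c(\kappa)=1$, and numerically $\kappa^\star\approx 4.14$. By Lemma~\ref{lem:kappa-dec}, to bound the connective constant of $\ourfamily$ by such a $\kappa$ it is enough to exhibit, for every $G\in\ourfamily$ and every $v\in V(G)$, a single $\kappa$-decreasing function for $\sawt{G}{v}$ in the sense of Definition~\ref{def:kapdec}. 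So the whole task reduces to constructing one good potential function on self-avoiding walk trees.

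First I would record the structural consequences of the definition of $\ourfamily$ that make this possible. Since every vertex has degree at least~$2$ and every vertex of degree $d\ge 6$ has $2$-degree at most~$26$, we get $2d\le 26$, so $\Delta(\ourfamily)\le 13$; moreover the $2$-degree bound forces high-degree vertices to be surrounded by low-degree ones (for instance, every neighbour of a degree-$13$ vertex has degree exactly~$2$, and in general if $d_G(u)\ge 6$ then the degrees of $u$'s neighbours are at least~$2$ and sum to at most~$26$, which also caps how large any one of them can be in terms of $d_G(u)$). This is precisely the feature to exploit: in a self-avoiding walk tree a vertex of large degree branches a lot, but its children correspond to low-degree vertices which branch little, so the growth rate averages out over consecutive steps, which is exactly what a $\kappa$-decreasing potential needs to capture.

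Concretely, I would take $\theta(x)$ to depend only on the degree in $G$ of the last vertex $u$ of the walk that $x$ represents --- say $\theta(x)=g(d_G(u))$ for a suitable $g\colon\{2,\dots,13\}\to[1/\kappa,1]$ --- and, if a one-vertex state turns out to be too weak, enlarge the state to the last edge $(d_G(w),d_G(u))$, which in particular lets $\theta$ register that a degree-$13$ vertex is always followed by a degree-$2$ vertex. With $\theta=g\circ d_G$, the defining inequality $\sum_{y\in C(x)}\theta(y)\le\kappa\,\theta(x)$ at a non-root node $x$ ending at $u$ with predecessor~$w$ follows from
\[
    \sum_{u'\in\Gamma_G(u)\setminus\{w\}}\!\!\!\! g(d_G(u'))\ \le\ \kappa\,g(d_G(u)),
\]
since the maximal possible child-set of $x$ is $\Gamma_G(u)\setminus\{w\}$. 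The admissible multisets $\{\,d_G(u')\mid u'\in\Gamma_G(u)\setminus\{w\}\,\}$ are tightly constrained: there are at most $d_G(u)-1$ of them, each lies in $\{2,\dots,13\}$, and if $d_G(u)\ge 6$ they sum to at most $26-d_G(w)\le 24$. Because all degrees are bounded by~$13$, only finitely many local configurations arise, so the existence of a suitable pair $(\kappa,g)$ can be phrased as a small linear program --- variables $g(2),\dots,g(13)$ (or one per admissible edge-type), box constraints $1/\kappa\le g(i)\le 1$, and one inequality per local configuration --- and searched over $\kappa$; I would verify feasibility for an explicit $\kappa<\kappa^\star$ by computer and record the resulting $g$. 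Given such a $\theta$, Lemma~\ref{lem:kappa-dec} bounds the connective constant of $\ourfamily$ by~$\kappa$, which is subcritical with respect to~$1$, and Theorem~\ref{thm:hc-algo} then supplies the FPTAS $\basecount$. Consistency with the necessary condition $\Delta\le\kappa^2+1$ of Lemma~\ref{lem:kappa-squared} is automatic, since $13\le\kappa^2+1$ already for $\kappa\ge\sqrt{12}\approx 3.46$.

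The main obstacle is the design and verification of $\theta$. The constant~$26$ in the definition of $\ourfamily$ is essentially the largest for which a subcritical bound is available, so the optimal potential is very tightly constrained: the worst local configurations (e.g.\ a degree-$2$ vertex whose other neighbour has large degree, or a vertex of moderate degree all of whose neighbours have degree $5$ or~$6$) push the attainable $\kappa$ right up against $\kappa^\star\approx 4.14$. Getting every local inequality to close therefore requires a carefully chosen, computer-found $g$ rather than any obvious uniform or two-valued choice, and checking that this $g$ satisfies all of them --- while staying within $[1/\kappa,1]$, as Definition~\ref{def:kapdec} demands --- is where the real work lies.
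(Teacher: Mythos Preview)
Your proposal is correct and essentially identical to the paper's proof: the paper also applies Lemma~\ref{lem:kappa-dec} with $\kappa=4.141$ and a two-variable potential $\psi(d,p)$ depending on the degrees of the current node and of its parent (the one-variable version does indeed fail, the tight case being a degree-$6$ vertex with a degree-$2$ predecessor), and verifies the resulting finite list of local inequalities by an exhaustive Mathematica search rather than an LP. The only cosmetic difference is that the paper parameterises $\psi$ by degrees in the self-avoiding walk tree rather than by degrees in~$G$.
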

\begin{proof}
    We use $4.141$-decreasing functions to show that $\ourfamily$ has
    connective constant at most $4.141$. This is subcritical w.r.t.\@ $\lambda=1$, so 
    the algorithm $\hcalgo_{1,\ourfamily}$ is an FPTAS as shown in the proof of
    Theorem~\ref{thm:hc-algo} and the result follows.

    Let $G\in\ourfamily$.  The restrictions on $\delta(G)$ and maximum
    $2$-degree imply that $\Delta(G)\leq 13$.  Let $v\in V(G)$ and let
    $T=\sawt{G}{v}$.
	
    Let $\psi(d,p)$ be the following function for integers $d,p\geq 2$.
    \begin{alignat*}{4}
        \psi(2,p)     &= 0.245 &&\text{ for all }p\geq 2 &\qquad
            \psi(d,2) &= 1     &&\text{ for all }d\geq 6 \\
        \psi(3,p)     &= 0.456 &&\text{ for all }p\geq 2 &
            \psi(d,3) &= 0.941 &&\text{ for all }d\geq 6 \\
        \psi(4,p)     &= 0.647 &&\text{ for all }p\geq 2 &
            \psi(d,p) &= 0.889 &&\text{ for all }d\geq 6,\,p\geq 4\,.\\
        \psi(5,p)     &= 0.859 &&\text{ for all }p\geq 2
    \end{alignat*}
	
    Define the function $\theta\colon V(T)\to \mathbb{R}$ by setting
    $\theta(x)=1$ if $x$~is the root of~$T$ and, otherwise, setting
    $\theta(x)=\psi(d_T(x),d_T(p(x)))$, where $p(x)$ is $x$'s parent
    in~$T$.

    We claim that $\theta$~is $4.141$-decreasing for~$T$, from which the
    result follows by Lemma~\ref{lem:kappa-dec} and
    Theorem~\ref{thm:hc-algo}. It is immediate that
    $1/4.141 < 0.242 \leq \theta(x) \leq 1$ for all $x\in V(T)$ so
    it remains to show that \eqref{eq:kappa-dec}~holds for every
    $x\in V(T)\setminus\{r\}$.  We verify this with the Mathematica
    program given in Appendix~\ref{app:mathematica:conn-const}.

    Consider a vertex $x\in V(T)\setminus\{r\}$.  Let
    $d_T(x)=d$.  For each $i\in\{2, \dots, 13\}$, let $n[i]$ be the number of
    children of~$x$ with degree~$i$, and let $p=d_T(p(x))$.  Then
    \eqref{eq:kappa-dec}~becomes
    \begin{equation}
        \sum_{i=2}^{13} n[i]\,\psi(i,d) \leq 4.141\,\psi(d,p)\,.
        \label{eq:kappa-check}
    \end{equation}

    The program attempts to find some combination of
    $n[2], \dots, n[13]\in\{0, \dots, 12\}$ and
    $d, p\in\{2, \dots, 13\}$ that can arise from a vertex in~$T$ and
    that does not satisfy~\eqref{eq:kappa-check}, i.e., according to Definition~\ref{defn:ourfamily} it checks every
    combination that has $1+\sum_i n[i]=d$ and that has $d\leq 5$ or
    $p+\sum_i i\,n[i]\leq 26$.  The program outputs ``\texttt{\{\}}'',
    indicating that every instantiation of~\eqref{eq:kappa-check} is
    satisfied.  The program uses exact arithmetic, with no numerical approximations, so this is a rigorous check. Therefore, by Lemma~\ref{lem:kappa-dec}, the connective constant of~$\ourfamily$ is
    at most $4.141$ as claimed.
\end{proof}

	\section{\texorpdfstring{Approximating $\boldsymbol{Z(\calG)}$ for arbitrary graphs}{Approximating Z(G) for arbitrary graphs}}\label{sec:slices}

	\subsection{Preprocessing}\label{sec:preproc}

To analyse our algorithm's running time, we use a continuous piecewise-linear potential function whose linear ``slices'' have the following form.

\begin{defn}\label{defn:good-slice}
A function $f\colon\R^2\to\R$ is a \emph{good slice} if it is of the form $f(m,n) = \rho m+\sigma n$ for some $\rho \ge 0$ that satisfies $\rho \ge -\sigma$ and which is not identically zero (that is, it is not the case that $\sigma=\rho=0$).  
Given any good slice $f$, we also use the name $f$ to denote the function 
from graphs to real numbers given  by $f(G)=f(|E(G)|,|V(G)|)$. Likewise, if $\calG=(G,\win,\wout,W)$ is a weighted graph, we write $f(\calG) = f(G)$. 
	\end{defn}
	
Note that we allow $\sigma$ to be negative as long as $\sigma \ge -\rho$, and indeed our final potential function will contain a slice with $\sigma = -\rho$ (see Section~\ref{sec:pre-pot}). This essentially encodes a well-known trick: we can remove vertices of degree at most 1 from our input graph $\calG=(G,\win,\wout,W)$ in polynomial time (see  Corollary~\ref{cor:prune-fast}). If after doing this, $G$ has average degree 2, then $G$ must be 2-regular, so we can compute $\IS(\calG)$ in polynomial time. Thus we can safely set $f(G) = 0$ when $|E(G)| = |V(G)|$.
	
	Unfortunately, taking $\sigma=-\rho$ introduces technical difficulties. In order for $f$ to be useful for analysing the running time of our algorithm, we require that whenever the algorithm makes recursive calls with instances $\Gin$ and $\Gout$ and original input $\calG$, we have $f(\Gin)< f(\calG)$ and $f(\Gout) < f(\calG)$. If $G$ contains a tree component $T$, then repeatedly removing degree-1 vertices from $G$ will remove all of $T$. We then have
	\[
		f(G-V(T)) = f(G) - \rho |E(T)| - \sigma |V(T)| = f(G) + \rho > f(G).
	\]
	Thus removing degree-1 vertices may increase $G$'s potential, so we must be very careful to keep track of how many tree components we create as we branch. For this reason, our preprocessing doesn't just remove degree-1 vertices. Instead, it prunes $G$ in other ways to decrease the number of tree components 
	that are formed. Our main operation incorporates what is often referred to \emph{multiplier reduction} in the literature~\cite{DJW}, and is given below. At the end of the section we will describe a further, simpler operation of removing tree components.

	\begin{defn}\label{defn:prune} 
Let $\calG = (G,\win,\wout,W)$ be a weighted graph.
Let $S$ be a subset of $V(G)$ satisfying $|\Gamma_G(S)| \le 1$. If $\Gamma_G(S) = \emptyset$, then $\Prune(\calG,S)$ is the weighted graph $(G-S,\win,\wout,\allowbreak \IS(G[S],\win,\wout,W))$. If instead $\Gamma_G(S) = \{v\}$, then let $S' = S \cap \Gamma_G(v)$ and let $S'' = S \setminus \Gamma_G(v)$. We define $\Prune(\calG,S)$ to be the weighted graph $(G-S,\win',\wout',W)$, where
		\begin{align*}
			\win'(x) &= \begin{cases}
				\win(v)\,\wout( S')\,\IS(G[S''],\win,\wout,1) & \mbox{ if }x=v,\\
				\win(x) & \mbox{ otherwise,}
			\end{cases}\\
			\wout'(x) &= \begin{cases}
				\wout(v)\, \IS(G[S],\win,\wout,1) & \mbox{ if }x=v,\\
				\wout(x) & \mbox{ otherwise.}
			\end{cases}
		\end{align*}
	\end{defn}

\begin{lemma}\label{lem:MR}
Let $\calG = (G,\win,\wout,W)$ be a $1$-balanced weighted graph, and let $S \subseteq V(G)$ with $|\Gamma_G(S)| \le 1$. Then $\IS(\Prune(\calG,S)) = \IS(\calG)$ and $\Prune(\calG,S)$ is $1$-balanced.
\end{lemma}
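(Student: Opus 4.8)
The plan is to prove the two assertions — the partition-function identity and preservation of $1$-balance — by splitting into the two cases of Definition~\ref{defn:prune} according to whether $\Gamma_G(S)$ is empty or a singleton. Throughout, I would write $G'=G-S$ and $\calG'=\Prune(\calG,S)$, and repeatedly use that $\win$ and $\wout$ are multiplicative over disjoint unions of vertex sets, and that all weights and $W$ are positive integers.

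First I would dispose of the case $\Gamma_G(S)=\emptyset$. Here $S$ is a union of connected components of $G$, so every $I\in\calI(G)$ decomposes uniquely as $I=I_1\cup I_2$ with $I_1\in\calI(G[S])$ and $I_2\in\calI(G')$, and conversely any such pair gives an independent set of $G$. Substituting this decomposition into the definition of $\IS(\calG)$ and factoring, the sum splits as a product of $\sum_{I_1\in\calI(G[S])}\win(I_1)\wout(S\setminus I_1)$ and $\sum_{I_2\in\calI(G')}\win(I_2)\wout(V(G')\setminus I_2)$; multiplying the first factor by $W$ yields $\IS(G[S],\win,\wout,W)$, and the identity $\IS(\calG')=\IS(\calG)$ follows. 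Since the weights on $V(G')$ are unchanged, $\calG'$ is $1$-balanced, and the new $W'=\IS(G[S],\win,\wout,W)$ is a positive integer.

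The substantive case is $\Gamma_G(S)=\{v\}$, where $v\in V(G')$. The key observation is that the only edges of $G$ between $S$ and $V(G')$ are incident to~$v$, so an $I\in\calI(G)$ corresponds exactly to a pair $(J,I_S)$ with $J=I\cap V(G')\in\calI(G')$ and $I_S=I\cap S\in\calI(G[S])$, subject only to the extra constraint $I_S\cap S'=\emptyset$ when $v\in J$, where $S'=S\cap\Gamma_G(v)$. Hence I would write $\IS(\calG)=W\sum_{J\in\calI(G')}\win(J)\,\wout(V(G')\setminus J)\,\Sigma(J)$, where $\Sigma(J)$ is the inner sum over admissible $I_S$, and then split on membership of~$v$: when $v\notin J$ we get $\Sigma(J)=\IS(G[S],\win,\wout,1)$ directly; when $v\in J$, the admissible $I_S$ are precisely the independent sets of $G[S'']$ with $S''=S\setminus S'$, and since $\win(I_S)\wout(S\setminus I_S)=\wout(S')\,\win(I_S)\wout(S''\setminus I_S)$ we get $\Sigma(J)=\wout(S')\,\IS(G[S''],\win,\wout,1)$. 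Comparing with $\calG'=(G',\win',\wout',W)$, one checks that $\Sigma(J)$ equals $\wout'(v)/\wout(v)$ when $v\notin J$ and $\win'(v)/\win(v)$ when $v\in J$, while every other weight factor is unchanged; so $\win(J)\,\wout(V(G')\setminus J)\,\Sigma(J)=\win'(J)\,\wout'(V(G')\setminus J)$ in both cases, and summing over $J$ gives $\IS(\calG)=\IS(\calG')$.

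For $1$-balance in this case: for $x\ne v$ we have $\win'(x)=\win(x)\le\wout(x)=\wout'(x)$ since $\calG$ is $1$-balanced, and for $x=v$ I would combine $\win(v)\le\wout(v)$ with the inequality $\wout(S')\,\IS(G[S''],\win,\wout,1)\le\IS(G[S],\win,\wout,1)$, which holds because the right-hand side is a sum of positive terms containing all the terms of the left-hand side (namely those indexed by $I_S$ with $I_S\cap S'=\emptyset$); this gives $\win'(v)\le\wout(v)\,\IS(G[S],\win,\wout,1)=\wout'(v)$. The new weights $\win'(v),\wout'(v)$ are products of positive integers, hence positive integers, and $W$ is unchanged. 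The only place any real care is needed is the bookkeeping in the $\Gamma_G(S)=\{v\}$ case — recognising that the coupling between $S$ and $V(G')$ is carried entirely through $v$, and tracking exactly which vertices' weights change — but once the pairing $(J,I_S)$ is set up correctly the rest is a routine calculation.
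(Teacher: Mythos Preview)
Your proposal is correct and follows essentially the same approach as the paper: both proofs split into the two cases of Definition~\ref{defn:prune}, handle the case $\Gamma_G(S)=\{v\}$ by partitioning independent sets according to whether $v$ is in them, and prove $1$-balance via the same sum-inclusion inequality $\wout(S')\,\IS(G[S''],\win,\wout,1)\le\IS(G[S],\win,\wout,1)$. Your organisation (pairing $(J,I_S)$ and computing $\Sigma(J)$) is a mild repackaging of the paper's (branch on $v$ in $\calG$ first, then factor off the $S$-contribution), but the content is the same.
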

\begin{proof}
If $\Gamma_G(S) = \emptyset$, then the result is immediate, so suppose instead that $\Gamma_G(S) = \{v\}$ for some $v \in V(G)$. Write $\calG' = \Prune(\calG,S) = (G-S,\win',\wout',W')$.
Partitioning sets $I \in \calI(\calG)$ according to whether or not $v \in I$, we have
		\[
			\IS(\calG) = \win(v)\,\wout(\Gamma_G(v))\,\IS(\calG-v-\Gamma_G(v)) + \wout(v)\,\IS(\calG-v).
		\]
Similarly, using the fact that $\Gamma_G(S) = \{v\}$ and Definition~\ref{defn:prune},
		\begin{align*}
			\IS(\calG) &= \win(v)\,\wout(\Gamma_G(v))\,
			 \IS(G[S''],\win,\wout,1)\,\IS(\calG-S-v-\Gamma_G(v))\\
			  &\qquad\qquad\qquad\qquad\qquad\qquad\qquad + \wout(v)\,\IS(\calG-S-v)\,\IS(G[S],\win,\wout,1)\\
			&= \win'(v)\,\wout(\Gamma_G(v) \setminus S)\,\IS(\calG-S-v-\Gamma_G(v)) + \wout'(v)\,\IS(\calG-S-v)
			= \IS(\calG')\,.
		\end{align*}

        It remains to show that $\Prune(\calG,S)$ is $1$-balanced. We have $\win(x)\leq\wout(x)$ for all $x\in V(G)$ and we must show that $\win'(x)\leq\wout'(x)$ for all $x\in V(G')$.  So, let $x\in V(G')$.  If $x\neq v$ then $\win'(x)=\win(x)\leq\wout(x)=\wout'(x)$ and we are done.
    Otherwise, $x=v$. We have
    \begin{align*}
        \wout(S')\,Z(G[S''],\win,\wout,1)
           \ &= \!\!\!\sum_{\substack{I\in\calI(G[S])\\I\cap S'=\emptyset}}\!\!\! \win(I)\wout(S\setminus I)
           \ \leq \!\!\!\sum_{I\in\calI(G[S])}\!\!\! \win(I)\wout(S\setminus I)\\
           &=\ Z(G[S],\win,\wout,1)\,,
    \end{align*}
    and $\win(v)\leq\wout(v)$ by hypothesis.  Therefore, $\win'(v)\leq \wout'(v)$.
\end{proof}
	
In order to compute the weighted graph $\Prune(\calG,S)$ we need to evaluate $\IS(\calG[S])$ and possibly $\IS(\calG[S''])$. 
For this, we will use the following brute-force algorithm.
	
	\begin{algorr}{$\brute(\calG,Y)$} {\label{algo:brute}The input is a   weighted graph $\calG=(G,\win,\wout,W)$ and a 
subset $Y$ of $V(G)$  such that $\calG-Y$ is a forest.}
	\item For all $I \in \calI(G[Y])$, compute $\IS(\calG-Y-\Gamma_G(I))$.
	\item Return
		$\displaystyle \sum_{I\in \calI(G[Y])}\!\!\!\!\win(I)\, \wout(Y\setminus I)\, \IS\big(\calG - Y - \Gamma_G(I)\big)$.
	\end{algorr}
	
	\begin{obs}\label{obs:unstable-easy}
	If  $\calG=(G,\win,\wout,W)$ 
is a weighted graph
and $Y$
is a subset of $V(G)$   such that $\calG-Y$ is a forest	
then	
	$\brute(\calG,Y)$	
		returns $\IS(\calG)$ and has running time $2^{|Y|}\,\poly(|V(G)|)$.
	\end{obs}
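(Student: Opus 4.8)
The plan is to verify the two claims of Observation~\ref{obs:unstable-easy} directly: correctness (that $\brute(\calG,Y)$ returns $\IS(\calG)$) and the running-time bound.

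For \textbf{correctness}, the idea is simply to partition the independent sets of $G$ according to their intersection with $Y$. First I would observe that every $I \in \calI(G)$ decomposes uniquely as $I = I_Y \cup I_R$, where $I_Y = I \cap Y \in \calI(G[Y])$ and $I_R = I \cap (V(G) \setminus Y)$. For a fixed choice of $I_Y \in \calI(G[Y])$, the sets $I_R$ that combine with it to form an independent set of $G$ are exactly the independent sets of $G - Y - \Gamma_G(I_Y)$: we must avoid $Y$ entirely, and we must avoid every neighbour of $I_Y$, but subject to that any independent set of the remaining induced subgraph works (there are no further constraints since $I_Y$ already lies in $G[Y]$). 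Then I would expand the definition $\IS(\calG) = W \sum_{I \in \calI(G)} \win(I)\wout(V(G)\setminus I)$, split the product $\win(I) = \win(I_Y)\win(I_R)$ and $\wout(V(G)\setminus I) = \wout(Y \setminus I_Y)\,\wout((V(G)\setminus Y)\setminus I_R)$, and recognise the inner sum over $I_R$ as precisely $\IS(\calG - Y - \Gamma_G(I_Y))$ (whose weight $W$ and whose $\win,\wout$ functions are inherited from $\calG$, matching Definition~\ref{defn:prune}'s convention for $\calG - X$). This yields exactly the formula returned in step~(ii), so $\brute(\calG,Y) = \IS(\calG)$. Note the hypothesis that $\calG - Y$ is a forest is not needed for correctness — it is only relevant to the running time.

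For the \textbf{running time}, the key point is that $\IS$ of a forest can be computed in polynomial time. There are at most $2^{|Y|}$ subsets $I$ of $Y$ to enumerate, and for each we first test whether $I \in \calI(G[Y])$ (polynomial time), and if so compute $\Gamma_G(I)$ and the induced weighted graph $\calG - Y - \Gamma_G(I)$ (polynomial time). Since $\calG - Y$ is a forest, the further induced subgraph $\calG - Y - \Gamma_G(I)$ is also a forest, and $\IS$ of a weighted forest can be evaluated in time $\poly(|V(G)|)$ by a standard bottom-up dynamic program over each tree component (at each vertex, maintain the total weight of independent sets of the subtree that do, respectively do not, include that vertex; combine children multiplicatively and the root additively, then multiply a factor of~$W$ across components). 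Summing the per-subset cost over the at most $2^{|Y|}$ subsets gives total running time $2^{|Y|}\,\poly(|V(G)|)$, as claimed.

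The main obstacle here is essentially nil — this is a routine bookkeeping observation — but if there is any subtlety worth stating carefully, it is the claim that $\IS$ of a weighted forest is polynomial-time computable, since that is what the forest hypothesis buys us; I would make sure to spell out the dynamic program explicitly (or cite it as folklore) rather than leave it implicit. A secondary point to get right is the weight/normalisation conventions: one must check that $\calG - Y - \Gamma_G(I)$ carries the same multiplier $W$ and the same weight functions as $\calG$, so that the factored sum reassembles into $\IS(\calG)$ with exactly one factor of $W$ overall and no double-counting.
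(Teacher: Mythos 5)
Your running-time argument is fine, and your overall strategy (partition $\calI(G)$ by the trace on $Y$, then evaluate each forest instance by a standard DP) is exactly the intended reading; the paper itself offers no proof, treating the observation as immediate. But your correctness argument has a genuine gap at the step you yourself flagged as the delicate bookkeeping point. After splitting $\wout(V(G)\setminus I)=\wout(Y\setminus I_Y)\,\wout((V(G)\setminus Y)\setminus I_R)$, the inner sum over $I_R$ is
\[
\sum_{I_R\in\calI(G-Y-\Gamma_G(I_Y))}\win(I_R)\,\wout\big((V(G)\setminus Y)\setminus I_R\big),
\]
and the set $(V(G)\setminus Y)\setminus I_R$ still contains the vertices of $\Gamma_G(I_Y)\setminus Y$, which are \emph{not} vertices of $\calG-Y-\Gamma_G(I_Y)$. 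Hence this inner sum is not ``precisely $\IS(\calG-Y-\Gamma_G(I_Y))$'' (even after absorbing the factor $W$); it equals $\wout(\Gamma_G(I_Y)\setminus Y)$ times the corresponding partition-function sum. The identity your decomposition actually proves is
\[
\IS(\calG)\;=\;\sum_{I\in\calI(G[Y])}\win(I)\,\wout(Y\setminus I)\,\wout(\Gamma_G(I)\setminus Y)\,\IS\big(\calG-Y-\Gamma_G(I)\big),
\]
which is consistent with the branching identity $\IS(\calG)=\wout(v)\IS(\calG-v)+\win(v)\,\wout(\Gamma_G(v))\,\IS(\calG-v-\Gamma_G(v))$ used throughout the paper, where the out-weights of the deleted neighbourhood appear explicitly.

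Concretely, then, your claimed identification coincides with the formula returned in step~(ii) of $\brute$ only when the weights $\wout$ on $\Gamma_G(I)\setminus Y$ are all $1$ (e.g.\ the unweighted case, or the calls with $Y=\emptyset$). For general weighted graphs the displayed return value of $\brute$ and your ``recognise the inner sum'' step differ by exactly the factor $\wout(\Gamma_G(I)\setminus Y)$ per term, so the proof as written does not close: you must either carry this factor through (and note that the return formula should contain it), or restrict the claim to unit out-weights. Everything else --- the bijection $I\mapsto(I_Y,I_R)$, the observation that the forest hypothesis is only needed for the time bound, and the $2^{|Y|}\,\poly(|V(G)|)$ count via a weighted-forest DP --- is correct.
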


\subsection{A digression}	\label{sec:digress}

As noted in Section~\ref{sec:local} 
our main algorithm
will use the  FPTAS from Theorem~\ref{thm:2-degree-FPTAS}
 which approximates $\IS(\calG)$ on $1$-balanced weighted graphs 
 $\calG=(G,\win,\wout,W)$ with $G\in \ourfamily$.
(Recall from Definition~\ref{defn:ourfamily}
that $\ourfamily$ is the family of graphs with minimum degree at least~$2$
such that every vertex with degree at least~$6$ has $2$-degree at most~$26$.)

In the introduction we stated Theorem~\ref{thm:lastintro}, which is 
essentially an unweighted version of  Theorem~\ref{thm:2-degree-FPTAS}.
The only difference is that, in Theorem~\ref{thm:lastintro},
 inputs are allowed
to have vertices with degree less than~$2$.
Given the preprocessing that we have just done, we can use Theorem~\ref{thm:2-degree-FPTAS} to
prove   Theorem~\ref{thm:lastintro}, which we restate for convenience.

\lastintro*
\begin{proof}
Consider a graph~$G$ 
in which every vertex of degree at least~$6$ has 2-degree at most~$26$. 
We will show how to construct a $1$-balanced weighted graph
 $\calG=(G',\win,\wout,W)$ with $G'\in \ourfamily$ such that 
 $\IS(\calG) = \IS(G)$. We can then use the FPTAS from Theorem~\ref{thm:2-degree-FPTAS}.
  
The construction is simple. Let $\calG = (G,\mathbf{1},\mathbf{1},1)$.
Now repeat the following.
If the minimum degree of~$\calG$ is at least~$2$, we are finished.
Otherwise, let $S$ be the set containing any 
vertex of~$G$ whose degree is less than~$2$. 
By Lemma~\ref{lem:MR}, 
$\IS(\Prune(\calG,S)) = \IS(\calG)$ and $\Prune(\calG,S)$ is $1$-balanced.
Since $\calG$ has the property that
any vertex of degree at least~$6$ has 2-degree at most~$26$,
so does 
$\Prune(\calG,S)$. Replace $\calG$ with 
$\Prune(\calG,S)$.  
\end{proof}

\subsection{More preprocessing}	\label{sec:morepreproc}
	
Our preprocessing algorithm will prune its input graph by removing non-empty induced subgraphs of the following form.
	
	\begin{defn}\label{def:near-forest}
		We say that a graph is a \emph{near-forest} if it is the empty graph (with no vertices)  or if it contains a vertex $v$ such that $G - v - \Gamma_G(v)$ is a forest. 
	\end{defn}
	
In this definition (and everywhere else) we consider the empty graph 
  to be a forest. Thus, a forest is a near-forest.
The point of  Definition~\ref{def:near-forest} is the following corollary.

\begin{cor}\label{cor:prune-fast}
	Let $\calG=(G,\win,\wout,W)$ be an $n$-vertex weighted graph with maximum degree at most~$\Delta$. Let $X \subseteq V(G)$ span a near-forest in $G$ with $|\Gamma_G(X)| \le 1$. Then $\Prune(G,X)$ can be computed in time at most $2^{\Delta}\,\poly(n)$.
\end{cor}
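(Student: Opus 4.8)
The plan is to compute $\Prune(\calG, X)$ directly from Definition~\ref{defn:prune}, observing that the only nontrivial work is evaluating partition functions of the form $\IS(\calG[X'], \cdot, \cdot, 1)$ for a handful of induced subgraphs $X' \subseteq X$, and that each such evaluation can be done quickly because $X$ spans a near-forest. First I would unpack what Definition~\ref{defn:prune} actually asks for. If $\Gamma_G(X) = \emptyset$, we need only $\IS(G[X], \win, \wout, W)$; if $\Gamma_G(X) = \{v\}$, writing $X' = X \cap \Gamma_G(v)$ and $X'' = X \setminus \Gamma_G(v)$, we need $\wout(X')$ (trivial to compute), $\IS(G[X''], \win, \wout, 1)$, and $\IS(G[X], \win, \wout, 1)$. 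All other weights are simply copied over, and the rest of the construction (deleting $X$, updating the weights on $v$, leaving $W$ unchanged) is $\poly(n)$ work. So the corollary reduces to: the partition function of a weighted graph $\calH$ on a vertex set spanning a near-forest in $G$, with maximum degree at most $\Delta$, can be computed in time $2^\Delta \poly(n)$.

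Next I would prove that claim. Since $G[X]$ is a near-forest, either it is empty (in which case $\IS = W$, done) or there is a vertex $u \in X$ such that $G[X] - u - \Gamma_{G[X]}(u)$ is a forest. Take $Y = \{u\}$. Then $\calG[X] - Y$ is obtained from a forest by deleting $\Gamma_{G[X]}(u) \setminus \{u\}$, which is still a forest, so in particular $\calG[X] - Y$ is a forest and we may invoke $\brute(\calG[X], Y)$. By Observation~\ref{obs:unstable-easy}, this returns $\IS(\calG[X])$ in time $2^{|Y|} \poly(|X|) = 2 \poly(n)$; the inner computations $\IS(\calG[X] - Y - \Gamma(I))$ over $I \in \calI(G[Y])$ are partition functions of forests, which are computable in polynomial time by a standard leaf-peeling / dynamic-programming argument (this is exactly what is implicit in Observation~\ref{obs:unstable-easy}). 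For $X''$ the same reasoning applies: $G[X'']$ is an induced subgraph of the near-forest $G[X]$, hence — taking the same witness vertex $u$ if $u \in X''$, or noting that otherwise $G[X''] \subseteq G[X] - u$ which is already a forest — $G[X'']$ is itself a near-forest, so $\brute(\calG[X''], Y'')$ with $|Y''| \le 1$ again runs in $2\poly(n)$ time.

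The only place $\Delta$ enters, rather than the constant $1$, is a subtlety I would flag as the main point to get right: when we compute $\IS(\calG - Y - \Gamma_G(I))$ inside $\brute$, the relevant deletion set is $\Gamma_G$ taken within the ambient graph $G$, not within $G[X]$, and a cleaner way to organise the whole argument is to choose $Y$ to be a vertex $u$ together with enough of its $G$-neighbourhood that $\calG[X] - Y$ is genuinely a forest even after accounting for edges of $G$ incident to $X$ from outside. Since $|\Gamma_{G}(u)| \le \Delta$, such a $Y$ has size at most $\Delta + 1$, and $\brute$ then runs in time $2^{\Delta+1}\poly(n) = 2^\Delta \poly(n)$. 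Assembling the pieces: compute $\Gamma_G(X)$; if it is empty return $(G - X, \win, \wout, \IS(\calG[X]))$ using one call to $\brute$; otherwise identify $v$, compute $\wout(X')$, and make two calls to $\brute$ (for $G[X]$ and $G[X'']$) to assemble $\win'(v)$ and $\wout'(v)$ as in Definition~\ref{defn:prune}; all of this is $2^\Delta \poly(n)$. The main obstacle is purely bookkeeping — making sure the near-forest structure of $G[X]$ survives the restriction to $X''$ and that the deletion sets used inside $\brute$ are the $G$-neighbourhoods, so that the exponent is controlled by $\Delta$ (via $|Y| \le \Delta+1$) rather than silently blowing up.
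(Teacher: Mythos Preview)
Your final choice of $Y = \{u\} \cup \Gamma_{G[X]}(u)$ is correct and is exactly what the paper does (the paper calls this set $S$). However, the reasoning you use to arrive there is garbled, and the intermediate ``$Y = \{u\}$'' argument is actually wrong.

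Specifically, the sentence ``Then $\calG[X] - Y$ is obtained from a forest by deleting $\Gamma_{G[X]}(u) \setminus \{u\}$, which is still a forest'' has the direction backwards. The forest guaranteed by the near-forest definition is $G[X] - u - \Gamma_{G[X]}(u)$; the graph $G[X] - u$ is obtained from that forest by \emph{adding back} the vertices of $\Gamma_{G[X]}(u)$ together with all their edges, which can certainly create cycles. For a concrete counterexample, let $G[X]$ be a wheel: a cycle $C$ together with a hub $u$ adjacent to every vertex of $C$. Then $G[X] - u - \Gamma_{G[X]}(u)$ is empty (a forest), so $G[X]$ is a near-forest with witness $u$, but $G[X] - u = C$ is not a forest. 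So $\brute(\calG[X], \{u\})$ is not a valid call, and the claimed $2\,\poly(n)$ bound does not hold.

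The ``subtlety'' you invoke to rescue this --- that $\brute$ uses $\Gamma_G$ rather than $\Gamma_{G[X]}$, or that edges leaving $X$ matter --- is a red herring. Inside $\brute(\calG[X], Y)$ the ambient graph is $G[X]$, so all neighbourhoods are taken in $G[X]$; edges from $X$ to $V(G)\setminus X$ play no role whatsoever. The \emph{actual} reason you must take $Y = \{u\} \cup \Gamma_{G[X]}(u)$ is simply that this is the deletion set for which $G[X] - Y$ is guaranteed to be a forest by the definition of near-forest. That set has size at most $\Delta + 1$, giving the $2^{\Delta}\,\poly(n)$ bound directly. With this correction (and the same $Y \cap X''$ for the subgraph $G[X'']$), your argument matches the paper's.
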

\begin{proof}
	If $G[X]$ is a forest, let $S = \emptyset$. Otherwise, there exists $v \in X$ such that $G[X]-v-\Gamma_G(v)$ is a forest; take $S = \{v\} \cup \Gamma_G(v)$. Thus $G[X] - S$ is a forest, so by Observation~\ref{obs:unstable-easy}, for all $Y \subseteq X$ we have $\IS(\calG[Y]) = \brute(\calG[Y],S)$. Again by Observation~\ref{obs:unstable-easy}, it follows that $\IS(\calG[Y])$ can be computed in time $2^\Delta\,\poly(n)$ for all $Y \subseteq X$. In particular, this allows us to compute $\Prune(\calG,X)$ in $\poly(n)$ time using Definition~\ref{defn:prune}.
\end{proof}

The output of our preprocessing algorithm will be of the following form.

	\begin{defn}\label{def:reduced}
		A graph $G$ is \emph{reduced} if for all 
non-empty subsets $X$ of $V(G)$   such that $G[X]$ is a near-forest, $|\Gamma_G(X)| \ge 2$.
	\end{defn}

Note that a reduced graph $G$ is either empty or satisfies $\delta(G) \ge 2$, and it has no components which are near-forests. 	
We now set out our preprocessing algorithm and prove its correctness.

	\begin{algorr}{$\Red(\calG)$}{\label{algo:reduce}
	The input is 
	a   non-empty weighted graph $\calG$ with no tree components. For each weighted graph $\calG_i$ that we construct in this algorithm, we write $\calG_i = (G_i, \win^i, \wout^i, W_i)$.}
		\item\label{countfew:oracle} \label{redstep1}
		Set $i=1$ and $\calG_1 = \calG$.
		\item\label{countfew:oracle23} \label{redstep2}
		If either $G_i$ or some subgraph $G_i-v$ (where $v \in V(G_i)$) contains a component $C$ which is a near-forest, then:
		\begin{enumerate}[(a)]
			\item Let $\calG_{i+1} = \Prune(\calG_i,V(C))$.\label{redstep2a}
			\item Increment $i$ and go to (ii).
		\end{enumerate}
		\item\label{countfew:oracle4} \label{redstep4} Otherwise, return $\calG_i$.
	\end{algorr}

\begin{lemma}\label{lem:poly-reduce}
Let $\calG = (G,\win,\wout,W)$ be a $1$-balanced $n$-vertex weighted graph 
with no tree components and maximum degree at most $\Delta$. Then
$\Red(\calG)$ returns a $1$-balanced reduced weighted graph $\calH$ such that 
\begin{itemize}
\item	$\IS(\calG) = \IS(\calH)$, 
\item   $\calH$ has maximum degree at most $\Delta$, and
\item   for any good slice $f$, $f(\calH) \le f(\calG)$. 
\end{itemize} The running time  is $2^\Delta \,\poly(n)$.	
\end{lemma}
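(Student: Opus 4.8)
The plan is to track the sequence $\calG = \calG_1, \calG_2, \dots, \calG_N = \calH$ of weighted graphs produced by $\Red$ and establish each claim in turn, the first few being routine. Each non-terminating pass through step~(\ref{redstep2}) deletes the non-empty vertex set of a component, so $n(\calG_{i+1}) < n(\calG_i)$ and there are at most $n$ passes. In every pass, $G_i[V(C)]$ is exactly the near-forest~$C$ and $|\Gamma_{G_i}(V(C))| \le 1$, so by Corollary~\ref{cor:prune-fast} the call to $\Prune$ takes time at most $2^\Delta\,\poly(n)$ --- using that $\Prune$ only replaces $G_i$ by an induced subgraph, so the maximum degree stays at most~$\Delta$ throughout. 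Finding a near-forest component of $G_i$ or of some $G_i - v$ takes $\poly(n)$ time (for each of the $O(n)$ candidate graphs compute its components, and for each component $K$ and each $u \in V(K)$ test whether $K - u - \Gamma_K(u)$ is acyclic). Hence the total running time is $2^\Delta\,\poly(n)$, and $\calH$ has maximum degree at most~$\Delta$. Since $\calG$ is $1$-balanced and every $\Prune$ step has $|\Gamma_{G_i}(V(C))| \le 1$, Lemma~\ref{lem:MR} gives inductively that each $\calG_i$ is $1$-balanced with $Z(\calG_i) = Z(\calG)$; in particular $\calH$ is $1$-balanced and $Z(\calH) = Z(\calG)$.

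For reducedness I would first record the elementary fact that every component~$K$ of a near-forest~$J$ is itself a near-forest: if the witnessing vertex~$u$ lies in~$K$, then $K - u - \Gamma_K(u)$ is an induced subgraph of the forest $J - u - \Gamma_J(u)$; and if $u \notin V(K)$ then $V(K)$ is disjoint from $\{u\} \cup \Gamma_J(u)$, so $K$ is an induced subgraph of that forest. Now $\Red$ returns $\calG_N$ only at step~(\ref{redstep4}), when no component of $G_N$ and no component of any $G_N - v$ is a near-forest. Suppose $G_N$ is not reduced, so that some non-empty $X$ has $G_N[X]$ a near-forest and $|\Gamma_{G_N}(X)| \le 1$. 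If $\Gamma_{G_N}(X) = \emptyset$, then $X$ is a union of components of $G_N$, and any such component is a component of the near-forest $G_N[X]$ and hence a near-forest --- contradicting the stopping condition. If $\Gamma_{G_N}(X) = \{v\}$, then every vertex of $X$ has all its $G_N$-neighbours in $X \cup \{v\}$, so $X$ is a union of components of $G_N - v$, and the same argument produces a near-forest component of $G_N - v$ --- again a contradiction. Hence $\calH$ is reduced.

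The delicate step is the good-slice inequality, for which I would use the auxiliary quantity
\[
\phi(G') = |E(G')| - |V(G')| + \big(\text{number of tree components of } G'\big),
\]
whose contribution from each component~$K$ equals $\max\{0,\,|E(K)| - |V(K)|\} \ge 0$. Every $\Prune$ step removes a set $X = V(C)$ with $G_i[X] = C$ and $|\Gamma_{G_i}(X)| \le 1$, passing from $G_i$ to the induced subgraph $G_i - X$, and I claim $\phi(G_i - X) \le \phi(G_i)$. If $\Gamma_{G_i}(X) = \emptyset$, then $C$ is a whole component of $G_i$, and removing it either removes a tree component (the two terms of~$\phi$ cancel) or a component containing a cycle ($|E|-|V|$ drops and the tree count is unchanged). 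If $\Gamma_{G_i}(X) = \{v\}$, then $v$ has at least one edge into~$X$ and $X$ lies inside a single component~$H'$ of~$G_i$, which is replaced by $H' - X$; this subgraph is connected (every component of $H'-v$ other than~$C$ also has an edge to~$v$), and since $C$ is connected with at least one edge to~$v$, removing~$X$ from~$H'$ deletes at least $|V(C)|$ edges, so the contribution $\max\{0,|E|-|V|\}$ of this component does not increase. Hence $\phi$ never increases along $\calG_1, \dots, \calG_N$. Now write $f(m,n) = \rho m + \sigma n$ with $\rho \ge 0$ and $\rho + \sigma \ge 0$. Since $\calG$ has no tree components and $\calH$ is reduced (hence also has none), $\phi(\calG) = |E(G)| - |V(G)|$ and $\phi(\calH) = |E(H)| - |V(H)|$, so $|E(H)| - |V(H)| \le |E(G)| - |V(G)|$; for the same reason $|E(G)| \ge |V(G)|$ and $|E(H)| \ge |V(H)|$. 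Therefore $f(\calH) \le f(\calG)$: this is immediate when $\sigma \ge 0$, and when $-\rho \le \sigma < 0$ it follows from $\rho + \sigma \ge 0$, $|V(G)| \ge |V(H)|$ and $|E(H)| - |V(H)| \le |E(G)| - |V(G)|$. The main obstacle is exactly the point flagged before Definition~\ref{defn:prune}: the second kind of prune step can create a new tree component (when $H'-X$ turns out to be a tree), and removing that component later would let~$f$ increase; the potential~$\phi$ is designed precisely so that the extra edge needed to separate such a component has already paid that debt in advance.
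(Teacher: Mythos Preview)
Your proof is correct; the routine parts (termination, running time, $1$-balancedness, preservation of $Z$, maximum degree, and reducedness) match the paper's argument, with your treatment of reducedness being slightly more explicit about why a near-forest induced subgraph with at most one neighbour forces a near-forest \emph{component}.

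The interesting difference is in the good-slice inequality. The paper tracks each component $C_i$ of $G$ separately through the sequence $G_1,\dots,G_s$, observing that since each pruned set has at most one neighbour it cannot disconnect anything; hence $C_{i,j}$ is always either empty or still a single component, and in the latter case each step removes at least as many edges as vertices, giving $f(C_{i,j-1}) \ge f(C_{i,j})$ directly. You instead introduce the global auxiliary potential $\phi(G') = |E(G')| - |V(G')| + (\text{tree components}) = \sum_K \max\{0,|E(K)|-|V(K)|\}$, show it is monotone along the whole sequence (absorbing the possibility of creating tree components at intermediate steps), and then use that both endpoints are tree-free to extract $|E(H)|-|V(H)| \le |E(G)|-|V(G)|$, from which $f(\calH)\le f(\calG)$ follows by the decomposition $f(\calG)-f(\calH) = \rho\big[(|E(G)|-|V(G)|)-(|E(H)|-|V(H)|)\big] + (\rho+\sigma)(|V(G)|-|V(H)|)$. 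Your route is a little more conceptual --- it isolates exactly the structural quantity that is monotone and handles the ``tree-component debt'' issue you flag in one stroke --- while the paper's route avoids the auxiliary potential and works with $f$ throughout. Both arguments ultimately rest on the same observation that removing a connected set with one external edge deletes at least as many edges as vertices.
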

\begin{proof}
Observe that each graph $G_i$ is produced by repeated calls to $\Prune$ (which meet the requirements of Definition~\ref{defn:prune}). Thus each $\calG_i$ has maximum degree at most $\Delta$ and, by Lemma~\ref{lem:MR}, is $1$-balanced and satisfies $\IS(G_i) = \IS(\calG)$.  Suppose that $G_i$ does not satisfy the condition in step~(ii) of Algorithm $\Red$.  We want to show, using Definition~\ref{def:reduced}, that it is reduced.  To see this, note that if $X \subseteq V(G_i)$ spans a non-empty near-forest in~$G_i$ with $|\Gamma_{G_i}(X)| \le 1$, then $G_i[X]$ also contains a near-forest component; thus the algorithm continues and constructs $G_{i+1}$ from~$G_i$. We conclude that $\calH$~is reduced.  Note that $|V(G_1)| > |V(G_2)| > \dots > |V(\calH)|$, so step~(ii) is iterated at most $n$~times. By Corollary~\ref{cor:prune-fast}, each iteration takes time at most $2^\Delta\poly(n) = \poly(n)$, so our stated time bound follows.

It remains only to prove that $f(\calH) \le f(\calG)$ for all good slices $f$. Write $f(m,n) = \rho m + \sigma n$ and $\calH = \calG_s$. Let $C_1, \dots, C_r$ be the components of $G$, and for all $i \in [r]$ and $j \in [s]$, let $C_{i,j} = G_j[V(C_i) \cap V(G_j)]$ be the remaining part of~$C_i$ in~$G_j$. Thus
\begin{equation}\label{eqn:poly-reduce}
	f(\calG) - f(\calH) = \sum_{i=1}^r \big(f(C_i) - f(C_{i,s})\big).
\end{equation}
We will show that each term of this sum is non-negative. 

Since each application of $\Prune$ deletes a vertex set with at most one neighbour, it can never remove a separator; thus for all $i \in [r]$, either $C_{i,s}$ is empty or $C_{i,j}$ is a component of $G_j$ for all $j \in [s]$. If $C_{i,s}$ is empty, then since $\calG$ contains no tree components by hypothesis and $f$ is a good slice, we have 
\[
	f(C_i) - f(C_{i,s}) = \rho e(C_i) + \sigma |C_i| \ge (\rho+\sigma)|C_i| \ge 0.
\]
Suppose instead $C_{i,j}$ is a component of $G_j$ for all $j \in [s]$. Then for all $2 \le j \le s$, either $C_{i,j} = C_{i,j-1}$ or $C_{i,j}$ is formed from $C_{i,j-1}$ by deleting a connected set sending at least one edge into $V(C_{i,j})$. In either case, at least as many edges are deleted as vertices. Thus
\begin{align*}
	f(C_i) - f(C_{i,s}) 
	&= \sum_{j=2}^s \big(f(C_{i,j-1}) - f(C_{i,j})\big)\\
	&= \sum_{j=2}^s \Big(\rho\big(e(C_{i,j-1}) - e(C_{i,j})\big)  + \sigma(|C_{i,j-1}| - |C_{i,j}|\big)\Big)\\
	&\ge \sum_{j=2}^s (\rho+\sigma)\big(|C_{i,j-1}|-|C_{i,j}|\big)
	\ge 0.
\end{align*}
The result therefore follows from~\eqref{eqn:poly-reduce}.
\end{proof}

	 Having shown how to reduce graphs using the pruning operation we now define the final operation, tree removal.

	\begin{defn}\label{def:TR}
		Let $G$ be a graph with tree components $T_1,\dots,T_r$. Then we define the \emph{tree removal of $G$} by $\TR(G) = G - V(T_1) - \dots - V(T_r)$. Moreover, if $\calG = (G,\win,\wout,W)$ is a weighted graph, we define the \emph{tree removal of $\calG$} by
		\[
			\TR(\calG) = \Big(\TR(G),\win,\wout,W\cdot \prod_{i=1}^r \brute(\calT_i,\emptyset)\Big),
		\]
        where $\calT_i = (T_i, \win, \wout, 1)$.
	\end{defn}	

    The following observation is immediate from the definition and Observation~\ref{obs:unstable-easy}.

	\begin{obs}\label{obs:TR-Z}
        Let $\calG$ be a weighted graph with underlying graph~$G$ and let $\TR(\calG)$ have underlying graph~$H$.  Then,
        \begin{enumerate}[(i)]
        \item $H\subseteq G$,
        \item $\IS(\TR(\calG)) = \IS(\calG)$,
        \item if $\calG$ is $1$-balanced then $\calH$~is also $1$-balanced, and
        \item $\TR(\calG)$ can be computed in time $\poly(|V(G)|)$. \qed
        \end{enumerate}
	\end{obs}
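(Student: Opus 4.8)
The plan is to verify the four claims in order; each is an immediate consequence of Definition~\ref{def:TR} and Observation~\ref{obs:unstable-easy}, so the work is essentially bookkeeping. For (i), Definition~\ref{def:TR} gives $H = \TR(G) = G - V(T_1) - \cdots - V(T_r)$, where $T_1,\dots,T_r$ are the tree components of~$G$; thus $H$ is an induced subgraph of~$G$, and in particular $H \subseteq G$. For (iv), I would note that the connected components of~$G$, and hence its tree components, can be found in $\poly(|V(G)|)$ time; there are at most $|V(G)|$ of them; and for each, the quantity $\brute(\calT_i,\emptyset)$ is computable in $2^{0}\poly(|V(T_i)|) = \poly(|V(G)|)$ time by Observation~\ref{obs:unstable-easy}, applied with $Y = \emptyset$ (which is legitimate since $\calT_i - \emptyset = T_i$ is a forest). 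Forming the new scalar weight and deleting the tree components costs only polynomially more.

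The one step with a little content is (ii). Here I would use that $\calI(G)$ factorises over the connected components of~$G$: since $V(G)$ is the disjoint union of $V(H)$ and $V(T_1),\dots,V(T_r)$, with no edges between these parts, every independent set of~$G$ is uniquely a union of an independent set of~$H$ and independent sets of the~$T_i$, so
\[
\sum_{I \in \calI(G)} \win(I)\,\wout(V(G)\setminus I)
= \Bigl(\sum_{I \in \calI(H)}\!\! \win(I)\,\wout(V(H)\setminus I)\Bigr)\prod_{i=1}^{r}\Bigl(\sum_{J \in \calI(T_i)}\!\! \win(J)\,\wout(V(T_i)\setminus J)\Bigr).
\]
The $i$-th factor of the product is exactly $\IS(\calT_i)$ for $\calT_i=(T_i,\win,\wout,1)$, and by Observation~\ref{obs:unstable-easy} we have $\brute(\calT_i,\emptyset)=\IS(\calT_i)$. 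Multiplying both sides by~$W$ and comparing with the definition of $\TR(\calG)$ then yields $\IS(\calG) = \IS(\TR(\calG))$.

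For (iii), the weight functions $\win$ and $\wout$ of $\TR(\calG)$ coincide with those of $\calG$ (only the scalar weight changes), and $V(H) \subseteq V(G)$, so $\win(v) \le \wout(v)$ for every $v \in V(G)$ immediately gives $\win(v) \le \wout(v)$ for every $v \in V(H)$; hence $\TR(\calG)$ is $1$-balanced. There is no genuine obstacle here: the only points needing care are, in (ii), writing the component factorisation with the correctly normalised partition functions and invoking the degenerate $Y=\emptyset$ case of Observation~\ref{obs:unstable-easy}, and — if one wants $\TR(\calG)$ to be a bona fide weighted graph — observing that each $\brute(\calT_i,\emptyset)=\IS(\calT_i)$ is a sum of products of positive integers over the non-empty set $\calI(T_i)$, hence itself a positive integer.
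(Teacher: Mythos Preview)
Your proposal is correct and matches the paper's intent: the paper itself gives no proof, simply noting that the observation is immediate from Definition~\ref{def:TR} and Observation~\ref{obs:unstable-easy}, which is exactly the route you take. Your write-up spells out the component factorisation for (ii) and the $Y=\emptyset$ invocation of $\brute$ for (iv) with appropriate care, and the extra remark that the new scalar weight is a positive integer is a nice touch.
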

	
	\subsection{Graph decompositions}\label{sec:decomp}

	To describe our algorithm, we use two decompositions, which we call the standard decomposition and the extended decomposition. Recall from Sections~\ref{sec:preproc} and~\ref{sec:morepreproc} that it will be important for us to keep track of the tree components formed as we proceed; for this reason, we introduce some notation for the local structure of a graph around a vertex.
	
	\begin{defn}\label{def:decomp}
		Let $G$ be a connected graph, and let $v \in V(G)$. We define the \emph{standard decomposition} of $G$ from $v$ to be the tuple $(\Gamma_v,S,X,H_1,\dots,H_k;T_1,\dots,T_\ell)$, where:
		\begin{itemize}
			\item $\Gamma_v = \Gamma_G(v)$;
			\item $H_1, \dots, H_k$ are the non-tree components of $G-v-\Gamma_v$;
			\item $T_1, \dots, T_\ell$ are the tree components of $G-v-\Gamma_v$;
			\item $S = (V(H_1) \cup \dots \cup V(H_k)) \cap \Gamma^2_G(v)$;
			\item $X = \{v\} \cup \Gamma_v \cup (V(T_1) \cup \dots \cup V(T_\ell))$.
		\end{itemize}
	\end{defn}
	
	\begin{figure}
\begin{center}
\begin{tikzpicture}[scale=1,node distance = 1.5cm]
    \tikzstyle{dot}=[fill=black, draw=black, circle, inner sep=0.1mm]
    \tikzstyle{vtx}=[fill=black, draw=black, circle, inner sep=2pt]
    \tikzstyle{vset}=[gray,rounded corners=5pt,dashed];

    %
    % v and \Gamma_v
    %
    \node[vtx] (v)  at (5,4.5) [label=90:{$v$}] {};
    \node[vtx] (g1) at (1,3.5) {};
    \node[vtx] (g2) at (2,3.5) {};
    \node[vtx] (g3) at (3,3.5) {};
    \node[vtx] (g4) at (5,3.5) {};
    \node[vtx] (g5) at (6,3.5) {};
    \node[vtx] (g6) at (7,3.5) {};
    \node[vtx] (g7) at (9,3.5) {};

    \node[dot] (dot1) at ($(g3)!0.5!(g4)$) {};
    \node[dot] at ($(dot1)+(-0.15,0)$)     {};
    \node[dot] at ($(dot1)+(0.15,0)$)      {};
    \node[dot] (dot2) at ($(g6)!0.5!(g7)$) {};
    \node[dot] at ($(dot2)+(-0.15,0)$)     {};
    \node[dot] at ($(dot2)+(0.15,0)$)      {};

    \draw (g1) .. controls ($(g1)+(310:0.5)$) and ($(g3)+(230:0.5)$) .. (g3);
    \draw (g4)--(g5);

    \draw[vset] (-0.3,4) rectangle (10.75,3);
    \node at (-0.3,3.5) [label=0:{$\Gamma_v$}] {};

    \foreach \i in {1, ..., 7}
        \draw (v)--(g\i);

    %
    % H_1
    %
    \node[vtx] (a1) at (0.5,2) {};
    \node[vtx] (a2) at (0.5,1) {};
    \node[vtx] (a3) at ($(a2)+(240:1)$) {};
    \node[vtx] (a4) at ($(a2)+(300:1)$) {};
    \draw (a1)--(a2)--(a3)--(a4)--(a2);
    \draw[vset] (-0.3,2.3) rectangle (1.3,-0.75);
    \node at (-0.4,-0.4) [label=0:{$H_1$}] {};

    %
    % H_2
    %
    \node[vtx] (b1) at (1.7,2) {};
    \node[vtx] (b2) at (2.7,2) {};
    \node[vtx] (b3) at (2.7,1) {};
    \node[vtx] (b4) at (1.7,1) {};
    \draw (b2)--(b3)--(b4)--(b1)--(b3);
    \draw (b3) -- +(260:0.7);
    \draw (b3) -- +(220:0.4);
    \draw (b4) -- +(310:0.5);
    \draw[vset] (1.45,2.3) rectangle (2.95,-0.75);
    \node at (1.35,-0.4) [label=0:{$H_2$}] {};

    \node[dot] at (3.20,0.8) {};
    \node[dot] at (3.35,0.8) {};
    \node[dot] at (3.50,0.8) {};

    %
    % H_k
    %
    \node[vtx] (c1) at (4,2) {};
    \node[vtx] (c2) at (5,2) {};
    \node[vtx] (c3) at (5,1) {};
    \node[vtx] (c4) at (4,1) {};
    \node[vtx] (c5) at ($(c4)+(300:1)$) {};
    \draw (c3)--(c4)--(c5)--(c3)--(c2)--(c1)--(c4);
    \draw (c3) -- +(270:0.6);
    \draw (c4) -- +(280:0.6);
    \draw (c4) -- +(330:0.4);
    \draw (c5) -- +(320:0.5);
    \draw[vset] (3.75,2.3) rectangle (5.25,-0.75);
    \node at (3.65,-0.4) [label=0:{$H_k$}] {};

    %
    % S
    %
    \draw[vset][color=red] (-1,2.5) rectangle (5.45,1.5);
    \node at (-1.05,2) [label={[red]0:{$S$}}] {};

    %
    % T_1
    %
    \node[vtx] (d1) at (6.5,2) {};
    \node[vtx] (d2) at (6.5,1) {};
    \draw (d1) -- (d2);
    \draw (d2) -- +(240:0.8);
    \draw ($(d2)+(240:0.5)$) -- +(300:0.7);
    \draw (d2) -- +(315:0.6);
    \draw[vset] (5.9,2.3) rectangle (7.1,-0.75);
    \node at (5.8,-0.4) [label=0:{$T_1$}] {};

    %
    % T_2
    %
    \node[vtx] (e1) at (7.5,2) {};
    \node[vtx] (e2) at (8.5,2) {};
    \node[vtx] (e3) at (8,1) {};
    \draw (e1)--(e3)--(e2);
    \draw (e3) -- +(240:0.8);
    \draw (e3) -- +(330:0.4);
    \draw (e3) -- +(280:1);
    \draw ($(e3)+(280:0.4)$) -- +(250:0.7);
    \draw[vset] (7.25,2.3) rectangle (8.75,-0.75);
    \node at (7.15,-0.4) [label=0:{$T_2$}] {};

    \node[dot] at (9.00,0.8) {};
    \node[dot] at (9.15,0.8) {};
    \node[dot] at (9.30,0.8) {};

    %
    % T_\ell
    %
    \node[vtx] (f1) at (10.3,2) {};
    \node[vtx] (f2) at  (9.8,1) {};
    \node[vtx] (f3) at (10.8,1) {};
    \draw (f2)--(f1)--(f3);
    \draw (f2) -- +(270:0.6);
    \draw (f2) -- +(290:0.7);
    \draw (f3) -- +(200:0.4);
    \draw (f3) -- +(250:0.6);
    \draw[vset] (9.55,2.3) rectangle (11.05,-0.75);
    \node at (9.45,-0.4) [label=0:{$T_\ell$}] {};

    %
    % X
    %
    \draw[vset][draw=blue] (-1,3.5) -- (-1,5.25) -- (11.3,5.25) -- (11.3,-1)
                  -- (5.65,-1) -- (5.65,2.75) -- (-1,2.75) -- (-1,3.5);
    \node at (-1.05,4.75) [label={[blue]0:{$X$}}] {};

    %
    % Connections between \Gamma(v) and H's and T's.
    %
    \draw (a1)--(g1)--(b1);
    \draw (b2)--(g2)--(c1);
    \draw (b2)--(g3)--(c1);
    \draw (c2)--(g5);
    \draw (g3)--(d1)--(g4);
    \draw (e1)--(g5)--(e2);
    \draw (g7)--(f1);

\end{tikzpicture}
\end{center}

		\caption{An example of the standard decomposition of a connected graph $G$ from a vertex~$v$.}\label{fig:decomp}
	\end{figure}
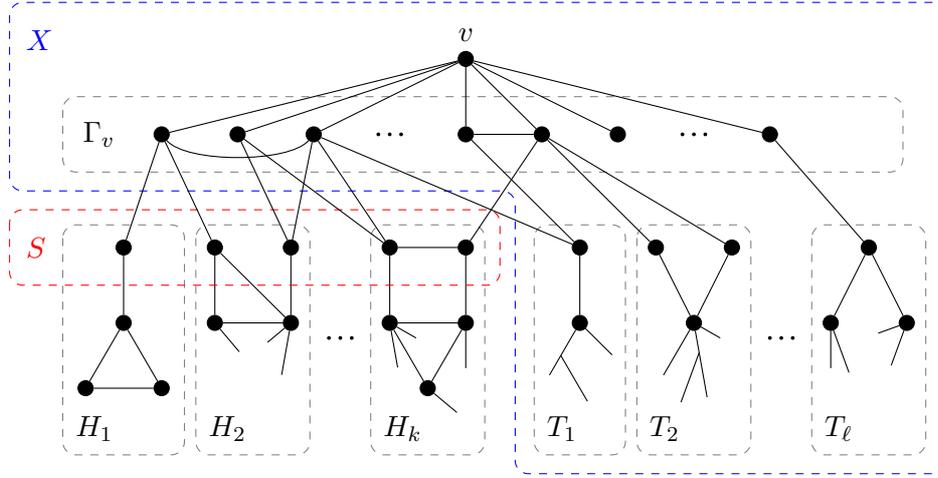
	
	An example of the standard decomposition is illustrated in Figure~\ref{fig:decomp}. Note that a (non-empty) connected graph $G$ is a near-forest if and only if there exists $v \in V(G)$ such that $S=\emptyset$ in the standard decomposition of $G$ from $v$.
	
	\begin{lemma}\label{lem:S-size}
		Let $G$ be a connected reduced graph, and let $v \in V(G)$. Then in the standard decomposition of $G$ from $v$, $|S| \ge 2$.
	\end{lemma}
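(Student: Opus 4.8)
The plan is to use the vertex set $X$ from the standard decomposition itself as the witness that contradicts reducedness when $|S|\le 1$. Write the standard decomposition of $G$ from $v$ as $(\Gamma_v,S,X,H_1,\dots,H_k;T_1,\dots,T_\ell)$, so that $X=\{v\}\cup\Gamma_v\cup(V(T_1)\cup\dots\cup V(T_\ell))$.

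First I would check that $G[X]$ is always a non-empty near-forest. It is non-empty since $v\in X$. Moreover, since $\Gamma_v\subseteq X$, every neighbour of $v$ lies in $X$, so $\Gamma_{G[X]}(v)=\Gamma_v$ and hence $G[X]-v-\Gamma_{G[X]}(v)=G[V(T_1)\cup\dots\cup V(T_\ell)]$, which is a disjoint union of trees and therefore a forest. By Definition~\ref{def:near-forest}, $G[X]$ is a near-forest (witnessed by~$v$).

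Next I would compute $\Gamma_G(X)$ and show it equals~$S$. The vertices outside $X$ are exactly those of $H_1,\dots,H_k$. Fix $w\in V(H_i)$ for some~$i$. Then $w$ is not adjacent to~$v$ and $w\notin\Gamma_v$, because $w$ lies in $G-v-\Gamma_v$; and $w$ is not adjacent to any vertex of $T_1,\dots,T_\ell$, because $H_i$ and each $T_j$ are \emph{distinct} components of $G-v-\Gamma_v$. Hence $w$ has a neighbour in~$X$ if and only if it has a neighbour in~$\Gamma_v$, i.e.\ if and only if $w\in\Gamma^2_G(v)$, i.e.\ if and only if $w\in S$. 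Conversely, each vertex of $S\subseteq\Gamma^2_G(v)$ lies outside~$X$ and has a neighbour in $\Gamma_v\subseteq X$. Therefore $\Gamma_G(X)=S$. Finally, since $G$ is reduced and $G[X]$ is a non-empty near-forest, Definition~\ref{def:reduced} forces $|\Gamma_G(X)|\ge 2$, that is, $|S|\ge 2$, as required.

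I expect the only mildly delicate point to be the second step: carefully ruling out edges from the $H_i$'s to the tree components and to $\{v\}\cup\Gamma_v$ so as to identify $\Gamma_G(X)$ with~$S$ exactly. Everything else is immediate from the definitions, and, pleasantly, no case analysis on $|S|$ (or on whether $X=V(G)$) is needed, since $G[X]$ is a near-forest regardless and the reduced hypothesis applies uniformly.
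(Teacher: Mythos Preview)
Your proof is correct and follows essentially the same approach as the paper: show that $G[X]$ is a near-forest (witnessed by~$v$), identify $\Gamma_G(X)$ with~$S$, and apply the definition of reducedness. The paper's version is terser, simply asserting $|S|=|\Gamma_G(X)|$ without the careful edge-by-edge justification you provide, but the argument is the same.
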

	\begin{proof}
		Observe that in the standard decomposition of $G$ from $v$, $G[X]-v-\Gamma_G(v) = T_1 \cup \dots \cup T_\ell$ is a forest, so $G[X]$ itself is a near-forest. Thus by the definition of reducedness (Definition~\ref{def:reduced}), it follows that $|S| = |\Gamma_G(X)| \ge 2$.
	\end{proof}

	Our main branching operation will be the intuitive one --- we pick a vertex $v \in V(G)$ and partition sets $I \in \calI(G)$ according to whether or not $v \in I$, giving 
	\[
		\IS(\calG) = \wout(v)\,\IS(\calG-v) + \win(v)\,\wout(\Gamma_G(v))\,\IS(\calG-v-\Gamma_G(v)).
	\]
	Note that this equation holds even if $\calG-v-\Gamma_G(v)$ contains no vertices. After picking $v$, we then recurse on $\calG-v$ and $\calG-v-\Gamma_G(v)$. (Informally, we call this \emph{branching on $v$}, and we call $\calG-v$ and $\calG-v-\Gamma_G(v)$ the resulting \emph{branches}.) 

	Using the standard decomposition of $G$ from $v$, we will bound the efficiency of this branching operation in terms of $d_G(v)$, $d_G^2(v)$ and $|S|$ (see Lemma~\ref{lem:base-branch}). The bounds in Lemma~\ref{lem:base-branch} are weaker when $S$ is small. By Lemma~\ref{lem:S-size} we have $|S| \ge 2$; if $|S|=2$, say $S = \{x,y\}$, then we use an alternative strategy introduced in~\cite{DJW} and branch on~$y$ rather than $v$. In both branches, the neighbourhood of what remains of~$X$ is then either $\emptyset$ or $\{x\}$, so we can quickly remove it with $\Prune$. In~\cite{DJW}, the authors consider arbitrary graphs, and in this setting the strategy is effective enough that the $|S|=2$ case is no longer the limiting factor in the running time analysis. However, if $G$ is bipartite, then the bounds of Lemma~\ref{lem:base-branch} are stronger and this is no longer true.  What we do is construct an ``extended decomposition'' (see Definition~\ref{def:extend-decomp} below) which allows us
to branch on a different vertex~$z$, rather than on~$y$.
This gives better results in the bipartite case 
(see Appendix~\ref{app:ext-decomp-needed}).
Crucially, in the bipartite case, branching on~$z$ lets us improve what would 
otherwise be the least efficient branch when $|S|=2$, so it improves the results.	
 It will turn out that this is only an issue when $d_{G-X}(y) = 1$; in this case, we will extend $y$ into a path in $G-X$, and branch on the other endpoint $z$ of the path.

    Suppose, then, that $G$ is connected, and $|S|=2$ in the standard
    decomposition of $G$ from $v\in V(G)$.  Let $S=\{x,y\}$.  By
    Definition~\ref{def:decomp}, no component of $G-X$ in the standard
    decomposition is a tree so the component
    containing~$y$ is not an isolated vertex or path. If
    $d_{G-X}(y)=1$, $G-X$ must contain a path from~$y$ to some~$z$
    with $d_{G-X}(z)>2$.  This allows us to make the following
    definition.

	\begin{defn}\label{def:extend-decomp}
		Let $G$ be a connected graph, and let $v \in V(G)$. Suppose $|S| = 2$ in the standard decomposition of $G$ from~$v$. Then we define the \emph{extended decomposition} of $G$ from~$v$ to be the standard decomposition from~$v$, together with the tuple $(X^+,P,x,y,z,H)$, where:
		\begin{itemize}
			\item $\{x,y\}=S$, where $x<y$ (recall that $V(G) = \{1, \dots, n\}$);
			\item if $d_{G-X}(y) \ge 2$, then $z=y$; otherwise, $z$~is the (unique) closest vertex to~$y$ in $G-X$ with $d_{G-X}(z)>2$;
            \item $P$ is the (unique) $y$--$z$ path in $G-X$;
			\item $X^+ = X \cup V(P) \setminus \{z\}$;
			\item $H = G - X^+$.
		\end{itemize}
	\end{defn}

	\begin{figure}
\begin{center}
\begin{tikzpicture}[scale=1,node distance = 1.5cm]
    \tikzstyle{dot}=[fill=black, draw=black, circle, inner sep=0.1mm]
    \tikzstyle{vtx}=[fill=black, draw=black, circle, inner sep=2pt]
    \tikzstyle{vset}=[gray,rounded corners=5pt,dashed];

    %
    % v and \Gamma_v
    %
    \node[vtx] (v)  at (5,4.5) [label=90:{$v$}] {};
    \node[vtx] (g1) at (1,3.5) {};
    \node[vtx] (g2) at (2,3.5) {};
    \node[vtx] (g3) at (3,3.5) {};
    \node[vtx] (g4) at (5,3.5) {};
    \node[vtx] (g5) at (6,3.5) {};
    \node[vtx] (g6) at (7,3.5) {};
    \node[vtx] (g7) at (9,3.5) {};

    \node[dot] (dot1) at ($(g3)!0.5!(g4)$) {};
    \node[dot] at ($(dot1)+(-0.15,0)$)     {};
    \node[dot] at ($(dot1)+(0.15,0)$)      {};
    \node[dot] (dot2) at ($(g6)!0.5!(g7)$) {};
    \node[dot] at ($(dot2)+(-0.15,0)$)     {};
    \node[dot] at ($(dot2)+(0.15,0)$)      {};

    \draw (g1) .. controls ($(g1)+(310:0.5)$) and ($(g3)+(230:0.5)$) .. (g3);
    \draw (g4)--(g5);

    \draw[vset] (0.05,4) rectangle (10.75,3);
    \node at (0,3.5) [label={[gray]0:{$\Gamma_v$}}] {};

    \foreach \i in {1, ..., 7}
        \draw (v)--(g\i);

    %
    % S
    %
    \node[vtx] (x) at (1.5,1.65) [label=180:{$x$}] {};
    \node[vtx,green!50!black] (y) at (3.5,1.65) [label=180:{$y$}] {};
    \draw[vset] (0.05,2.15) rectangle (4,1.15);
    \node at (0,1.65) [label={[gray]0:{$S$}}] {};

    %
    % H and P
    %
    \node[vtx] (a1) at ($(x)+(-0.5,-1.5)$) {};
    \node[vtx] (a2) at ($(a1)+(240:1)$) {};
    \node[vtx] (a3) at ($(a1)+(300:1)$) {};
    \node[vtx] (a4) at ($(x)+(0.5,-1.5)$) {};
    \draw (x)--(a1)--(a2)--(a3)--(a1);
    \draw (x)--(a4)--(a3);

    \node[vtx,green!50!black] (z) at ($(y)+(0,-3)$) [label=270:{$z$}] {};
    \node[vtx,green!50!black] (p1) at ($(y)!0.33!(z)$) {};
    \node[vtx,green!50!black] (p2) at ($(y)!0.66!(z)$) {};
    \draw[green!50!black,thick] (y)--(p1)--(p2)--(z);
    \draw (a3) .. controls ($(a3)+(0.75,-1)$) and ($(z)+(-0.5,-0.25)$) .. (z);
    \draw (a4) .. controls ($(a4)+(0.25,-1)$) and ($(z)+(-0.8,0)$) .. (z);

    \node (pend) at ($(p2)!0.5!(z)$) {};     % Invisible
    \node (pend1) at ($(pend)+(0, 0.1)$) {}; % Invisible; helps to draw X
    \node (pend2) at ($(pend)+(0,-0.1)$) {}; % Invisible; helps to draw H
    \node (bot) at ($(z)+(0,-0.5)$) {};      % Invisible; helps to draw H

    \draw[vset][color=red]
             (-1,1.5) -- (-1,2.35) -- (2.6,2.35) -- (2.6,2.35|-pend2)
                 -- (4.2,0|-pend2) -- (4.2,0|-bot) -- (-1,0|-bot) -- (-1,1.5);
    \node at (-1.1,1.85) [label={[red]0:{$H$}}] {};

    \node[green!50!black,fill=white] at ($(y)!0.5!(z) + (0.9,0)$) {$P$};
    \draw[green!50!black,thick,decorate,decoration={brace,amplitude=7.5}]
        ($(y)+(0.3,0.05)$) -- ($(z)+(0.3,-0.05)$);

    %
    % T_1
    %
    \node[vtx] (d1) at (6.5,2) {};
    \node[vtx] (d2) at (6.5,1) {};
    \draw (d1) -- (d2);
    \draw (d2) -- +(240:0.8);
    \draw ($(d2)+(240:0.5)$) -- +(300:0.7);
    \draw (d2) -- +(315:0.6);
    \draw[vset] (5.9,2.3) rectangle (7.1,-0.75);
    \node at (5.8,-0.4) [label={[gray]0:{$T_1$}}] {};

    %
    % T_2
    %
    \node[vtx] (e1) at (7.5,2) {};
    \node[vtx] (e2) at (8.5,2) {};
    \node[vtx] (e3) at (8,1) {};
    \draw (e1)--(e3)--(e2);
    \draw (e3) -- +(240:0.8);
    \draw (e3) -- +(330:0.4);
    \draw (e3) -- +(280:1);
    \draw ($(e3)+(280:0.4)$) -- +(250:0.7);
    \draw[vset] (7.25,2.3) rectangle (8.75,-0.75);
    \node at (7.15,-0.4) [label={[gray]0:{$T_2$}}] {};

    \node[dot] at (9.00,0.8) {};
    \node[dot] at (9.15,0.8) {};
    \node[dot] at (9.30,0.8) {};

    %
    % T_\ell
    %
    \node[vtx] (f1) at (10.3,2) {};
    \node[vtx] (f2) at  (9.8,1) {};
    \node[vtx] (f3) at (10.8,1) {};
    \draw (f2)--(f1)--(f3);
    \draw (f2) -- +(270:0.6);
    \draw (f2) -- +(290:0.7);
    \draw (f3) -- +(200:0.4);
    \draw (f3) -- +(250:0.6);
    \draw[vset] (9.55,2.3) rectangle (11.05,-0.75);
    \node at (9.45,-0.4) [label={[gray]0:{$T_\ell$}}] {};

    %
    % X
    %
    \draw[vset] (-0.15,3.5) -- (-0.15,5.25) -- (11.3,5.25) -- (11.3,-1)
                  -- (5.65,-1) -- (5.65,2.75) -- (-0.15,2.75) -- (-0.15,3.5);
    \node at (0,4.75) [label={[gray]0:{$X$}}] {};

    %
    % X+
    %
    \draw[vset][draw=blue] (-1,3.5) -- (-1,5.45) -- (11.5,5.45) -- (11.5,-1.2)
                  -- (5.45,-1.2) -- (5.45,2.55) -- (4.2,2.55)
                  -- (4.2,2.55|-pend1) -- (2.8,2.55|-pend1) -- (2.8,2.55)
                  -- (-1,2.55) -- (-1,3.5);
    \node at (-1.15,4.95) [label={[blue]0:{$X^+$}}] {};

    %
    % Connections between \Gamma(v) and H's and T's.
    %
 
    \draw (g1)--(x)--(g3);
    \draw (x)--(g2)--(y);
    \draw (g5)--(y);
    \draw (g3)--(d1)--(g4);
    \draw (e1)--(g5)--(e2);
    \draw (g7)--(f1);

\end{tikzpicture}
\end{center}
\caption{An example of the extended decomposition of a connected graph~$G$ from a vertex~$v$.  Relevant elements of the graph's standard decomposition are shown in pale grey.}\label{fig:ext-decomp}
	\end{figure}
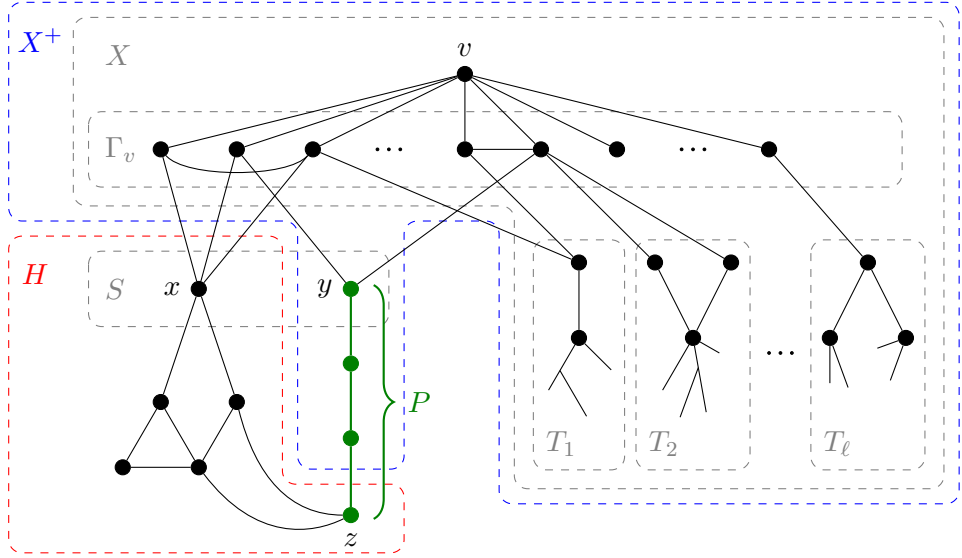
	
	When $|S|=2$ in the extended decomposition from $v$, we will branch on $z$ and then prune what remains of $X^+$ in both branches. This will be efficient enough that the worst case in the analysis will be $|S|=3$ even when $G$ is bipartite (see Lemma~\ref{lem:clever-branch}). An example of the extended decomposition is given in Figure~\ref{fig:ext-decomp} --- we justify the depiction of $P$ and $z$ in the following lemma.

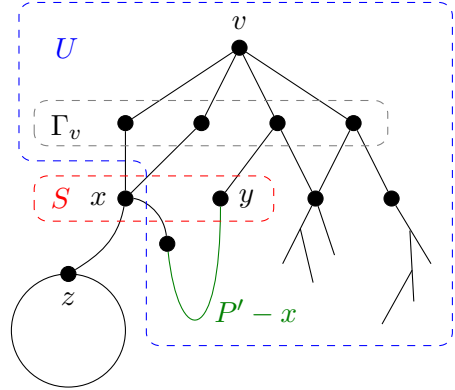
\begin{figure}
\begin{center}
\begin{tikzpicture}[scale=1,node distance = 1.5cm]
    \tikzstyle{vtx}=[fill=black, draw=black, circle, inner sep=2pt]
    \tikzstyle{svtx}=[fill=black, draw=black, circle, inner sep=1.5pt]
    \tikzstyle{vset}=[gray,rounded corners=5pt,dashed];

    %
    % v and \Gamma(v)
    %
    \node[vtx] (v) at (2.75,1) [label=90:{$v$}] {};
    \node[vtx] (g1) at (1.25,0) {};
    \node[vtx] (g2) at (2.25,0) {};
    \node[vtx] (g3) at (3.25,0) {};
    \node[vtx] (g4) at (4.25,0) {};

    %
    % \Gamma^2(v)
    %
    \node[vtx] (x) at (1.25,-1) [label=180:{$x$}] {};
    \node[vtx] (y) at (2.5,-1) [label=0:{$y$}] {};
    \node[vtx] (t1) at ($(g4)+(-0.5,-1)$) {};
    \node[vtx] (t2) at ($(g4)+(0.5,-1)$) {};
    \node[vtx] (z) at (0.5,-2) [label=270:{$z$}] {};
    \node[vtx] (x') at (1.8,-1.6) {};

    \draw (g1)--(v)--(g2);
    \draw (g3)--(v)--(g4);
    \draw (g1)--(x)--(g2);
    \draw (y)--(g3)--(t1)--(g4)--(t2);

    %
    % P, z, H_1
    %
    \draw[black!50!green] (y) .. controls (2.5,-3) and (1.95,-3) .. (x');
    \draw (x') .. controls (1.75,-1.15) and (1.45,-1).. (x);
    \node at (2.15,-2.5) [label={[black!50!green]0:{$P'-x$}}] {};

    \draw (x) .. controls (1.15,-1.55) and (0.9,-1.75) .. (z);
    \draw ($(z)+(0,-0.75)$) circle (0.75);

    %
    % T_1, T_2
    %
    \node (t3) at ($(t1)+(-0.5,-1)$) {};
    \node (t4) at ($(t3)+(0.5,-0.25)$) {};
    \draw (t1)--(t3);
    \draw ($(t1)!0.4!(t3)$)--(t4);
    \draw (t1)--($(t1)+(0.25,-0.75)$);

    \node (t5) at ($(t2)+(0.6,-1)$) {};
    \node (t6) at ($(t5)+(-0.3,-0.5)$) {};
    \node (t7) at ($(t6)+(-0.5,-0.3)$) {};
    \draw (t2)--(t5);

    \node[inner sep=0pt] (tm) at ($(t2)!0.4!(t5)$) {};
    \draw (tm)--(t6);
    \draw ($(tm)!0.5!(t6)$)--(t7);

    %
    % Regions to mark vertex sets
    %
    \draw[vset] ($(g1)+(-1.2,-0.3)$) rectangle ($(g4)+(0.45,0.3)$);
    \node at ($(g1)+(-1.25,-0.05)$) [label=0:{$\Gamma_v$}] {};

    \draw[vset][color=red] ($(x)+(-1.2,-0.3)$) rectangle ($(y)+(0.7,0.3)$);
    \node at ($(g1)+(-1.25,-1)$) [label={[red]0:{$S$}}] {};

    \draw[vset][color=blue] ($(g1)+(0,1.6)$) -- ($(g1)+(-1.4,1.6)$)
                  -- ($(g1)+(-1.4,-0.5)$)
                  -- ($(g1)+(0.275,-0.5)$) -- ($(g1)+(0.275,-2.95)$)
                  -- ($(g4)+(1.3,-2.95)$) -- ($(g4)+(1.3,1.6)$)
                  -- ($(g1)+(0,1.6)$);
    \node at ($(g1)+(-0.35,1)$) [label={[blue]180:{$U$}}] {};
\end{tikzpicture}
\end{center}
\caption{A supposed extended decomposition that leads to a contradiction in the proof of Lemma~\ref{lem:extend-decomp}.}
\label{fig:lem:extend-decomp}
\end{figure}

	\begin{lemma}\label{lem:extend-decomp}
		Let $G$ be a connected reduced graph, let $v \in V(G)$, and suppose that the standard decomposition of $G$ from~$v$ has $|S|=2$. Then in the extended decomposition of $G$ from $v$, we have $x \notin V(P)$ and $d_H(z) \ge 2$.
	\end{lemma}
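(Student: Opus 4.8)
The plan is to exploit the reducedness of $G$ via Definition~\ref{def:reduced}: every non-empty vertex set spanning a near-forest has at least two external neighbours, so exhibiting a non-empty set $U$ that spans a near-forest and has $|\Gamma_G(U)| \le 1$ is a contradiction. Before applying this, I would record three easy structural facts about the standard decomposition $(\Gamma_v,S,X,H_1,\dots,H_k;T_1,\dots,T_\ell)$ from $v$: that $\Gamma_G(X) = S = \{x,y\}$ (every neighbour of $X$ outside $X$ lies in some $H_i$ and is at distance exactly $2$ from $v$, and conversely every vertex of $S$ is such a neighbour — the argument behind the equality $|S| = |\Gamma_G(X)|$ already used in Lemma~\ref{lem:S-size}); that $G[X]$ is a near-forest witnessed by $v$, since $G[X] - v - \Gamma_v = T_1 \cup \dots \cup T_\ell$ is a forest; and that $G - X = H_1 \cup \dots \cup H_k$ has no tree components, so in particular $d_{G-X}(y) \ge 1$. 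The case $d_{G-X}(y) \ge 2$ is then immediate: by Definition~\ref{def:extend-decomp} we have $z = y$, the path $P$ is the single vertex $y$, $X^+ = X$ and $H = G - X$, so $x \notin V(P)$ because $x < y$, and $d_H(z) = d_{G-X}(y) \ge 2$. So I would henceforth assume $d_{G-X}(y) = 1$.

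In this case the $y$--$z$ path is forced: writing $P = y_0 y_1 \cdots y_m$ with $y_0 = y$, $y_m = z$ and $m \ge 1$, the endpoint $y$ has degree $1$ in $G - X$, each interior vertex $y_1,\dots,y_{m-1}$ has degree exactly $2$ in $G - X$, and $z$ has degree $> 2$ in $G-X$ (this is the discussion preceding Definition~\ref{def:extend-decomp}; $z$ exists because the component of $G-X$ containing $y$ is connected, not a tree, and not a cycle). I would then note that the two conclusions reduce to the single claim $x \notin V(P)$: granted that claim, $z = y_m$ lies in neither $\{x\}$ (by the claim) nor $\{y\}$ (their $G-X$-degrees differ), hence $z \notin S = \Gamma_G(X)$, so $z$ has no neighbour in $X$; since $z$ has at least three neighbours in $G-X$ and exactly one of them, namely its predecessor $y_{m-1}$ on $P$, lies in $\{y_0,\dots,y_{m-1}\} = V(P)\setminus\{z\}$, passing to $H = G - X^+$ removes only that one neighbour and leaves $d_H(z) \ge 2$.

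The heart of the argument is therefore $x \notin V(P)$, which I would prove by contradiction. Suppose $x = y_j$ for some $1 \le j \le m$ (note $x \ne y_0 = y$ since $x < y$), and consider $U = X \cup \{y_0,y_1,\dots,y_{j-1}\}$, i.e.\ $X$ together with the initial segment of $P$ strictly before $x$. Truncating just before $x$ keeps the external neighbourhood minimal: the only neighbours of $X$ outside $X$ are $x$ and $y_0$, of which $y_0 \in U$; each interior vertex $y_i$ with $i \le j-1 \le m-1$ included in $U$ has degree at most $2$ in $G-X$ and is not in $S$, so all its neighbours lie on $P$ or in $\Gamma_v \subseteq U$; and the last included vertex $y_{j-1}$ (or $y_0$ itself when $j=1$) has its unique remaining outside neighbour equal to $x$. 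Hence $\Gamma_G(U) = \{x\}$. At the same time $G[U]$ is a near-forest witnessed by $v$: $G[U] - v - \Gamma_v$ is the disjoint union of the trees $T_1,\dots,T_\ell$ with the path $y_0 y_1 \cdots y_{j-1}$, which is chordless (each $y_i$ with $i \le j-1 \le m-1$ has $G-X$-degree at most $2$) and joined by no edge to any $T_i$ (path vertices lie outside $\{v\}\cup\Gamma_v$). Thus $U$ is non-empty, spans a near-forest, and has $|\Gamma_G(U)| = 1 < 2$, contradicting Definition~\ref{def:reduced}. This disposes uniformly of $x$ being an interior vertex of $P$ and of $x = z$.

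The step I expect to require the most care is the choice of $U$: including $x = y_j$ itself in $U$ would in general leave $|\Gamma_G(U)| \ge 2$ (for instance when $x = z$ has several forward neighbours in $G-X$), so the contradiction only appears once one truncates $U$ \emph{just before} $x$; the remaining checks — that $\Gamma_G(U) = \{x\}$ and that the initial segment of $P$ is a chordless path attached to $\{v\}\cup\Gamma_v$ only at its endpoints — are routine degree bookkeeping using $\Gamma_G(X) = S$ and the degree profile of $P$.
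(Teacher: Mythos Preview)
Your proof is correct and follows essentially the same approach as the paper: the contradiction set $U = X \cup \{y_0,\dots,y_{j-1}\}$ is exactly the paper's $(X \cup V(P')) \setminus \{x\}$, and your derivation of $d_H(z) \ge 2$ via $z \notin S$ and $d_{G-X}(z) - 1 \ge 2$ matches the paper's one-line computation. You supply more explicit detail (the degree bookkeeping on $P$, the verification that the initial segment is chordless and disconnected from the $T_i$'s), but the structure and key idea are identical.
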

\begin{proof}
    Suppose, towards a contradiction, that $x\in V(P)$ as in Figure~\ref{fig:lem:extend-decomp}. This implies that $d_{G-X}(y)=1$.  Let $P'$ be the sub-path of~$P$ from $y$ to~$x$ and let $U=(X\cup V(P'))\setminus\{x\}$.  $G[U]$ is a near-forest~-- delete $v$ and $\Gamma_G(v)$ to obtain the forest $(P'-x) \cup T_1\cup\dots\cup T_\ell$~-- but $\Gamma_G(U)=\{x\}$, contradicting the hypothesis that $G$~is reduced.

    Therefore, $z\neq x$. If $z=y$ then $X^+=X$, so $d_H(z)=d_{G-X}(y)\geq 2$.  Otherwise, $z\notin \{x,y\} = \Gamma_G(V(H))$, so $d_H(z)=d_{G-X}(z)-1\geq 2$.
\end{proof}

    \subsection{The main approximation algorithm}\label{sec:mainalgo}

	\begin{algorr}{$\iscount(\calG,\eps)$}{\label{algo:count}The input is a non-empty $1$-balanced weighted graph  
	$\calG = (G,\win,\wout,W)$ 	
	 with no tree components and a rational number $0 < \eps < 1$.}
		\item If $\Delta(G) \ge 11$:\label{rcount-1}
		\begin{enumerate}[(a)]
\item Let $v$ be the lexicographically least vertex of degree at least $11$ in~$G$ and let
\begin{align*}
\Gout= (\rGout,\cdot,\cdot,\newWout) &\leftarrow \TR\big(\calG-v\big),\\
 \Wout &\leftarrow \wout(v),\\
 \Cout & \leftarrow \mbox{If $\rGout$ is empty then $\newWout$ else $\iscount(\Gout,\eps)$, }\\
 \Gin =(\rGin,\cdot,\cdot,\newWin) &\leftarrow \TR\big(\calG-v-\Gamma_G(v)\big),\\
 \Win &\leftarrow \win(v)\cdot\wout(\Gamma_G(v)), \text{ and}\\
  \Cin & \leftarrow \mbox{If $\rGin$ is empty then $\newWin$ else $\iscount(\Gin,\eps)$. }
 \end{align*} 	 
			\item Return $\Wout\cdot  \Cout+ \Win\cdot  \Cin$.
		\end{enumerate}
		\item $\calG \leftarrow \Red(\calG)$.\label{rcount-2}
		\item If $G$ has no vertices of degree at least 6 and 2-degree at least 27, return $\basecount(\calG,\eps)$.\label{rcount-3}
		\item If $G$ has average degree at most 5, then let $v \in V(G)$ be the lexicographically least vertex of degree at least 6 and 2-degree at least 27. Otherwise, let $v \in V(G)$ be the lexicographically least vertex which maximises $d^2_G(v)$ subject to $d_G(v) \ge 2|E(G)|/|V(G)|$.\label{rcount-4}
		\item Let $G_v$ be the component of $G$ containing $v$, and consider the standard decomposition of $G_v$ from $v$. If $|S| \ge 3$, let\label{rcount-5}
\begin{align*}
\Gout= (\rGout,\cdot,\cdot,\newWout) &\leftarrow \TR\big(\calG-v\big),\\
 \Wout &\leftarrow \wout(v),\\
 \Gin =(\rGin,\cdot,\cdot,\newWin) &\leftarrow \TR\big(\calG-v-\Gamma_G(v)\big), \text{and}\\
 \Win &\leftarrow \win(v)\cdot\wout(\Gamma_G(v)).\\
  \end{align*} 		
 		\item If instead $|S| = 2$, consider the extended decomposition of $G_v$ from~$v$ and let\label{rcount-6}
\begin{align*}
\Gout= (\rGout,\cdot,\cdot,\newWout) &\leftarrow  \TR\big(\Prune(\calG-z,X^+)\big),\\
 \Wout &\leftarrow \wout(z),\\
 \Gin =(\rGin,\cdot,\cdot,\newWin) &\leftarrow   \TR\big(\Prune(\calG-z-\Gamma_G(z),X^+\setminus\Gamma_G(z))\big), \text{and}\\
 \Win &\leftarrow \win(z)\cdot\wout(\Gamma_G(z)).\\
  \end{align*} 			
		\item Let
	\begin{align*}
 \Cout & \leftarrow \mbox{If $\rGout$ is empty then $\newWout$ else $\iscount(\Gout,\eps)$ }\text{and}\\
  \Cin & \leftarrow \mbox{If $\rGin$ is empty then $\newWin$ else $\iscount(\Gin,\eps)$. }
 \end{align*}

		Return $\Wout\cdot  \Cout + \Win\cdot  \Cin$.\label{rcount-7}
	\end{algorr}

	\begin{lemma}\label{lem:iscount-works}
		Let $\calG=(G,\win,\wout,W)$ be a non-empty $1$-balanced $n$-vertex weighted graph with no tree components.  Let $0 < \eps < 1$.  Then,
        \begin{enumerate}[(i)]
        \item $\iscount(\calG,\eps)$ returns an $\eps$-approximation of $\IS(\calG)$;
        \item Within time $\poly(n,1/\epsilon)$, $\iscount(\calG,\eps)$ either terminates or computes the arguments $\Gin$ and~$\Gout$ to its recursive calls;
        \item At each recursive call $\iscount(\calH,\epsilon)$, $\calH$~is non-empty and $1$-balanced and has no tree components.
        \end{enumerate}
	\end{lemma}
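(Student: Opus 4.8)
The plan is to prove parts~(i)--(iii) simultaneously by strong induction on $n=|V(G)|$; only part~(i) genuinely uses the inductive hypothesis (it must call on the correctness, $1$-balance, non-emptiness and tree-component-freeness of the instances handed to the recursive calls), whereas parts~(ii) and~(iii) are bookkeeping on top of the lemmas already proved for $\TR$, $\Prune$, $\Red$ and $\basecount$. The induction is well-founded because every recursive call removes at least the branch vertex and $\TR$ can only remove further vertices, so each recursive instance has fewer than $n$ vertices (this also shows the algorithm terminates). For part~(iii): every weighted graph passed to a recursive call (in steps~(\ref{rcount-1}), (\ref{rcount-5}) and (\ref{rcount-6})) has the form $\TR(\cdot)$, hence has no tree components by construction (Definition~\ref{def:TR}) and is $1$-balanced by Observation~\ref{obs:TR-Z}(iii), provided the graph handed to $\TR$ is $1$-balanced; and the latter holds since deleting vertices preserves $1$-balance, $\Red$ preserves it (Lemma~\ref{lem:poly-reduce}), and $\Prune$ preserves it (Lemma~\ref{lem:MR}). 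For the two $\Prune$ calls in step~(\ref{rcount-6}) one has to check they are legitimately applied, i.e.\ that $X^+$ has at most one neighbour in $\calG-z$ and $X^+\setminus\Gamma_G(z)$ has at most one neighbour in $\calG-z-\Gamma_G(z)$; this follows from the extended decomposition together with Lemma~\ref{lem:extend-decomp}: since $x\notin V(P)$, the only vertices of $G$ abutting $X^+$ are $x$ and $z$, so $\Gamma_{G-z}(X^+)\subseteq\{x\}$ and deleting $\Gamma_G(z)$ as well only shrinks this set. Non-emptiness of each recursive instance is immediate, since $\iscount$ recurses only when the relevant underlying graph ($\rGout$ or $\rGin$) is non-empty.

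For part~(ii): apart from the recursive calls, every step runs in $\poly(n,1/\eps)$. A branch vertex always exists in step~(\ref{rcount-4}) (either step~(\ref{rcount-3}) failed and supplies one, or some vertex has degree at least the average degree), and finding it and forming the standard/extended decompositions is elementary; $\TR$ costs $\poly(n)$ (Observation~\ref{obs:TR-Z}(iv)); $\Red$ is reached only after step~(\ref{rcount-1}) has ensured $\Delta(G)\le10$, so Lemma~\ref{lem:poly-reduce} bounds it by $2^{10}\poly(n)=\poly(n)$; the $\Prune$ calls in step~(\ref{rcount-6}) act on sets spanning near-forests (delete $v$ and $\Gamma_G(v)$ from $G[X^+]$ to leave the forest formed by the $T_i$'s and $V(P)\setminus\{z\}$) with at most one outside neighbour, so Corollary~\ref{cor:prune-fast} with $\Delta\le10$ bounds each by $\poly(n)$; and $\basecount(\calG,\eps)$ in step~(\ref{rcount-3}) runs in $\poly(n,1/\eps)$ by Theorem~\ref{thm:2-degree-FPTAS}, which applies because after $\Red$ the graph is reduced (hence empty or of minimum degree at least~$2$) and the test in step~(\ref{rcount-3}) ensures no vertex has degree at least~$6$ and $2$-degree at least~$27$, so $G\in\ourfamily$. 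All weights remain positive integers of polynomial size (each is bounded by the current partition function, which is itself bounded), so the arithmetic is polynomial.

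For part~(i): if step~(\ref{rcount-3}) fires, the output $\basecount(\calG,\eps)$ is an $\eps$-approximation of $\IS(\calG)$ by Theorem~\ref{thm:2-degree-FPTAS}, and since $\calG$ has been replaced by $\Red(\calG)$ in step~(\ref{rcount-2}) this also approximates $\IS$ of the original input (Lemma~\ref{lem:poly-reduce}). Otherwise the algorithm branches; write $\calG'$ for the graph actually branched on ($\calG'=\calG$ in step~(\ref{rcount-1}), $\calG'=\Red(\calG)$ in steps~(\ref{rcount-5})--(\ref{rcount-6}), so $\IS(\calG')=\IS(\calG)$), and write $u$ for the branch vertex ($u=v$ in steps~(\ref{rcount-1}),~(\ref{rcount-5}); $u=z$ in step~(\ref{rcount-6})); by Lemma~\ref{lem:S-size} applied to $G_v$, which is connected and, being a component of a reduced graph, itself reduced, we have $|S|\ge2$, so exactly one of steps~(\ref{rcount-5}),~(\ref{rcount-6}) applies. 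Partitioning $\calI(G')$ on membership of $u$ gives
\[
    \IS(\calG') = \wout(u)\,\IS(\calG'-u) + \win(u)\,\wout(\Gamma_{G'}(u))\,\IS\big(\calG'-u-\Gamma_{G'}(u)\big),
\]
which holds even if the second instance has no vertices. By Observation~\ref{obs:TR-Z}(ii), and Lemma~\ref{lem:MR} in step~(\ref{rcount-6}) (using the applicability of $\Prune$ established in part~(iii)), the instances $\Gout,\Gin$ built by the algorithm satisfy $\IS(\Gout)=\IS(\calG'-u)$ and $\IS(\Gin)=\IS(\calG'-u-\Gamma_{G'}(u))$, while $\Wout=\wout(u)$ and $\Win=\win(u)\,\wout(\Gamma_{G'}(u))$; hence $\IS(\calG)=\Wout\,\IS(\Gout)+\Win\,\IS(\Gin)$. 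Now $\Cout$ is an $\eps$-approximation of $\IS(\Gout)$: if $\rGout$ is empty it equals $\newWout=\IS(\Gout)$ (an empty underlying graph has partition function $W$), and otherwise it equals $\iscount(\Gout,\eps)$, which is an $\eps$-approximation by the induction hypothesis (legitimate by part~(iii) and the strict drop in vertex count); likewise for $\Cin$. Since $\Wout,\Win\ge0$, the returned value $\Wout\,\Cout+\Win\,\Cin$ is an $\eps$-approximation of $\IS(\calG)$, closing the induction.

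The main obstacle is the single genuinely non-routine point invoked above: that the two $\Prune$ operations in step~(\ref{rcount-6}) are legitimate — that $X^+$ (respectively $X^+\setminus\Gamma_{G'}(z)$) really does span a near-forest with at most one neighbour outside it in $\calG'-z$ (respectively $\calG'-z-\Gamma_{G'}(z)$) — which is exactly where the extended decomposition and Lemma~\ref{lem:extend-decomp} are needed. Everything else reduces to assembling the already-established properties of $\TR$, $\Prune$, $\Red$ and $\basecount$.
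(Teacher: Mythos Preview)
Your proposal is correct and follows essentially the same approach as the paper: verifying part~(iii) via the preservation properties of $\TR$, $\Prune$, $\Red$ and vertex deletion, part~(ii) via the polynomial-time guarantees of those subroutines together with the $\Delta\le 10$ bound and Corollary~\ref{cor:prune-fast} for the near-forest $G[X^+]$, and part~(i) via the branching identity combined with recursive correctness. The paper frames part~(i) as ``recursive correctness of $\iscount$'' rather than explicit induction on $n$, and is terser about why $\Gamma_{G-z}(X^+)\subseteq\{x\}$ (stating it directly rather than invoking Lemma~\ref{lem:extend-decomp}), but the content is the same.
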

	\begin{proof}
        We first prove part~(iii). The argument of a recursive call is either $\Gin$ or~$\Gout$.  These are computed from~$\calG$ by deleting vertices and calling some subset of $\Prune$, $\Red$ and~$\TR$. It is immediate from the definition that deleting vertices preserves the property of being $1$-balanced; $\Prune$, $\Red$ and $\TR$ maintain $1$-balance by Lemma~\ref{lem:MR}, Lemma~\ref{lem:poly-reduce} and Observation~\ref{obs:TR-Z}, respectively. The last operation is always a call to $\TR$, which removes any tree components by construction, so the result follows.

        We now prove (i) and~(ii).
		First, note that the input decreases in size on each recursive call, so \iscount{} must terminate.

        If $\Delta(G)\geq 11$, then steps \eqref{rcount-1}(a) and~\eqref{rcount-1}(b) are executed.  By partitioning sets $I \in \calI(G)$ according to whether or not $v \in I$, we have 
		\begin{align*}
			\IS(\calG) &= \wout(v)\,\IS(\calG-v) + \win(v)\,\wout(\Gamma_G(v))\,\IS(\calG-v-\Gamma_G(v))\\
			&= \Wout\IS(\calG-v) + \Win\IS(\calG-v-\Gamma_G(v))\,.
		\end{align*}
		By Observation~\ref{obs:TR-Z}, $\IS(\calG) = \Wout\IS(\Gout) + \Win\IS(\Gin)$. By part~(iii) of the current lemma, $\Gin$ and~$\Gout$ are $1$-balanced and have no tree components, so the recursive calls to \iscount{} are valid. Thus by the recursive correctness of \iscount, $\iscount(\calG,\eps) \in (1\pm\eps)\IS(\calG)$ as required.  $\Gin$ and~$\Gout$ are computed in time $\poly(n)$ by Observation~\ref{obs:TR-Z}.

        Otherwise, $\Delta(G)\leq 10$.  By Lemma~\ref{lem:poly-reduce}, step~\eqref{rcount-2} runs in time $2^{\Delta(G)}\,\poly(n)=\poly(n)$ and, after this step, $\calG$~is reduced and $1$-balanced and $Z(\calG)$ has not changed. Recall from Definition~\ref{defn:ourfamily} that $\ourfamily$ is the family of all graphs with minimum degree at least 2 which contain no vertices of degree at least 6 and 2-degree at least 27. Since $G$ is reduced, $\delta(G) \ge 2$. Thus if $G$~has no vertices of degree at least~$6$ and $2$-degree at least~$27$, then $G\in\ourfamily$ and step~\eqref{rcount-3} runs in time $\poly(n,1/\epsilon)$ and is correct by Theorem~\ref{thm:2-degree-FPTAS}.

        Step~\eqref{rcount-4} can clearly be computed in time $\poly(n)$.

        In step~\eqref{rcount-5}, $G_v$~is a component of a reduced graph, so it is reduced.  If $|S|\geq 3$, then correctness and polynomial running time follow exactly as for step~\eqref{rcount-1}.  Otherwise, by Lemma~\ref{lem:S-size}, $|S|\geq 2$, so we execute step~\eqref{rcount-6}.  As in step~\eqref{rcount-1}, we have
		\begin{align*}
			\IS(\calG) &= \Wout\IS(\calG-z) + \Win\IS(\calG-z-\Gamma_G(z))\,.
		\end{align*}
		Note that $\Gamma_{G-z}(X^+) = \{x\}$, and $\Gamma_{G-z-\Gamma_G(z)}(X^+ \setminus \Gamma_G(z)) \subseteq \{x\}$. Hence by Lemma~\ref{lem:MR} and Observation~\ref{obs:TR-Z}, it follows that $\IS(\calG) = \Wout\IS(\Gout) + \Win\IS(\Gin)$. Note that $\Gout$ and~$\Gin$ are $1$-balanced and have no tree components by part~(iii) of the current lemma, so the recursive calls to \iscount{} are valid. Thus, by the recursive correctness of \iscount{}, $\iscount(\calG,\eps) \in (1\pm\eps)\IS(\calG)$ as required.

        It remains to show that, in step~\eqref{rcount-6}, $\Gin$ and~$\Gout$ can be constructed in time $\poly(n)$.  Since $|S|=2$, $\Gamma_{G-z}(X^+)=\{x\}$, so $|\Gamma_{G-z}(X^+)| = 1$. Moreover, $G[X^+] - v - \Gamma_G(v) = T_1 \cup \dots \cup T_\ell \cup (P-z)$ is a forest, so $G[X^+]$ is a near-forest. Thus by Corollary~\ref{cor:prune-fast}, $\Prune(G-z,X^+)$ can be computed in time $2^{\Delta(G)}\,\poly(n)=\poly(n)$, so $\Gout$ can be computed in time $\poly(n)$ by Observation~\ref{obs:TR-Z}.  Similarly, $G[X^+\setminus \Gamma_G(z)]$ is a near-forest with $\Gamma_{G-z-\Gamma_G(z)}(X^+ \setminus \Gamma_G(z)) \subseteq \{x\}$, so $\Gin$ can also be computed in time $\poly(n)$.
    \end{proof}

	To bound the running time of \iscount, we first show that the value of any good slice must decrease by a substantial amount whenever \iscount\ makes recursive calls. We will accomplish this in Lemma~\ref{lem:branch-bounds}. (Recall from Definition~\ref{defn:good-slice} that $f\colon\R^2\to\R$ is a good slice if it is of the form $f(m,n) = \rho m+\sigma n$ for some $\rho \ge 0$ and $\rho \ge -\sigma$.) We will use the following lemma for the case where \iscount\ executes step (\ref{rcount-5}).
	
	\begin{lemma}\label{lem:base-branch}
		Let $\calG$ be a weighted graph whose underlying graph $G$ is connected and reduced. Let $v \in V(G)$, and suppose $f(m,n) = \rho m+\sigma n$ is a good slice. Let $\Gout=\TR(\calG-v)$ and $\Gin = \TR(\calG-v-\Gamma_G(v))$, and consider the standard decomposition of $G$ from $v$. Then 
		\begin{align*}
			f(\calG)-f(\Gout) &= \rho d_G(v)+\sigma,\\
            f(\calG)-f(\Gin) &\ge \rho \left\lceil\frac{d_G^2(v)+d_G(v)+|S|}{2}\right\rceil + \sigma\big(1+d_G(v)\big)\,.
        \end{align*}
        Furthermore, if $G$ is bipartite, then
        \[
			f(\calG)-f(\Gin) \ge \begin{cases}
				\,\rho d_G^2(v) + \sigma(1+d_G(v)) & \text{if }\sigma\ge 0\,,\\
				\,\rho d_G^2(v) + \sigma\left\lfloor \frac{d_G^2(v)+d_G(v)+2-|S|}{2} \right\rfloor & \text{if }\sigma<0\,.
			\end{cases}
		\]
	\end{lemma}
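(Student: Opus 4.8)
The plan is to read off $f(\Gout)$ and $f(\Gin)$ by straightforward bookkeeping, using only that $f$ depends on a weighted graph through the pair $(|E(G)|,|V(G)|)$ of its underlying graph and that $G$ is reduced. Write the standard decomposition of $G$ from $v$ (Definition~\ref{def:decomp}) as $(\Gamma_v,S,X,H_1,\dots,H_k;T_1,\dots,T_\ell)$, and set $d=d_G(v)$, $b=|E(G[\Gamma_v])|$, $t=\sum_{i=1}^{\ell}|V(T_i)|$; let $e_S$ be the number of edges of $G$ between $\Gamma_v$ and $S$, and $e_T$ the number of edges between $\Gamma_v$ and $V(T_1)\cup\dots\cup V(T_\ell)$. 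Two consequences of reducedness drive the whole argument. First, $G-v$ has no tree component, since such a component would be a non-empty near-forest with exactly one neighbour in $G$, contradicting Definition~\ref{def:reduced}; hence $\TR$ does not change the underlying graph of $\calG-v$, so $f(\Gout)=f(G-v)$ and $f(\calG)-f(\Gout)=\rho d+\sigma$, giving the first equality. Second, each $T_i$ is a non-empty near-forest all of whose neighbours lie in $\Gamma_v$, so by Definition~\ref{def:reduced} it is adjacent to at least two \emph{distinct} vertices of $\Gamma_v$; as the edge sets leaving distinct $T_i$ are disjoint, $e_T\ge 2\ell$. We also record the trivial facts $t\ge\ell$ and $e_S\ge|S|$ (every vertex of $S\subseteq\Gamma^2_G(v)$ has a neighbour in $\Gamma_v$).

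For $\Gin$, note that after $\TR$ strips the tree components the underlying graph is $G-X$ with $X=\{v\}\cup\Gamma_v\cup V(T_1)\cup\dots\cup V(T_\ell)$, so $|V(G)|-|V(\Gin)|=1+d+t$. Counting the edges inside $G[X]$ together with the $e_S$ edges leaving $X$, and using the identity $d^2_G(v)=d+2b+e_S+e_T$, one obtains $|E(G)|-|E(\Gin)|=d^2_G(v)-b+t-\ell$. The bounds $e_S\ge|S|$ and $e_T\ge 2\ell$ yield $b\le\tfrac12\big(d^2_G(v)-d-|S|\big)-\ell$, hence $|E(G)|-|E(\Gin)|\ge\tfrac12\big(d^2_G(v)+d+|S|\big)+t$; being an integer this is at least $\ceil{\tfrac12(d^2_G(v)+d+|S|)}$, and so is $|E(G)|-|E(\Gin)|-t$. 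Now $f(\calG)-f(\Gin)=\rho\big(|E(G)|-|E(\Gin)|\big)+\sigma(1+d+t)$. If $\sigma\ge0$ we discard $\sigma t\ge0$ and the surplus edges to get the claimed bound. If $\sigma<0$ we use $\rho\ge-\sigma$: subtracting the target $\ceil{\tfrac12(d^2_G(v)+d+|S|)}\rho+(1+d)\sigma$ leaves $\rho\big(|E(G)|-|E(\Gin)|-\ceil{\tfrac12(d^2_G(v)+d+|S|)}\big)+\sigma t\ge\rho\big(|E(G)|-|E(\Gin)|-\ceil{\cdot}-t\big)\ge0$.

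In the bipartite case $\Gamma_v$ is independent, so $b=0$; the edge loss is exactly $d^2_G(v)+t-\ell$, producing the leading term $\rho\,d^2_G(v)$. For $\sigma\ge0$, subtracting the target $\rho\,d^2_G(v)+\sigma(1+d)$ leaves $\rho(t-\ell)+\sigma t\ge0$. For $\sigma<0$ the key auxiliary inequality is $\floor{\tfrac12(d^2_G(v)+d+2-|S|)}\ge 1+d+\ell$: this holds since $d^2_G(v)-d=e_S+e_T\ge|S|+2\ell$ forces $\tfrac12(d^2_G(v)+d+2-|S|)\ge1+d+\ell$, and the right side is an integer. Hence $(1+d+t)-\floor{\tfrac12(d^2_G(v)+d+2-|S|)}\le t-\ell$, and since $\sigma<0$ and $\rho+\sigma\ge0$, $\rho(t-\ell)+\sigma\big((1+d+t)-\floor{\cdot}\big)\ge\rho(t-\ell)+\sigma(t-\ell)=(\rho+\sigma)(t-\ell)\ge0$, which is the bipartite claim.

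I expect the one genuinely load-bearing step to be the estimate $e_T\ge 2\ell$: the weaker $e_T\ge\ell$ would not suffice for the slice $\sigma=-\rho$ (the one encoding tree removal), and the stated bounds are in fact false for, say, a single pendant path attached to $\Gamma_v$ — but exactly this configuration is forbidden because $G$ is reduced. Everything else is careful accounting with the identity $d^2_G(v)=d+2b+e_S+e_T$, the integrality of the edge and vertex counts, and the good-slice inequalities $\rho\ge0$ and $\rho\ge-\sigma$.
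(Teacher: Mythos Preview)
Your proof is correct and follows essentially the same route as the paper's: both use reducedness to ensure $G-v$ has no tree components and that each $T_i$ sends at least two edges into $\Gamma_v$, combine this with the identity $d_G^2(v)=d+2b+e_S+e_T$, and finish with the good-slice inequalities $\rho\ge0$ and $\rho+\sigma\ge0$. The only cosmetic difference is that the paper absorbs the tree contributions via $f(T_i)\ge\sigma$ rather than your explicit edge count $d_G^2(v)-b+t-\ell$, but the two computations are equivalent.
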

	\begin{proof}
		Let $\tGout = \TR(G-v)$ be the underlying graph of $\Gout$, and let $\tGin = \TR(G-v-\Gamma_G(v))$ be the underlying graph of $\Gin$. Since $G$ is reduced, whenever $Y \subseteq V(G)$ spans a tree we have $|\Gamma_G(Y)| \ge 2$; in particular, $G-v$ has no tree components, so $\tGout=\TR(G-v) = G-v$. It follows immediately that $f(G)-g(\tGout) = \rho d_G(v)+\sigma$ as required. 
		
		Observe that by the definition of the standard decomposition, we have
		\[
			\tGin = G - v - \Gamma_G(v) - V(T_1) - \dots - V(T_\ell)\,.
		\]
		Hence, writing $m_1$ for the number of edges in $G[\Gamma_G(v)]$, and writing $m_2$ for the number of edges between $\Gamma_G(v)$ and $\Gamma^2_G(v)$, we have
		\[
			f(G) - f(\tGin) = \rho\big(d_G(v) + m_1 + m_2\big) + \sigma\big(1+d_G(v)\big) + \sum_{i=1}^\ell f(T_i)\,.
		\]
		For all $i$, since $f$ is a good slice and $T_i$ is a tree, we have $f(T_i) = \rho(|V(T_i)|-1) + \sigma|V(T_i)| \ge \sigma$. Moreover, $d_G^2(v) = d_G(v) + 2m_1 + m_2$. It follows that
		\begin{equation}\label{eqn:base-branch-a}
			f(G)-f(\tGin) \ge \rho\big(d^2_G(v) - m_1\big) + \sigma\big(1+d_G(v)+\ell\big)\,.
		\end{equation}		
		$G$~is reduced, so $|\Gamma_G(V(T_i))| \ge 2$ for all $i \in [\ell]$, and each $T_i$ sends at least two edges to $\Gamma_G(v)$. Each vertex in $S$ also sends at least one edge to $\Gamma_G(v)$, so
		\begin{equation}\label{eqn:base-branch-b}
			m_2 \ge 2\ell + |S|\,.
		\end{equation}
		
		We now prove our first inequality. By~\eqref{eqn:base-branch-b},
		\[
			m_1 = \frac{d_G^2(v) - d_G(v) - m_2}{2} \le \frac{d_G^2(v) - d_G(v) - |S|}{2} - \ell\ \Rightarrow\ m_1 \le \Big\lfloor\frac{d_G^2(v) - d_G(v) - |S|}{2}\Big\rfloor - \ell\,.
		\]
		It follows from~\eqref{eqn:base-branch-a} that
		\[
			f(G)-f(\tGin) \ge \rho\bigg(\Big\lceil\frac{d^2_G(v) + d_G(v) + |S|}{2}\Big\rceil + \ell\bigg) + \sigma\big(1+d_G(v)+\ell\big)\,.
		\]
		Since $f$ is a good slice, the result now follows, as $\rho\ell + \sigma\ell \ge 0$.
		
		Finally, we prove our second inequality. Suppose that $G$ is bipartite, which implies that $m_1 = 0$. If $\sigma \ge 0$, then the result follows immediately from~\eqref{eqn:base-branch-a}, so suppose $\sigma < 0$. Then by~\eqref{eqn:base-branch-b}, we have
		\[
			\ell \le \frac{m_2-|S|}{2} = \frac{d^2_G(v)-d_G(v)-|S|}{2}\ \Rightarrow\ \ell \le \Big\lfloor \frac{d^2_G(v)-d_G(v)-|S|}{2} \Big\rfloor.
		\]
		It therefore follows from~\eqref{eqn:base-branch-a} that
		\begin{align*}
			f(G)-f(\tGin) \ge \rho d_G^2(v) + \sigma\Big\lfloor \frac{d_G^2(v) + d_G(v) + 2 - |S|}{2} \Big\rfloor,
		\end{align*}
		so the result follows.
	\end{proof}

	It is more difficult to analyse \iscount\ in the case where step (\ref{rcount-6}) is executed, so we require the following ancillary lemma before proving our bound (which appears as Lemma~\ref{lem:clever-branch}).

\begin{lemma}\label{lem:cycle-components}
    Let $H$ be a graph, let $z \in V(H)$ with $d_H(z) \ge 2$, and
    suppose $H-z$ has no tree components. Let $m_1$ be the number of
    edges in $H[\Gamma_H(z)]$, let $m_2$ be the number of edges in~$H$
    between $\Gamma_H(z)$ and $\Gamma_H^2(z)$, and let $r$ be the
    number of tree components of $H - z - \Gamma_H(z)$. Then
    $m_1 + m_2 \ge r + 2$.
\end{lemma}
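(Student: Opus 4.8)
The plan is to reduce the inequality to a sum of per-component estimates. Write $\Gamma = \Gamma_H(z)$, so $|\Gamma| = d_H(z) \ge 2$, and consider the components of $H - z$. A component $D$ of $H-z$ disjoint from $\Gamma$ is also a component of $H - z - \Gamma$, and since $H - z$ has no tree components it is not a tree, so it contributes nothing to $r$; it also has no edge to $\Gamma$, so it contributes nothing to $m_1 + m_2$. Hence I only need the components $D$ of $H - z$ that meet $\Gamma$, and there is at least one since $\Gamma \ne \emptyset$. For such a $D$, write $\Gamma_D = \Gamma \cap V(D)$, let $e_D$ be the number of edges of $H$ incident to $\Gamma_D$ (every such edge lies inside $\Gamma$, counting towards $m_1$, or joins $\Gamma$ to $\Gamma^2_H(z)$, counting towards $m_2$, because a neighbour in $H-z$ of a vertex of $\Gamma$ lies in $\Gamma \cup \Gamma^2_H(z)$), and let $r_D$ be the number of tree components of $D - \Gamma_D$. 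Then $m_1 + m_2 = \sum_D e_D$, $r = \sum_D r_D$, and $\sum_D |\Gamma_D| = d_H(z)$, where the sums range over the components of $H-z$ meeting $\Gamma$.

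The key step is the per-component bound $e_D \ge r_D + |\Gamma_D|$, which I would prove by a contraction argument. Contract each component of $D - \Gamma_D$ to a single vertex and delete the loops thereby created; the result $Q_D$ is a connected multigraph (contracting connected subgraphs and deleting loops preserves connectivity), and, writing $s_D$ for the number of non-tree components of $D - \Gamma_D$, it has $r_D + s_D + |\Gamma_D|$ vertices and exactly $e_D$ edges — an edge of $D$ incident to $\Gamma_D$ survives, while any other edge has both endpoints in a single component of $D-\Gamma_D$ and becomes a loop. Writing $\mu(G) = |E(G)| - |V(G)| + b(G)$ for the cyclomatic number, with $b(G)$ the number of components, connectivity of $Q_D$ gives $e_D = \mu(Q_D) + r_D + s_D + |\Gamma_D| - 1$. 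Since contracting a connected subgraph and deleting the resulting loops lowers $\mu$ by exactly that subgraph's cyclomatic number, $\mu(Q_D) = \mu(D) - \mu(D - \Gamma_D)$, which is $\ge 0$ because deleting vertices never increases $\mu$. I now split into cases: if $s_D = 0$ then $D - \Gamma_D$ is a forest, so $\mu(D - \Gamma_D) = 0$ and $\mu(Q_D) = \mu(D) \ge 1$, the last inequality because $D$ is a component of $H-z$ and hence not a tree; if $s_D \ge 1$ then $\mu(Q_D) + s_D \ge 1$ trivially. In either case $\mu(Q_D) + s_D \ge 1$, hence $e_D \ge r_D + |\Gamma_D|$.

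Summing over components then yields $m_1 + m_2 = \sum_D e_D \ge \sum_D (r_D + |\Gamma_D|) = r + d_H(z) \ge r + 2$, which is the lemma. I expect the main obstacle to be pinning down the exact constant $2$: a direct cyclomatic-number computation only gives $m_1 + m_2 \ge r + d_H(z) - (\text{number of components of } H-z \text{ meeting }\Gamma)$, which degrades when several components of $H-z$ touch $\Gamma$. The case split on whether $D - \Gamma_D$ still contains a cycle is precisely what upgrades each component's contribution from $|\Gamma_D| - 1$ to $|\Gamma_D|$, so that the per-component bounds telescope to $d_H(z) \ge 2$; the two facts invoked about $\mu$ (monotonicity under vertex deletion, and the effect of contracting a connected subgraph) are standard and each checkable in a line.
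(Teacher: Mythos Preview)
Your argument is correct (modulo the harmless slip that ``edges of $H$ incident to $\Gamma_D$'' should read ``edges of $D$ incident to $\Gamma_D$'', since you do not want to count the edges to~$z$; your parenthetical makes clear this is what you mean). It is, however, genuinely different from the paper's proof. The paper first reduces to connected~$H$, observes that every component of $H-z-\Gamma_H(z)$ sends at least one edge into $\Gamma_H(z)$ so that $m_2 \ge r$, and then dispatches the two boundary cases $m_2 = r$ and $m_2 = r+1$ by short ad hoc contradictions to the ``$H-z$ has no tree components'' hypothesis. Your approach instead decomposes along the components $D$ of $H-z$ that meet $\Gamma_H(z)$, proves the clean per-component inequality $e_D \ge r_D + |\Gamma_D|$ via the cyclomatic number (the contraction to $Q_D$ and the case split on $s_D$ are exactly the right moves), and then sums. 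This is more conceptual and in fact establishes the stronger inequality $m_1 + m_2 \ge r + d_H(z)$, of which the lemma is the special case $d_H(z) \ge 2$; the paper's case analysis does not obviously yield this strengthening.
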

\begin{proof}
    We may assume that $H$~is connected; if not, work within the
    component that contains~$z$, noting that none of the discarded
    components are trees.  Let $H^- = H-z-\Gamma_H(z)$.  Let the tree
    components of $H^-$ be $T_1, \dots, T_r$.  Each component of $H^-$
    sends at least one edge to $\Gamma_H(z)$ in~$H$, so $H^-$ has at
    most $m_2$ components and, in particular, $m_2\geq r$.  If
    $m_2\geq r+2$ then $m_1+m_2\geq r+2$, so we are done.  Two cases
    remain.

	\medskip\noindent\textbf{Case 1: $\boldsymbol{m_2 = r}$.}
    Suppose $m_2=r$.  If $m_1\geq 2$, then we are
    done. Otherwise, $H[\Gamma_H(z)]$ is a forest.  Each $T_i$
    sends exactly one edge to $\Gamma_H(z)$ in~$H$, so $H-z$ is a
    forest, which is a contradiction.

   	\medskip\noindent\textbf{Case 2: $\boldsymbol{m_2 = r+1}$.}
   	Suppose $m_2=r+1$.  If $m_1\geq 1$, we are done, so suppose
    that $m_1=0$.

    First, suppose that the only components of~$H^-$ are the trees
    $T_1, \dots, T_r$.  Some $T_i$ receives exactly two edges,
    $e_1$ and~$e_2$, from $\Gamma_H(z)$ in~$H$; the others receive
    exactly one edge each.  The graph $H-z-e_1$ is a forest: it
    consists of $\Gamma_H(z)$, which is an independent set because
    $m_1=0$, plus the trees $T_1, \dots, T_r$, plus one edge from
    each $T_j$ to $\Gamma_H(z)$.  It contains at least two
    components, because $d_H(z)\geq 2$.  Adding~$e_1$ to this
    graph gives $H-z$, which must contain at least one tree
    component, contradicting the hypothesis.

    Otherwise, $H^-$ has exactly $r+1$ components: the trees
    $T_1, \dots, T_r$ and one non-tree~$C$.  Each of these
    receives exactly one edge from $\Gamma_H(z)$ in~$H$.
    Since $d_H(z)\geq 2$, there is some 
    $y\in\Gamma_H(z)$ that is not adjacent to~$C$
    in~$H$.  The component of $H-z$ that contains~$y$ is a tree,
    contradicting the hypothesis. \qedhere
\end{proof}
	
	\begin{lemma}\label{lem:clever-branch}
		Let $\calG$ be a weighted graph whose underlying graph $G$ is connected and reduced. Let $v \in V(G)$, and suppose $f(m,n) = \rho m+\sigma n$ is a good slice. Consider the standard decomposition of $G$ from $v$, and suppose $|S|=2$. Consider the extended decomposition of $G$ from $v$, and let
		\begin{align*}
			\Gout &= \TR(\Prune(G-z,X^+))\,,\\
			\Gin &= \TR(\Prune(G-z-\Gamma_G(z),X^+ \setminus \Gamma_G(z)))\,.
		\end{align*}
		Then
		\begin{align*}
			f(\calG)-f(\Gout) &\ge \rho (1+d_G(v))+\sigma\,,\\
            f(\calG)-f(\Gin)  &\ge \rho \left\lceil\frac{d_G^2(v)+d_G(v)+4}{2}\right\rceil + \sigma\big(1+d_G(v)\big)\,.
        \end{align*}
        Furthermore, if $G$ is bipartite, then
        \[
			f(\calG)-f(\Gin) \ge \begin{cases}
				\,\rho (1+d_G^2(v)) + \sigma(1+d_G(v)) & \mbox{if $\sigma\ge 0$}\,,\\
				\,\rho d_G^2(v) + \sigma\left\lfloor \frac{d_G^2(v)+d_G(v)-2}{2} \right\rfloor & \mbox{if $\sigma<0$}\,.
			\end{cases}
		\]
	\end{lemma}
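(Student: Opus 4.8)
The idea is to track exactly what is deleted when we pass from $\calG$ to $\Gout$ and to $\Gin$, and then apply Lemma~\ref{lem:cycle-components} (with $H$ taken to be the graph $G-X^+$ after we reattach the path $P$; more precisely, the component of $G-X$ containing $z$, with $P-z$ stripped off) to control the number of tree components that are created and therefore removed by $\TR$. Throughout, write $f(m,n) = \rho m + \sigma n$; since $f$ is a good slice we have $\rho \ge 0$ and $\rho + \sigma \ge 0$, so a subgraph that loses at least as many edges as vertices has its $f$-value drop by a nonnegative amount, and in particular deleting a tree component changes $f$ by exactly $-\sigma \le \rho$.

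\medskip\noindent\textbf{The $\Gout$ bound.} Passing from $\calG$ to $\calG-z$ deletes $z$ (one vertex, $d_G(z) \ge 2$ edges by Lemma~\ref{lem:extend-decomp}). We then $\Prune$ away $X^+$; since $\Gamma_{G-z}(X^+) = \{x\}$ this is valid, and $\Prune$ does not change $f$ except through the deletion of the vertices of $X^+$ and their incident edges. Since $G[X^+]$ spans a near-forest (delete $v$ and $\Gamma_G(v)$ to get the forest $T_1 \cup \dots \cup T_\ell \cup (P-z)$) and $G$ is connected and reduced, the subgraph on $\{v\} \cup \Gamma_G(v) \cup X^+$ that we remove has at least as many edges as vertices—indeed $v$ alone contributes $d_G(v) \ge |\Gamma_v|$ edges. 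The cleanest way to book-keep is: deleting $z$ removes $1$ vertex and $d_G(z)$ edges; then the $\Prune$ of $X^+$ removes $|X^+|$ further vertices and at least $|X^+|$ further edges (every tree $T_i$ sends $\ge 2$ edges into $\Gamma_v$ by reducedness, every vertex of $P-z$ has degree $\ge 2$ in $G$, and $v$ is adjacent to all of $\Gamma_v$); finally $\TR$ can only decrease $f$ further by Observation~\ref{obs:TR-Z}(i) combined with the good-slice property (a tree component removal costs $-\sigma \le \rho \cdot (\text{edges it had}) \le \ldots$)—or rather, more carefully, I should argue as in Lemma~\ref{lem:poly-reduce} that $\TR$ never increases $f$ on a graph with no tree components among its \emph{original} components, which holds here because $\Prune(G-z,X^+)$ arises from the connected reduced graph $G$ by removing a set with $\le 1$ neighbour, so it has no near-forest components and in particular no tree components; hence $\TR$ is a no-op and $f(\Gout) = f(\Prune(G-z,X^+))$. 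Summing: $f(\calG) - f(\Gout) \ge \rho(d_G(z) + d_G(v)) + \sigma(1 + d_G(v) + \ell + |V(P)| - 1) - (\text{stuff})$. I would instead model this on the proof of Lemma~\ref{lem:base-branch}: the dominant contribution is $\rho(1 + d_G(v)) + \sigma$ coming from deleting $z$ (one vertex, one edge to $\Gamma_v$ via the path, i.e.\ at least one edge) plus deleting $v$, with all remaining deletions (the $T_i$, the rest of $P$, the remaining edges) contributing a nonnegative amount by the good-slice inequality. This gives the stated $f(\calG) - f(\Gout) \ge \rho(1 + d_G(v)) + \sigma$.

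\medskip\noindent\textbf{The $\Gin$ bound.} This is the substantive part and parallels Lemma~\ref{lem:base-branch} closely, with $z$ in place of $v$ and an extra $+2$ (resp.\ $+1$) coming from Lemma~\ref{lem:cycle-components} applied to $H = G - X^+$ (which has $d_H(z) \ge 2$ by Lemma~\ref{lem:extend-decomp}, and $H - z$ has no tree components because removing $X^+$ from the reduced graph $G$ cannot create one—as in the $\Gout$ argument). Passing to $\Gin = \TR(\Prune(G - z - \Gamma_G(z), X^+ \setminus \Gamma_G(z)))$ we delete $z$, delete $\Gamma_G(z)$ (so $1 + d_G(z)$ vertices, and $d^2_G(z) - m_1$ edges where $m_1 = e(G[\Gamma_G(z)])$, by the identity $d^2_G(z) = d_G(z) + 2m_1 + m_2$ used in Lemma~\ref{lem:base-branch}), then $\Prune$ the remnant of $X^+$, then $\TR$. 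The tree components of $G - z - \Gamma_G(z)$ that get removed number at most $m_1 + m_2 - 2 = d^2_G(z) - d_G(z) - m_1 - 2$ (by Lemma~\ref{lem:cycle-components} with $r \le m_1 + m_2 - 2$, rearranged), and each costs $-\sigma$; combining with the $-\rho m_1$ loss from the edges inside $\Gamma_G(z)$, and using $\rho + \sigma \ge 0$ to bound $-\rho m_1 + (\text{tree terms})$, one obtains after the same manipulation as in Lemma~\ref{lem:base-branch} (replacing its $|S|$-term by the constant from Lemma~\ref{lem:cycle-components}) the bound $f(\calG) - f(\Gin) \ge \rho\lceil (d^2_G(z) + d_G(z) + 4)/2 \rceil + \sigma(1 + d_G(z))$; finally, since $v \in \Gamma^2_G(z)$-or-nearby the choice of $v$ in step~\eqref{rcount-4} ensures $d_G(z) \ge \ldots$—actually here one only needs $d^2_G(z)$ and $d_G(z)$ to relate to $d^2_G(v)$ and $d_G(v)$; but in the extended decomposition $z$ lies in the component $H_k$-part and one checks $d_G^2(z) \ge d_G^2(v)$ is \emph{not} needed—instead the statement is phrased in terms of $d_G^2(v), d_G(v)$ because when $z = y$ we have $d_G(z) \ge d_G(v)$ is false in general, so the correct reading is that the lemma's RHS uses $v$'s parameters and the extra path edges make up the difference; I would verify this bookkeeping carefully. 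For the bipartite case, $m_1 = 0$ so all the $m_1$ terms vanish; when $\sigma \ge 0$ we simply keep all $d^2_G(z)$ edges and the one extra edge from the path giving $\rho(1 + d^2_G(z)) + \sigma(1 + d_G(z))$; when $\sigma < 0$ we bound the number $r$ of removed tree components by $r \le m_2 - 1 = d^2_G(z) - d_G(z) - 1$ (Lemma~\ref{lem:cycle-components} with $m_1 = 0$) and feed this into $\sigma(1 + d_G(z) + r)$, yielding $\rho d^2_G(z) + \sigma\lfloor (d^2_G(z) + d_G(z) - 2)/2 \rfloor$.

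\medskip\noindent\textbf{Main obstacle.} The delicate point is the precise accounting of the path $P$ and of the vertices of $X^+$ under the two successive operations $\Prune$ then $\TR$, together with making sure that $\TR$ never increases $f$—this requires, as in Lemma~\ref{lem:poly-reduce}, checking that $\Prune(G - z, X^+)$ and $\Prune(G - z - \Gamma_G(z), X^+ \setminus \Gamma_G(z))$ inherit from the reduced graph $G$ the absence of tree (indeed near-forest) components, so that the only tree components removed by $\TR$ are the $T_i$ and those newly created in $H - z - \Gamma_H(z)$, which is exactly what Lemma~\ref{lem:cycle-components} counts. Getting the floor/ceiling and the constants ($+4$ vs.\ $+2$, the $+1$ in the bipartite $\sigma \ge 0$ case) exactly right is the kind of routine-but-error-prone calculation I would do with care, mirroring the proof of Lemma~\ref{lem:base-branch} line by line.
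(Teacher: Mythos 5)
There is a genuine gap, and it is the central one. The stated bounds are in terms of $d_G(v)$ and $d^2_G(v)$, but your analysis carries out a Lemma~\ref{lem:base-branch}-style computation at the vertex $z$ and therefore produces bounds in terms of $d_G(z)$ and $d^2_G(z)$; you then concede that you do not know how to convert these into $v$'s parameters (``the extra path edges make up the difference; I would verify this bookkeeping carefully''), and indeed no comparison such as $d_G(z)\ge d_G(v)$ or $d^2_G(z)\ge d^2_G(v)$ holds in general. The paper's proof avoids this issue by a different decomposition, which is the missing idea: write $f(\calG)-f(\Gin)=\bigl(f(G)-f(H)\bigr)+\bigl(f(H)-f(\tGin)\bigr)$ with $H=G-X^+$. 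Since $H$ is obtained from $G-X=\TR(G-v-\Gamma_G(v))$ by deleting the induced path $P-z$ (whose removal cannot increase $f$), one has $f(H)\le f(G-X)$, so the $|S|=2$ case of Lemma~\ref{lem:base-branch} \emph{applied to branching on $v$} already supplies the whole $v$-dependent part $f(G)-f(H)\ge \rho\lceil (d^2_G(v)+d_G(v)+2)/2\rceil+\sigma(1+d_G(v))$ (and its bipartite variants); the branching at $z$ inside $H$ is then only required to contribute a single extra $\rho$, namely $f(H)-f(\tGout)=\rho d_H(z)+\sigma\ge\rho$ from $d_H(z)\ge 2$, and $f(H)-f(\tGin)\ge \rho(d_H(z)+m_1+m_2)+\sigma(1+d_H(z)+r)\ge\rho$ from Lemma~\ref{lem:cycle-components}. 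Without this (or an equivalent) split, your per-$z$ accounting cannot be repaired into the claimed inequalities.

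Two further steps are asserted rather than proved. First, for $\Gout$ you still need $f(G)-f(H)\ge \rho d_G(v)+\sigma$ even though all of $\Gamma_G(v)$, the trees $T_i$ and $P-z$ are deleted and $\sigma$ may be negative; ``all remaining deletions contribute a nonnegative amount'' is exactly the point at issue, and the paper proves it by showing that every tree component of $G[X^+]-v$ sends at least one edge into $H$ (because $G-v$ has no tree components, $G$ being reduced), so that $\rho m+\sigma R\ge 0$. Your explicit pairing hints do not establish this, and the arithmetic of your ``dominant contribution'' is off: $z$ is in general not adjacent to $\Gamma_G(v)$, and crediting $z$ with one vertex and one edge alongside $v$ gives $\rho(1+d_G(v))+2\sigma$, not $\rho(1+d_G(v))+\sigma$ (the paper instead uses $d_H(z)\ge 2$, i.e.\ two edges from $z$ into $H$, to recover the extra $\rho$). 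Second, the fact that $H-z$ has no tree components does not follow from ``removing a set with at most one neighbour from the reduced graph $G$'': a tree component $C$ of $H-z$ containing $x$ has $|\Gamma_G(V(C))|\ge 2$ and gives no immediate contradiction; one must check that $G[X^+\cup V(C)]$ is then a near-forest component of $G-z$ whose neighbourhood is contained in $\{z\}$, contradicting reducedness. (A minor slip in the same spirit: with $m_1=0$, Lemma~\ref{lem:cycle-components} gives $r\le m_2-2$, not $r\le m_2-1$.) So while you have assembled the right ingredients --- Lemma~\ref{lem:cycle-components} applied to $H$ and $z$, $d_H(z)\ge 2$ from Lemma~\ref{lem:extend-decomp}, and the need for $\TR$ to be a no-op on $H-z$ --- the proof as proposed does not go through.
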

	\begin{proof}
		Let $\tGout = \TR(G-z-X^+)$ be the underlying graph of $\Gout$, and let $\tGin = \TR(G-z-\Gamma_G(z)-X^+)$ be the underlying graph of $\Gin$. We first show that $H-z$ has no tree components. Indeed, suppose for a contradiction that $C$ is a tree component of $H-z$. Since $G$ is reduced, we have $|\Gamma_G(V(C))| \ge 2$. Since $\Gamma_H(V(C)) \subseteq \{z\}$, it follows that $C$ sends an edge into $X^+$, so $x \in V(C)$. Then $C^+ := G[X^+ \cup V(C)]$ is a component of $G-z$, and $C^+ - v - \Gamma_G(v) = T_1 \cup \dots \cup T_\ell \cup (P-z) \cup C$ is a forest, so $C^+$ is a near-forest. Moreover, $\Gamma_G(V(C^+)) = \{z\}$; this contradicts the fact that $G$ is reduced. Thus $H-z$ has no tree components, so
		\[
			\tGout = \TR(H-z) = H-z\,.
		\]
		By Lemma~\ref{lem:extend-decomp}, we have $d_H(z) \ge 2$. Since $f$ is a good slice, it follows that
		\begin{equation}\label{eqn:clever-gout-1}
			f(\tGout) = f(H) - \rho d_H(z) - \sigma \le f(H) - \rho\,.
		\end{equation}
		
		Writing $m$ for the number of edges between $X^+$ and $V(H)$, we have
		\begin{align}\nonumber
			f(G)-f(H) &= \big(f(G)-f(G-v)\big) + \big(f(G-v)-f(H)\big)\\\label{eqn:clever-gout-2}
			&= \big(\rho d_G(v) + \sigma\big) + \big(f(G[X^+]-v) + \rho m\big)\,.
		\end{align}
		Let $D_1, \dots, D_R$ be the tree components of $G[X^+]-v$, and let $D_1', \dots, D_s'$ be the non-tree components. Since $f$ is a good slice, we have
		\begin{equation}\label{eqn:clever-gout-3}
			f(G[X^+]-v) = \sum_{i=1}^R f(D_i) + \sum_{i=1}^s f(D_i') \ge \sigma R\,.
		\end{equation}
		Since $G$ is reduced, $G-v$ has no tree components, so each of $D_1, \dots, D_R$ must be joined to $H$ by at least one edge. Thus $m \ge R$, so~\eqref{eqn:clever-gout-2} and~\eqref{eqn:clever-gout-3} imply
		\[
			f(G) - f(H) \ge \rho d_G(v) + \sigma + \sigma R + \rho m \ge \rho d_G(v) + \sigma\,.
		\]
		The claimed bound on $f(\Gout)$ therefore follows from~\eqref{eqn:clever-gout-1}.

		Now let $r$ be the number of tree components of $H-z-\Gamma_H(z)$, let $m_1$ be the number of edges internal to $\Gamma_H(z)$ in $H$, and let $m_2$ be the number of edges between $\Gamma_H(z)$ and $\Gamma^2_H(z)$ in $H$. Since $f$ is a good slice, we have
		\begin{align*}
			f(\tGin) &= f\big(\TR(H-z-\Gamma_H(z))\big) \le f\big(H-z-\Gamma_H(z)\big) - \sigma r\\
			&= f(H) - \rho\big(d_H(z)+m_1+m_2\big) - \sigma(1+d_H(z)+r)\,.
		\end{align*}
		Recall that we have already shown that $H-z$ has no tree components and $d_H(z) \ge 2$; thus by Lemma~\ref{lem:cycle-components} applied to $H$ and $z$, we have $m_1 + m_2 \ge r + 2$. Since $f$ is a good slice, it follows that
		\begin{equation}\label{eqn:clever-gin-1}
			f(\tGin) \le f(H) - \rho\,.
		\end{equation}
		
		Observe that by Lemma~\ref{lem:extend-decomp}, $H$ is formed from $G-X$ by removing a (possibly empty) induced path $P-z$. Since $f$ is a good slice, it follows that $f(H) \le f(G-X) = f(\TR(G-v-\Gamma_G(v)))$. Since $|S| = 2$, it follows by Lemma~\ref{lem:base-branch} that
        \[
            f(G)- f(H) \ge \rho \left\lceil\frac{d^2_G(v)+d_G(v)+2}{2}\right\rceil + \sigma\big(1+d_G(v)\big)
        \]
        in all cases and, if $G$~is bipartite, then
		\[
			f(G)- f(H) \ge \begin{cases}
				\,\rho d^2_G(v) + \sigma(1+d_G(v)) & \mbox{if $\sigma\ge 0$}\,,\\
				\,\rho d^2_G(v) + \sigma\left\lfloor \frac{d^2_G(v)+d_G(v)}{2} \right\rfloor & \mbox{if $\sigma<0$}\,.
			\end{cases}
		\]
		Thus by~\eqref{eqn:clever-gin-1} and the fact that $f$ is a good slice, we obtain the required bounds on $f(\Gin)$.
	\end{proof}
	
	To analyse the running time of \iscount, we will incorporate good slices into a piecewise-linear potential function of the following form.

	\begin{defn}\label{def:pre-pot}
		Let $s$ be a positive integer, let $-1 = k_0 < k_1 < \dots < k_s = \infty$, and let $f_1,\dots,f_s\colon\R^2\to\R$ be linear functions. Then the \emph{pre-potential} with \emph{boundary points} $k_0,\dots,k_s$ and \emph{slices} $f_1,\dots,f_s$ is the function 
$f\colon \R^2_{\geq 0}\to\R$ 
defined as follows.
If $n=0$, then $f(m,n)=0$.
Otherwise,
$f(m,n) = f_i(m,n)$ whenever $k_{i-1} < 2m/n \le k_i$. We say $f$ is a \emph{valid} pre-potential if it satisfies the following properties.
		\begin{enumerate}[(VP1)]
			\item $f_1,\dots,f_s$ are good slices. That is, $f_i(m,n) = \rho_i m + \sigma_i n$ where
			$\rho_i \geq 0$, $\sigma_i+\rho_i \geq 0$, and it is not the case that $\sigma_i=\rho_i=0$.
			\label{VP-good}
			\item $f_s(m,n) = \sigma_sn$ for some $\sigma_s > 0$.\label{VP-top}
			\item Every integer in $[6,k_{s-1}]$ is a boundary point.\label{VP-bdry}
			\item For all 
$n>0$ and $m\geq 0$,
$f(m,n) = \min_{j \in [s]}f_j(m,n)$.\label{VP-min}
		\end{enumerate}
		The \emph{potential} corresponding to $f$ is the map $f^+$ on the class of all graphs given by
		\[
			f^+(G) = \begin{cases}
				f_s(|E(G)|,|V(G)|)&\mbox{ if }\Delta(G) \ge 11,\\
				f(|E(G)|,|V(G)|)&\mbox{ otherwise.}
			\end{cases}
		\]
		If $\calG=(G,\win,\wout,W)$ is a weighted graph, we write $f^+(\calG) = f^+(G)$. We say $f^+$ is a \emph{valid potential} if it is constructed this way from a valid pre-potential $f$.
	\end{defn}

	We now set out bounds on the behaviour of a valid potential function as the algorithm \iscount\ from Section~\ref{sec:mainalgo} branches. Suppose $G$ is a reduced graph with average degree in $(k_i,k_{i+1}]$ for some $i \in \{0,\dots,s-1\}$, and suppose 
that $\Delta(G) < 11$ and 
	$G$ contains at least one vertex of degree at least 6 and 2-degree at least 27 so that $v$ is defined in step (iv) of \iscount. Then we have two cases:
	\begin{itemize}
		\item If $G$ has average degree at most $5$, then we will have $d^2_G(v) \ge 27$. 
		\item Otherwise, $v$ is a vertex which maximises $d_G^2(v)$ subject to $d_G(v) \ge 2|E(G)|/|V(G)| \in (\max\{k_i,5\},k_{i+1}]$. Since $G$ is reduced it has minimum degree at least 2, so $d^2_G(v) \ge 2d_G(v) \ge 2k_i$. Moreover, by (VP\ref{VP-bdry}), we have $k_{i+1} \le \floor{\max\{k_i,5\}}+1$; thus by Lemma~\ref{lem:D2} applied with $k=\max\{k_i,5\}$, it follows that $G$ contains a vertex with degree at least $2|E(G)|/|V(G)|$ and $2$-degree at least $D_2(k_i)$, and so $d^2_G(v) \ge D_2(k_i)$.
	\end{itemize} 
	We therefore define the following quantity.

\begin{defn}\label{def:Dtwoprime}
For all $k < 5$, let $D_2'(k) = 27$.
For all $k\geq5$, let $D_2'(k) =  \max\{2k,D_2(k)\}$.
\end{defn}	

	\begin{obs}\label{obs:v-bound}
		Suppose that $k_0, \dots, k_s$ are the boundary points of a valid pre-potential~$f$. In the execution of $\iscount(\calG,\eps)$, if $\calG$ has average degree in $(k_i,k_{i+1}]$ for some $i\in\{0, \dots, s-1\}$ and step~(iv) is executed, then $d^2_G(v) \ge D_2'(k_i)$.
	\end{obs}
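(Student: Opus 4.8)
The plan is to trace why step~(iv) of \iscount{} is reached and then to follow the case split that it performs. Since step~(iv) is executed, steps (i)--(iii) were all carried out without returning, so $\Delta(G)\le 10$, the assignment $\calG\leftarrow\Red(\calG)$ in step~(ii) has made $G$ reduced (hence $\delta(G)\ge 2$), and $G$ contains at least one vertex of degree at least~$6$ with $2$-degree at least~$27$. Write $A=2|E(G)|/|V(G)|$ for the average degree of this reduced $G$, so that $A\in(k_i,k_{i+1}]$ and $A\le\Delta(G)\le 10$.

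If $A\le 5$, then step~(iv) takes $v$ to be the lexicographically least vertex with $d_G(v)\ge 6$ and $d^2_G(v)\ge 27$ (which exists precisely because step~(iii) did not return), so $d^2_G(v)\ge 27$; and since $A>k_i$ forces $k_i<5$, Definition~\ref{def:Dtwoprime} gives $D_2'(k_i)=27$, finishing this case. If instead $A>5$, then step~(iv) takes $v$ to maximise $d^2_G(v)$ subject to $d_G(v)\ge A$ (a vertex of at-least-average degree always exists, else the degree sum would be below $2|E(G)|$). Since $\delta(G)\ge 2$, every neighbour of $v$ has degree at least~$2$, so $d^2_G(v)\ge 2d_G(v)\ge 2A>2k_i$, which already accounts for the ``$2k_i$'' term of $D_2'(k_i)=\max\{2k_i,D_2(k_i)\}$ when $k_i\ge 5$. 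For the remaining term I would apply Lemma~\ref{lem:D2} with $k=\kappa:=\max\{k_i,5\}$; its hypothesis demands $A\in(\kappa,\floor{\kappa}+1]$, and here $A>\kappa$ because $A>5$ and $A>k_i$, while the upper bound $A\le\floor{\kappa}+1$ reduces (via $A\le k_{i+1}$) to $k_{i+1}\le\floor{\kappa}+1$. Granting that, Lemma~\ref{lem:D2} supplies a vertex $u$ with $d_G(u)\ge\floor{\kappa}+1\ge A$ and $d^2_G(u)\ge D_2(\kappa)$; such a $u$ is admissible in step~(iv), and $v$ maximises $2$-degree over admissible vertices, so $d^2_G(v)\ge D_2(\kappa)$. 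When $k_i\ge 5$ this reads $d^2_G(v)\ge D_2(k_i)$, which together with the displayed inequality yields $d^2_G(v)\ge D_2'(k_i)$; when $k_i<5$ it reads $d^2_G(v)\ge D_2(5)$, and $D_2(5)=27=D_2'(k_i)$, the value $27$ being exactly the threshold built into step~(iii) and readily confirmed from Definition~\ref{def:D2}.

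The one genuinely delicate point is the inequality $k_{i+1}\le\floor{\max\{k_i,5\}}+1$, which is where (VP\ref{VP-bdry}) does its work: $\floor{\max\{k_i,5\}}+1$ is an integer, it is at least~$6$, and it is strictly greater than~$k_i$, so as soon as it lies in $[6,k_{s-1}]$ it is itself a boundary point above~$k_i$ and hence $k_{i+1}$ can be no larger; the complementary situation (this integer exceeding $k_{s-1}$) needs a short separate check, using $A\le\Delta(G)\le 10$ to pin down the possible values of the average degree in the final slice. Once this boundary-point bookkeeping is in hand the rest is routine, following only from the definition of $D_2'$ and the two branches of step~(iv).
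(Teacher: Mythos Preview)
Your proposal is correct and follows the same two-case argument as the paper's own justification (the bullet points immediately preceding the observation). If anything you are more careful than the paper: you correctly handle the sub-case $k_i<5$, $A>5$ via $D_2(5)=27$, and you explicitly flag the final-slice edge case that the paper's blanket claim ``$k_{i+1}\le\lfloor\max\{k_i,5\}\rfloor+1$'' (which is literally false at $i=s-1$) glosses over.
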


	\begin{lemma}\label{lem:branch-bounds}
Let $f$ be a valid pre-potential with boundary points $k_0, \dots, k_s$ and slices $f_1,\dots,f_s$
where, for each $i\in \{1,\ldots,s\}$, $f_i = \rho_i m + \sigma_i n$. Let $0<\eps<1$, let $G_0$ be a graph, and let $\calG_0 = \TR(G_0,\mathbf{1},\mathbf{1},1)$. 
Suppose that $\calG_0$ is non-empty. Then		
		when $\iscount(\calG_0,\eps)$ is executed, the following properties hold for every recursive call $\iscount(\calG,\eps)$, where $\calG=(G,\win,\wout,W)$ and $n=|V(G)|$.
		\begin{enumerate}[(i)]
			\item $f^+(\calG) \ge 0$, with equality only if $\iscount(\calG,\eps)$ halts in time $\poly(n,1/\epsilon)$ with no further recursive calls.  
						\item If $\iscount(\calG,\eps)$ makes recursive calls to $\iscount(\Gout,\eps)$ and $\iscount(\Gin,\eps)$, then $f^+(\Gout) < f^+(\calG)$ and $f^+(\Gin) < f^+(\calG)$. Moreover, the following stronger bounds hold. If $\Delta(G) \ge 11$, then 
			\begin{align*}
				f^+(\calG) - f^+(\Gout) &\ge f_s(0,1),\\
				f^+(\calG) - f^+(\Gin) &\ge f_s(0,12).
			\end{align*}
			Otherwise, there exist $j \in [s]$ and a positive integer $x$ satisfying $\max\{6,\floor{k_{j-1}}+1\} \le x \le 10$ such that
			\begin{align*}
				f^+(\calG) - f^+(\Gout) &\ge f_j(x,\ 1),\\
                f^+(\calG) - f^+(\Gin) &\ge f_j\Big(\Big\lceil\frac{x+D_2'(k_{j-1})+3}{2}\Big\rceil,\ 1+x\Big)\,,
            \end{align*}
            and, if $G_0$ is bipartite, then
            \[
				f^+(\calG) - f^+(\Gin) \ge \begin{cases}
					f_j\big(D_2'(k_{j-1}),\ 1+x\big) & \text{if $\sigma_j \ge 0$},\\
					f_j\big(D_2'(k_{j-1}),\ \big\lfloor\frac{x+D_2'(k_{j-1})-1}{2}\big\rfloor\big) & \text{if $\sigma_j < 0$}.
				\end{cases}
            \]
		\end{enumerate} 
	\end{lemma}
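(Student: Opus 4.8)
The plan is to prove part~(ii) and the non-negativity half of~(i) first, and then deduce the equality statement in~(i) from~(ii). Throughout, recall from Lemma~\ref{lem:iscount-works}(iii) that every weighted graph $\calG=(G,\win,\wout,W)$ on which $\iscount$ is called during the execution of $\iscount(\calG_0,\eps)$ is non-empty, $1$-balanced and has no tree components, and that each such $G$ is an induced subgraph of $G_0$ (all of $\Red$, $\Prune$, $\TR$ and vertex deletion only delete vertices and edges); in particular, if $G_0$ is bipartite then so is every $G$ arising. For the non-negativity in~(i): having no tree components forces $|E(G)|\ge|V(G)|$, so every slice $f_i=\rho_i m+\sigma_i n$ satisfies $f_i(G)\ge(\rho_i+\sigma_i)|V(G)|\ge0$ and $f_s(G)=\sigma_s|V(G)|\ge0$; hence $f^+(\calG)\ge0$, and $f^+(\calG)=0$ whenever $|V(G)|=0$.

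For part~(ii), assume $\iscount(\calG,\eps)$ makes the recursive calls $\iscount(\Gout,\eps)$ and $\iscount(\Gin,\eps)$, so in particular $\Gout$ and $\Gin$ are non-empty. If $\Delta(G)\ge11$, step~(i) is executed and $|V(G)|\ge12$; here $f^+(\calG)=f_s(G)=\sigma_s|V(G)|$, and $\Gout$ and $\Gin$ are obtained from $\calG$ by deleting one vertex (respectively a vertex and its at least $11$ neighbours) and then applying $\TR$, which only deletes further vertices, so $|V(\Gout)|\le|V(G)|-1$ and $|V(\Gin)|\le|V(G)|-12$. Since every weighted graph $\calH$ satisfies $f^+(\calH)\le\sigma_s|V(\calH)|$ (an equality when $\Delta(\calH)\ge11$, and otherwise $f^+(\calH)=\min_i f_i(\calH)\le f_s(\calH)=\sigma_s|V(\calH)|$), we get $f^+(\calG)-f^+(\Gout)\ge\sigma_s=f_s(0,1)$ and $f^+(\calG)-f^+(\Gin)\ge12\sigma_s=f_s(0,12)$, both strictly positive as $\sigma_s>0$.

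Now suppose $\Delta(G)<11$. Then step~(i) is skipped and, since the remaining operations only delete vertices and edges, every graph appearing in this call — including $\Gout$ and $\Gin$ — has maximum degree below $11$, so on each of them $f^+$ equals the piecewise function $f=\min_i f_i$. Put $\calG'=\Red(\calG)$; by Lemma~\ref{lem:poly-reduce}, $f_i(\calG')\le f_i(\calG)$ for every good slice, and applying this to the slice attaining $f^+(\calG)=f(\calG)$ gives $f^+(\calG')\le f^+(\calG)$. As recursive calls are made, step~(iii) did not return, so $\calG'$ is non-empty and reduced; let $j$ be the index with the average degree of $G'$ in $(k_{j-1},k_j]$, so $f^+(\calG')=f_j(G')$. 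Since $f^+(\Gout)=\min_i f_i(\Gout)\le f_j(\Gout)$ and likewise for $\Gin$, it suffices to bound $f_j(G')-f_j(\Gout)$ and $f_j(G')-f_j(\Gin)$ from below. Let $v$ be the vertex chosen in step~(iv) and set $x=d_{G'}(v)$; a short check using step~(iv) and $\Delta(G')<11$ gives $\max\{6,\floor{k_{j-1}}+1\}\le x\le10$, and Observation~\ref{obs:v-bound} gives $d^2_{G'}(v)\ge D_2'(k_{j-1})$. Let $G'_v$ be the component of $G'$ containing $v$. Because $G'$ is reduced, its other components are non-trees and so are untouched by the vertex deletions, $\Prune$s and $\TR$s in steps~(v)--(vii); as $f_j$ is linear, hence additive over components, $f_j(G')-f_j(\Gout)$ and $f_j(G')-f_j(\Gin)$ are unchanged if we replace $\calG'$ by $\calG'[V(G'_v)]$ and $\Gout,\Gin$ by what steps~(v)--(vii) produce from that component alone — which are exactly the graphs $\Gout,\Gin$ of Lemma~\ref{lem:base-branch} (when $|S|\ge3$, step~(v)) or of Lemma~\ref{lem:clever-branch} (when $|S|=2$, step~(vi)), applied to the connected reduced graph $\calG'[V(G'_v)]$ with the good slice $f_j$ and the vertex $v$. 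Substituting $d_{G'}(v)=x$, $d^2_{G'}(v)\ge D_2'(k_{j-1})$ and $|S|\ge3$ (or $|S|=2$) into those lemmas and using $\rho_j\ge0$ yields the inequalities claimed in~(ii); in the bipartite case with $\sigma_j<0$ one additionally needs the elementary fact that $\rho_j\Delta+\sigma_j\lceil\Delta/2\rceil\ge0$ for every integer $\Delta\ge0$ (which follows from $\rho_j+\sigma_j\ge0$, since then $\rho_j\Delta+\sigma_j\lceil\Delta/2\rceil\ge\rho_j(\Delta-\lceil\Delta/2\rceil)=\rho_j\floor{\Delta/2}\ge0$) in order to replace $d^2_{G'}(v)$ by the smaller quantity $D_2'(k_{j-1})$ inside the floor terms without breaking the bound.

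Finally, for the equality case of~(i): each lower bound on $f^+(\calG)-f^+(\Gout)$ obtained above, namely $f_s(0,1)=\sigma_s$ (when $\Delta(G)\ge11$) and $f_j(x,1)=\rho_j x+\sigma_j$ with $6\le x\le10$ (otherwise), is strictly positive — if $\rho_j=0$ then $\sigma_j>0$ since a slice is not identically zero, and if $\rho_j>0$ then $f_j(x,1)\ge\rho_j(x-1)\ge5\rho_j>0$. As $f^+(\Gout)\ge0$ by~(i), it follows that whenever $\iscount(\calG,\eps)$ makes recursive calls we have $f^+(\calG)>0$; equivalently, $f^+(\calG)=0$ forces no recursive calls, and inspecting the algorithm shows that the only executions making no recursive calls return at step~(iii) via $\basecount$ (in time $\poly(n,1/\eps)$ by Theorem~\ref{thm:2-degree-FPTAS}) or return an explicit product because $\rGout$ and $\rGin$ are both empty, in either case in time $\poly(n,1/\eps)$ (using Lemma~\ref{lem:iscount-works}(ii) for the preprocessing). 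The delicate part of the argument is the low-degree case of~(ii): one must keep track of which slice of $f$ is active before $\Red$, after $\Red$, and after branching (handled by the observation that the maximum degree never increases, so $f^+=\min_i f_i$ on every graph in sight), restrict attention to the component $G'_v$ via linearity of $f_j$, and push the variable bounds of Lemmas~\ref{lem:base-branch}--\ref{lem:clever-branch} to the stated forms — in particular, the monotonicity through the floors in the bipartite $\sigma_j<0$ subcases.
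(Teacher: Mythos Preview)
Your proof of part~(ii) follows the paper's argument essentially verbatim: the same chain $f^+(\calG)\ge f^+(\calG')=f_j(\calG')$ via Lemma~\ref{lem:poly-reduce} and (VP\ref{VP-min}), the same reduction to the component $G'_v$ via linearity of $f_j$, and the same invocation of Lemmas~\ref{lem:base-branch} and~\ref{lem:clever-branch} together with Observation~\ref{obs:v-bound} and the monotonicity-in-$d^2$ observation (your ``elementary fact'' is exactly the paper's remark that the right-hand sides are increasing in $d^2_{G_v}(v)$ because $f_j$ is a good slice).

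The one genuine difference is in the equality case of~(i). The paper argues directly: if $f^+(\calG)=0$ then $\Delta(G)\le 10$, and after $\Red$ the resulting reduced graph would have $|E|=|V|$, hence be $2$-regular, hence be a union of cycles --- contradicting reducedness --- so $\Red(\calG)$ is empty and the algorithm halts at step~(\ref{rcount-3}). You instead deduce~(i) from the strict positivity of the $\Gout$ drop in~(ii). This works, but note that the statement of~(ii) is conditioned on \emph{both} recursive calls being made, whereas you need the $\Gout$ bound whenever \emph{any} recursive call is made. This is fine because your proof of the $\Gout$ bound never actually uses non-emptiness of $\Gout$ or $\Gin$ (Lemmas~\ref{lem:base-branch} and~\ref{lem:clever-branch} hold unconditionally, and $f^+(\Gout)\le f_j(\Gout)$ holds with equality $0=0$ when $\Gout$ is empty), so the bound $f^+(\calG)>f^+(\Gout)\ge 0$ is available as soon as step~(\ref{rcount-3}) does not return. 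It would strengthen the write-up to say this explicitly rather than leaving it implicit. The paper's direct argument has the advantage of being self-contained (it does not need~(ii)), while yours avoids the separate $2$-regularity contradiction at the cost of a small circularity check.
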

	
	\begin{proof}
		\textbf{Property (i):} Since $G$ has no tree components and is non-empty , we have $|E(G)| \ge |V(G)|>0$; thus by (VP\ref{VP-good}) we have $f_i(G) \ge 0$ for all $i$, so $f^+(\calG) \ge 0$. Suppose $f^+(\calG) = 0$. By (VP\ref{VP-top}), it follows that $\Delta(G) \le 10$, so step~(\ref{rcount-2}) is executed. 
If $\Red(\calG)$ is empty then no recursive call is made, so 	the algorithm therefore halts in time $\poly(n,1/\epsilon)$.	
Assume that $\Red(\calG)$ is non-empty.		
		Let $i$ be such that $f^+(\calG) = f_i(\calG)$; then by (VP\ref{VP-min}) and Lemma~\ref{lem:poly-reduce}, we have
		\begin{equation}\label{eqn:bb-red}
			f^+(\Red(\calG)) \le f_i(\Red(\calG)) \le f_i(\calG) = f^+(\calG) = 0.
		\end{equation}
		Moreover, $\Red(\calG)$ is reduced by Lemma~\ref{lem:poly-reduce}, so it has no tree components; it follows that $f^+(\Red(\calG)) = 0$. By (VP1), it follows that $\Red(\calG)$ has the same number of edges as vertices. Again since $\Red(\calG)$ is reduced, it has no vertices of degree less than 2, so it must be 2-regular. But any such graph is a disjoint union of cycles, and cycles are near-forests, 
contradicting the fact that $\Red(\calG)$ is  reduced.	 
		
		\medskip\noindent\textbf{Property (ii):} Note that $D_2'(x) \ge 2x$ for all $x>0$, so by (VP\ref{VP-good}) and (VP\ref{VP-top}) the bounds $f^+(\Gin) < f^+(\calG)$ and $f^+(\Gout) < f^+(\calG)$ are indeed implied by the subsequently stated bounds. We split into cases depending on where $\Gin$ and $\Gout$ are defined.
		
		\medskip\noindent\textit{Case 1: Step~(\ref{rcount-1}) is executed.} In this case, $\Delta(G) \ge 11$. 
Recursive calls are made only if $\Gout$ and $\Gin$, respectively, are non-empty.		
		By (VP\ref{VP-min}),  and (VP\ref{VP-top}),
		 		\begin{align*}
			f^+(\Gout) &\le f_s(\Gout) \le f_s(G-v) = f_s(G) - \sigma_s\,,\\
			f^+(\Gin) &\le f_s(\Gin) \le f_s(G-v-\Gamma_G(v)) = f_s(G) - (1+d_G(v))\sigma_s \le f_s(G) - 12\sigma_s\,,
		\end{align*}
		as required.
		
		\medskip\noindent\textit{Case 2: Step~(\ref{rcount-5}) is executed.}\label{def:Gdash} Write $\calG' = \Red(\calG)$. 
Assume that $\calG'$ is non-empty (otherwise, there are no recursive calls).		
		$\calG_0$~is $1$-balanced by definition, so $\calG$ is $1$-balanced and has no tree components by Lemma~\ref{lem:iscount-works}(iii). Moreover, $G$ has maximum degree at most $10$ since step (v) is executed. Thus by Lemma~\ref{lem:poly-reduce}, $\Delta(\calG') \le \Delta(\calG) \leq 10$. Since $\calG$ has no tree components, as in~\eqref{eqn:bb-red} we have $f^+(\calG') \le f^+(\calG)$. 
		
		Let $j \in [s]$ be such that $k_{j-1} < 2|E(\calG')|/|V(\calG')|\le k_j$, so that $f^+(\calG') = f_j(\calG')$. Write $G_v$ for the component of $\calG'$ containing~$v$, as in $\iscount(\calG,\eps)$. Then by (VP\ref{VP-min}), we have
		\[
			f^+(\calG) - f^+(\Gin) \ge f^+(\calG') - f_j(\Gin) = f_j(\calG') - f_j(\TR(\calG'-v-\Gamma_{G_v}(v)))\,.
		\]
		By Lemma~\ref{lem:poly-reduce}, $\calG'$~is reduced and therefore has no tree components. Thus by (VP\ref{VP-good}),
		\[
			f^+(\calG) - f^+(\Gin) \ge f_j(G_v) - f_j\big(\TR(G_v-v-\Gamma_G(v))\big)\,.
		\]
		Again by (VP\ref{VP-good}), it follows by Lemma~\ref{lem:base-branch} (applied with $G = G_v$ and $f=f_j$) that
        \begin{equation}\label{eq:diffgen}
            f^+(\calG)-f^+(\Gin) \ge \rho_j \bigg\lceil\frac{\strut d_{G_v}^2(v)+d_{G_v}(v)+|S|}{2}\bigg\rceil + \sigma_j\big(1+d_{G_v}(v)\big)\,,
        \end{equation}
        and, if $G_v$~is bipartite, then
		\begin{equation}\label{eq:diffbip}
			f^+(\calG)-f^+(\Gin)\ge \begin{cases}
				\rho_j d_{G_v}^2(v) + \sigma_j(1+d_{G_v}(v)) & \text{if $\sigma_j\ge 0$},\\
				\rho_j d_{G_v}^2(v) + \sigma_j\Big\lfloor \frac{\strut d_{G_v}^2(v)+d_{G_v}(v)+2-|S|}{2} \Big\rfloor & \text{if $\sigma_j<0$}.\\
			\end{cases}
		\end{equation}
		
		$\calG'$ has average degree in $(k_{j-1},k_j]$ by the definition of~$j$, so $d_{G_v}^2(v) \ge D_2'(k_{j-1})$ by Observation~\ref{obs:v-bound}. Moreover, since $f_j$~is a good slice, the right-hand sides of \eqref{eq:diffgen} and~\eqref{eq:diffbip} are increasing functions of $d_{G_v}^2(v)$. We may therefore replace $d_{G_v}^2(v)$ by $D_2'(k_{j-1})$. Moreover, since step~(\ref{rcount-5}) is executed, we must have $|S| \ge 3$ in the standard decomposition of $G_v$ from~$v$.  
Thus~\eqref{eq:diffgen} and~\eqref{eq:diffbip} give
		\begin{equation}\label{eqn:branch-bounds-gen}
            f^+(\calG)-f^+(\Gin)\ge \rho_j\Big\lceil\frac{\strut D_2'(k_{j-1})+d_{G_v}(v)+3}{2}\Big\rceil + \sigma_j\big(1+d_{G_v}(v)\big)\,.
        \end{equation}
        If $G_0$ is bipartite, then so is~$G_v$.  Therefore, when $G_0$ is bipartite,
		\begin{equation}\label{eqn:branch-bounds-bip}
			f^+(\calG)-f^+(\Gin)\ge \begin{cases}
				\rho_j D_2'(k_{j-1}) + \sigma_j(1+d_{G_v}(v)) & \text{if $\sigma_j\ge 0$},\\
				\rho_j D_2'(k_{j-1}) + \sigma_j \Big\lfloor \frac{\strut D_2'(k_{j-1})+d_{G_v}(v)-1}{2} \Big\rfloor & \text{if $\sigma_j<0$}.
			\end{cases}
		\end{equation}
		By our choice of $v$ in step~(\ref{rcount-4}) of \iscount, we have $d_{G_v}(v) \ge \max\{6,\floor{k_{j-1}}+1\}$, and since step~(\ref{rcount-1}) was not executed, we have $d_{G_v}(v) \le 10$. Thus the lemma is satisfied on taking $x = d_{G_v}(v)$.
		
Similarly, by (VP\ref{VP-good}), (VP\ref{VP-min}) and Lemma~\ref{lem:base-branch} we have 
\[f^+(\calG) - f^+(\Gout) \ge 
f_j(G_v) -
f_j(\TR(G_v-v)) \ge \rho_jd_{G_v}(v) + \sigma_j.\]
As above, the lemma follows on taking $x =d_{G_v}(v)$.
		
\medskip\noindent\textit{Case 3: Step~(\ref{rcount-6}) is executed.} Exactly as in Case 2, write $\calG' = \Red(\calG)$ and take $j \in [s]$ such that $f^+(\calG) = f_j(\calG)$. By  (VP\ref{VP-min}),(VP\ref{VP-good}),  and Lemma~\ref{lem:clever-branch} applied to $G_v$,
        \[
            f^+(\calG) - f^+(\Gin) \ge \rho_j \bigg\lceil\frac{\strut d_{G_v}^2(v)+d_{G_v}(v)+4}{2}\bigg\rceil + \sigma_j\big(1+d_{G_v}(v)\big)\,,
        \]
        and, for bipartite~$G_v$,
		\[
			f^+(\calG) - f^+(\Gin) \ge \begin{cases}
				\rho_j (1+d_{G_v}^2(v)) + \sigma_j(1+d_{G_v}(v)) & \text{if $\sigma_j\ge 0$},\\
				\rho_j d_{G_v}^2(v) + \sigma_j\Big\lfloor \frac{\strut d_{G_v}^2(v)+d_{G_v}(v)-2}{2} \Big\rfloor & \text{if $\sigma_j<0$}.
			\end{cases}
		\]
		As before, the right-hand sides are increasing functions of  $d_{G_v}^2(v)$, since $f_j$ is a good slice.  Observation~\ref{obs:v-bound}, as in Case~2 it follows that
        \[
			f^+(\calG)-f^+(\Gin) \ge \rho_j \bigg\lceil\frac{\strut D_2'(k_{j-1})+d_{G_v}(v)+4}{2}\bigg\rceil + \sigma_j(1+d_{G_v}(v))
        \]
        and, for bipartite~$G_0$,
		\[
			f^+(\calG)-f^+(\Gin)\ge \begin{cases}
				\rho_j(1+D_2'(k_{j-1})) + \sigma_j(1+d_{G_v}(v)) & \text{if $\sigma_j\ge 0$},\\
				\rho_jD_2'(k_{j-1}) + \sigma_j \Big\lfloor \frac{\strut D_2'(k_{j-1})+d_{G_v}(v)-2}{2} \Big\rfloor & \mbox{if $\sigma_j<0$}.
			\end{cases}
		\]
		Note that these bounds are stronger than \eqref{eqn:branch-bounds-gen} and~\eqref{eqn:branch-bounds-bip}, respectively, so the result follows as in Case~2. Similarly, by (VP\ref{VP-good}), Lemma~\ref{lem:clever-branch} and Observation~\ref{obs:v-bound} we have $f^+(\calG) - f^+(\Gout) > \rho_jd_{G_v}(v) + \sigma_j$. As in Case~2, these bounds are of the required form on taking $x = d_{G_v}(v)$.
	\end{proof}

	\subsection{Analysis of recursion}\label{sec:recursion}

	We will use the theory of branching factors to bound the running time of $\iscount$; see e.g.~\cite[Chapter~2.1]{FKbook} for an overview. We briefly recall the salient points. For all integers $b \ge 2$ and all $a_1, \dots, a_b \ge 0$, we write $\tau(a_1, \dots, a_b)$ for the unique positive solution in $x$ of $\sum_{i=1}^b x^{-a_i} = 1$. 
Observe that $a_1,\ldots, a_b > 0$ and $\tau(a_1,\ldots, a_b) < c$ precisely when $c^{-a_1} + \cdots + c^{-a_b} \leq 1$.

	Suppose we have a non-negative potential function for instances of a problem, and wish to use it to bound a recursive algorithm's running time. Suppose the non-recursive part of the algorithm runs in polynomial time. Let $b, p_1,\dots,p_b \ge 1$ be such that the algorithm always recurses in one of $b$ possible ways, and that the $i$'th possible branch splits the instance into $p_i$ parts. Suppose that if the original instance has potential $x$, then the $j$'th part of the $i$'th possible branch always has potential at most $x-y_{i,j}$ for some $y_{i,j} > 0$. Finally, suppose that our potential is only zero on instances which the algorithm solves without recursing further. Under these circumstances, the following recurrence holds for the worst-case running time $T(x)$ of a potential-$x$ instance:
	\begin{equation}\label{eqn:gen-pot-rec}
	T(x) \le \begin{cases}
	\max \{\sum_{j=1}^{p_i} T(x-y_{i,j}) \colon i \in [b]\} + \poly(n) & \mbox{ if }x>0,\\
	\poly(n) & \mbox{ otherwise.}
	\end{cases}
	\end{equation}
	The solution to this recurrence is
	\begin{equation}\label{eqn:bf-use}
        \textstyle  
		T(x) = O^*(\max_i \{\tau(y_{i,1}, \dots, y_{i,p_i})\}^x).
	\end{equation}

    We now use~\eqref{eqn:bf-use} and Lemma~\ref{lem:branch-bounds} to obtain expressions for the running time of $\iscount$ in terms of valid pre-potential functions. We will then set out our choices of these functions, and prove their validity, in Section~\ref{sec:pre-pot}.

\begin{cor}\label{cor:fullalgo-runtime}
Let $G$ be an 
$n$-vertex graph which is not a forest. 
Let $f$ be a valid pre-potential with boundary points $k_0,\dots,k_s$ and good slices $f_1,\dots,f_s$, writing $f_i(m,n) = \rho_im + \sigma_in$. For all $i \in [s]$ and all $d \ge 6$, let
		\[
			T_{i,d} = \tau\bigg(f_i(d,1),\ f_i\Big(\Big\lceil\frac{d + D_2'(k_{i-1}) + 3}{2} \Big\rceil,\ 1+d\Big)\bigg),
		\]
		and 
		\[
			T = \max\Big(\big\{T_{i,d} \mid i \in [s], \max\{6,\floor{k_{i-1}}+1\} \le d \le 10\big\} \cup \big\{\tau(\sigma_s,12\sigma_s)\big\}\Big).
		\]
		If $T \le 2$, then $\iscount(\TR(G,\mathbf{1},\mathbf{1},1),\eps)$ has running time $O(2^{\sigma_s n})\,\poly(n,1/\epsilon)$.
	\end{cor}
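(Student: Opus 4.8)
The plan is to invoke the branching-factor machinery recalled in~\eqref{eqn:gen-pot-rec}--\eqref{eqn:bf-use}, taking the potential function to be $f^+$ and using Lemmas~\ref{lem:iscount-works} and~\ref{lem:branch-bounds} to verify its hypotheses. First I would set $\calG_0 = \TR(G,\mathbf{1},\mathbf{1},1)$ and check that it is a legitimate input to $\iscount$: it is $1$-balanced because all weights are~$1$, it has no tree components by construction of~$\TR$, and it is non-empty because $G$ is not a forest, so $G$ has a non-tree component, which $\TR$ does not delete. By Lemma~\ref{lem:iscount-works}, $\iscount(\calG_0,\eps)$ then returns an $\eps$-approximation of $\IS(\calG_0)=\IS(G)$, every recursive call again has a non-empty, $1$-balanced, tree-component-free argument, and the non-recursive work per call is polynomial. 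Since $\TR$, $\Red$, $\Prune$ and vertex deletion only ever remove vertices, the vertex count is non-increasing along the recursion (and strictly decreasing at each branch), so the recursion has depth at most~$n$ and ``polynomial'' uniformly means $\poly(n,1/\epsilon)$; moreover any branch one of whose children is empty is a one-way recursive call with polynomial overhead and so contributes only a polynomial factor.

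Next I would apply Lemma~\ref{lem:branch-bounds}. Part~(i) gives $f^+(\calG)\ge 0$ for every instance $\calG$ in the recursion, with $f^+(\calG)=0$ only when $\iscount(\calG,\eps)$ halts in time $\poly(n,1/\epsilon)$ without recursing further --- precisely the base-case hypothesis of~\eqref{eqn:gen-pot-rec}. Part~(ii) says that whenever $\iscount(\calG,\eps)$ recurses on $\Gout$ and $\Gin$, the pair $\bigl(f^+(\calG)-f^+(\Gout),\,f^+(\calG)-f^+(\Gin)\bigr)$ dominates coordinatewise one of finitely many vectors: either $(f_s(0,1),f_s(0,12))=(\sigma_s,12\sigma_s)$ when $\Delta(G)\ge 11$, or $\bigl(f_j(d,1),\,f_j(\lceil(d+D_2'(k_{j-1})+3)/2\rceil,\,1+d)\bigr)$ for some $j\in[s]$ and some integer $d$ with $\max\{6,\floor{k_{j-1}}+1\}\le d\le 10$ otherwise. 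I would then check that every coordinate of every such vector is strictly positive: for a good slice $f_j$ and $d\ge 6$ we have $f_j(d,1)=\rho_j d+\sigma_j\ge\rho_j(d-1)$, which is positive unless $\rho_j=0$, whence $\sigma_j>0$ by~(VP\ref{VP-good}); and since $D_2'(k_{j-1})\ge 10\ge d$ one gets $\lceil(d+D_2'(k_{j-1})+3)/2\rceil\ge d+2>1+d$, so $f_j$ is positive there as well. Hence each branching factor is bounded by the corresponding value among $\tau(\sigma_s,12\sigma_s)$ and the $T_{j,d}$'s, and these are all at most~$T$ by the definition of~$T$.

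Finally, feeding these facts into~\eqref{eqn:gen-pot-rec}--\eqref{eqn:bf-use} (with the polynomial factor taken to be $\poly(n,1/\epsilon)$) yields that $\iscount(\calG_0,\eps)$ runs in time $O^*\!\bigl(T^{f^+(\calG_0)}\bigr)\cdot\poly(n,1/\epsilon)$. For any graph~$H$, properties~(VP\ref{VP-top}) and~(VP\ref{VP-min}) give $f^+(H)\le f_s(|E(H)|,|V(H)|)=\sigma_s|V(H)|$, so $f^+(\calG_0)\le\sigma_s|V(\calG_0)|\le\sigma_s n$; since $T\le 2$ and $\sigma_s n\ge 0$ this gives $T^{f^+(\calG_0)}\le 2^{\sigma_s n}$, and the running time is $O(2^{\sigma_s n})\,\poly(n,1/\epsilon)$ as claimed. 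The corollary is essentially an assembly step, so I do not expect a deep obstacle; the main care is needed in (a) confirming that the recurrence framework genuinely applies --- that the non-recursive cost is uniformly $\poly(n,1/\epsilon)$, the recursion terminates, and one-way branches are harmless --- and (b) matching the finite list of branch-progress vectors from Lemma~\ref{lem:branch-bounds}(ii) exactly to the quantities $T_{j,d}$ and $\tau(\sigma_s,12\sigma_s)$ defining~$T$, including the positivity checks needed for the branching factors to be well defined.
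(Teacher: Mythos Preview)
Your proposal is correct and follows essentially the same approach as the paper: set up the branching-factor recurrence~\eqref{eqn:gen-pot-rec}--\eqref{eqn:bf-use} with potential~$f^+$, supply the progress vectors from Lemma~\ref{lem:branch-bounds}(ii), invoke Lemma~\ref{lem:branch-bounds}(i) for the base case, and finish with $f^+(\calG_0)\le f_s(\calG_0)=\sigma_s n$ via (VP\ref{VP-min}) and (VP\ref{VP-top}). Your version is a bit more explicit than the paper's on the positivity of the branch-progress coordinates and the handling of one-way branches, but these are exactly the minor checks the paper absorbs into its terser presentation.
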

	\begin{proof}
		Let $\calG_0 = \TR(G,\mathbf{1}, \mathbf{1}, \eps)$. 
Since $G$ is not a forest, $\calG_0$ is non-empty.	
		For all $x \ge 0$, let $R_\eps(x)$ be the worst-case running time of $\iscount(\calG,\eps)$ on any weighted graph $\calG$ with $f^+(\calG) \le x$ which  
arises 	
		as a recursive call in the evaluation of $\iscount(\calG_0,\eps)$. 
For each $i \in [s]$, let $z_i = \max\{6,\floor{k_{i-1}}+1\}$. 		
If recursive calls are made to $\iscount(\Gout,\epsilon)$
and  $\iscount(\Gin,\epsilon)$ then,
by Lemma~\ref{lem:branch-bounds}(ii), either
		\begin{align*}
			f^+(\Gout) &\le f^+(\calG) - f_s(0,1)< f^+(\calG)\,,\\
			f^+(\Gin) &\le f^+(\calG) - f_s(0,12)< f^+(\calG)\,,
		\end{align*}
		or there exist $i \in [s]$ and an integer $d$ with $z_i \le d \le 10$ such that 
		\begin{align*}
			f^+(\Gout) &\le f^+(\calG) - f_i(d,\ 1)< f^+(\calG)\,,\\
			f^+(\Gin) &\le f^+(\calG) - f_i\Big(\Big\lceil\frac{d+D_2'(k_{i-1})+3}{2}\Big\rceil,\ 1+d\Big)< f^+(\calG)\,.
		\end{align*}
(If, e.g., only 
$\iscount(\Gout,\epsilon)$ is called,  then the bounds on
$f^+(\Gout) $  still hold.)

		It follows by the above equations and Lemma~\ref{lem:branch-bounds}(i) that the conditions of~\eqref{eqn:gen-pot-rec} are satisfied by taking our potential to be $f^+$, the $y_{i,j}$'s to be the right-hand sides of the above equations, and the $p_i$'s to be 
the number of recursive calls in the $i$'th possible branch (which is at most~$2$).		
	 For brevity, let $t_{i,d,\texttt{out}} = f_i(d,1)$, and $t_{i,d,\texttt{in}} = f_i(\lceil (d+D_2'(k_{i-1})+3)/2 \rceil, 1+d)$. Thus \eqref{eqn:gen-pot-rec} implies the following recurrence for $R_\eps(x)$:
		\begin{align*}
			R_\eps(x) \le \begin{cases}
				\max\bigg(\big\{R_\eps(x - f_s(0,1)) + R_\eps(x - f_s(0,12))\big\}\ \cup&\\ \qquad \Big\{ R_\eps(x - t_{i,d,\texttt{out}}) + R_\eps(x - t_{i,d,\texttt{in}}) \colon i \in [s], z_i \le d \le 10 \Big\} \bigg) & \mbox{ if }x > 0,\\
				\poly(n,1/\epsilon) & \mbox{ otherwise.}
			\end{cases}
		\end{align*}
		It follows by~\eqref{eqn:bf-use} that $R_\eps(x) = O(T^x)\,\poly(n,1/\epsilon)$. We have $T \le 2$ by hypothesis, so $R_\eps(x) = O(2^x)\,\poly(n,1/\epsilon)$.
		
		By Observation~\ref{obs:TR-Z}, calculating $\calG_0$ from $G_0$ takes time $\poly(n)$, so our overall running time is $O(2^{f^+(G_0)})\,\poly(n,1/\epsilon)$. Moreover, by (VP\ref{VP-min}) and~(VP\ref{VP-top}), we have $f^+(\calG_0) \le f_s(\calG_0) = \sigma_sn$. The result therefore follows.
	\end{proof}

	We obtain an analogous result for bipartite graphs.  The proof is identical, except for using the bounds from the bipartite case of Lemma~\ref{lem:branch-bounds}(ii).
	
\begin{cor}\label{cor:fullalgo-runtime-bi}
Let $G$ be an $n$-vertex bipartite graph which is not a forest. 
Let $f$ be a valid pre-potential with boundary points $k_0,\dots,k_s$ and good slices $f_1,\dots,f_{s}$, writing $f_i(m,n) = \rho_im + \sigma_in$. For all $i \in [s]$ and all 
		$d \ge 6$,
		let
		\[
			T_{i,d} = \begin{cases}
				\tau\big(f_i(d,\ 1),\ f_i(D_2'(k_{i-1}),\ 1+d)\big)  & \mbox{ if $\sigma_i \ge 0$,}\\
				\tau\bigg(f_i(d,\ 1),\ f_i\Big(D_2'(k_{i-1}),\ \Big\lfloor\frac{d + D_2'(k_{i-1}) - 1}{2} \Big\rfloor \Big)\bigg)  & \mbox{ otherwise,}
			\end{cases}
		\]
		and 
		\[
			T = \max\Big(\big\{T_{i,d} \mid i \in [s], \max\{6,\floor{k_{i-1}}+1\} \le d \le 10\big\} \cup \big\{\tau(\sigma_s,12\sigma_s)\big\}\Big).
		\]
		If $T \le 2$, then $\iscount(\TR(G,\mathbf{1},\mathbf{1},1),\eps)$ has running time $O(2^{\sigma_s n})\,\poly(n,1/\epsilon)$. 	\end{cor}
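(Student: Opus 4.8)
The plan is to follow the proof of Corollary~\ref{cor:fullalgo-runtime} essentially word for word, replacing the generic branch bounds of Lemma~\ref{lem:branch-bounds}(ii) by the bipartite ones. First I would set $\calG_0 = \TR(G,\mathbf{1},\mathbf{1},1)$. Since $G$ is not a forest, $\calG_0$ is non-empty; it is $1$-balanced by construction, by Observation~\ref{obs:TR-Z} it is computed from $G$ in time $\poly(n)$ and its underlying graph is a subgraph of~$G$, hence bipartite, and indeed every component of every graph arising in the recursion is bipartite. By (VP\ref{VP-min}) and (VP\ref{VP-top}) we also have $f^+(\calG_0) \le f_s(\calG_0) \le \sigma_s n$. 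For $x \ge 0$, let $R_\eps(x)$ denote the worst-case running time of $\iscount(\calG,\eps)$ over all weighted graphs $\calG$ with $f^+(\calG) \le x$ that arise as recursive calls when $\iscount(\calG_0,\eps)$ is run.

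Next I would invoke Lemma~\ref{lem:branch-bounds}, whose hypothesis holds because $\calG_0 = \TR(G,\mathbf{1},\mathbf{1},1)$ and $G$ is bipartite. Part~(i) gives that $f^+ \ge 0$ and that $f^+(\calG) = 0$ forces $\iscount(\calG,\eps)$ to halt in time $\poly(n,1/\epsilon)$ without recursing. Part~(ii), in the bipartite case, gives that whenever $\iscount(\calG,\eps)$ makes recursive calls, either $\Delta(G) \ge 11$, with $f^+(\calG) - f^+(\Gout) \ge f_s(0,1)$ and $f^+(\calG) - f^+(\Gin) \ge f_s(0,12)$, or there are $i \in [s]$ and an integer $d$ with $\max\{6,\floor{k_{i-1}}+1\} \le d \le 10$ such that $f^+(\calG) - f^+(\Gout) \ge f_i(d,1)$, while $f^+(\calG) - f^+(\Gin)$ is at least $f_i(D_2'(k_{i-1}),\, 1+d)$ if $\sigma_i \ge 0$ and at least $f_i\big(D_2'(k_{i-1}),\, \lfloor (d + D_2'(k_{i-1}) - 1)/2 \rfloor\big)$ if $\sigma_i < 0$. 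Each of these drops is strictly positive by (VP\ref{VP-good}), (VP\ref{VP-top}) and $D_2'(k_{i-1}) \ge 2 k_{i-1} \ge 0$, so the hypotheses of the generic potential recurrence~\eqref{eqn:gen-pot-rec} are met with potential $f^+$, at most two parts per branch, and the $y_{i,j}$'s taken to be the listed lower bounds on the potential drops.

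From~\eqref{eqn:bf-use} it then follows that $R_\eps(x) = O(M^x)\,\poly(n,1/\epsilon)$, where $M$ is the maximum, over the finitely many possible branches, of the corresponding branching number $\tau(\cdot)$ of the pair of drops; by the (bipartite) definition of $T_{i,d}$ and of $T$, this maximum is exactly $T$. Since $T \le 2$ by hypothesis, $R_\eps(x) = O(2^x)\,\poly(n,1/\epsilon)$, and the total running time of $\iscount(\calG_0,\eps)$ is $\poly(n) + R_\eps(f^+(\calG_0)) = O(2^{\sigma_s n})\,\poly(n,1/\epsilon)$ using $f^+(\calG_0) \le \sigma_s n$. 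I do not expect a genuine obstacle here: the only point needing care — that the branched-on component $G_v$ inherits bipartiteness, so the stronger bounds derived from Lemmas~\ref{lem:base-branch} and~\ref{lem:clever-branch} apply, and that the sign of $\sigma_i$ selects which of the two bounds on $f^+(\Gin)$ is used — is already absorbed into Lemma~\ref{lem:branch-bounds} and mirrored in the two-case definition of $T_{i,d}$; everything else is a transcription of the proof of Corollary~\ref{cor:fullalgo-runtime}.
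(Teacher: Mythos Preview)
Your proposal is correct and matches the paper's approach exactly: the paper's proof of this corollary is the single sentence ``The proof is identical [to Corollary~\ref{cor:fullalgo-runtime}], except for using the bounds from the bipartite case of Lemma~\ref{lem:branch-bounds}(ii),'' and you have spelled out precisely that substitution. One very minor remark: you need not separately justify that the potential drops are strictly positive, since this is already part of the conclusion of Lemma~\ref{lem:branch-bounds}(ii).
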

	
	\section{A valid pre-potential}\label{sec:pre-pot}

	We include two valid pre-potentials, both in approximate form as Appendices~\ref{app:bi-pot} and~\ref{app:gen-pot} and in exact form as ancillary files \biprepot{} and~\genprepot{} (available on the arXiv). We use these, along with Corollaries~\ref{cor:fullalgo-runtime} and~\ref{cor:fullalgo-runtime-bi}, to bound the running time of $\iscount(\calG,\epsilon)$. To prove that they are valid, we will use the following lemma. (Note that property~(iv) is equivalent to requiring that $f$ be continuous.)
		
\begin{lemma}\label{lem:pre-pot}
Let $f$ be a pre-potential with slices $f_1,\dots,f_s$ and boundary points 
$-1 = k_0 < k_1 < \cdots <k_s=\infty$, 
writing $f_i(m,n) = \rho_i m + \sigma_i n$ for all $i \in [s]$. Then $f$ is a valid pre-potential if the following properties all hold.
\begin{enumerate}[(i)]
\item $\rho_s = 0$ and $\sigma_s > 0$.\label{PPa}
\item For all $i\in [s-1]$, $\rho_i + \sigma_i \geq 0$ \label{PPb}
\item For all $i\in [s-1]$, at least one of $\rho_i,\sigma_i$ is non-zero. \label{PPc} 
\item For all $i\in [s-1]$, $\rho_i \geq \rho_{i+1}$ and $\sigma_i \leq \sigma_{i+1}$. \label{PPd} 
\item Every integer in $[6,k_{s-1}]$ is a boundary point.\label{PPiii}
\item For all $i \in [s-1]$, $\rho_i = \rho_{i+1} + 2(\sigma_{i+1}-\sigma_i)/k_i$.\label{PPiv}
\end{enumerate}
\end{lemma}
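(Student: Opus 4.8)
The plan is to verify, one at a time, the four conditions (VP\ref{VP-good})--(VP\ref{VP-min}) that Definition~\ref{def:pre-pot} requires of a valid pre-potential. Three of them are quick. Condition (VP\ref{VP-top}) is exactly property~(i), since $\rho_s=0$ and $\sigma_s>0$ give $f_s(m,n)=\sigma_s n$; condition (VP\ref{VP-bdry}) is exactly property~(v). For (VP\ref{VP-good}) I would first observe that $\rho_i\ge 0$ for every $i$: this holds for $i=s$ by~(i), and descends to every smaller $i$ by the monotonicity $\rho_i\ge\rho_{i+1}$ of~(iv). Given this, the bound $\rho_i+\sigma_i\ge 0$ is property~(ii) when $i<s$ and reads $\sigma_s>0$ when $i=s$; and $\rho_i,\sigma_i$ are not both zero, by~(iii) when $i<s$ and because $\sigma_s>0$ when $i=s$. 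Hence every $f_i$ is a good slice.

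The substance of the proof is condition (VP\ref{VP-min}), that $f(m,n)=\min_{j\in[s]}f_j(m,n)$ for all $n>0$ and $m\ge 0$. Fix such $m,n$ and let $i$ be the unique index with $k_{i-1}<2m/n\le k_i$; this is well defined because $2m/n\ge 0>-1=k_0$ and $k_s=\infty$, and then $f(m,n)=f_i(m,n)$ by definition, so it suffices to show $f_i(m,n)\le f_j(m,n)$ for every $j$. The key point is that property~(vi) is precisely the condition that consecutive slices cross at the boundary average degrees: rearranging~(vi) gives, for every $\ell\in[s-1]$ and every $n>0$ (recall $k_\ell\ne 0$, as~(vi) presupposes),
\[
f_\ell(m,n)-f_{\ell+1}(m,n)=\frac{(\sigma_{\ell+1}-\sigma_\ell)\,n}{k_\ell}\Big(\frac{2m}{n}-k_\ell\Big).
\]
From~(iv) we have $\sigma_{\ell+1}-\sigma_\ell\ge 0$ and $\rho_\ell-\rho_{\ell+1}=2(\sigma_{\ell+1}-\sigma_\ell)/k_\ell\ge 0$, which force $k_\ell>0$ unless $\sigma_\ell=\sigma_{\ell+1}$ --- in the latter case $\rho_\ell=\rho_{\ell+1}$ as well, so $f_\ell\equiv f_{\ell+1}$ and the comparison below is vacuous. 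In all cases, for $n>0$ we obtain $f_\ell(m,n)\le f_{\ell+1}(m,n)$ whenever $2m/n\le k_\ell$, and $f_{\ell+1}(m,n)\le f_\ell(m,n)$ whenever $2m/n\ge k_\ell$. A telescoping argument now finishes the job: if $j>i$ then every $\ell\in\{i,\dots,j-1\}$ satisfies $2m/n\le k_i\le k_\ell$, so $f_i(m,n)\le f_{i+1}(m,n)\le\dots\le f_j(m,n)$; and if $j<i$ then every $\ell\in\{j,\dots,i-1\}$ satisfies $2m/n>k_{i-1}\ge k_\ell$, so $f_j(m,n)\ge f_{j+1}(m,n)\ge\dots\ge f_i(m,n)$. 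Hence $f_i(m,n)=\min_j f_j(m,n)=f(m,n)$, which is (VP\ref{VP-min}).

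I expect the only real obstacle to be organising this verification of (VP\ref{VP-min}): one must spot that~(vi) makes each pair of consecutive slices agree exactly at the boundary value $k_\ell$ while~(iv) orders their slopes, so that the $i$-th slice is pointwise minimal on exactly the interval $(k_{i-1},k_i]$; the degenerate situation in which two consecutive slices coincide needs a passing remark but causes no difficulty. Everything else is routine algebra.
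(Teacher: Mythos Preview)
Your proof is correct and follows essentially the same approach as the paper: verify (VP\ref{VP-good})--(VP\ref{VP-bdry}) directly from (i)--(v), then establish (VP\ref{VP-min}) by combining (iv) and (vi) in a telescoping argument. The only cosmetic difference is that the paper telescopes the differences $\rho_i-\rho_j=\sum_\ell(\rho_\ell-\rho_{\ell+1})$ and bounds the sum using $1/k_\ell\le 1/k_i$ (resp.\ $\ge 1/k_{i-1}$) before plugging into $f_j-f_i$, whereas you compare consecutive slices $f_\ell,f_{\ell+1}$ pointwise and telescope the resulting inequalities; your treatment of the degenerate case $\sigma_\ell=\sigma_{\ell+1}$ (hence $f_\ell\equiv f_{\ell+1}$) and the sign of $k_\ell$ is in fact slightly more explicit than the paper's.
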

\begin{proof}
Property (VP\ref{VP-good}) of Definition~\ref{def:pre-pot} follows from 
properties \eqref{PPa}, \eqref{PPb},  \eqref{PPc} and \eqref{PPd} of the lemma statement. 
Property (VP\ref{VP-top}) follows from 
property~\eqref{PPa}. 
Property (VP\ref{VP-bdry}) is property \eqref{PPiii}. 
To establish property~(VP\ref{VP-min}), consider
 $i \in [s]$. 
 Suppose that $n$ is a positive integer and $m$ is a non-negative integer
 satisfying $k_{i-1} < 2m/n \le k_i$. By property \eqref{PPiv}, for all $j > i$ we have
		\[
			\rho_i-\rho_j = \sum_{\ell=i}^{j-1}(\rho_{\ell}-\rho_{\ell+1}) 
= \sum_{\ell=i}^{j-1} \frac{2}{k_\ell} (\sigma_{\ell+1}-\sigma_{\ell}) 
			\le \frac{2}{k_i}\sum_{\ell=i}^{j-1}(\sigma_{\ell+1}-\sigma_{\ell}) = \frac{2}{k_i}(\sigma_j-\sigma_i).
		\]
		By property \eqref{PPd} and the choice of $n$ and $m$, it follows that  
				\begin{align*}
			f_j(m,n) - f_i(m,n) &= -(\rho_i - \rho_j)m + (\sigma_j - \sigma_i)n \ge (\sigma_j-\sigma_i)\Big(n-\frac{2}{k_i}m\Big) \ge 0.
		\end{align*}
		Thus for all $j>i$, $f_j(m,n) \ge f_i(m,n)$. Similarly, by property \eqref{PPiv}, for all $j<i$ we have
		\[
			\rho_j-\rho_i = \sum_{\ell=j}^{i-1}(\rho_{\ell}-\rho_{\ell+1}) 
	= \sum_{\ell=j}^{i-1} \frac{2}{k_\ell} (\sigma_{\ell+1}-\sigma_{\ell}) 				
			\ge \frac{2}{k_{i-1}}\sum_{\ell=j}^{i-1}(\sigma_{\ell+1}-\sigma_{\ell}) = \frac{2}{k_{i-1}}(\sigma_i-\sigma_j),
		\]
		so property \eqref{PPd} and the choice of $n$ and $m$ imply that for all $j<i$,
		\[
			f_j(m,n)-f_i(m,n)= (\rho_j - \rho_i)m - (\sigma_i - \sigma_j)n  \ge (\sigma_i-\sigma_j)\Big(\frac{2}{k_{i-1}}m-n\Big) \ge 0.
		\]
		Thus for all $j<i$, $f_j(m,n) \le f_i(m,n)$. We conclude that (VP\ref{VP-min}) holds as required.
	\end{proof}

    \section{Conclusion}
    \label{sec:conclusion}

We can now prove our main theorem.
	
	\thmmain*

\begin{proof}
Consider an input consisting of an $n$-vertex graph~$G$
and an error tolerance $\epsilon \in (0,1)$.

Let $\calG = \TR(G,\mathbf{1},\mathbf{1},1)$. 
The approximation scheme computes $\calG$ and then runs $\iscount(\calG,\eps)$. 
By Lemma~\ref{lem:iscount-works}(i), this outputs an $\eps$-approximation of $\IS(\calG)$, and by Observation~\ref{obs:TR-Z} and the definition of $\calG$ we have $\IS(G) = \IS(\calG)$, so the algorithm    
  outputs an approximation to~$Z(G)$, as required.      
  By Observation~\ref{obs:TR-Z}, we can compute $\calG$ in polynomial time, so it remains to bound the running time of $\iscount$.

		Recall from Section~\ref{sec:aad} that the Mathematica function \texttt{Dtwo[k]} in Appendix~\ref{app:mathematica-valid} outputs $D_2(k)$ given a rational argument $k \ge 2$. Thus, 
if the input file {\tt input} contains		
		  a potential function $f$ in the specified CSV format, the function \texttt{Validate[input]} given in Appendix~\ref{app:mathematica-valid} first checks whether the conditions of Lemma~\ref{lem:pre-pot} hold, and hence whether $f$ is a valid pre-potential. If $f$ is valid, then 
the run-time is $O(2^{\sigma_s n})\,\poly(n,1/\epsilon)$ by  
Corollary~\ref{cor:fullalgo-runtime} for general graphs and 
	Corollary~\ref{cor:fullalgo-runtime-bi} for bipartite graphs. For general graphs, $\sigma_s < 0.2680$, giving a running time of $O(2^{0.2680n})\,\poly(1/\epsilon)$; for bipartite graphs, $\sigma_s < 0.2372$.
As explained   at the beginning of 
	Appendix~\ref{app:mathematica-valid}, the code checks that the pre-conditions of these corollaries are satisfied and
then outputs the relevant bounds.
	 	\end{proof}

Our general running time for approximate counting is   
$O(2^{0.2680n})$ times a polynomial in~$1/\epsilon$, which is  
  better than the current best-known running time of $O(2^{0.3022n})$ for exact counting~\cite{GL}, and we have further improved on this when the input is bipartite. In the process of resolving our base case, we generalised an FPTAS of Sinclair et al.~\cite{SSSY} for 
approximating the partition function of
the hardcore model on graphs with low connective constant,
so that it also works
on graphs with arbitrary vertex weights. 
This may be useful  in other contexts.
For the recursive part of the algorithm, our approach followed the standard methods of the field, but uncovered interesting worst-case behaviour in bipartite graphs which required a more careful analysis to handle correctly.

\bibliographystyle{plain}
\bibliography{IS}	
	
\clearpage
 
\appendix

\section{Index of Notation and Terms}\label{sec:index}

{\footnotesize
\renewcommand{\arraystretch}{1.5}
\begin{longtable}{@{}l|lll@{}}
$[n]$ & $\{1,\ldots,n\}$ & Sec.~\ref{sec:notation} & p.~\pageref{sec:notation}\\[1ex]
$\alpha(v)$, $\beta(v)$ & Used in associated average degree & Sec.~\ref{sec:aad} & p.~\pageref{sec:aad}\\
$\Gamma_G(S)$ & $\bigcup_{v \in S}\Gamma_G(v) \setminus S$ & Sec.~\ref{sec:notation} & p.~\pageref{sec:notation}\\
$\Gamma_G(v,X)$ & $\{w\in X \mid \{v,w\} \in E(G)\}$ & Sec.~\ref{sec:notation} & p.~\pageref{sec:notation}\\
$\Gamma^2_G(v)$ &   Vertices at distance   $2$ from~$v$ & Sec.~\ref{sec:notation} & p.~\pageref{sec:notation}\\
$\Gamma_v$ & $\Gamma_G(v)$ in standard decomposition & Def.~\ref{def:decomp} & p.~\pageref{def:decomp} \\
$(\Gamma_v,S,X,H_1,\dots,H_k;T_1,\dots,T_\ell)$ & Standard decomposition of $G$ w.r.t.\@ vertex~$v$& Def.~\ref{def:decomp} & p.~\pageref{def:decomp} \\
$\delta(G)$, $\Delta(G)$ & Minimum and maximum degrees & Sec.~\ref{sec:notation} & p.~\pageref{sec:notation}\\
$\kappa$ & Connective constant & Def.~\ref{def:cc-2} & p.~\pageref{def:cc-2}\\
$\lambda$ & Parameter of univariate hard-core model & Sec.~\ref{sec:gadgets} & p.~\pageref{def:hardcore}\\
$\lambda$-balanced & $\win(v)\leq\lambda\wout(v)$& Sec.~\ref{sec:notation} & p.~\pageref{def:lambal}\\
$\lambda_t$ & $\lambda(1+\lambda)^{-t}$ & Sec.~\ref{sec:gadgets} & p.~\pageref{defLamt}\\
$\rho$, $\sigma$ & Coefficients of slice of a potential function & Def.~\ref{defn:good-slice} & p.~\pageref{defn:good-slice}\\
$\tau(a_1, \dots, a_b)$ & Positive root of $\sum_{i=1}^b x^{-a_i} = 1$ & Sec.~\ref{sec:recursion} & p.~\pageref{sec:recursion} \\
$\phi$ & Weight map for $G$ & Def.~\ref{def:gadg} & p.~\pageref{def:gadg} \\[1ex]
$a_k(\bx)$ & Used in numerical associated average degree & Def.~\ref{def:rom-aad} & p.~\pageref{def:rom-aad}\\
$\aad(v)$ & Associated average degree & Sec.~\ref{sec:aad} & p.~\pageref{sec:aad}\\
$\saad_k(\bx)$ & Numerical associated average degree & Def.~\ref{def:rom-aad} & p.~\pageref{def:rom-aad}\\
$\hcalgo_{\lambda,\calF}(\calG,\eps)$ & Approximates $Z(\calG)$ when $\kappa$ is subcritical & Sec.~\ref{sec:gadgets} & p.~\pageref{def:alg1}\\
$b_k(\bx)$ & Used in numerical associated average degree & Def.~\ref{def:rom-aad} & p.~\pageref{def:rom-aad}\\
$\basecount(\calG,\epsilon)$ & FPTAS for base case &  Sec.~\ref{sec:local} & p.~\pageref{thm:2-degree-FPTAS}\\
branch & Transform into  smaller instances $\Gout$, $\Gin$ & Sec.~\ref{sec:intro} & p.~\pageref{def:branch} \\ 
$C(v)$ & Set of children of vertex~$v$ in a tree &  Sec.~\ref{sec:local} & p.~\pageref{def:treedefs}\\
$\iscount(\calG,\eps)$ & Main approximation algorithm & Sec.~\ref{algo:count} & p.~\pageref{algo:count} \\
$D(v)$ & Depth of vertex~$v$ in a tree & Sec.~\ref{sec:local} & p.~\pageref{def:treedefs}\\
$\ourfamily$ & Base case graph family handled in Sec.~\ref{sec:local} & Def.~\ref{defn:ourfamily} & p.~\pageref{defn:ourfamily}\\
$d_G(v)$ & Degree of vertex~$v$ in graph~$G$ & Sec.~\ref{sec:intro} & p.~\pageref{def:dGv} \\ 
$d_G(v,X)$ &  $|\Gamma_G(v,X)|$ & Sec.~\ref{sec:notation} & p.~\pageref{sec:notation}\\
$d^2_G(v)$, $2$-degree of~$v$&  $ \sum_{\{v,w\} \in E(G)} d_G(w)$ & Sec.~\ref{sec:notation} & p.~\pageref{sec:notation}\\
$D_2(k)$ & Function used to bound $2$-degree  & Def.~\ref{def:D2} & p.~\pageref{def:D2} \\
$D_2'(k)$ & Adjusted version of $D_2(k)$ & Def.~\ref{def:Dtwoprime} & p.~\pageref{def:Dtwoprime} \\
$\kappa$-decreasing& Used to bound connective constant  & Def.~\ref{def:kapdec} & p.~\pageref{def:kapdec}\\
$\brute(\calG,Y)$ & Exact computation when $\calG-Y$ is a forest & Sec.~\ref{sec:slices} & p.~\pageref{algo:brute} \\
$f$ & Slice of a potential function & Def.~\ref{defn:good-slice} & p.~\pageref{defn:good-slice}\\
$f_1, \dots, f_s$ & Slices of a pre-potential & Def.~\ref{def:pre-pot} & p.~\pageref{def:pre-pot} \\
$f^+$ & Potential made from pre-potential~$f$ & Def.~\ref{def:pre-pot} & p.~\pageref{def:pre-pot} \\
$\calF^+_y$ & Set of gadgets (realisations of weight maps) & Def.~\ref{def:Fplus} & p.~\pageref{def:Fplus}\\
$\calG = (G,\win,\wout,W)$ & Weighted graph & Sec.~\ref{sec:notation} & p.~\pageref{def:calG}\\
$G'$ & Realisation of a weight map & Def.~\ref{def:gadg} & p.~\pageref{def:gadg} \\
$\calG'$ & $\Red(\calG)$ & Lem.~\ref{lem:branch-bounds} & p.~\pageref{def:Gdash} \\
$G-X$, $\calG-X$ & Graph made by removing $X$ & Sec.~\ref{sec:notation} & p.~\pageref{def:GminX}\\
$\calG[X]$ & Induced weighted graph & Sec.~\ref{sec:notation} & p.~\pageref{def:calGX}\\
good slice & A slice with $\rho \geq 0$, $\rho \geq -\sigma$ and $\rho\sigma\neq 0$ & Def.~\ref{defn:good-slice} & p.~\pageref{defn:good-slice}\\
$H$ & $G-X^+$ in extended decomposition & Def.~\ref{def:extend-decomp} & p.~\pageref{def:extend-decomp} \\
$H_1, \dots, H_k$ & Non-tree components in standard decomposition & Def.~\ref{def:decomp} & p.~\pageref{def:decomp} \\
$\calI(G)$ & Independent sets of $G$ & Sec.~\ref{sec:notation} & p.~\pageref{def:calIG}\\
$k_1, \dots, k_s$ & Boundary points of a pre-potential & Def.~\ref{def:pre-pot} & p.~\pageref{def:pre-pot} \\
$L(T)$ & Set of tree $T$'s leaves & Sec.~\ref{sec:local} & p.~\pageref{def:treedefs}\\
$m(G)$ & Number of edges of $G$ & Sec.~\pageref{sec:notation} & p.~\pageref{sec:notation}\\
$n(G)$ & Number of vertices of $G$ & Sec.~\pageref{sec:notation} & p.~\pageref{sec:notation}\\
$N_G(v,\ell)$ & Number of length-$\ell$ simple paths from~$v$ in~$G$ &  Def.~\ref{def:saw} & p.~\pageref{def:saw}\\
near-forest & $G-v-\Gamma_G(v)$ a forest for some~$v$ & Def.~\ref{def:near-forest} & p.~\pageref{def:near-forest} \\
$P$ & $y$--$z$ path in extended decomposition& Def.~\ref{def:extend-decomp} & p.~\pageref{def:extend-decomp} \\
pre-potential & Function used to define a potential & Def.~\ref{def:pre-pot} & p.~\pageref{def:pre-pot} \\
$\Prune(\calG,S)$ & Weighted graph based on deleting $S$ & Def.~\ref{defn:prune} & p.~\pageref{defn:prune} \\
$\Red(\calG)$ & Produces a reduced graph from~$\calG$ & Sec.~\ref{algo:reduce} & p.~\pageref{algo:reduce} \\
reduced & $G[X]$ a near forest implies $|\Gamma_G(X)|\geq 2$ & Def.~\ref{def:reduced} & p.~\pageref{def:reduced} \\
$S$ & $\Gamma^2_G(v) \cap \bigcup_i V(H_i)$ in standard decomposition & Def.~\ref{def:decomp} & p.~\pageref{def:decomp} \\
$S'$ & $S \cap \Gamma_G(v)$ (in $\Prune$) & Def.~\ref{defn:prune} & p.~\pageref{defn:prune} \\
$S''$ & $S \setminus \Gamma_G(v)$ (in $\Prune$) & Def.~\ref{defn:prune} & p.~\pageref{defn:prune} \\
$\kappa$ subcritical wrt $\lambda$ & $\lambda < \lambda_c(\kappa)= \kappa^\kappa/{(\kappa-1)}^{\kappa+1}$ & Def.~\ref{def:subcritical} & p.~\pageref{def:subcritical}\\ 
suitable, (S1)--(S4) & Used to pin down $D_2(k)$ & Def.~\ref{def:suitable} & p.~\pageref{def:suitable} \\
$T(x)$ & Running time when instance potential is~$x$& Def.~\ref{sec:recursion} & p.~\pageref{sec:recursion} \\
$T_1, \dots, T_\ell$ & Tree components in standard decomposition & Def.~\ref{def:decomp} & p.~\pageref{def:decomp} \\
$\sawt{v}{G}$ & Self-avoiding walk tree & Def.~\ref{def:saw} & p.~\pageref{def:saw}\\
$\TR(\calG)$ & Graph made by deleting tree components & Def.~\ref{def:TR} & p.~\pageref{def:TR} \\
$\win(X)$, $\wout(X)$ &  $\prod_{x \in X}\win(x)$, $\prod_{x \in X}\wout(X)$   & Sec.~\ref{sec:notation} & p.~\pageref{def:wplusmin}\\
(VP1)--(VP4) & Properties of a valid pre-potential & Def.~\ref{def:pre-pot} & p.~\pageref{def:pre-pot} \\
$x$, $y$ & $S=\{x,y\}$ in standard/extended decomposition & Def.~\ref{def:extend-decomp} & p.~\pageref{def:extend-decomp} \\
$X$ & $\{v\}\cup\Gamma_v\cup \bigcup_i V(T_i)$ in standard decomposition & Def.~\ref{def:decomp} & p.~\pageref{def:decomp} \\
$X^+$ & $X\cup V(P)\setminus\{x\}$ in extended decomposition & Def.~\ref{def:extend-decomp} & p.~\pageref{def:extend-decomp} \\
$(X^+, P, x, y, z, H)$ & Extended decomposition of $G$ & Def.~\ref{def:extend-decomp} & p.~\pageref{def:extend-decomp} \\
$z$ & Last vertex of $P$ in extended decomposition & Def.~\ref{def:extend-decomp} & p.~\pageref{def:extend-decomp} \\
$Z(\calG)$ & Partition function & Sec.~\ref{sec:notation} &p.~\pageref{def:partition} \\ 
$Z_\lambda(G)$ & Univariate hard-core partition function & Sec.~\ref{sec:gadgets} & p.~\pageref{def:hardcore}
\end{longtable}
}

\section{Mathematica code for Section~\ref{sec:conn-const}}
\label{app:mathematica:conn-const}

The following Mathematica code is used to rigorously prove
Theorem~\ref{thm:2-degree-FPTAS}.  \texttt{FindInstance} attempts to
find integer solutions to the given equations over the given
variables.  It returns ``\texttt{\{\}}'', indicating (with certainty) that there are no
counterexamples to the claim in the theorem.

\vbox{
\small
\begin{verbatim}
psi[d_, p_] :=
    If[d == 2, 245/1000,
    If[d == 3, 456/1000,
    If[d == 4, 647/1000,
    If[d == 5, 859/1000,
    (* Else, d >= 6 *)
        If[p == 2, 1,
        If[p == 3, 941/1000,
        (*Else, p >= 4*) 889/1000
        ]]
    ]]]]
\end{verbatim}
}

\vbox{
\begin{verbatim}
FindInstance[
(* Equations *)
    0 <= n[2] <= 12  &&  0 <= n[6] <= 12  &&  0 <= n[10] <= 12  &&
    0 <= n[3] <= 12  &&  0 <= n[7] <= 12  &&  0 <= n[11] <= 12  &&
    0 <= n[4] <= 12  &&  0 <= n[8] <= 12  &&  0 <= n[12] <= 12  &&
    0 <= n[5] <= 12  &&  0 <= n[9] <= 12  &&  0 <= n[13] <= 12  &&
    2 <= p <= 13  &&  2 <= d <= 13  &&
    d == 1 + Sum[n[i], {i, 2, 13}]  &&
    (d <= 5 || p + Sum[i*n[i], {i, 2, 13}] <= 26)  &&
    Sum[n[i] psi[i, d], {i, 2, 13}] > (4141/1000) psi[d, p],
(* Variables *)
    {n[2], n[3], n[4], n[5], n[6], n[7], n[8], n[9], n[10], n[11], n[12], n[13],
     p, d}, 
(* Domain *)
    Integers]
\end{verbatim}
}

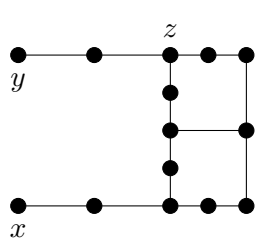
\begin{figure}[p!]
\begin{center}
\begin{tikzpicture}[scale=1,node distance = 1cm]
    \tikzstyle{vtx}=[fill=black, draw=black, circle, inner sep=2pt]
    \tikzstyle{svtx}=[fill=black, draw=black, circle, inner sep=1.5pt]
    \tikzstyle{vset}=[gray,rounded corners=5pt,dashed];

    %% \calG
 
    \node at (-1.25,4) {$\calG$:};
    \node[vtx] (v)  at (0,2) [label=180:{$v$}] {};
    \node[vtx] (g1) at (1,0) {};
    \node[vtx] (g2) at (1,2) {};
    \node[vtx] (g3) at (1,3) {};
    \node[vtx] (g4) at (1,4) {};

    \node[vtx] (y)  at (2,2) [label=270:{$y$}] {};
    \node[vtx] (y') at (3,2) {};
    \node[vtx] (t1) at (2,3) {};
    \node[vtx] (t2) at (2,4) {};

    \draw (g1)--(v)--(g2) (g3)--(v)--(g4);
    \draw (y')--(y)--(g2)--(t1)--(g3)--(t2)--(g4);
    \draw (g1)--(2,0);
    \draw (3,2)--(4,2);

    \newcommand{\drawHstar}{
        \node[vtx] at (2,0) [label=270:{$x$}] {};
        \node[vtx] at (3,0)   {};
        \node[vtx] at (4,0)   {};
        \node[vtx] at (4,0.5) {};
        \node[vtx] at (4,1)   {};
        \node[vtx] at (4,1.5) {};
        \node[vtx] at (4,2) [label=90:{$z$}] {};
        \node[vtx] at (4.5,0) {};
        \node[vtx] at (4.5,2) {};
        \node[vtx] at (5,0)   {};
        \node[vtx] at (5,1)   {};
        \node[vtx] at (5,2)   {};

        \draw (2,0) -- (5,0) -- (5,2) -- (4,2) -- (4,0);
        \draw (4,1) -- (5,1);
    }
    \drawHstar{}

    \node at (-0.3,4) {$X$};
    \draw[vset] (-0.65,1) -- (-0.65,4.4) -- (2.4,4.4) -- (2.4,2.4)
                  -- (1.5,2.4) -- (1.5,-0.4) -- (-0.65,-0.4) -- (-0.65,1);

    \node at (3.2,4.2) {$X^+$};
    \draw[vset] (-0.85,3) -- (-0.85,4.6) -- (3.6,4.6) -- (3.6,1)
                  -- (1.7,1) -- (1.7,-0.6) -- (-0.85,-0.6) -- (-0.85,3);

    %% Branching on y

    \node at (-1.65,-1.5) [label=0:{a) branching on $y$}:] {};
 
    % \Gout
 
    \begin{scope}[shift={(-2,-5)}]
        \drawHstar{}
        \node[vtx] at (3,2) {};
        \draw (3,2) -- (4,2);
        \node at (1.5,2.8) [label=0:{$\Gout=\Prune(\calG-y,X)$:}] {};
    \end{scope}

    % \Gin
 
    \begin{scope}[shift={(3.5,-5)}]
        \drawHstar{}
        \node at (1.5,2.8) [label=0:{$\Gin=\Prune(\calG-y-\Gamma_G(y),X\setminus\Gamma_G(y))$:}] {};
    \end{scope}

    %% Branching on z

    \node at (-1.65,-6.25) [label=0:{b) branching on $z$}:] {};
 
    % \Gout
 
    \begin{scope}[shift={(-2,-9.45)}]
        \node[vtx] at (2,0) [label=270:{$x$}] {};
        \node[vtx] at (3,0) {};
        \node[vtx] at (4,0) {};
        \node[vtx] at (4,0.5) {};
        \node[vtx] at (4,1) {};
        \node[vtx] at (4,1.5) {};
        \node[vtx] at (4.5,0) {};
        \node[vtx] at (4.5,2) {};
        \node[vtx] at (5,0) {};
        \node[vtx] at (5,1) {};
        \node[vtx] at (5,2) {};

        \draw (2,0) -- (5,0) -- (5,2) -- (4.5,2);
        \draw (4,0) -- (4,1.5);
        \draw (4,1) -- (5,1);

        \node at (1.5,2.5) [label=0:{$\Gout=\TR(\Prune(\calG-z,X^+))$:}] {};
    \end{scope}

    % \Gin
 
    \begin{scope}[shift={(3.5,-9.45)}]
        \node[vtx] at (2,0) [label=270:{$x$}] {};
        \node[vtx] at (3,0) {};
        \node[vtx] at (4,0) {};
        \node[vtx] at (4,0.5) {};
        \node[vtx] at (4,1) {};
        \node[vtx] at (4.5,0) {};
        \node[vtx] at (5,0) {};
        \node[vtx] at (5,1) {};
        \node[vtx] at (5,2) {};

        \draw (2,0) -- (5,0) -- (5,2);
        \draw (5,1) -- (4,1) -- (4,0);

        \node at (1.5,2.5) [label=0:{$\Gin=\TR(\Prune(\calG-z-\Gamma_G(z),X^+\setminus\Gamma_G(z)))$:}] {};
    \end{scope}

    %% Branching on v

    \node at (-1.65,-10.8) [label=0:{c) branching on $v$}:] {};
 
    % \Gout
 
    \begin{scope}[shift={(-2,-15)}]
        \drawHstar{};
        \node[vtx] at (1,0)   {};
        \node[vtx] at (1,2)   {};
        \node[vtx] at (2,2) [label=270:{$y$}] {};
        \node[vtx] at (1,2.5) {};
        \node[vtx] at (1,3)   {};
        \node[vtx] at (2,2)   {};
        \node[vtx] at (2,2.5) {};
        \node[vtx] at (2,3)   {};
        \node[vtx] at (3,2)   {};

        \draw (4,2) -- (1,2) -- (2,2.5) -- (1,2.5) -- (2,3) -- (1,3);
        \draw (1,0) -- (2,0);

        \node at (1.5,3.6) [label=0:{$\Gout=\TR(\calG-v)$:}] {};
    \end{scope}

    % \Gin
 
    \begin{scope}[shift={(3.5,-14.5)}]
        \drawHstar{};

        \node[vtx] at (2,2) [label=270:{$y$}] {};
        \node[vtx] at (3,2)   {};

        \draw (4,2) -- (2,2);

        \node at (1.5,3.1) [label=0:{$\Gin=\TR(\calG-v-\Gamma_G(v))$:}] {};
    \end{scope}

\end{tikzpicture}
\end{center}
\caption{An example showing that our branching strategy using the extended decomposition improves on other branching strategies.}\label{fig:ext-decomp-needed}
\end{figure}

\section{Better branching with the extended decomposition}
\label{app:ext-decomp-needed}
 
Consider the (bipartite) graph~$\calG$ shown in Figure~\ref{fig:ext-decomp-needed}, with underlying graph $G$. Observe that for all $\emptyset \subset X \subset V(G)$ we have  
$|\Gamma_G(X)| \ge 2$, 
and that $G$ itself contains two disjoint cycles 
whose distance is greater than~$2$, so $G$ is therefore
not a near-forest. It follows that $G$ is reduced. Let $f(m,n) = m-n$, which is a good slice.  We have $f(\calG)=25-21=4$.  Writing $f(m,n)=\rho m + \sigma n$ as in Definition~\ref{defn:good-slice}, we have $\rho = 1$ and $\sigma = -1$.  For the calculations below, note that $d_G(v)=4$ and $d^2_G(v)=10$.

The strategy of~\cite{DJW} is to branch on~$y$ and prune what remains of~$X$, as shown in Figure~\ref{fig:ext-decomp-needed}(a).  This yields instances $\Gout$ and~$\Gin$ with $f(\Gout)=14-13=1$ and $f(\Gin)=13-12=1$, so the value of~$f$ decreases by~$3$ in each branch.

Instead, we branch on~$z$ and prune what remains of $X^+$, as in Figure~\ref{fig:ext-decomp-needed}(b).  This gives instances with $f(\Gout)=11-11=0$ and $f(\Gin)=8-8=0$, so $f$~decreases by~$4$ in each branch.  Note that this example shows that Lemma~\ref{lem:clever-branch} is tight. It gives
\begin{align*}
    f(\calG)-f(\Gout)
        &\geq \rho(1+d_G(v)) + \sigma
        = 1 + 4 - 1
        = 4 \\
    f(\calG)-f(\Gin)
        &\geq \rho \lceil \tfrac12(d^2_G(v)+d_G(v)+4)\rceil +\sigma(1+d_G(v))\\
        &\qquad= \lceil \tfrac12(10+4+4)\rceil - 5
        = 4\,,
\intertext{and, using the fact that $\calG$~is bipartite (and $\sigma<0$),}
    f(\calG)-f(\Gin)
        &\geq \rho\,d^2_G(v) + \sigma\lceil\tfrac12(d^2_G(v)+d_G(v)-2)\rceil\\
        &\qquad= 10 - \lceil\tfrac12(10+4-2)\rceil
        = 4\,.
\end{align*}

Finally, we could instead branch on~$v$, as in Figure~\ref{fig:ext-decomp-needed}(c).  In this case, there is nothing to prune. We obtain $f(\Gout)=21-20=1$ and $f(\Gin)=15-14=1$, so $f$~decreases by~$3$ in each branch~--- the same decrease as branching on~$y$.  This also shows that Lemma~\ref{lem:base-branch} is tight. $S=\{x,y\}$ so the lemma gives
\begin{align*}
    f(\calG)-f(\Gout)
        &\geq \rho\,d_G(v)+\sigma
        = 4-1
        = 3 \\
    f(\calG)-f(\Gin)
        &\geq \rho \lceil\tfrac12(d^2_G(v)+d_G(v)+|S|)\rceil + \sigma(1+d_G(v))\\
        &\qquad= \lceil\tfrac12(10+4+2)\rceil - (1+4)
        = 3\,,
\intertext{and, since $\calG$ is bipartite,}
    f(\calG)-f(\Gin)
        &\leq \rho\,d^2_G(v) + \sigma\lfloor\tfrac12(d^2_G(v)+d_G(v)+2-|S|)\rceil\\
        &\qquad= 10 - \lfloor\tfrac12(10+4+2-2)\rfloor
        = 3\,.
\end{align*}

In summary, we have shown that the bounds of Lemma~\ref{lem:base-branch} are tight for the strategy of bounding on $v$, that the bounds of Lemma~\ref{lem:clever-branch} are tight for the strategy of branching on $z$.
We also showed that branching on~$z$ is better in this case because the it leads to a higher lower bound on 
$f(\calG)-f(\Gout)$.

\section{Mathematica code for Section~\ref{sec:conclusion} and auxiliary files}
\label{app:mathematica-valid}

The Mathematica code in  this section 
(also contained in the file \validatecode{} on the ArXiv)
is used in Section~\ref{sec:conclusion} to prove Theorem~\ref{thm:main} with the aid of our pre-potential functions 
(see files \biprepot{} and \genprepot, also on the ArXiv).   
 The function $\texttt{Dtwo[k]}$ computes $D_2(k)$ using the method from Section~\ref{sec:aad}.
 The function \texttt{Validate[]} takes as input the name of a file in CSV (comma-separated values) format that describes a pre-potential function which we will call~$f$.  If the first line of the file is exactly the string \texttt{Bipartite}, the potential function is interpreted as being for use with bipartite graphs; otherwise, it is for general graphs.  The second, third and fourth lines of the file are, respectively, the list of values $\rho_1, \dots, \rho_s$, the list of values $\sigma_1, \dots, \sigma_s$ and the list of values $k_1, \dots, k_{s-1}$.  To allow exact computation without numerical approximation, values are given as fractions, e.g., \texttt{12/13}.

\texttt{Validate[]} first checks that the 
input is a valid pre-potential as defined in Definition~\ref{def:pre-pot}.  
To do this, it checks that the
 conditions of Lemma~\ref{lem:pre-pot} hold and aborts if  they do not.  

\texttt{Validate[]} then verifies the conditions of Corollary~\ref{cor:fullalgo-runtime} (in the general case) and Corollary~\ref{cor:fullalgo-runtime-bi} (in the bipartite case) to verify that $f$~meets the conditions required to give a running time of $O(s^{\sigma_s n})\,\poly(n,1/\epsilon)$. Again, it aborts if any check fails.  In more detail, the following steps are taken for each $i\in[s]$.
\begin{itemize}
\item Compute $D_2'(k_{i-1})$ (stored in the variable \texttt{D2p}) according to Definition~\ref{def:Dtwoprime}.
\item For appropriate values of~$d$, as given by Corollary~\ref{cor:fullalgo-runtime} or~\ref{cor:fullalgo-runtime-bi}, compute values $\texttt{exp1}$ and $\texttt{exp2}$ such that $T_{i,d} = \tau(\texttt{exp1}, \texttt{exp2})$.  Thus, $\texttt{exp1} = f_i(d,1)$, and $\texttt{exp2}$ corresponds to some expression of the form $f_i(\cdot,\cdot)$ that depends on $d$, $i$ and whether or not the input graph is bipartite.  As observed in Section~\ref{sec:recursion}, if $2^{-\texttt{exp1}}+2^{-\texttt{exp2}} > 1$, then $T_{d,i} > 2$ and, hence, $T$ (as defined in Corollary \ref{cor:fullalgo-runtime} or~\ref{cor:fullalgo-runtime-bi} as appropriate) is greater than~$2$ and the code aborts.
\end{itemize}
In addition, the code aborts if $2^{-\sigma_s} + 2^{-12\sigma_s} > 1$. If so,  $T> 2$, so the code aborts.

If the code has not aborted as a result of any of these checks, we conclude that $T\leq 2$ and, thus the code outputs that $\iscount$ has running time $O(2^{\sigma_s n})\, \poly(n,1/\epsilon)$.

\vbox{\small
\begin{verbatim}
IsSuitable[k_, z_] := Module[ {d, K, s, q, d0, d1}, 
  K = Floor[k] + 1;
  If [z >= K^2, Return[True]];
  For [d = K, d <= z/2, d++,
   For [s = 0, s < d, s++,
    If [(z >= K s + 2 (d - s)) && ( z <= K s + (K-1)(d - s)),
     q = Floor[(z - K s)/(d - s)];
     d1 = Mod[z - K s, d-s];
     d0 = d - s - d1; 
     If [(d+d0+d1)/(1 + d0/q + d1/(q + 1)) > k, Return[True]]]]];
  Return[False]]
 \end{verbatim}}

\vbox{\small
\begin{verbatim}
Dtwo[k_] := Module[{K, z}, 
   K = Floor[k] + 1;
   For[z = 2 K, z < K^2, z++,
    If[ IsSuitable[k, z], Return[z] ]]; Return[K^2]];
 \end{verbatim}}
  
\hspace{-2truecm}  \vbox{\small
\begin{verbatim}
Validate[input_] := (
   Quiet[SetDirectory[NotebookDirectory[]]];
   file = Import[input, "CSV"];
   isBipartite = (file[[1]] == {"Bipartite"});
   rho = Map[ToExpression, file[[2]]];
   sigma = Map[ToExpression, file[[3]]];
   k = Map[ToExpression, file[[4]]];
   s = Length[rho];  
   (* check that the input is a valid pre-potential *)
   For [ i = 1, i <= s - 2, i++, If[ k[[i]] >= k[[i + 1]], Print["k's not ascending"]; Abort[]]];
   If[rho[[s]] != 0 || sigma[[s]] <= 0, Print["Fails Lemma 48(i)"]; Abort[]];
   For [i = 1, i <= s - 1, i++,
      If [rho[[i]] + sigma[[i]] < 0, Print["Fails Lemma 48(ii)"]; Abort[]];
      If [rho[[i]] == 0 && sigma[[i]] == 0, Print["Fails Lemma 48(iii)"];Abort[]];
      If [ rho[[i]] < rho[[i + 1]] || sigma[[i]] > sigma[[i + 1]], 
         Print["Fails Lemma 48(iv)"]; Abort[]];
    ];
   For [j = 6, j <= k[[s - 1]], j ++, If [! MemberQ[k, j], Print["Fails Lemma 48(v)"]; Abort[]]];
   (* Establish the runtime guarantee via Corollary 46 or 47. T_{i,d} = tau(exp1,exp2) *)
   For [i = 1, i <= s, i ++,
       If [i == 1 || k[[i - 1]] < 5 , D2p = 27, D2p = Max[2 k[[i - 1]], Dtwo[k[[i - 1]] ]]];
       For [d = If [i == 1, 6, Max[Floor[k[[i - 1]]] + 1, 6]], d <= 10, d ++,
          exp1 = (d rho[[i]] + sigma[[i]]); (* exp1 = f_i(d,1) *)
          If[ isBipartite,
             If[sigma[[i]] >= 0,
                exp2 = D2p * rho[[i]] + (1 + d)  sigma[[i]],(* exp2=f_i(D2p,1+d) 1st part Cor 27*)
                exp2 = D2p * rho[[i]] + Floor[(d + D2p  - 1) / 2]  sigma[[i]]],(*2nd part Cor 27*)
             exp2 = Ceiling[(d + D2p + 3) / 2] * rho[[i]] + (1 + d)  sigma[[i]] ]; (* Cor 26 *)
         If[2^(-exp1) + 2^(-exp2) > 1,
            Print["Can't apply Cor 46/47 because T_{i,d}>2 for i =", i, " and d= ", d]; Abort[]]; 
         If[2^(-sigma[[s]]) + 2^(-12 sigma[[s]]) > 1, 
            Print["Can't apply Cor 46/47 because tau(sigma_s,12 sigma_s)>2"]; Abort[]]]];
   Print["Potential function is valid, running time O(2^(",
      Ceiling[sigma[[s]], 0.0001],"n))*poly(1/eps), 
      which is O(",Ceiling[2^sigma[[s]], 0.0001], "^n)*poly(1/eps))."]);
  \end{verbatim}
  }

 \hspace{-2truecm} \vbox{\small
 \begin{verbatim}
Validate["bi-potential.csv"] ;

Validate["potential.csv"]  
\end{verbatim}
}

\hspace{-2truecm}\vbox{
This code will output:
{\small
\begin{verbatim}
Potential function is valid, running time O(2^0.2372n)*poly(1/eps), 
    which is O(1.1788^n)*poly(1/eps)).
Potential function is valid, running time O(2^0.2680n)*poly(1/eps), 
    which is O(1.2042^n)*poly(1/eps)).
\end{verbatim}
}}

\section{A pre-potential function for bipartite graphs}\label{app:bi-pot}

The following table contains an approximation of our pre-potential function for bipartite graphs, rounded away from zero to five decimal places. We give an exact version in the ancillary file \biprepot{}, available on the arXiv.

\rowcolors{1}{white}{lightgray}
\begin{longtable}{|l|l|l|}
	\hline
	Boundary points $k_0,\dots,k_s$ & Edge weights $\rho_1,\dots,\rho_s$ & Vertex weights $\sigma_1,\dots,\sigma_s$\\
	\hline
    $k_0 = -1$ & & \\
	$k_{1} = 4$ & $\rho_{1} = 0.13168$ & $\sigma_{1} = -0.13168$\\
	$k_{2} = 4.18794$ & $\rho_{2} = 0.13168$ & $\sigma_{2} = -0.13168$\\
	$k_{3} = 4.23794$ & $\rho_{3} = 0.13168$ & $\sigma_{3} = -0.13168$\\
	$k_{4} = 4.28572$ & $\rho_{4} = 0.1241$ & $\sigma_{4} = -0.11562$\\
	$k_{5} = 4.33572$ & $\rho_{5} = 0.11588$ & $\sigma_{5} = -0.098$\\
	$k_{6} = 4.38572$ & $\rho_{6} = 0.10882$ & $\sigma_{6} = -0.0827$\\
	$k_{7} = 4.43572$ & $\rho_{7} = 0.10176$ & $\sigma_{7} = -0.06721$\\
	$k_{8} = 4.44445$ & $\rho_{8} = 0.0957$ & $\sigma_{8} = -0.05377$\\
	$k_{9} = 4.49445$ & $\rho_{9} = 0.0957$ & $\sigma_{9} = -0.05377$\\
	$k_{10} = 4.55$ & $\rho_{10} = 0.08753$ & $\sigma_{10} = -0.03542$\\
	$k_{11} = 4.57143$ & $\rho_{11} = 0.08317$ & $\sigma_{11} = -0.0255$\\
	$k_{12} = 4.62143$ & $\rho_{12} = 0.08294$ & $\sigma_{12} = -0.02496$\\
	$k_{13} = 4.66667$ & $\rho_{13} = 0.07306$ & $\sigma_{13} = -0.00213$\\
	$k_{14} = 4.71667$ & $\rho_{14} = 0.07306$ & $\sigma_{14} = -0.00213$\\
	$k_{15} = 5.10639$ & $\rho_{15} = 0.07214$ & $\sigma_{15} = 0.00004$\\
	$k_{16} = 5.2174$ & $\rho_{16} = 0.06856$ & $\sigma_{16} = 0.00919$\\
	$k_{17} = 5.33334$ & $\rho_{17} = 0.06528$ & $\sigma_{17} = 0.01774$\\
	$k_{18} = 5.45455$ & $\rho_{18} = 0.06226$ & $\sigma_{18} = 0.02578$\\
	$k_{19} = 5.5$ & $\rho_{19} = 0.05947$ & $\sigma_{19} = 0.03339$\\
	$k_{20} = 5.55556$ & $\rho_{20} = 0.05693$ & $\sigma_{20} = 0.0404$\\
	$k_{21} = 5.625$ & $\rho_{21} = 0.05458$ & $\sigma_{21} = 0.0469$\\
	$k_{22} = 5.71429$ & $\rho_{22} = 0.05242$ & $\sigma_{22} = 0.053$\\
	$k_{23} = 5.83334$ & $\rho_{23} = 0.0504$ & $\sigma_{23} = 0.05877$\\
	$k_{24} = 6$ & $\rho_{24} = 0.0485$ & $\sigma_{24} = 0.06431$\\
	$k_{25} = 6.08696$ & $\rho_{25} = 0.03667$ & $\sigma_{25} = 0.09978$\\
	$k_{26} = 6.17648$ & $\rho_{26} = 0.0354$ & $\sigma_{26} = 0.10365$\\
	$k_{27} = 6.26866$ & $\rho_{27} = 0.03421$ & $\sigma_{27} = 0.10734$\\
	$k_{28} = 6.36364$ & $\rho_{28} = 0.03308$ & $\sigma_{28} = 0.11089$\\
	$k_{29} = 6.46154$ & $\rho_{29} = 0.032$ & $\sigma_{29} = 0.11429$\\
	$k_{30} = 6.5$ & $\rho_{30} = 0.03099$ & $\sigma_{30} = 0.11757$\\
	$k_{31} = 6.54546$ & $\rho_{31} = 0.03003$ & $\sigma_{31} = 0.12068$\\
	$k_{32} = 6.6$ & $\rho_{32} = 0.02913$ & $\sigma_{32} = 0.12362$\\
	$k_{33} = 6.66667$ & $\rho_{33} = 0.02828$ & $\sigma_{33} = 0.12643$\\
	$k_{34} = 6.76057$ & $\rho_{34} = 0.02747$ & $\sigma_{34} = 0.12913$\\
	$k_{35} = 6.85715$ & $\rho_{35} = 0.0267$ & $\sigma_{35} = 0.13174$\\
	$k_{36} = 7$ & $\rho_{36} = 0.02596$ & $\sigma_{36} = 0.13428$\\
	$k_{37} = 7.07369$ & $\rho_{37} = 0.01826$ & $\sigma_{37} = 0.16123$\\
	$k_{38} = 7.14894$ & $\rho_{38} = 0.01779$ & $\sigma_{38} = 0.16288$\\
	$k_{39} = 7.22581$ & $\rho_{39} = 0.01735$ & $\sigma_{39} = 0.16448$\\
	$k_{40} = 7.30435$ & $\rho_{40} = 0.01692$ & $\sigma_{40} = 0.16604$\\
	$k_{41} = 7.38462$ & $\rho_{41} = 0.0165$ & $\sigma_{41} = 0.16755$\\
	$k_{42} = 7.46667$ & $\rho_{42} = 0.0161$ & $\sigma_{42} = 0.16901$\\
	$k_{43} = 7.5$ & $\rho_{43} = 0.01572$ & $\sigma_{43} = 0.17044$\\
	$k_{44} = 7.53847$ & $\rho_{44} = 0.01536$ & $\sigma_{44} = 0.17181$\\
	$k_{45} = 7.58334$ & $\rho_{45} = 0.01501$ & $\sigma_{45} = 0.17314$\\
	$k_{46} = 7.63637$ & $\rho_{46} = 0.01467$ & $\sigma_{46} = 0.17441$\\
	$k_{47} = 7.71429$ & $\rho_{47} = 0.01435$ & $\sigma_{47} = 0.17565$\\
	$k_{48} = 7.79382$ & $\rho_{48} = 0.01403$ & $\sigma_{48} = 0.17685$\\
	$k_{49} = 7.875$ & $\rho_{49} = 0.01373$ & $\sigma_{49} = 0.17803$\\
	$k_{50} = 8$ & $\rho_{50} = 0.01344$ & $\sigma_{50} = 0.17918$\\
	$k_{51} = 8.064$ & $\rho_{51} = 0.00799$ & $\sigma_{51} = 0.20096$\\
	$k_{52} = 8.12904$ & $\rho_{52} = 0.00784$ & $\sigma_{52} = 0.20159$\\
	$k_{53} = 8.19513$ & $\rho_{53} = 0.00768$ & $\sigma_{53} = 0.20222$\\
	$k_{54} = 8.2623$ & $\rho_{54} = 0.00754$ & $\sigma_{54} = 0.20282$\\
	$k_{55} = 8.33058$ & $\rho_{55} = 0.00739$ & $\sigma_{55} = 0.20342$\\
	$k_{56} = 8.4$ & $\rho_{56} = 0.00725$ & $\sigma_{56} = 0.204$\\
	$k_{57} = 8.47059$ & $\rho_{57} = 0.00712$ & $\sigma_{57} = 0.20457$\\
	$k_{58} = 8.5$ & $\rho_{58} = 0.00698$ & $\sigma_{58} = 0.20513$\\
	$k_{59} = 8.53334$ & $\rho_{59} = 0.00686$ & $\sigma_{59} = 0.20567$\\
	$k_{60} = 8.57143$ & $\rho_{60} = 0.00673$ & $\sigma_{60} = 0.2062$\\
	$k_{61} = 8.61539$ & $\rho_{61} = 0.00661$ & $\sigma_{61} = 0.20671$\\
	$k_{62} = 8.68218$ & $\rho_{62} = 0.0065$ & $\sigma_{62} = 0.20721$\\
	$k_{63} = 8.75$ & $\rho_{63} = 0.00639$ & $\sigma_{63} = 0.20769$\\
	$k_{64} = 8.8189$ & $\rho_{64} = 0.00628$ & $\sigma_{64} = 0.20817$\\
	$k_{65} = 8.88889$ & $\rho_{65} = 0.00617$ & $\sigma_{65} = 0.20865$\\
	$k_{66} = 9$ & $\rho_{66} = 0.00607$ & $\sigma_{66} = 0.20911$\\
	$k_{67} = 9.05661$ & $\rho_{67} = 0.00199$ & $\sigma_{67} = 0.22746$\\
	$k_{68} = 9.11393$ & $\rho_{68} = 0.00195$ & $\sigma_{68} = 0.22761$\\
	$k_{69} = 9.17198$ & $\rho_{69} = 0.00192$ & $\sigma_{69} = 0.22774$\\
	$k_{70} = 9.23077$ & $\rho_{70} = 0.00189$ & $\sigma_{70} = 0.22788$\\
	$k_{71} = 9.29033$ & $\rho_{71} = 0.00187$ & $\sigma_{71} = 0.22801$\\
	$k_{72} = 9.35065$ & $\rho_{72} = 0.00184$ & $\sigma_{72} = 0.22815$\\
	$k_{73} = 9.41177$ & $\rho_{73} = 0.00181$ & $\sigma_{73} = 0.22828$\\
	$k_{74} = 9.47369$ & $\rho_{74} = 0.00178$ & $\sigma_{74} = 0.2284$\\
	$k_{75} = 9.5$ & $\rho_{75} = 0.00176$ & $\sigma_{75} = 0.22853$\\
	$k_{76} = 9.52942$ & $\rho_{76} = 0.00173$ & $\sigma_{76} = 0.22865$\\
	$k_{77} = 9.5625$ & $\rho_{77} = 0.00171$ & $\sigma_{77} = 0.22877$\\
	$k_{78} = 9.6$ & $\rho_{78} = 0.00168$ & $\sigma_{78} = 0.22889$\\
	$k_{79} = 9.65854$ & $\rho_{79} = 0.00166$ & $\sigma_{79} = 0.229$\\
	$k_{80} = 9.7178$ & $\rho_{80} = 0.00163$ & $\sigma_{80} = 0.22911$\\
	$k_{81} = 9.77778$ & $\rho_{81} = 0.00161$ & $\sigma_{81} = 0.22923$\\
	$k_{82} = 9.83851$ & $\rho_{82} = 0.00159$ & $\sigma_{82} = 0.22933$\\
	$k_{83} = 9.9$ & $\rho_{83} = 0.00157$ & $\sigma_{83} = 0.22944$\\
	$k_{84} = 10$ & $\rho_{84} = 0.00155$ & $\sigma_{84} = 0.22955$\\
	$k_{85} = \infty$ & $\rho_{85} = 0$ & $\sigma_{85} = 0.23725$\\
	\hline
\end{longtable}
\pagebreak 

\section{A potential function for general graphs}\label{app:gen-pot}

The following table contains an approximation of our pre-potential function for bipartite graphs, rounded away from zero to five decimal places. We give an exact version in the ancillary file \genprepot{}, available on the arXiv.

\begin{tabular}{|l|l|l|}
	\hline
	Boundary points $k_0,\dots,k_s$ & Edge weights $\rho_1,\dots,\rho_s$ & Vertex weights $\sigma_1,\dots,\sigma_s$\\
	\hline
    $k_0 = -1$ & & \\
	$k_{1} = 5.10639$ & $\rho_{1} = 0.13168$ & $\sigma_{1} = -0.13168$\\
	$k_{2} = 5.33334$ & $\rho_{2} = 0.11402$ & $\sigma_{2} = -0.08658$\\
	$k_{3} = 5.5$ & $\rho_{3} = 0.10004$ & $\sigma_{3} = -0.04931$\\
	$k_{4} = 5.625$ & $\rho_{4} = 0.08896$ & $\sigma_{4} = -0.01883$\\
	$k_{5} = 5.83334$ & $\rho_{5} = 0.08006$ & $\sigma_{5} = 0.0062$\\
	$k_{6} = 6$ & $\rho_{6} = 0.07243$ & $\sigma_{6} = 0.02845$\\
	$k_{7} = 6.08696$ & $\rho_{7} = 0.04523$ & $\sigma_{7} = 0.11007$\\
	$k_{8} = 6.26866$ & $\rho_{8} = 0.04157$ & $\sigma_{8} = 0.12119$\\
	$k_{9} = 6.46154$ & $\rho_{9} = 0.03833$ & $\sigma_{9} = 0.13135$\\
	$k_{10} = 6.54546$ & $\rho_{10} = 0.03542$ & $\sigma_{10} = 0.14076$\\
	$k_{11} = 6.66667$ & $\rho_{11} = 0.0329$ & $\sigma_{11} = 0.14901$\\
	$k_{12} = 6.85715$ & $\rho_{12} = 0.03066$ & $\sigma_{12} = 0.15648$\\
	$k_{13} = 7$ & $\rho_{13} = 0.0286$ & $\sigma_{13} = 0.16354$\\
	$k_{14} = 7.07369$ & $\rho_{14} = 0.0105$ & $\sigma_{14} = 0.22689$\\
	$k_{15} = 7.14894$ & $\rho_{15} = 0.00994$ & $\sigma_{15} = 0.22886$\\
	$k_{16} = 7.22581$ & $\rho_{16} = 0.00994$ & $\sigma_{16} = 0.22886$\\
	$k_{17} = 7.38462$ & $\rho_{17} = 0.00942$ & $\sigma_{17} = 0.23075$\\
	$k_{18} = 7.5$ & $\rho_{18} = 0.00892$ & $\sigma_{18} = 0.23258$\\
	$k_{19} = 7.58334$ & $\rho_{19} = 0.00846$ & $\sigma_{19} = 0.2343$\\
	$k_{20} = 7.63637$ & $\rho_{20} = 0.00804$ & $\sigma_{20} = 0.2359$\\
	$k_{21} = 7.71429$ & $\rho_{21} = 0.00804$ & $\sigma_{21} = 0.2359$\\
	$k_{22} = 7.875$ & $\rho_{22} = 0.00764$ & $\sigma_{22} = 0.23743$\\
	$k_{23} = 8$ & $\rho_{23} = 0.00727$ & $\sigma_{23} = 0.23891$\\
	$k_{24} = \infty$ & $\rho_{24} = 0$ & $\sigma_{24} = 0.26796$\\
	\hline
\end{tabular}

\end{document}